\def\doi{8 (4:1) 2012}
\renewcommand{\paragraph}{\subsection}
\newif\iflongversion
\newcounter{newrule}
\newdimen\proofrulebreadth \proofrulebreadth=.05em
\newdimen\proofdotseparation \proofdotseparation=1.25ex
\newdimen\proofrulebaseline \proofrulebaseline=2ex
\let\then\relax
\def\hfi{\hskip0pt plus.0001fil}
\mathchardef\squigto="3A3B
\newif\ifinsideprooftree\insideprooftreefalse
\newif\ifonleftofproofrule\onleftofproofrulefalse
\newif\ifproofdots\proofdotsfalse
\newif\ifdoubleproof\doubleprooffalse
\let\wereinproofbit\relax
\newdimen\shortenproofleft
\newdimen\shortenproofright
\newdimen\proofbelowshift
\newbox\proofabove
\newbox\proofbelow
\newbox\proofrulename
\def\shiftproofbelow{\let\next\relax\afterassignment\setshiftproofbelow\dimen0 }
\def\shiftproofbelowneg{\def\next{\multiply\dimen0 by-1 }%
\afterassignment\setshiftproofbelow\dimen0 }
\def\setshiftproofbelow{\next\proofbelowshift=\dimen0 }
\def\setproofrulebreadth{\proofrulebreadth}
\def\prooftree{
\ifnum  \lastpenalty=1
\then   \unpenalty
\else   \onleftofproofrulefalse
\fi
\ifonleftofproofrule
\else   \ifinsideprooftree
        \then   \hskip.5em plus1fil
        \fi
\fi
\bgroup
\setbox\proofbelow=\hbox{}\setbox\proofrulename=\hbox{}%
\let\justifies\proofover\let\leadsto\proofoverdots\let\Justifies\proofoverdbl
\let\using\proofusing\let\[\prooftree
\ifinsideprooftree\let\]\endprooftree\fi
\proofdotsfalse\doubleprooffalse
\let\thickness\setproofrulebreadth
\let\shiftright\shiftproofbelow \let\shift\shiftproofbelow
\let\shiftleft\shiftproofbelowneg
\let\ifwasinsideprooftree\ifinsideprooftree
\insideprooftreetrue
\setbox\proofabove=\hbox\bgroup$\displaystyle 
\let\wereinproofbit\prooftree
\shortenproofleft=0pt \shortenproofright=0pt \proofbelowshift=0pt
\onleftofproofruletrue\penalty1
}
\def\eproofbit{
\ifx    \wereinproofbit\prooftree
\then   \ifcase \lastpenalty
        \then   \shortenproofright=0pt  
        \or     \unpenalty\hfil         
        \or     \unpenalty\unskip       
        \else   \shortenproofright=0pt  
        \fi
\fi
\global\dimen0=\shortenproofleft
\global\dimen1=\shortenproofright
\global\dimen2=\proofrulebreadth
\global\dimen3=\proofbelowshift
\global\dimen4=\proofdotseparation
\global\count255=\proofdotnumber
$\egroup  
\shortenproofleft=\dimen0
\shortenproofright=\dimen1
\proofrulebreadth=\dimen2
\proofbelowshift=\dimen3
\proofdotseparation=\dimen4
\proofdotnumber=\count255
}
\def\proofover{
\eproofbit 
\setbox\proofbelow=\hbox\bgroup 
\let\wereinproofbit\proofover
$\displaystyle
}%
\def\proofoverdbl{
\eproofbit 
\doubleprooftrue
\setbox\proofbelow=\hbox\bgroup 
\let\wereinproofbit\proofoverdbl
$\displaystyle
}%
\def\proofoverdots{
\eproofbit 
\proofdotstrue
\setbox\proofbelow=\hbox\bgroup 
\let\wereinproofbit\proofoverdots
$\displaystyle
}%
\def\proofusing{
\eproofbit 
\setbox\proofrulename=\hbox\bgroup 
\let\wereinproofbit\proofusing
\kern0.3em$
}
\def\endprooftree{
\eproofbit 
  \dimen5 =0pt
\dimen0=\wd\proofabove \advance\dimen0-\shortenproofleft
\advance\dimen0-\shortenproofright
\dimen1=.5\dimen0 \advance\dimen1-.5\wd\proofbelow
\dimen4=\dimen1
\advance\dimen1\proofbelowshift \advance\dimen4-\proofbelowshift
\ifdim  \dimen1<0pt
\then   \advance\shortenproofleft\dimen1
        \advance\dimen0-\dimen1
        \dimen1=0pt
        \ifdim  \shortenproofleft<0pt
        \then   \setbox\proofabove=\hbox{%
                        \kern-\shortenproofleft\unhbox\proofabove}%
                \shortenproofleft=0pt
        \fi
\fi
\ifdim  \dimen4<0pt
\then   \advance\shortenproofright\dimen4
        \advance\dimen0-\dimen4
        \dimen4=0pt
\fi
\ifdim  \shortenproofright<\wd\proofrulename
\then   \shortenproofright=\wd\proofrulename
\fi
\dimen2=\shortenproofleft \advance\dimen2 by\dimen1
\dimen3=\shortenproofright\advance\dimen3 by\dimen4
\ifproofdots
\then
        \dimen6=\shortenproofleft \advance\dimen6 .5\dimen0
        \setbox1=\vbox to\proofdotseparation{\vss\hbox{$\cdot$}\vss}%
        \setbox0=\hbox{%
                \advance\dimen6-.5\wd1
                \kern\dimen6
                $\vcenter to\proofdotnumber\proofdotseparation
                        {\leaders\box1\vfill}$%
                \unhbox\proofrulename}%
\else   \dimen6=\fontdimen22\the\textfont2 
        \dimen7=\dimen6
        \advance\dimen6by.5\proofrulebreadth
        \advance\dimen7by-.5\proofrulebreadth
        \setbox0=\hbox{%
                \kern\shortenproofleft
                \ifdoubleproof
                \then   \hbox to\dimen0{%
                        $\mathsurround0pt\mathord=\mkern-6mu%
                        \cleaders\hbox{$\mkern-2mu=\mkern-2mu$}\hfill
                        \mkern-6mu\mathord=$}%
                \else   \vrule height\dimen6 depth-\dimen7 width\dimen0
                \fi
                \unhbox\proofrulename}%
        \ht0=\dimen6 \dp0=-\dimen7
\fi
\let\doll\relax
\ifwasinsideprooftree
\then   \let\VBOX\vbox
\else   \ifmmode\else$\let\doll=$\fi
        \let\VBOX\vcenter
\fi
\VBOX   {\baselineskip\proofrulebaseline \lineskip.2ex
        \expandafter\lineskiplimit\ifproofdots0ex\else-0.6ex\fi
        \hbox   spread\dimen5   {\hfi\unhbox\proofabove\hfi}%
        \hbox{\box0}%
        \hbox   {\kern\dimen2 \box\proofbelow}}\doll%
\global\dimen2=\dimen2
\global\dimen3=\dimen3
\egroup 
\ifonleftofproofrule
\then   \shortenproofleft=\dimen2
\fi
\shortenproofright=\dimen3
\onleftofproofrulefalse
\ifinsideprooftree
\then   \hskip.5em plus 1fil \penalty2
\fi
}
\newcommand{\co}{\colon}
\newcommand{\ld}{.}
\newcommand{\ts}{\vdash}
\newcommand{\fv}[1]{\text{FV}(#1)}
\newcommand{\In}[2]{#1 \colon  \! #2}
\newcommand{\lam}[3]{\lambda \In{#1}{#2}.\: #3}
\newcommand{\codefont}[1]{\mathrm{#1}}
\newcommand{\Ctx}{\codefont{Ctx}}
\newcommand{\Type}{\codefont{Type}}
\newcommand{\Prop}{\codefont{Prop}}
\newcommand{\dsum}[3]{\Sum_{#1 \co \! #2}#3}
\DeclareMathOperator{\Prod}{\textstyle{\prod}}
\DeclareMathOperator{\Coprod}{\textstyle{\coprod}}
\DeclareMathOperator{\Sum}{\textstyle{\sum}}
\newcommand{\opcat}[1]{{{#1}^{\mathrm{op}}}}
\newcommand{\dom}[1]{\mathrm{dom}(#1)}
\newcommand{\from}{\gets}
\newcommand{\Fam}[1]{\mathrm{Fam}(#1)}
\newcommand{\catcat}{\mathbf{Cat}}
\newcommand{\Sets}{\mathbf{Set}}
\newcommand{\CBUlt}{\ensuremath{\mathit{CBUlt}}}
\newcommand{\BiUlt}{\ensuremath{\mathit{BiUlt}}}
\newcommand{\BiCBUlt}{\ensuremath{\mathit{BiCBUlt}}}
\newcommand{\catfont}{\mathbb}
\newcommand{\bcat}{{\catfont{B}}}
\newcommand{\ccat}{{\catfont{C}}}
\newcommand{\dcat}{{\catfont{D}}}
\newcommand{\etopos}{\ensuremath{\mathcal{E}}\xspace}
\newcommand{\stopos}{\ensuremath{\mathcal{S}}\xspace}
\newcommand{\Hom}[3]{\mathrm{Hom}_{#1}({#2},{#3})}
\newcommand{\Sub}[1]{\mathrm{Sub}(#1)}
\newcommand{\inv}[1]{{#1}^{-1}}
\newcommand{\id}{\mathrm{id}}
\newcommand{\charmap}[1]{\chi_{#1}}
\newcommand{\nameof}[1]{\ulcorner #1 \urcorner}
\newcommand{\slice}[2]{#1 / #2}
\newcommand{\eslice}[1]{\etopos / #1}
\newcommand{\p}[1]{p_{#1}} 
\newcommand{\successor}{\mathrm{succ}}
\newcommand{\lradj}[2]{#1 \dashv #2}
\newcommand{\Powset}[1]{\mathcal{P}(#1)}
\newcommand{\finpowset}[1]{\mathcal{P}_{\mathrm{fin}}(#1)}
\newcommand{\denlb}{[\![}
\newcommand{\denrb}{]\!]}
\newcommand{\den}[1]{\denlb{#1}\denrb}
\renewcommand{\iff}{\ensuremath{\mathrel{\Leftrightarrow}}}
\newcommand{\imp}{\to}
\newcommand{\bimp}{\leftrightarrow}
\newcommand{\meet}{\wedge}
\newcommand{\existsunique}{\exists!}
\newcommand{\transClosureAlt}[1]{#1^\omega}
\newcommand{\biimpdefn}{\stackrel{\mathrm{def}}{\Longleftrightarrow}}
\newcommand{\alphaseq}{\overline{\alpha}}
\newcommand{\etal}{\emph{et\ al.}\xspace}
\newcommand{\GAP}{\hspace*{.5cm}}
\newenvironment{arraytl}{\begin{array}[t]{@{}l@{}}}{\end{array}}
\newcommand{\pair}[2]{\langle #1 , #2 \rangle}
\newcommand{\iso}{\cong}
\newcommand{\SItopos}{\mathcal{S}}
\newcommand{\SIslice}[1]{\slice{\SItopos}{#1}}
\newcommand{\later}{\ensuremath{\mathop{\blacktriangleright}}}
\newcommand{\laterSlice}[1]{\mathop{{\later}_{#1}}}
\newcommand{\laterPred}{\ensuremath{\mathop{\rhd}}}
\newcommand{\earlier}{\ensuremath{\mathop{\blacktriangleleft}}}
\newcommand{\earlierSlice}[1]{\earlier_{#1}}
\newcommand{\next}{\mathrm{next}}
\newcommand{\previous}{\mathrm{prev}}
\newcommand{\fix}{\mathrm{fix}}
\newcommand{\Fix}{\mathrm{Fix} \,} 
\newcommand{\omegabar}{\overline{\omega}}
\newcommand{\contractivePred}{\mathrm{Contr}}
\newcommand{\complete}{contractively complete}
\renewcommand{\succ}{\mathrm{succ}}
\newcommand{\forces}[2]{#1 \models #2} 
\newcommand{\PA}{X}
\newcommand{\PB}{Y}
\newcommand{\PC}{Z}
\newcommand{\PAn}[1]{\PSindex{\PA}{#1}}
\newcommand{\PBn}[1]{\PSindex{\PB}{#1}}
\newcommand{\PSindex}[2]{#1(#2)}
\newcommand{\SubPA}{A}
\newcommand{\SubPAn}[1]{\PSindex{\SubPA}{#1}}
\newcommand{\restrmap}[1]{r_{#1}}
\newcommand{\restrict}[2]{#1|_{#2}}
\newcommand{\mlfont}[1]{\mathsf{#1}}
\newcommand{\Fmuref}{\ensuremath{\mlfont{F}_{\mu,\mlfont{ref}}}}
\newcommand{\tofin}{\mathrel{\to_{\mathrm{fin}}}}
\newcommand{\tomon}{\mathrel{\to_{\mathrm{mon}}}}
\newcommand{\W}{\mathcal{W}}
\newcommand{\ST}{\widehat\STalt}
\newcommand{\STalt}{\mathcal{T}}
\newcommand{\STaltC}{\mathcal{T}^{\mathrm{c}}}
\newcommand{\App}{\mathrm{app}}
\newcommand{\Lam}{\mathrm{lam}}
\newcommand{\statesOp}{\operatorname{\mathit{states}}}
\newcommand{\compOp}{\operatorname{\mathit{comp}}}
\newcommand{\ap}{\hskip2pt}
\newcommand{\tp}{\tau}
\newcommand{\comptp}[1]{{#1}}
\newcommand{\arrowtp}[2]{{#1} \mathop\to {#2}}
\newcommand{\reftp}[1]{\operatorname{\mlfont{ref}}{#1}}
\newcommand{\tm}{t}
\newcommand{\unitcst}{()}
\newcommand{\pairtm}[2]{({#1},\, {#2})}
\newcommand{\fst}[1]{\mlfont{fst}\ap {#1}}
\newcommand{\snd}[1]{\mlfont{snd}\ap {#1}}
\newcommand{\void}[1]{\mlfont{void}\ap {#1}}
\newcommand{\inl}[1]{\mlfont{inl}\ap {#1}}
\newcommand{\inr}[1]{\mlfont{inr}\ap {#1}}
\newcommand{\case}[5]{\mlfont{case}\, {#1} \; {#2}.{#3} \;
  {#4}. {#5}}
\newcommand{\fold}[1]{\mlfont{fold}\ap {#1}}
\newcommand{\unfold}[1]{\mlfont{unfold}\ap {#1}}
\newcommand{\reftm}[1]{\mlfont{ref}\ap {#1}}
\newcommand{\fixtm}[3]{\mlfont{fix}\ap {#1}. \lambda{#2}.{#3}}
\newcommand{\lookup}[1]{\mlfont{!}\hskip1pt{#1}}
\newcommand{\assign}[2]{{#1} \mathbin{\mlfont{:=}} {#2}}
\newcommand{\lamtm}[2]{\overline{\lambda} {#1}.\hskip1pt {#2}}
\newcommand{\tplam}[2]{\Lambda {#1}.\hskip1pt {#2}}
\newcommand{\tpapp}[2]{{#1} \ap [{#2}]}
\newcommand{\tpjudge}[4]{{#1} \mid {#2} \vdash {#3} : {#4}}
\newcommand{\mngTp}[3]{\den{#1}{#3}}  
\newcommand{\mngTpC}[3]{\den{{#1}}^\mathrm{c}{#3}}  
\newcommand{\semrel}[4]{{#1} \mid {#2} \models {#3} : {#4}}
\newcommand{\funrel}[3]{{#1} \mathrel{\multimap_{#3}} {#2}}
\newcommand{\Term}{\mathrm{Term}}
\newcommand{\Value}{\mathrm{Value}}
\newcommand{\Store}{\mathrm{Store}}
\newcommand{\Config}{\mathrm{Config}}
\newcommand{\step}{\mathrm{step}}
\newcommand{\eval}{\mathrm{eval}}
\newenvironment{proofof}[1]
  {\vspace{7pt}\par\noindent\textbf{Proof of {#1}.}}
  {\qed\vspace{7pt}\par }
 \renewenvironment{thebibliography}[1]{%
   \begin{oldthebibliography}{#1}%
     \setlength{\parskip}{0ex}%
     \setlength{\itemsep}{0ex}%
 }%
 {%
   \end{oldthebibliography}%
 }
\begin{document}
 
\title[First Steps in Synthetic Guarded Domain Theory]{First steps in
  synthetic guarded domain theory: \\ step-indexing in the topos of
  trees\rsuper*}

\author[L. Birkedal]{Lars Birkedal\rsuper a}
\address{{\lsuper{a,b}}IT University of Copenhagen}
\email{\{birkedal, mogel\}@itu.dk}

\author[R.E. M{\o}gelberg]{Rasmus Ejlers M{\o}gelberg\rsuper b}
\address{\vskip-6 pt}
\email{mogel@itu.dk}

\author[J. Schwinghammer]{Jan Schwinghammer\rsuper c}
\address{{\lsuper c}Saarland University}
\email{jan@ps.uni-saarland.de}

\author[K. St{\o}vring]{Kristian St{\o}vring\rsuper d}
\address{{\lsuper d}DIKU, University of Copenhagen}
\email{stovring@diku.dk}

\titlecomment{{\lsuper*}This is an expanded and revised version of a paper that
  appeared at LICS'11~\cite{BirkedalL:sgdt}.  In addition to
  containing more examples and proofs, it also contains a more general
  treatment of models of synthetic guarded domain theory.}

\keywords{Denotational semantics, guarded recursion, step-indexing, recursive types}
\subjclass{D.3.1, F.3.2}

\begin{abstract}
  We present the topos $\stopos$ of trees as a model of guarded
  recursion.  We study the internal dependently-typed higher-order
  logic of $\stopos$ and show that $\stopos$ models two modal
  operators, on predicates and types, which serve as guards in
  recursive definitions of terms, predicates, and types.  In
  particular, we show how to solve recursive type equations involving
  \emph{dependent} types.  We propose that the internal logic of
  $\stopos$ provides the right setting for the synthetic construction
  of abstract versions of step-indexed models of programming languages
  and program logics. As an example, we show how to construct a model
  of a programming language with higher-order store and recursive
  types entirely inside the internal logic of $\stopos$.  Moreover, we
  give an axiomatic categorical treatment of models of synthetic
  guarded domain theory and prove that, for any complete Heyting
  algebra $A$ with a well-founded basis, the topos of sheaves over $A$
  forms a model of synthetic guarded domain theory, generalizing the
  results for $\stopos$.
\end{abstract}

\maketitle

\section{Introduction}
\label{sec:introduction}


Recursive definitions are ubiquitous in computer science. In
particular, in semantics of programming languages and program logics
we often use recursively defined functions and
relations, and also recursively defined types (domains).
For example, in recent years there has been extensive work on giving
semantics of type systems for programming languages with dynamically
allocated higher-order store, such as general ML-like references. 
Models have been expressed as Kripke models over a recursively
defined set of worlds (an example of a recursively defined domain)
and have involved recursively defined relations to interpret
the recursive types of the programming
language; see~\cite{BirkedalL:essence} and the references therein.

In this paper we study a topos $\stopos$, which we show models guarded
recursion in the sense that it allows for guarded recursive
definitions of both recursive functions and relations as well as
recursive types.  The topos $\stopos$ is known as the topos of trees
(or forests); what is new here is our application of this topos to
model guarded recursion.

The internal logic of $\stopos$ is a standard many-sorted 
higher-order logic extended with modal operators on both types
and terms. (Recall that terms in higher-order logic include both
functions and relations, as the latter are simply $\Prop$-valued functions.)
This internal logic can then be used as a language to describe semantic models 
of programming languages with the features mentioned above. 
As an example which uses both recursively defined types and
recursively defined relations in the $\stopos$-logic, we present 
a model of \Fmuref, a call-by-value programming language with 
impredicative polymorphism, recursive types, and general ML-like
references. 

To situate our work in relation to earlier work, we now give a
quick overview of the technical development of the present paper 
followed by a comparison to related work.  We end the introduction
with a summary of our contributions.

\subsection{Overview of technical development}

The topos $\stopos$ is the category of presheaves on
$\omega$, the first infinite ordinal.  This topos is known as the
topos of trees, and is one of the most basic examples
of presheaf categories.  

There are several ways to think intuitively about this topos. 
Let us recall one intuitive description, which can serve to understand
why it models guarded recursion. An object $X$ of $\stopos$ is a
contravariant functor from $\omega$ (viewed as a preorder) to $\Sets$.
We think of $X$ as a variable set, i.e., a family of sets $X(n)$, indexed over natural
numbers $n$, and with restriction maps $X(n+1)\to X(n)$. 
Morphisms $f: X \to Y$ are natural transformations from $X$ to $Y$.
The variable sets include the ordinary sets as so-called constant
sets:  for an ordinary set $S$, there is an object $\Delta(S)$ in
$\stopos$ with $\Delta(S)(n)=S$  for
all $n$.  Since $\stopos$ is a category of presheaves, it is a topos,
in particular it is cartesian closed category and has a subobject classifier $\Omega$
(a type 
of propositions).  The internal logic of $\stopos$ is an extension of 
standard Kripke semantics: for constant sets, the truth value of a
predicate is just the set of worlds (downwards closed subsets of $\omega$) for
which the predicate holds.  This observation suggests that there is
a modal ``later'' operator $\laterPred$ on predicates $\Omega^{\Delta(S)}$ on
constant sets, similar to what has been studied
earlier~\cite{dreyer+:lics09,Appel:Mellies:Richards:Vouillon:07}. 
Intuitively, for a predicate $\varphi: \Omega^{\Delta(S)}$ on constant set
$\Delta(S)$, $\laterPred(\varphi)$ contains $n+1$ if $\varphi$ contains $n$. (A future
world is a smaller number, hence the name ``later'' for this operator.)
A recursively specified predicate $\mu r. \varphi(r)$ is well-defined if every occurrence of 
the recursion variable $r$ in $\varphi$ is guarded by a $\laterPred$ modality:  by definition of $\laterPred$, to know
whether $n+1$ is in the predicate it suffices to know whether 
$n$ is in the predicate.  
There is also an associated L{\"o}b rule for induction, $(\laterPred \varphi\to \varphi)\to \varphi$, as
in~\cite{Appel:Mellies:Richards:Vouillon:07}.

Here we show that in fact there is a later operator not only on
predicates on constant sets, but also on predicates on general variable
sets, with associated L{\"o}b rule, and well-defined guarded recursive definitions
of predicates.

Moreover, there is also a later operator $\later$ 
(a functor) on the variable sets themselves: $\later(X)$ is given by
$\later(X)(1) = \{ \star \}$ and $\later(X)(n+1)=X(n)$.  
We can show the well-definedness of recursive 
  variable sets $\mu X. F(X)$ in which the recursion variable
$X$ is guarded by this operator $\later$.  Intuitively, such a recursively specified variable
set  is well-defined since by definition of $\later$, 
to know what $\mu X. F(X)$ is at level $n+1$ it suffices to know
what it is at level $n$. 

In the technical sections of the paper, we make the above precise.  In
particular, we detail the internal logic and the use of later on
functions / predicates and on types.  We explain how one can solve
mixed-variance recursive type equations, for a wide collection of
types.  We show how to use the internal logic of $\stopos$ to give a
model of \Fmuref.  The model, including the operational semantics of
the programming language, is defined completely inside the internal
logic; we discuss the connection between the resulting model and earlier models  by relating
internal definitions in the internal logic to standard (external)
definitions.
Since $\stopos$ is a topos, $\stopos$ also models
dependent types.  We give technical semantic results as needed for
using later on dependent types and for recursive type-equations
involving dependent types.  We think of this as a first step towards
a formalized dependent type theory with a later operator; here we
focus on the foundational semantic issues.

To explain the relationship to some of the related work, we point out
that $\stopos$ is equivalent to the category of sheaves on
$\omegabar$, where $\omegabar$ is the complete Heyting algebra of natural numbers with the usual ordering and extended with a top element $\infty$. 
Moreover, this sheaf category, and hence also $\stopos$, is equivalent to the
topos obtained by the tripos-to-topos
construction~\cite{Hyland:Johnstone:Pitts:80}
applied to the tripos $\Sets(\_,\omegabar)$.  The logic of constant
sets in $\stopos$ is exactly the logic of 
this tripos.\footnote{Recall that the tripos $\Sets(\_,\omegabar)$ is a
  model of logic in which types and terms are interpreted as sets and
  functions, and predicates are interpreted as $\omegabar$-valued
  functions.}

In the first part of this paper we work with the presentation of $\stopos$ as
presheaves since it is the most concrete, but in fact many
of our results generalize to sheaf categories over other complete
well-founded Heyting algebras. Indeed, we include a more general 
axiomatic treatment of models of synthetic guarded domain theory
and prove that, for any complete Heyting algebra with a well-founded
basis, the topos of sheaves over the Heyting algebra yields a model
of synthetic guarded domain theory.  We present this
generalization after the more concrete treatment of $\stopos$, 
since the concrete treatment of $\stopos$ is perhaps more accessible.

\subsection{Related work}

Nakano presented a simple type theory with guarded recursive types
\cite{Nakano:00} which can be modelled using complete bounded
ultrametric spaces \cite{Birkedal:Schwinghammer:Stovring:10:Nakano}.
We show in Section~\ref{sec:cbult} that the category $\BiCBUlt$ of
bisected, complete bounded ultrametric spaces is a co-reflective
subcategory of~$\stopos$. Thus, our present work can be seen as an
extension of the work of Nakano to include the full internal language
of a topos, in particular dependent types, and an associated
higher-order logic.  Pottier \cite{Pottier:FORK} presents an extension
of System F with recursive kinds based on Nakano's calculus; hence
\stopos also models the kind language of his system.

Di Gianantonio and Miculan~\cite{DiGianantonio:Miculan:04} studied guarded recursive 
definitions of functions in certain sheaf toposes over well-founded
complete Heyting algebras, thus including $\stopos$. 
Our work extends the work of Di Gianantonio and Miculan by also 
including guarded recursive definitions of \emph{types}, by emphasizing
the use of the internal logic (this was suggested as future work
in~\cite{DiGianantonio:Miculan:04}),
and by including an extensive example application.  
Moreover, our general treatment of sheaf models includes sheaves
over any well-founded complete Heyting algebra, whereas 
Di Gianantonio and Miculan restrict attention to those
Heyting algebras that arise as the opens of a topological space.

Earlier work has 
advocated the use of complete bounded
ultrametric spaces for solving recursive type and relation equations
that come up when modelling programming languages with higher-order
store~\cite{BirkedalL:essence,Schwinghammer:Birkedal:Stovring:11}.  
As mentioned above, $\BiCBUlt$ is a
subcategory of $\stopos$, and thence our present work can be seen as
an improvement of this earlier work: it is an
improvement since $\stopos$ supports full higher-order
logic. In the earlier work, one had to show that the
functions defined in the interpretation of the programming language
types were non-expansive.  Here we take the synthetic approach 
(cf.~\cite{Hyland:90}) and place ourselves in the internal logic of the topos when defining the
interpretation of the programming language, see Section~\ref{sec:application}. 
This means that there is no need to prove properties like
non-expansiveness since, intuitively, all functions in the topos are
suitably non-expansive. 

Dreyer \etal~\cite{dreyer+:lics09} proposed a logic, called LSLR, for defining step-indexed
interpretations of programming languages with recursive
types, building on earlier work by Appel \etal~\cite{Appel:Mellies:Richards:Vouillon:07} who proposed the use of a
later modality on predicates.  The point of LSLR is that it provides
for more abstract ways of constructing and reasoning with step-indexed
models, thus avoiding tedious calculations with step indices.  The
core logic of LSLR is the logic of the tripos $\Sets(\_,\omegabar)$
mentioned above,\footnote{Dreyer \etal~\cite{dreyer+:lics09} presented
  the semantics of their second-order logic in more concrete terms,
  avoiding the use of triposes, but it is indeed a fragment of the
  internal logic of the mentioned tripos.}  which allows for
recursively defined predicates
following~\cite{Appel:Mellies:Richards:Vouillon:07}, but not
recursively defined types.  One point of passing from this tripos to
the topos $\stopos$ is that it gives us a wider collection of types
(variable sets rather than only constant sets), which makes it
possible also to have mixed-variance recursively defined types.\footnote{The
  terminology can be slightly confusing:
  in~\cite{Appel:Mellies:Richards:Vouillon:07}, our notion of
  recursive relations were called recursive types, probably because
  the authors of \emph{loc.cit.} used such to interpret recursive
  types of a programming language. Recursive types in our sense were
  not considered in~\cite{Appel:Mellies:Richards:Vouillon:07}.}

Dreyer \etal~developed an extension of LSLR called LADR for
reasoning about step-indexed models of the programming language
\Fmuref\ with higher-order store~\cite{dreyer+:popl10}.  LADR is a
specialized logic where much of the world structure used for reasoning
efficiently about local state is hidden by the model of the logic;
here we are proposing a general logic that can be used to construct
many step-indexed models, including the one used to model LADR.
In particular, in our example application in Section~\ref{sec:application}, we \emph{define} a set of
worlds inside the $\stopos$ logic, using recursively defined types. 

As part of our analysis of recursive dependent types, we define a
class of types, called \emph{functorial types}.  We show that functorial types are closed
under nested recursive types, a result which is akin to results on
nested inductive types~\cite{abbott:icalp04,dybjer:97}.  The
difference is that we allow for general mixed-variance recursive
types, but on the other hand we require that all occurrences of
recursion variables must be guarded. 

\subsection{Summary of contributions}

We show how the topos $\stopos$, and, more generally, any topos of
sheaves over a complete Heyting algebra with a well-founded basis,
provides a simple but powerful model of guarded recursion, allowing
for guarded recursive definitions of both terms and types in the
internal dependently-typed higher-order logic.  In particular, we
\begin{iteMize}{$\bullet$}
\item show that the two later modalities are well-behaved on slices;  
\item give existence theorems for fixed points of guarded recursive terms 
  and guarded nested dependent mixed-variance recursive types;
\item detail the relation of \stopos to the category of complete bounded ultrametric spaces; 
\item present, as an example application, a synthetic model of
  $\Fmuref$ constructed internally in $\stopos$;
\item give an axiomatic treatment of a general class of models
  of guarded recursion.
\end{iteMize}
Our general existence theorems for recursive types in
Section~\ref{sec:class:of:models} are phrased in terms of
$\textit{Sh}(A)$-categories, i.e., categories enriched in sheaves over a
complete Heyting algebra $A$ with a well-founded basis, and generalize
earlier work on recursive types for categories enriched in complete
bounded ultrametric spaces~\cite{BirkedalL:metric-enriched-journal}.

\iflongversion
\else
For space reasons, more details can be found in the long version,
available at \url{http://www.diku.dk/~stovring/sgdt.pdf}
\fi

\section{The $\mathcal{S}$ Topos}
\label{sec:s}
\label{sec:model}


The category $\stopos$ is that of presheaves on $\omega$, the preorder of natural numbers 
starting from 1 and ordered by inclusion.
Explicitly, the objects of $\SItopos = \Sets^{\opcat{\omega}}$ are
families of sets indexed by natural numbers together with restriction
maps $\restrmap{n} \co \PAn{n+1} \to \PAn{n}$. Morphisms are families
$(f_{n})_{n}$ of maps commuting with the restriction maps as indicated
in the diagram
\begin{diagram}[size=2em]
\PAn{1} & \lTo & \PAn{2} & \lTo & \PAn{3} & \lTo & \dots \\
\dTo^{f_{1}} && \dTo^{f_{2}} && \dTo^{f_{3}} & \\
\PBn{1} & \lTo & \PBn{2} & \lTo & \PBn{3} & \lTo & \dots
\end{diagram}
If $x \in \PAn{m}$ and $n \leq m$ we write $\restrict{x}{n}$ for
$\restrmap{n} \circ \dots \circ \restrmap{m-1}(x)$.

As all presheaf categories, $\SItopos$ is a topos, in particular it is
cartesian closed and has a subobject classifier. Moreover, it
is complete and cocomplete, and limits and
colimits are computed pointwise. The $n$'th component of the exponential
$\PSindex{\PB^{\PA}}{n}$ is the set of tuples $(f_{1}, \dots, f_{n})$
commuting with the restriction maps, and the restriction maps of
$\PB^{\PA}$ are given by projection. We sometimes use the notation $\PA \to \PB$ for $\PB^\PA$.

A subobject $\SubPA$ of $\PA$ is a family of subsets $\SubPAn{n}
\subseteq \PAn{n}$ such that $\restrmap{n}(\SubPAn{n+1}) \subseteq
\SubPAn{n}$. The subobject classifier has $\PSindex{\Omega}{n} = \{0,
\dots, n\}$ and restriction maps $\restrmap{n}(x) = \min(n,x)$. The
characteristic morphism $\charmap{\SubPA}\co \PA \to \Omega$ maps $x
\in \PAn{n}$ to the maximal $m$ such that $\restrict{x}{m} \in
\SubPAn{m}$ if such an $m$ exists and $0$ otherwise.

The natural numbers object $N$ in $\stopos$ is the constant set of
natural numbers.

Intuitively, we can think of the set $\PA(n)$ as what the type $\PA$ 
looks like, if one has at most $n$ time steps to reason about it. The
restriction maps $\restrmap{n} \co \PA(n+1) \to \PA(n)$ describe what
happens to the data when one time step passes.
This intuition is illustrated by the following example.

\newcommand{\Stream}{\mathit{Str}}
\newcommand{\Succ}{\mathit{succ}}

\begin{exa}\label{ex:streams:i}
  We can define the object $\Stream\in\stopos$ of \emph{(variable) streams} of natural numbers
  as follows:
  \begin{diagram}[size=2em]
    N^1 & \lTo & N^2 & \lTo & N^3 & \lTo & \dots \\
  \end{diagram}
  where the restriction maps $\restrmap{m} : N^{m+1} \to N^m$ map
  $(n_1,\ldots,n_m, n_{m+1})$ to $(n_1,\ldots,n_m)$.  Intuitively, this is the type
  of streams where the head is immediately available, but the tail is
  only available after one time step. If we have $n$ time steps to reason about this
  type we can access the $n$ first elements, hence $\Stream(n) =
  N^{n}$.
  
  The successor function $\Succ$ on streams, which adds one to every element in a stream,
  can be defined in the model by 
  \begin{displaymath}
    \Succ_m : N^{m} \to N^{m} = (n_1,\ldots,n_m) \mapsto (n_1 + 1, \ldots, n_m + 1).
  \end{displaymath}
  Clearly $\Succ$ is a natural transformation from $\Stream$ to $\Stream$; hence it
  is a well-defined map in $\stopos$.
  Observe that $\Succ_m$ can also be defined by induction as
  $\Succ_1(n) = n+1$ and
  $\Succ_{m+1}(n_1,n_2,\ldots, n_{m+1}) = (n_1 + 1, \Succ_{m}(n_2,\ldots,n_{m+1}))$.

  The subobject $A\subseteq\Stream$ of increasing streams can be defined by
  letting $A_m \subseteq N^{m}$ be the set of tuples $(n_1,\ldots,n_m)$ that
  are increasing (i.e., $n_i > n_j$, for $i>j$). Note that $A$ is trivially closed
  under the restriction maps, and thus it is a well-defined subobject of $\Stream$.
\end{exa}

\subsection{The $\later$ endofunctor} 
Define the functor $\later \co \SItopos \to \SItopos$ by
$\PSindex{\later \PA}{1} = \{ \star \}$ and $\PSindex{\later \PA}{n+1}
= \PAn{n}$. This functor, called \emph{later}, has a left adjoint (so
$\later$ preserves all limits) given by $\PSindex{\earlier \PA}{n} =
\PAn{n+1}$. Since limits are computed pointwise, $\earlier$ preserves
them, and so the adjunction $\lradj{\earlier}{\later}$ defines a
geometric morphism, in fact an embedding. However, we shall not make use of
this fact in the present paper (because $\earlier$ is not a fibred
endo-functor on the codomain fibration, hence is not a useful operator
in the dependent type theory; see Section~\ref{sec:dtt}).

There is a natural transformation $\next_{\PA} \co \PA \to \later \PA$
whose $1$st component is the unique map into $\{ \star \}$ and whose
$(n+1)$th component is $\restrmap{n}$. Although $\next$ looks like a unit
$\later$ is not a monad: there are no natural transformations $\later \later \to \later$. 

Since $\later$ preserves finite limits, there is always a morphism
\begin{equation}
  \label{eq:J}
  J: \later(X\to Y) \to (\later X \to \later Y).  
\end{equation}

\subsection{An operator on predicates} 
There is a morphism
$\laterPred \co \Omega \to \Omega$ mapping $n \in \PSindex{\Omega}{m}$
to $\min(m, n+1)$.  By setting $\charmap{\laterPred\!\SubPA} =
\laterPred\circ \charmap{\SubPA}$ there is an induced operation on
subobjects, again denoted $\laterPred$.
This operation,
which we also call ``later'', is connected to the $\later$ functor, since
there is a pullback diagram
\begin{diagram}[size=2em]
\laterPred m  \SEpbk & \rTo & \later \SubPA \\
\dTo && \dTo_{\later m} \\
X & \rTo^{\next_X} & \later X
\end{diagram}
for any subobject $m \co \SubPA \to X$. 

\subsection{Recursive morphisms}
We introduce a notion of contractive morphism  and show that these   have unique fixed points.

\begin{defi}
  \label{def:contractive}
  A morphism $f \co \PA \to \PB$ is \emph{contractive} if there exists
  a morphism $g\co \later \PA \to \PB$ such that $f = g \circ
  \next_{\PA}$. A morphism $f \co \PA\times \PB \to \PC$ is
  contractive in the first variable if there exists  $g$ such that $f
  = g \circ (\next_{\PA} \times \id_{\PB})$.
\end{defi}
For instance, contractiveness
 of $\laterPred$ on $\Omega$ is witnessed by $\succ\co \later\Omega\to\Omega$ with \label{page:succ}
$\succ_n(k)=k\mathop+1$. 

\begin{lem} \label{lem:contractive:comp}\hfill
\begin{enumerate}[\em(1)]
\item \label{item:contractive:comp} If $f \co X \to Y$ and $g\co Y \to Z$ and either $f$ or $g$ are contractive also $gf$ is contractive.
\item \label{item:contractive:prod} If $f\co X \to Y$ and $g \co X' \to Y'$ are contractive, so is $f \times g \co X \times X' \to Y \times Y'$. 
\item \label{item:contractive:curry} A morphism $h \co X\times Y \to Z$ is contractive in the first variable iff $\hat h \co X \to Z^{Y}$ is  contractive. 
\end{enumerate}
\end{lem}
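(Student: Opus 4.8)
The plan is to unfold Definition~\ref{def:contractive} in each case and build the required factorisation through $\next$ directly from the hypotheses, using only the functoriality of $\later$ and the naturality of $\next$. The one structural fact I would isolate up front is the naturality square for $\next$: for any $h\co X\to Y$ we have $\later h\circ\next_X=\next_Y\circ h$. This is immediate from the definition of $\next$ (its $1$st component is forced into $\{\star\}$, and its $(n{+}1)$th component is $\restrmap{n}$, which commutes with the components of $h$ since $h$ is a natural transformation), and it is the only ``model-level'' computation needed.

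For part~\eqref{item:contractive:comp}, suppose first $f\co X\to Y$ is contractive, so $f=f'\circ\next_X$ for some $f'\co\later X\to Y$; then $gf=(g\circ f')\circ\next_X$ exhibits $gf$ as contractive. If instead $g\co Y\to Z$ is contractive, so $g=g'\circ\next_Y$ with $g'\co\later Y\to Z$, then using the naturality square above, $gf=g'\circ\next_Y\circ f=g'\circ\later f\circ\next_X=(g'\circ\later f)\circ\next_X$, again contractive. For part~\eqref{item:contractive:prod}, write $f=f'\circ\next_X$ and $g=g'\circ\next_{X'}$; since $\next_{X\times X'}=\next_X\times\next_{X'}$ (products are computed pointwise, and $\later$ preserves finite limits, so $\later(X\times X')\iso\later X\times\later X'$ compatibly with $\next$), we get $f\times g=(f'\times g')\circ(\next_X\times\next_{X'})=(f'\times g')\circ\next_{X\times X'}$, so $f\times g$ is contractive with witness $f'\times g'\co\later(X\times X')\to Y\times Y'$.

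For part~\eqref{item:contractive:curry}, this is a matter of transporting the factorisation across the cartesian-closed adjunction. Suppose $h\co X\times Y\to Z$ is contractive in the first variable, i.e.\ $h=h'\circ(\next_X\times\id_Y)$ for some $h'\co\later X\times Y\to Z$. Currying is natural, so $\hat h=\widehat{h'\circ(\next_X\times\id_Y)}=\hat{h'}\circ\next_X$, where $\hat{h'}\co\later X\to Z^Y$; hence $\hat h$ is contractive. Conversely, if $\hat h=k\circ\next_X$ with $k\co\later X\to Z^Y$, then uncurrying (again naturally) gives $h=\check{k}\circ(\next_X\times\id_Y)$ with $\check{k}\co\later X\times Y\to Z$, so $h$ is contractive in the first variable. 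The correspondence is a bijection because currying/uncurrying are mutually inverse.

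None of the three parts presents a genuine obstacle; the only point requiring a moment's care is the compatibility of $\next$ with finite products used in parts~\eqref{item:contractive:prod} and~\eqref{item:contractive:curry} — namely that the canonical iso $\later(X\times Y)\iso\later X\times\later Y$ (which exists since $\later$ preserves finite limits) identifies $\next_{X\times Y}$ with $\next_X\times\next_Y$. I would verify this by a one-line pointwise check, or simply observe that both sides are the unique map into the product whose composites with the two projections are $\later\pi_i\circ(\text{can})=\next\circ\pi_i$ by naturality of $\next$.
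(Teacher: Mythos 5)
Your proof is correct, and it uses exactly the routine ingredients the paper implicitly relies on: the paper states Lemma~\ref{lem:contractive:comp} without proof, and your argument (unfolding Definition~\ref{def:contractive}, naturality of $\next$, the canonical iso $\later(X\times X')\iso\later X\times\later X'$ identifying $\next_{X\times X'}$ with $\next_X\times\next_{X'}$, and naturality of currying) is the standard verification being omitted. No gaps; the one subtlety you flag — compatibility of $\next$ with binary products — is handled correctly, either by the pointwise check in $\stopos$ or by the universal-property argument you give.
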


\noindent If $f \co \PA \to \PB$ is contractive as witnessed by $g$, the value of $f_{n+1}(x)$
can be computed from $\restrmap{n}(x)$ and moreover, $f_1$ must be
constant. If $\PA = \PB$ we can define a fixed point $x \co 1 \to \PA$
by defining $x_1 = g_1(\star)$ and $x_{n+1} = g_{n+1}(x_n)$. This
construction can be generalized to include fixed points of morphisms
with parameters as follows.

\begin{thm} \label{thm:fp:op}
  There exists a natural family of morphisms $\fix_{\PA} \co ({\later
    \PA} \to \PA) \to \PA$, indexed by the collection of all objects
  $\PA$, which computes unique fixed points in the sense that if $f
  \co \PA\times \PB \to \PA$ is contractive in the first variable as
  witnessed by $g$, i.e., $f = g \circ (\next_{\PA} \times \id_{\PB})$,
  then $\fix_{\PA} \circ \hat{g}$ is the unique $h\co Y\to X$ such that $f \circ
  \pair{h}{\id_{\PB}} = h$ (here $\hat{g}$ denotes the exponential transpose of $g$).
\end{thm}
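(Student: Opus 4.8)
The plan is to first define $\fix_{\PA}$ componentwise on a morphism $k \co \later\PA \to \PA$ in $\stopos$, then check naturality, and finally verify the uniqueness property for contractive parametrized morphisms. For the definition, given $k \co \later\PA \to \PA$ I would set $(\fix_{\PA}(k))_1 = k_1(\star)$, exploiting that $(\later\PA)(1) = \{\star\}$, and then inductively $(\fix_{\PA}(k))_{n+1} = k_{n+1}((\fix_{\PA}(k))_n)$, using $(\later\PA)(n+1) = \PAn{n}$. One first has to check that $x := \fix_{\PA}(k) \co 1 \to \PA$ is actually a morphism of $\stopos$, i.e., that $\restrmap{n}(x_{n+1}) = x_n$; this follows by induction on $n$ from naturality of $k$, since $\restrmap{n}(k_{n+1}(y)) = k_n(\restr{y}{n})$ where the restriction map on $\later\PA$ from level $n+1$ to level $n$ is $\restrmap{n-1}$ for $n \geq 2$ and the unique map to $\{\star\}$ for $n=1$. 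Naturality of the family $\fix_{\PA}$ (i.e., that $\fix$ itself is a morphism $(\later\PA \to \PA) \to \PA$, allowing $k$ to vary over the exponential) is checked at each stage $n$: an element of $(\later\PA \to \PA)(n)$ is a tuple $(k_1,\dots,k_n)$ commuting with restrictions, and one sets $\fix$ at stage $n$ to produce $x_n$ by the same recursion using only $k_1,\dots,k_n$; commutation with the restriction maps of the exponential is again the induction above. Here one uses that the topos $\stopos$ is well-pointed enough — or rather, just works directly with the presheaf structure — so no subtlety beyond bookkeeping arises.

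For the uniqueness statement, suppose $f \co \PA \times \PB \to \PA$ is contractive in the first variable, witnessed by $g \co \later\PA \times \PB \to \PA$ with $f = g \circ (\next_{\PA} \times \id_{\PB})$, and let $\hat g \co \PB \to \PA^{\later\PA}$... actually $\hat g \co \PB \to \PA^{\later\PA}$ is the transpose, so that $\fix_{\PA} \circ \hat g \co \PB \to \PA$. Write $h = \fix_{\PA} \circ \hat g$. I would show $f \circ \pair{h}{\id_{\PB}} = h$ by unwinding at each stage: at stage $n+1$ and a point $b \in \PBn{n+1}$, the left side is $f_{n+1}(h_{n+1}(b), b) = g_{n+1}(\restrmap{n}(h_{n+1}(b)), b)$, and since $h$ is a morphism $\restrmap{n}(h_{n+1}(b)) = h_n(\restrmap{n}(b))$, which by the componentwise definition of $\fix$ applied to the tuple $\hat g(b)$ equals precisely $h_{n+1}(b)$; the base case $n=1$ uses that $g_1$ ignores its first argument up to the point $\star$. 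For the converse, if $h' \co \PB \to \PA$ also satisfies $f \circ \pair{h'}{\id_{\PB}} = h'$, then an induction on $n$ shows $h'_n = h_n$: at stage $1$, $h'_1(b) = f_1(h'_1(b),b) = g_1(\star, b) = h_1(b)$; at stage $n+1$, $h'_{n+1}(b) = f_{n+1}(h'_{n+1}(b), b) = g_{n+1}(\restrmap{n}(h'_{n+1}(b)), b) = g_{n+1}(h'_n(\restrmap{n}(b)), b)$, and by the induction hypothesis $h'_n = h_n$, so this equals $h_{n+1}(b)$.

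The main obstacle I expect is purely a matter of care rather than depth: keeping straight the three different exponentials and restriction structures involved — those of $\PA$, of $\later\PA$ (which shift the index by one and collapse level $1$), and of the exponential $\PA^{\later\PA}$ (whose stage-$n$ elements are tuples) — and making sure the inductive definition of $\fix$ genuinely only consults data available at each finite stage, so that it is a bona fide morphism of presheaves and not merely a function on global points. Once the componentwise recursion is set up correctly, naturality and both halves of the uniqueness claim are routine inductions on the stage $n$, each step reducing via the contractiveness equation $f = g \circ (\next_{\PA} \times \id)$ together with the fact that $\next_{\PA}$ is $\restrmap{n}$ in component $n+1$.
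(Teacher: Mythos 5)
Your componentwise construction --- $x_1=k_1(\star)$, $x_{n+1}=k_{n+1}(x_n)$, internalized stagewise over the exponential $(\later\PA\to\PA)$, extended to parameters via the transpose $\hat{g}$, with existence and uniqueness of the fixed point proved by induction on the stage $n$ using $f=g\circ(\next_\PA\times\id)$ --- is exactly the paper's construction (the paper only sketches it in the paragraph preceding the theorem, and its general sheaf-theoretic proof in Section~7, Lemma~\ref{lem:fix}, is the same recursion carried out by well-founded induction). The one nuance is that the stated ``naturality'' of the family $\fix_\PA$ means (di)naturality in the object $\PA$, not merely that each $\fix_\PA$ is a well-defined morphism of presheaves as you verify, but checking that is a routine further computation with the same componentwise formula.
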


\subsection{Internal logic}
\label{sec:internal-logic}

We start by calling to mind parts of the Kripke-Joyal forcing semantics for $\stopos$.
For $X_1, \dots, X_m$ in $\stopos$,\, $\varphi:X_1\times\cdots\times X_m\to\Omega$,\,  $n\in\omega$, and
$\alpha_1\in X_1(n), \ldots,\alpha_m\in X_m(n)$, we define
$\forces{n}{\varphi(\alpha_1,\ldots,\alpha_m)}$ iff
$\varphi_n(\alpha_1,\ldots,\alpha_m) = n.$

The standard clauses for the forcing relation are as
follows~\cite[Example~9.5]{Lambek:Scott:86}
(we write $\alphaseq$ for a sequence $\alpha_1,\ldots,\alpha_m$):
\begin{align*}
  \forces{n}{(s=t)}{\alphaseq}  
     &  \iff  \den{s}_n(\alphaseq) = \den{t}_n(\alphaseq)   \\
  \forces{n}{R(t_1, \ldots,t_k)}{\alphaseq}  
    &  \iff  \den{R}_n(\den{t_1}_n(\alphaseq),\ldots,\den{t_k}_n(\alphaseq)) = n  \\
  \forces{n}{(\varphi\land\psi)(\alphaseq)}
    & \iff   \forces{n}{\varphi (\alphaseq)} \land \forces{n}{\psi(\alphaseq)} \\
  \forces{n}{(\varphi \lor\psi)(\alphaseq)}
   & \iff   \forces{n}{\varphi(\alphaseq)} \lor \forces{n}{\psi(\alphaseq)} \\
  \forces{n}{(\varphi\imp\psi)(\alphaseq)}
    & \iff  \forall k\leq n.\, \forces{k}{\varphi(\restrict{\alphaseq}{k})} \imp \forces{k}{\psi(\restrict{\alphaseq}{k})} \\
  \forces{n}{(\exists x{:}X.\varphi)(\alphaseq)} 
    & \iff \exists \alpha\,{\in}\,\den{X}(n).\, \forces{n}{\varphi(\alphaseq,\alpha)} \\
  \forces{n}{(\forall x{:}X.\varphi)(\alphaseq)} 
    & \iff \forall k\leq n, \alpha \,{\in}\,\den{X}(k). \,
    \forces{k}{\varphi(\restrict{\alphaseq}{k},\alpha)} 
\end{align*}

\begin{prop}
  \label{prop:laterPred}
  $\laterPred$ is the unique morphism on $\Omega$ satisfying
    $\forces{1}{\laterPred{\varphi(\alpha)}}$  and  $\forces{n\mathop+1}{\laterPred\varphi(\alpha)} \iff \forces{n}{\varphi(\restrict{\alpha}{n}})$. 
    Moreover, 
    $\forall x,y\co X. \laterPred(x\mathop=y)\, \bimp\, \next_X(x) \mathop= \next_X(y)$ 
 holds in $\stopos$. 
\end{prop}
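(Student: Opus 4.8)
The proposition has two parts: a uniqueness characterization of $\laterPred$ via the forcing clauses, and the commutation of $\laterPred$ with equality through $\next_X$. For the first part, my plan is to verify directly that the given morphism $\laterPred \co \Omega \to \Omega$, $n \mapsto \min(m, n+1)$ on $\Omega(m)$, satisfies the two stated forcing conditions, and then argue that these conditions pin down a morphism on $\Omega$ uniquely. For the verification: at stage $1$, we always have $\forces{1}{\laterPred\varphi(\alpha)}$ because $\Omega(1) = \{0,1\}$ and $\laterPred_1$ sends both $0$ and $1$ to $\min(1, 0+1) = 1$, so $(\laterPred \circ \varphi)_1(\alpha) = 1$, which by definition of forcing at stage $1$ means $\forces{1}{\laterPred\varphi(\alpha)}$. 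At stage $n+1$: $\forces{n+1}{\laterPred\varphi(\alpha)}$ iff $\laterPred_{n+1}(\varphi_{n+1}(\alpha)) = n+1$ iff $\min(n+1, \varphi_{n+1}(\alpha) + 1) = n+1$ iff $\varphi_{n+1}(\alpha) \geq n$. Now I need to relate $\varphi_{n+1}(\alpha)$ to $\varphi_n(\restrict{\alpha}{n})$: by naturality of $\varphi$, $\varphi_n(\restrict{\alpha}{n}) = \restrmap{n}(\varphi_{n+1}(\alpha)) = \min(n, \varphi_{n+1}(\alpha))$, so $\forces{n}{\varphi(\restrict{\alpha}{n})}$ iff $\min(n, \varphi_{n+1}(\alpha)) = n$ iff $\varphi_{n+1}(\alpha) \geq n$, matching the condition above.

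For uniqueness, I would argue that a morphism $\psi \co \Omega \to \Omega$ is determined by its action on each $\Omega(m)$, and each element $k \in \Omega(m)$ is the value $\varphi_m(\alpha)$ for a suitable predicate $\varphi$ and stage element $\alpha$ (indeed one can take $X = 1$ and a subobject of $1$ with the appropriate truth values, or simply note the identity $\id_\Omega \co \Omega \to \Omega$ exhibits every $k \in \Omega(m)$ as a generic element). The two forcing conditions then determine, for every $m$ and every $k \in \Omega(m)$, whether $\psi_m(k) = m$ (via the stage-$(n+1)$ condition applied along the chain, together with the stage-$1$ base case) — and since the elements of $\Omega(m)$ are linearly ordered by $\leq$ with $\psi$ monotone with respect to restriction, knowing for each $n \leq m$ whether the restriction to stage $n$ is "true" (equals $n$) determines $\psi_m(k)$ completely. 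So any two morphisms satisfying the conditions agree on all components, hence are equal.

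For the second part, $\forall x, y \co X.\, \laterPred(x = y) \bimp \next_X(x) = \next_X(y)$, the cleanest route is the Kripke--Joyal semantics: fix $n$ and $\alpha, \beta \in X(n)$, and unfold both sides at stage $n$. At stage $1$ both sides are forced (the left by the stage-$1$ clause just established, the right because $\later X(1) = \{\star\}$ is terminal so any two elements are equal). At stage $n+1$, the left side $\forces{n+1}{\laterPred(\alpha = \beta)}$ unfolds via the first part to $\forces{n}{\restrict{\alpha}{n} = \restrict{\beta}{n}}$, i.e.\ $\restrict{\alpha}{n} = \restrict{\beta}{n}$ in $X(n)$; the right side $\forces{n+1}{\next_X(\alpha) = \next_X(\beta)}$ unfolds to $(\next_X)_{n+1}(\alpha) = (\next_X)_{n+1}(\beta)$ in $\later X(n+1) = X(n)$, and since $(\next_X)_{n+1} = \restrmap{n}$ this is exactly $\restrict{\alpha}{n} = \restrict{\beta}{n}$. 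Hence the biconditional holds at every stage, so it holds in $\stopos$. Alternatively, one could derive this from the pullback square relating $\laterPred$ and $\later$ displayed earlier, applied to the diagonal subobject $\Delta \co X \to X \times X$ whose characteristic map is equality: the pullback of $\later \Delta$ along $\next_{X \times X}$ is $\laterPred$ applied to equality, and one identifies $\later(X \times X) \cong \later X \times \later X$ with $\next_{X\times X}$ corresponding to $\next_X \times \next_X$, so the pulled-back subobject is precisely $\{(x,y) : \next_X(x) = \next_X(y)\}$.

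The main obstacle I anticipate is the uniqueness argument in the first part: one must be careful that the two forcing conditions genuinely determine the morphism, which requires knowing that every element of every $\Omega(m)$ arises as $\varphi_m(\alpha)$ for some predicate $\varphi$ and that the monotonicity/restriction structure of $\Omega$ lets the stagewise "truth at each $n \leq m$" data reconstruct the element — a small but genuine point rather than a pure calculation. The rest is routine unfolding of definitions.
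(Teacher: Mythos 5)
Your proof is correct, and since the paper states Proposition~\ref{prop:laterPred} without an explicit proof (treating it as a routine consequence of the forcing clauses and the concrete descriptions of $\Omega$, $\laterPred$ and $\next$), your direct stagewise verification — including the uniqueness argument via the generic predicate $\id_\Omega$ and the observation that an element of $\Omega(m)$ is determined by which stages $j\leq m$ force it — is exactly the argument the paper implicitly relies on. No gaps; the computations at stage $1$ and stage $n+1$ and the identification $(\next_X)_{n+1}=\restrmap{n}$ are all as intended.
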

The following definition will be useful for presenting facts about the
internal logic of $\stopos$.
\begin{defi}
  An object $X$ in $\stopos$ is \emph{total}
  if all the restriction maps $\restrmap{n}$ are surjective. 
\end{defi}
Hence all constant objects $\Delta(S)$ are total, but the total
objects also include many non-constant objects, e.g., the subobject
classifier.  The above definition is phrased in terms of the
model; the internal logic can be used to give a simple
characterization of when $X$ is total and inhabited by a global
element\footnote{$X$ is inhabited by a global element if there 
exists a morphism $x\co 1 \to X$}
: that is the case iff   $\next_X$ is internally surjective
in $\stopos$, i.e., iff $\forall y:\later X.\exists x:X.\next_X(x)=y$
holds in $\stopos$.
The following proposition can be proved using the forcing semantics;
note that the distribution rules below for $\laterPred$ generalize the
ones for constant sets described in~\cite{dreyer+:lics09} (since
constant sets are total).

\begin{thm}
\label{prop:internal-logic-rules}
  In the internal logic of $\stopos$ we have:
  \begin{enumerate}[\em(1)]
  \item (Monotonicity). $\forall p:\Omega.\, p\imp \laterPred p$.
  \item (L{\"o}b rule).
    $\forall p:\Omega.\, (\laterPred p \imp p) \imp p$.
  \item $\laterPred$ commutes with the logical connectives 
    $\top$, $\land$, $\imp$, $\lor$, 
    but does not preserve $\bot$. 
  \item For all $X$, $Y$, and $\varphi$, we have
    $\exists y:Y.\laterPred \varphi(x,y) \imp \laterPred(\exists y:Y.\,\varphi(x,y))$.
    The implication in the opposite direction holds if $Y$ is total and inhabited.
  \item For all $X$, $Y$, and $\varphi$, we have
    $\laterPred(\forall y:Y.\,\varphi(x,y)) \imp \forall y:Y.\laterPred\varphi(x,y)$.
    The implication in the opposite direction holds if $Y$ is total.
  \end{enumerate}
\end{thm}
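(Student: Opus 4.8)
The plan is to verify each clause by direct computation in the Kripke--Joyal forcing semantics, using Proposition~\ref{prop:laterPred} as the characterization of $\laterPred$, namely $\forces{1}{\laterPred\varphi(\alpha)}$ always, and $\forces{n\mathop+1}{\laterPred\varphi(\alpha)} \iff \forces{n}{\varphi(\restrict{\alpha}{n})}$. All five parts reduce to unwinding the clauses for the connectives and quantifiers given just above the proposition, combined with the basic fact that forcing is downward closed (Kripke monotonicity), so the whole proof is a sequence of short case distinctions on whether the stage is $1$ or a successor $n\mathop+1$.

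First I would do (1)--(3), which are the self-contained ``local'' facts. For monotonicity, at stage $1$ the conclusion $\laterPred p$ holds trivially; at stage $n\mathop+1$, if $p$ is forced then by downward closure $p$ is forced at $n$, which by Proposition~\ref{prop:laterPred} is exactly $\laterPred p$ at $n\mathop+1$. For the L\"ob rule I would argue by induction on the stage $n$: at $n=1$, $\laterPred p$ holds, so the hypothesis $\laterPred p \imp p$ gives $p$; at $n\mathop+1$, the induction hypothesis gives that $p$ is forced at $n$ (the hypothesis $\laterPred p \imp p$ is inherited at stage $n$ by monotonicity of implications), hence $\laterPred p$ is forced at $n\mathop+1$, hence $p$. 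The commutation with $\top$, $\land$, $\imp$, $\lor$ is checked clause by clause: the only mild subtlety is $\imp$, where one uses that $\forces{k}{\laterPred\varphi}$ for all $k\le n\mathop+1$ matches $\forces{k}{\varphi}$ for all $k\le n$ after reindexing; and one exhibits the failure of $\bot$-preservation by noting $\forces{1}{\laterPred\bot}$ while $\bot$ is never forced.

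For (4) and (5) the work is in the quantifier clauses. The ``easy'' directions ($\exists\laterPred \imp \laterPred\exists$ and $\laterPred\forall \imp \forall\laterPred$) follow at stage $1$ trivially and at stage $n\mathop+1$ by moving the witness/instance down one level via the restriction maps. The converse directions are where totality and inhabitation enter, and this is the step I expect to be the main obstacle. For the existential converse, given $\forces{n\mathop+1}{\laterPred(\exists y:Y.\varphi(x,y))}$ we get $\forces{n}{(\exists y:Y.\varphi)(\restrict{x}{n})}$, hence some $\beta\in Y(n)$ with $\forces{n}{\varphi(\restrict{x}{n},\beta)}$; to conclude $\forces{n\mathop+1}{\exists y:Y.\laterPred\varphi(x,y)}$ we need some $\alpha\in Y(n\mathop+1)$ with $\restrict{\alpha}{n}=\beta$, which is precisely surjectivity of $\restrmap{n}$, i.e.\ totality of $Y$ --- and at stage $1$ we additionally need $Y(1)$ nonempty to supply \emph{any} witness, which is why inhabitation is required there. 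For the universal converse, given $\forces{n\mathop+1}{\forall y:Y.\laterPred\varphi(x,y)}$ one must produce $\forces{n}{(\forall y:Y.\varphi)(\restrict{x}{n})}$, i.e.\ $\forces{k}{\varphi}$ for all $k\le n$ and all $\gamma\in Y(k)$; each such $\gamma$ lifts through the (surjective, by totality) restriction maps to some $\delta\in Y(k\mathop+1)\subseteq Y(n\mathop+1)$ with $\restrict{\delta}{k}=\gamma$, and applying the hypothesis to $\delta$ and using the $\laterPred$ clause yields $\forces{k}{\varphi(\restrict{x}{k},\gamma)}$ as needed; here inhabitation is not needed because the empty-domain case makes $\forall$ vacuously true at stage $1$. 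I would close by remarking that, since constant sets $\Delta(S)$ are total (and inhabited when $S\neq\emptyset$), these clauses specialize to the distribution rules for constant sets in~\cite{dreyer+:lics09}.
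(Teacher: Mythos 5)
Your proposal is correct and takes exactly the route the paper intends: the paper gives no detailed proof of this theorem, remarking only that it ``can be proved using the forcing semantics,'' and your stage-by-stage Kripke--Joyal verification --- using Proposition~\ref{prop:laterPred}, downward closure of forcing, and totality/inhabitation precisely where you invoke them --- is that argument spelled out. (Two cosmetic points in (5): the lifted element $\delta$ simply lives in $Y(k+1)$ and the hypothesis is applied at stage $k+1 \le n+1$ via the $\forall$-clause, rather than via any inclusion into $Y(n+1)$; and the precise reason inhabitation is not needed there is that the $\forall$-direction never requires producing an element, the stage-$1$ case being trivial since the conclusion is a $\laterPred$-formula.)
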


\noindent We now define an internal notion of contractiveness in the logic of
$\stopos$ which implies (in the logic) the existence of a unique fixed
point for inhabited types.

\begin{defi}\label{defn:internally-contractive}
  The predicate $\contractivePred$ on $Y^X$ is defined in the internal logic
  by
  \begin{displaymath}
    \contractivePred(f) \biimpdefn \forall x,x': X. \laterPred(x=x') \imp f(x) = f(x').
  \end{displaymath}
\end{defi}

For a morphism $f: X\to Y$, corresponding to a global element of
$Y^X$,  we have that if $f$ is contractive (in the external sense of
Definition~\ref{def:contractive}), then $\contractivePred(f)$ holds in 
the logic of $\stopos$.  
The converse is true if $X$ is total and inhabited, but not in general.  
We use both notions of contractiveness: the external notion provides
for a simple algebraic theory of fixed points for not only
morphisms but also functors (see Section~\ref{sec:recdom}), whereas
the internal notion is useful when working in the internal logic.

The internal notion of contractiveness generalizes the usual metric notion of
contractiveness for functions between complete bounded ultrametric
spaces; see Section~\ref{sec:cbult}.

\begin{thm}[Internal Banach Fixed-Point Theorem]
  \label{thm:internal-banach}
  The following holds in $\stopos$:
  \begin{displaymath}
    (\exists x:X.\top) \land \contractivePred(f) \imp \existsunique x: X.\, f(x) = x.
  \end{displaymath}
\end{thm}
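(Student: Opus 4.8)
The plan is to establish the two conjuncts of $\existsunique x:X.\,f(x)=x$ --- existence and uniqueness --- separately, reasoning throughout in the internal logic of $\stopos$ after discharging the antecedents, i.e.\ after fixing an inhabitant $x_*:X$ and assuming $\contractivePred(f)$. Uniqueness is the easy half, which I would dispatch first. Suppose $x,x':X$ with $f(x)=x$ and $f(x')=x'$. Instantiating the L\"ob rule (Theorem~\ref{prop:internal-logic-rules}(2)) at the proposition $x=x'$ reduces the goal to $\laterPred(x=x')\imp x=x'$; and this is immediate, since from $\laterPred(x=x')$ and $\contractivePred(f)$ one gets $f(x)=f(x')$, i.e.\ $x=x'$. (This half does not use the inhabitedness hypothesis, and works for an arbitrary object $X$.)

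The interesting half is existence, where the naive plan --- applying L\"ob's rule directly to $\exists x:X.\,f(x)=x$ --- breaks down, since it would require commuting $\laterPred$ past an existential over $X$, and $X$ need not be total. I would instead iterate $f$ and quantify over the natural numbers object $N$, which, being a constant object, is total and inhabited. Using $x_*:X$, define $s:N\to X$ by recursion via $s(0)=x_*$ and $s(k{+}1)=f(s(k))$, and prove $\exists k:N.\,f(s(k))=s(k)$ by the L\"ob rule. For the L\"ob step, assume $\laterPred\bigl(\exists k:N.\,f(s(k))=s(k)\bigr)$; since $N$ is total and inhabited, the converse direction of Theorem~\ref{prop:internal-logic-rules}(4) applies and yields a $k:N$ with $\laterPred(f(s(k))=s(k))$. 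Because $s(k{+}1)=f(s(k))$, this says $\laterPred(s(k)=s(k{+}1))$, so $\contractivePred(f)$ applied to $s(k)$ and $s(k{+}1)$ gives $f(s(k))=f(s(k{+}1))$, i.e.\ $s(k{+}1)=s(k{+}2)$, i.e.\ $f(s(k{+}1))=s(k{+}1)$; hence $k{+}1$ witnesses $\exists k:N.\,f(s(k))=s(k)$. From that existential one reads off $\exists x:X.\,f(x)=x$ by taking $x:=s(k)$, and together with uniqueness this gives $\existsunique x:X.\,f(x)=x$.

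The only genuinely delicate point is the one just highlighted: $\laterPred$ fails to commute with existential quantification in general --- Theorem~\ref{prop:internal-logic-rules}(4) supplies the needed direction only when the domain of quantification is total and inhabited --- so the existence witness cannot be obtained by a L\"ob induction carried out directly over $X$. This is precisely what forces the inhabitedness hypothesis (to seed the iteration) and makes it worthwhile to route the argument through $N$. Everything else is routine bookkeeping in the internal calculus, using the basic facts about $\laterPred$, $\next_X$, and $\contractivePred$ recorded above. (As a cross-check, existence also follows by Kripke--Joyal forcing: at stage $n$ take any element of $X(n)$ and apply the component $f_n$ of $f$ to it $n$ times, noting that each restriction to a stage $j\le n$ stabilizes after $j$ applications; but the internal argument is shorter and more in the spirit of this section.)
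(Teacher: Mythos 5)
Your proof is correct, but it follows a genuinely different route from the paper's. The paper factors the argument through an auxiliary, non-classical lemma, $\contractivePred(f) \imp \exists n:N.\,\forall x,x':X.\,f^n(x)=f^n(x')$, which it proves \emph{semantically} via the Kripke--Joyal forcing clauses (the point being that $\forces{m}{\contractivePred(f)}$ forces $f_i^i$ to be constant for $i\le m$, so $n=m$ works); the theorem is then obtained internally from that lemma by taking $f^n(x_*)$ as the fixed point. You instead stay entirely inside the internal logic: uniqueness by instantiating L\"ob at $p:=(x=x')$ together with $\contractivePred(f)$, and existence by defining the orbit $s:N\to X$ of the given inhabitant via primitive recursion on the natural numbers object and running a L\"ob induction on $\exists k:N.\,f(s(k))=s(k)$, where the key step --- pushing $\laterPred$ inside the existential --- is licensed by the converse direction of Theorem~\ref{prop:internal-logic-rules}(4) because $N$ is total and inhabited. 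Both arguments hinge on the totality of $N$ (the paper explicitly remarks that its lemma could also be proved internally via L\"ob and totality of $N$), but your version bypasses the eventual-constancy lemma altogether. What each buys: the paper's lemma is a reusable and conceptually striking fact with a very short semantic proof, whereas your argument is purely synthetic and hence transfers verbatim to any model validating the internal rules you invoke (L\"ob, the $\exists$/$\laterPred$ exchange for total inhabited objects, and recursion on $N$), without any appeal to the concrete presheaf semantics.
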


The above theorem (the Internal Banach Fixed-Point Theorem) is proved
in the internal logic using the following lemma, which expresses a
non-classical property. The lemma can be proved in the internal logic
using the L{\"o}b rule (and using that $N$ is a total object) ---
below we give a semantic proof using the Kripke-Joyal semantics.
\begin{lem}
  The following holds in $\stopos$:
  \begin{displaymath}
    \contractivePred(f) \imp \exists n: N. \forall x,x': X.\, f^n(x) = f^n(x').
  \end{displaymath}
\end{lem}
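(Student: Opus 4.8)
The plan is to give a semantic argument using the Kripke-Joyal forcing semantics, since the statement is about what holds internally in $\stopos$. Fix an object $X$, a natural number $n\in\omega$, and an element $\alpha\in (Y^X)(n)$ (where I am thinking of $f$ as a free variable of type $X\to X$, so it suffices to treat the case $X\to X$; more precisely I will unfold $\contractivePred(f)$ at a stage and show the existential over $N$ is forced). Recall that $N$ is constant, so $N(k)=\N$ for all $k$, and that $\contractivePred(f)$ being forced at stage $n$ means, by the forcing clauses for $\forall$ and $\imp$ together with Proposition~\ref{prop:laterPred}, that for all $k\le n$ and all $x,x'\in X(k)$, if $\restrict{x}{k-1}=\restrict{x'}{k-1}$ (reading this as vacuously true when $k=1$) then $f_k(x)=f_k(x')$. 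In other words: $f_k$ depends only on the restriction of its argument to stage $k-1$, for every $k\le n$.

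**The key step** is then the observation that iterating $f$ erases information one stage at a time, so $n$ iterations collapse everything at stages $\le n$. Concretely I would prove by induction on $k$ that for all $x,x'\in X(k)$ with $k\le n$ we have $f^k_k(x)=f^k_k(x')$: the base case $k=1$ is immediate since $f_1$ is constant (the $k=1$ instance of contractiveness, using that the restriction-to-stage-$0$ hypothesis is vacuous), and for the inductive step one uses naturality of $f$ — $f_k\circ\restrmap{k-1} = \restrmap{k-1}\circ f_k$ — to rewrite $f^k_k(x)$ as $f_k$ applied to something determined by $f^{k-1}_{k-1}(\restrict{x}{k-1})$, which by the induction hypothesis is independent of $x$; then one more application of $f$ via contractiveness at stage $k$ finishes it. Taking $k=n$ gives that $f^n_n$ is constant on $X(n)$, and by naturality $f^n_k$ is then constant on $X(k)$ for every $k\le n$ as well. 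Hence the element $n\in N(n)=\N$ witnesses $\forall x,x':X.\,f^n(x)=f^n(x')$ at stage $n$: for all $k\le n$ and $x,x'\in X(k)$, $f^n_k(x)=f^n_k(x')$, which is exactly the forcing clause for the inner $\forall$-$\forall$-equality statement. Since $n$ was the arbitrary stage at which we assumed $\contractivePred(f)$, the implication is forced everywhere, i.e.\ holds in $\stopos$.

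**The main obstacle** I anticipate is purely bookkeeping: being careful that the chosen witness depends on the stage $n$ (which is fine, since $N$ is a constant object and a global section at stage $n$ need not extend to a global element — the internal statement only asserts \emph{existence} of such an $n$, stage by stage, and a larger witness at a later stage must restrict compatibly, which it does because if $f^n_n$ is constant then so is $f^m_m$ for $m\ge n$ when contractiveness holds up to stage $m$; but here we only need it at the single stage $n$). One should also double-check the off-by-one in the forcing clause for $\laterPred$ from Proposition~\ref{prop:laterPred}: $\forces{k}{\laterPred(x=x')}$ unwinds to $\restrict{x}{k-1}=\restrict{x'}{k-1}$ for $k\ge 2$ and is automatic for $k=1$, which is precisely what makes the base case of the induction go through. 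Modulo these care points the argument is a short induction, as the paper's remark (``can be proved in the internal logic using the L{\"o}b rule'') suggests — indeed the semantic induction on $k$ is just the unwinding of a L{\"o}b induction, using totality of $N$ to know that the witness ``$n$'' makes sense at every stage.
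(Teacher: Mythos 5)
Your argument is correct and is essentially the paper's own proof: both work in the Kripke--Joyal semantics, fix a stage $m$ and $f\in X^X(m)$, show that forcing $\contractivePred(f)$ at $m$ makes each $f^k_k$ constant for $k\le m$ (the paper asserts this fact, you supply the induction), and take the witness $n=m$. One small repair: your claim that ``by naturality $f^n_k$ is constant on $X(k)$ for every $k\le n$'' does not follow from naturality alone (naturality only yields constancy on the image of the restriction maps, which need not be all of $X(k)$ unless $X$ is total); instead argue $f^n_k=f^{\,n-k}_k\circ f^k_k$, which is constant because $f^k_k$ is --- this is precisely the paper's remark that in particular $f^m_i$ is constant.
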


\begin{proof}
We must show that any $m$ forces the predicate. Unfolding the definition of the forcing relation, we see that it suffices to show that for all
$m$ and all $f \in X^X(m)$ there exists an $n$ such that 
\[
\forces{m}{\contractivePred(f)} \imp \forces{m}{\forall x,x': X.\, f^n(x) = f^n(x')}
\]
The element $f$ is a family $(f_i \co X(i) \to X(i))_{i \leq m}$ and the condition $\forces{m}{\contractivePred(f)}$ implies that $f_i^i(x) = f_i^i(y)$ for all $i\leq m$ and all $x,y \in X(i)$. In particular $f_i^m$ is constant. Therefore choosing $n=m$ makes $\forces{m}{\forall x,x': X.\, f^n(x) = f^n(x')}$ true.
\end{proof}

\subsection{Recursive relations} 
As an example application of Theorem~\ref{thm:internal-banach}, we consider the 
definition of recursive predicates.
Let $\varphi(r):\Omega^X$ be a predicate on $X$ in the internal logic of $\stopos$ as
presented above (over non-dependent types, but possibly using
$\laterPred$) with free variable $r$, also of type $\Omega^X$.
Note that $\Omega^X$ is inhabited by a global
element.  If $r$ only occurs under a $\laterPred$ in $\varphi$, then
$\varphi$ defines an internally contractive map
$\varphi\co\Omega^X\to\Omega^X$ (proved by external induction on $\varphi$).
Therefore, by Theorem~\ref{thm:internal-banach}, $\existsunique\,
r\co\Omega^X. \varphi(r) = r$ holds in $\stopos$.  By description (aka
axiom of unique choice), which holds in any
topos~\cite{Lambek:Scott:86}, there is then a morphism $R: 1 \to
\Omega^X$ such that $\varphi(R)=R$ in $\stopos$, and since internal and
external equality coincides, also $\varphi(R) = R$ externally as
morphisms $1\to \Omega^X$. 
In summa, we have shown the well-definedness of recursive predicates
$r = \varphi(r)$ where $r$ only occurs guarded by  $\laterPred$ in $\varphi$.

Note that we have \emph{proved} the existence of recursive
guarded relations (and thus do not have to add them to the language with
special syntax) since we are working with a higher-order logic.

\begin{exa} \label{ex:rewrite}
  Suppose $R\subseteq X \times X$ is some relation on a set $X$. We
  can include it into $\stopos$ by using the functor $\Delta \co
  \Sets \to \SItopos$, obtaining $\Delta R \subseteq \Delta X \times
  \Delta X$. Consider the recursive relation
\[
\transClosureAlt{R} (x,y) \biimpdefn (x = y) \vee \exists z \ld (\Delta R(x,z) \meet \laterPred\transClosureAlt{R}(z,y)) \, .
\]
Now, $\forces{n+1}{\transClosureAlt{R} (x,y)}$ iff $(x,y) \in \cup_{0 \leq i \leq n} R^i$ or there exists $z$ such that $R^{n+1}(x,z)$.
If $R$ is a rewrite relation then $\forces{n+1}{\transClosureAlt{R} (x,y)}$ states the extent to which we can determine if $x$ rewrites to~$y$ by inspecting all rewrite sequences of length at most~$n$. 
\end{exa}

A variant of Example~\ref{ex:rewrite} is used in Section~\ref{sec:application}. 

\newcommand{\strengthof}[3]{#1_{#2, #3}}
\newcommand{\bang}{!}
\newcommand{\comp}{\textit{comp}}

\subsection{Recursive domain equations}
\label{sec:recdom}

In this section we present a simplified version of our results on solutions to recursive domain equations in $\SItopos$ sufficient for the example of Section~\ref{sec:application}. The full results on recursive domain equations can be found in Section~\ref{sec:dtt}.

Denote by $\nameof{f} \co 1 \to Y^X$ the curried version of $f \co X
\to Y$. Following Kock \cite{Kock:72} we say that an endofunctor $F\co
\SItopos \to \SItopos$ is \emph{strong} if, for all $X,Y$, there
exists a morphism $\strengthof{F}{X}{Y} \co Y^X \to FY^{FX}$ such that
$\strengthof{F}{X}{Y} \circ \nameof{f} = \nameof{Ff}$ for all $f$.

\begin{defi}
\label{def:locally:contractive}
A strong endofunctor on $\SItopos$ is \emph{locally contractive}
if each $\strengthof{F}{X}{Y}$ is contractive, i.e., there exists a family $\strengthof{G}{X}{Y}$ such that $\strengthof{G}{X}{Y} \circ \next_{X^Y} = \strengthof{F}{X}{Y}$  and moreover $G$ respects identity and composition, that is the following diagrams commute
\begin{diagram}
\later (Y^X) \times \later (Z^Y) & \rTo^{\iso} & \later (Y^X \times Z^Y) & \rTo^{\later (\comp)} & \later (Z^X)  &\GAP & 1 & \rTo^{\later \nameof{\id}} & \later (X^X)\\
\dTo^{\strengthof{G}{X}{Y} \times \strengthof{G}{Y}{Z}} &&&& \dTo_{\strengthof{G}{X}{Z}} &&& \rdTo_{\nameof{\id}} & \dTo_{\strengthof{G}{X}{X}} \\
FY^{FX} \times FZ^{FY} & & \rTo^{\comp} && FZ^{FX} &&&& X^X
\end{diagram}
\end{defi}
This notion readily generalizes to
mixed-variance endofunctors on $\SItopos$.
\begin{rem}
  Definition~\ref{def:locally:contractive} is slightly less general
  than the one given in the conference version of this paper~\cite{BirkedalL:sgdt}
  where local contractiveness simply required $\strengthof{F}{X}{Y}$
  to be contractive. The definition given here greatly simplifies the
  proof of existence of solutions to recursive domain equations,
  especially in the general case as presented in
  Section~\ref{sec:class:of:models}, and at the same time, the extra
  requirements used here do not rule out any examples we know of. In
  particular, the syntactic conditions for well-definedness of
  recursive types remain unchanged.
\end{rem}

The requirement of $G$ commuting with composition and identity can be
rephrased as $G$ defining an enriched functor. In
Section~\ref{sec:axioms} we use this observation to generalise the notion of
locally contractive functor. 

For example, $\later$ is locally contractive (as witnessed by $J$ (\ref{eq:J})), and one can show that
the composition of a strong functor and a locally contractive functor
(in either order) is locally contractive (see Lemma~\ref{lem:loc:contractive:comp} for a
generalized statement). As a result, one can show
that any type expression $A(X, Y)$ constructed from type variables
$X,Y$ using $\later$ and simple type constructors in which $X$ occurs
only negatively and $Y$ only positively and both only under $\later$
gives rise to a locally contractive functor. Indeed, in Section~\ref{sec:dtt}
we present such syntactic conditions ensuring that a type expression in 
dependent type theory induces a locally contractive functor.

\begin{thm} \label{thm:rec:dom:eq}
 Let $F \co \opcat{\SItopos} \times \SItopos \to \SItopos$ be a locally contractive functor. Then there exists a unique $X$ (up to isomorphism) such that $F(X,X) \iso X$. 
\end{thm}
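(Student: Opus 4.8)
The plan is to solve the equation by the standard inverse-limit / bilimit technique from domain theory, but adapted to the guarded setting where local contractiveness replaces local continuity. First I would build an explicit Cauchy-tower-style approximation. Start from any object, say $X_0 = 1$, together with a map $e_0 \colon X_0 \to F(X_0, X_0)$ (which exists because $1$ is terminal), and then set $X_{k+1} = F(X_k, X_k)$ with the induced maps $e_{k+1} = F(p_k, e_k)$ (using the mixed variance, feeding the projection on the contravariant side and the embedding on the covariant side) and dually $p_{k+1} = F(e_k, p_k)$. Because $F$ is locally contractive, the strength $\strengthof{F}{X}{Y}$ factors through $\later(Y^X)$, so the map sending $(e,p)$-pairs at stage $k$ to stage $k{+}1$ is contractive; concretely this means the approximants $X_k$ agree with the limit object "up to stage $k$". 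I would then define the candidate solution $X$ as the limit of this tower and show $F(X,X)\iso X$ by the usual argument that $F$ applied to a limit of an $\omega^{\mathrm{op}}$-chain of this form reproduces the chain shifted by one.

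Alternatively — and this is the route I would actually take, since the machinery of Theorem~\ref{thm:fp:op} is already available — I would reduce the whole statement to the fixed-point theorem for contractive morphisms. The key observation is that $\catfont{S}$ is self-enriched (cartesian closed), and a locally contractive mixed-variance functor $F$ gives rise, via the strengths $\strengthof{G}{X}{Y}$, to a morphism on hom-objects that is contractive \emph{uniformly in $X,Y$}. Passing to a suitable object of "$F$-dialgebras" or, more concretely, working pointwise at each stage $n\in\omega$: at stage $1$ the contractiveness forces $X(1)$ to be determined (the $1$-component of a contractive map is constant), and having determined $X(n)$ one reads off $X(n{+}1) = (F(X,X))(n{+}1)$, which by local contractiveness depends only on $X{\restriction}n$, i.e.\ on the data already fixed. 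This is exactly the shape of the recursion solved in the paragraph preceding Theorem~\ref{thm:fp:op}, now carried out in the category $\catfont{S}$ itself rather than in $\Sets$. The restriction maps of $X$ and the isomorphism $F(X,X)\iso X$ are then constructed component-wise by the same induction, and naturality/coherence is automatic because $G$ respects identity and composition — this is precisely why Definition~\ref{def:locally:contractive} was strengthened to demand the enriched-functor conditions on $G$.

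For uniqueness up to isomorphism: suppose $F(Y,Y)\iso Y$ as well. I would compare $X$ and $Y$ stagewise. The isomorphisms exhibit both $X{\restriction}n$ and $Y{\restriction}n$ as determined by $F$ via a contractive operation, so by induction on $n$ — using contractiveness to get the base case $n=1$ and the step "level $n{+}1$ depends only on level $n$" — one builds a compatible family of isomorphisms $X(n)\iso Y(n)$ commuting with restriction maps, hence an isomorphism $X\iso Y$ in $\catfont{S}$. Equivalently, package this as: the operation $Z\mapsto F(Z,Z)$ (suitably understood on an object of bifunctorial retractions) is contractive, so it has a unique fixed point by Theorem~\ref{thm:fp:op}.

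The main obstacle I expect is the bookkeeping around mixed variance: a genuine fixed point needs an object together with an isomorphism $F(X,X)\cong X$, and producing the \emph{inverse} map (not just a morphism $F(X,X)\to X$ or the reverse) requires the embedding–projection symmetry of the tower construction, or equivalently requires solving simultaneously for $X$ and for the pair of mutually inverse maps. In the guarded setting this is cleaner than in classical domain theory — there is no need for a separate "minimal invariant" / limit-colimit coincidence argument, because contractiveness makes the relevant operator an \emph{iso}-valued contraction whose fixed point is automatically an isomorphism — but verifying that the component-wise construction genuinely assembles into a morphism of $\catfont{S}$ (i.e.\ commutes with all restriction maps, on both the functor side and the isomorphism side) is where the coherence hypotheses on $G$ have to be used carefully.
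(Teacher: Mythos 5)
Your intuition (the diagonal/stagewise picture, and contractiveness driving uniqueness) matches the paper's proof sketch, but as written the proposal has concrete gaps. First, your tower cannot be started as described: terminality of $1$ gives a map $F(1,1)\to 1$, not a map $e_0\co 1\to F(1,1)$, and $F(1,1)$ need not have any global point (constant functors are locally contractive and can land on objects such as $\later$ of the initial object), so the embedding half of your embedding--projection pairs need not exist. The paper only ever uses the projection direction: it forms the chain $F1 \from F^{2}1 \from F^{3}1 \from \cdots$ with connecting maps $F^{n}(\bang)$, proves that a locally contractive functor sends $n$-isomorphisms to $(n{+}1)$-isomorphisms (Lemma~\ref{lem:lcf:n:iso}), so that $F^{n}(\bang)$ is an $n$-isomorphism, and takes the limit of this chain (the ``diagonal''). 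That lemma is also what makes your stagewise reading-off meaningful: as stated, ``$X(n{+}1)=(F(X,X))(n{+}1)$ depends only on $\restrict{X}{n}$'' is not a definition, since $F$ acts on whole objects of $\stopos$, not on truncations; you neither isolate nor prove the $n$-isomorphism property that substitutes for it. Likewise, packaging the construction as ``$Z\mapsto F(Z,Z)$ is contractive, so it has a unique fixed point by Theorem~\ref{thm:fp:op}'' is a category error: that theorem concerns contractive morphisms between objects of $\stopos$, not operations on the class of objects. The legitimate use of Theorem~\ref{thm:fp:op} is on hom-objects: given $f\co F(X)\iso X$ and any algebra $g\co F(Y)\to Y$, the homomorphism condition says $\nameof{h}$ is a fixed point of the contractive map $k\mapsto g\circ Fk\circ \inv{f}$ on $Y^{X}$, so every fixed point is an initial algebra and a final coalgebra; this is how the paper obtains uniqueness, rather than by a stagewise comparison (which would face the same well-definedness and mixed-variance problems).

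Second, you correctly flag mixed variance as the main obstacle, but you do not resolve it; asserting that contractiveness makes the operator an ``iso-valued contraction'' does not produce the two mutually inverse maps. The paper reduces mixed variance to the covariant case by Freyd's algebraic-compactness argument: covariant locally contractive functors have bifree fixed points, so one forms the functor $\mu F$ sending $X$ to the fixed point of $F(X,-)$, takes $Z$ to be the fixed point of $X\mapsto F(\mu F(X),X)$ and $W=\mu F(Z)$, giving $F(W,Z)\iso Z$ and $F(Z,W)\iso W$; the initial-dialgebra lemma (again uniqueness of fixed points of contractive maps on hom-objects) then yields $W\iso Z$, hence existence and uniqueness of $X$ with $F(X,X)\iso X$ (see Lemma~\ref{lem:initial:algebras}, Lemma~\ref{lem:initial:dialgebra} and Theorem~\ref{thm:rec:types:from:init:alg}). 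So the missing ingredients are precisely: the $n$-isomorphism lemma plus the limit of the chain $F^{n}1$, the hom-object (not object-level) application of Theorem~\ref{thm:fp:op}, and the Freyd-style symmetrization handling mixed variance.
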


Section~\ref{sec:class:of:models} gives a detailed proof of a generalised version of this theorem. Here we just sketch a proof. We consider first the covariant case.

\begin{lem}  \label{lem:lcf:n:iso}
Let $F \co \SItopos \to \SItopos$ be locally contractive and say that $f\co X \to Y$ is an $n$-isomorphism if $f_{i}$ is an isomorphism for all $i\leq n$. Then $F$ maps $n$-isomorphisms to $n+1$-isomorphisms for all $n$. 
\end{lem}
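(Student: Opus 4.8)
The plan is to exploit the defining property of local contractiveness, namely that the strength $\strengthof{F}{X}{Y} \co Y^X \to FY^{FX}$ factors through $\next$ via some $\strengthof{G}{X}{Y}\co \later(Y^X) \to FY^{FX}$. First I would fix $n$ and an $n$-isomorphism $f\co X \to Y$, meaning $f_i$ is an isomorphism for all $i \leq n$; I must show that $(Ff)_i$ is an isomorphism for all $i \leq n+1$. Since an isomorphism in $\SItopos$ is exactly a morphism that is an isomorphism at every level, and since $n$-isomorphisms are clearly closed under composition and inverses-where-defined, the key observation is that the hypothesis ``$f$ is an $n$-isomorphism'' can be packaged as the statement that the element $\restrict{\nameof{f}}{?}$ of $(Y^X)$, together with the corresponding element of $(X^Y)$ coming from the levelwise inverses, behave as mutually inverse up to level $n$. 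Concretely: to say $f_i$ is iso for $i \leq n$ is to give $g\co X \to Y$ (not necessarily natural beyond level $n$, but we only need a compatible family up to $n$) with $g f$ and $f g$ restricting to identities at all levels $\leq n$; equivalently, the pair lives in $(Y^X)(n)$ and $(X^Y)(n)$ and composes to $\nameof{\id}$ restricted to level $n$.

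The crucial step is then: applying $\next$ or, more to the point, reading off components, the element $\restrict{\nameof{f}}{n} \in (Y^X)(n)$ is precisely the datum seen by $\strengthof{G}{X}{Y}$ at level $n+1$, because $\later(Y^X)(n+1) = (Y^X)(n)$. Thus $(\strengthof{F}{X}{Y}\nameof{f})_{n+1} = (Ff)_{n+1}$, viewed in $(FY^{FX})(n+1)$, depends only on $\restrict{\nameof{f}}{n}$. Now I would use the commuting diagrams in Definition \ref{def:locally:contractive} for $G$: applying $\strengthof{G}{X}{Y}$ and $\strengthof{G}{Y}{X}$ to the mutually-inverse-up-to-level-$n$ pair, and using that $G$ respects composition and identity, yields that $(Ff)_{n+1}$ composed (in $FY^{FX}$ and $FX^{FY}$) with the $G$-image of the inverse pair gives $\nameof{\id}$ at level $n+1$. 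Hence $(Ff)_{n+1}$ is an isomorphism. For levels $i \leq n$ one argues the same way, since an $n$-isomorphism is in particular an $(i-1)$-isomorphism for $i \leq n$, and $(Ff)_i$ depends only on $\restrict{\nameof{f}}{i-1}$; but in fact it is cleaner to note that $(Ff)_i$ for $i \leq n+1$ is determined by the level-$(i-1)\leq n$ data, all of which is isomorphism data, so the same $G$-diagram argument applies uniformly for all $i \leq n+1$.

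The main obstacle I anticipate is bookkeeping around the fact that an $n$-isomorphism need not be a morphism with a two-sided inverse \emph{in} $\SItopos$ — the levelwise inverses $f_i^{-1}$ for $i \leq n$ need not assemble into a natural transformation $Y \to X$ (they commute with restriction maps only in the range $i \leq n$). So I cannot simply write ``$f$ is iso, apply functoriality.'' The right way to handle this is to work with truncations: restrict attention to the full subcategory of $\SItopos$ on the first $n$ levels, or equivalently observe that all the relevant elements of $(Y^X)$, $(X^Y)$, $(FY^{FX})$, $(FX^{FY})$ and the composition/identity maps involved only ever reference levels $\leq n$ on the source side and $\leq n+1$ on the target side, and that the $G$-diagrams of Definition \ref{def:locally:contractive}, being diagrams of morphisms in $\SItopos$, hold at every level and in particular at level $n+1$. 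Chasing the identity-preservation triangle and the composition square of that definition at level $n+1$ with the inputs $\restrict{\nameof{f}}{n}$ and $\restrict{\nameof{g}}{n}$ then gives exactly $(Ff)_{n+1}\circ(Fg')_{n+1} = \id$ and symmetrically, where $g'$ is built from the partial inverses; this is the heart of the argument and everything else is routine unfolding.
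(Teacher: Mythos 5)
Your proposal is correct and matches the paper's own argument, which appears (in generalized form) as the proof of Lemma~\ref{lem:b:iso} in Section~\ref{sec:class:of:models}: there, too, one uses that local contractiveness makes the components of $Ff$ up to level $n{+}1$ depend only on the level-$n$ data of $f$, packages the levelwise inverses into a single compatible element (your element of $(X^Y)(n)$, the paper's amalgamated $\inv{f}_{\pred(a)}$), and applies the composition- and identity-preservation of the witness $G$ (resp.\ $H$) componentwise to produce the inverse of each $(Ff)_i$, $i\leq n{+}1$. (Only a cosmetic slip: your partial inverse $g$ should of course go $Y\to X$, not $X\to Y$.)
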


Since any morphism $f\co X \to Y$ is a $0$-isomorphism $F^n(f)\co F^nX \to F^nY$ is an $n$-isomorphism. Consider the sequence
\begin{diagram}[LaTeXeqno] \label{eq:fixed:point:sequence}
F1 & \lTo^{F\bang} & F^{2}1 & \lTo^{F^2(\bang)}  & F^{3}1 & \lTo^{F^3(\bang)} & F^{4}1 & \dots
\end{diagram}
The sequence above is a sequence of morphisms and objects in $\SItopos$ and 
so represents a diagram of sets and functions as in
\begin{diagram}[size=2em,LaTeXeqno] \label{eq:fixed:point:square}
F(1)(1) & \lTo^{F(\bang)_1} & F^{2}(1)(1) & \lTo^{F^2(\bang)_1}  & F^{3}(1)(1) & \lTo^{F^3(\bang)_1} & F^{4}(1)(1) & \lTo & \dots \\
\uTo && \uTo && \uTo && \uTo \\
F(1)(2) & \lTo^{F(\bang)_2} & F^{2}(1)(2) & \lTo^{F^2(\bang)_2}  & F^{3}(1)(2) & \lTo^{F^3(\bang)_2} & F^{4}(1)(2) & \lTo &  \dots \\
\uTo && \uTo && \uTo && \uTo \\
F(1)(3) & \lTo^{F(\bang)_3} & F^{2}(1)(3) & \lTo^{F^2(\bang)_3}  & F^{3}(1)(3) & \lTo^{F^3(\bang)_3} & F^{4}(1)(3) & \lTo & \dots \\
\vdots && \vdots && \vdots && \vdots
\end{diagram}
By the above observation, $F^n(\bang)_k$ is an isomorphism for $k \leq
n$, in other words, after $k$ iterations of $F$ the first $k$
components are fixed by further iterations of $F$. Intuitively, we can
therefore form a fixed point for $F$ by taking the diagonal of
(\ref{eq:fixed:point:square}), i.e, the object whose $k$'th component
is $F^k(1)(k)$. Indeed, in Section~\ref{sec:class:of:models} we
construct this object as the limit of (\ref{eq:fixed:point:sequence}).

Any fixed point for such an $F$ must be at the same time an initial algebra and a final coalgebra: given any fixed point $f \co FX \iso X$ and algebra $g \co FY \to Y$ a morphism $h\co X \to Y$ is a homomorphism iff $\nameof{h}$ is a fixed point of $\xi = \lam{k}{X \to Y}{g \circ Fk \circ \inv{f}}$. Since $F$ is locally contractive, $\xi$ is contractive and so must have a unique fixed point. The case of final coalgebras is similar.

Thus, $\SItopos$ is algebraically compact in the sense of Freyd
\cite{Freyd:90,Freyd:90a,Freyd:91} with respect to locally contractive functors. The
solutions to general recursive domain equations can then be
established using Freyd's constructions.

\begin{exa}\label{ex:streams:ii}
  Recall the type $\Stream$ of streams defined concretely in the model
  in Example~\ref{ex:streams:i}.  It can be defined in the internal
  language using Theorem~\ref{thm:rec:dom:eq}, namely as the type satisfying
  the recursive domain equation
  \begin{displaymath}
    \Stream \iso N \times \later \Stream.
  \end{displaymath}
  Write $i: N\times\later\Stream\to \Stream$ for the isomorphism. (Observe that
  $i_m$ is nothing but the identity function.)
  
  Now, we can define the successor function in the internal language as the fixed point of
  the following contractive function $F: (\Stream\to\Stream)\to (\Stream\to\Stream)$:
    \begin{align*}
    F(f) = \lambda s. &  \mathsf{let}\; (n,t) \Leftarrow i^{-1}(s) \\
                               &  \mathsf{in} \; i(n+1, J(\next f)(t))
    \end{align*}
    Note that $F$ is clearly contractive (in the external sense) since
    the argument $f$ is only used under $\next$ in $F(f)$.
    Hence $F$ has a fixed point, which is  indeed the successor
    function from Example~\ref{ex:streams:i}, i.e., $\Succ = \fix_{\Stream\to\Stream}{F}$.
\end{exa}

\section{Application to Step-Indexing}
\label{sec:application}

As an example, we now construct a model of a programming language with
higher-order store and recursive types entirely inside the internal
logic of $\stopos$.
There are two points we wish to make here.  First, although the
programming language is quite expressive, the internal model
looks---almost---like a naive, set-theoretic model.  The exception is
that guarded recursion is used in a few, select places, such as
defining the meaning of recursive types, where the naive approach
would fail.  Second, when viewed externally, we recover a standard,
step-indexed model.  This example therefore illustrates that the
topos of trees gives rise to simple, synthetic accounts of
step-indexed models.

All definitions and results in Sections~\ref{subsec:application:language}~to~\ref{subsec:application:interpretation}
are in the internal logic of $\stopos$.  In Section~\ref{subsec:application:external} we
investigate what these results mean externally.

\subsection{Language}
\label{subsec:application:language}

The types and terms of \Fmuref\ are as follows:
\begin{align*}
 \tp & ::= 1 \mid \tp_1
    \mathop\times \tp_2  
    \mid \mu \alpha. \tp
    \mid \forall \alpha. \tp \mid \alpha \mid \arrowtp{\tp_1}{\tp_2} \mid \reftp{\tp}
\\[2mm]
 \tm & ::=  x \mid l \mid \unitcst \mid
    \pairtm{\tm_1}{\tm_2} 
    \mid \fst\tm \mid \snd\tm
 \mid \fold\tm   \mid \unfold\tm \mid
    \\&{}\qquad
    \tplam \alpha \tm \mid \tpapp \tm \tp \mid
    \lamtm x \tm \mid \tm_1 \ap \tm_2
    \mid \reftm\tm \mid \lookup\tm \mid \assign{\tm_1}{\tm_2}
\end{align*}
  (The full term language also includes sum types, and can be found in
\iflongversion
Appendix~\ref{app:step-indexing}.)
\else
the long version.)
\fi
  Here $l$ ranges over location constants, which are encoded as
  natural numbers.  

  More explicitly, the sets $\mathrm{OType}$ and
  $\mathrm{OTerm}$ of possibly open types and terms are defined by
  induction according to the grammars above (using that toposes model
  $W$-types~\cite{palmgren-moerdijk}), and then by quotienting
  with respect to  $\alpha$-equivalence.

The set $\mathrm{OValue}$ of syntactic values is an inductively defined
subset of $\mathrm{OTerm}$:
  \begin{align*}
    v & ::= x \mid l \mid \unitcst \mid \pairtm{v_1}{v_2} \mid 
    \fold{v} \mid \tplam{\alpha}{\tm} \mid \lamtm{x}{t}
  \end{align*}

Let $\Term$ and $\Value$ be the subsets of closed terms and closed
values, respectively.  Let $\Store$ be the set of finite maps from
natural numbers to closed values; this is encoded as the set of those
finite lists of pairs of natural numbers and closed values that contain
no number twice.  Finally, let $\mathrm{Config} = \Term \times \Store$.

  The typing judgements have the form\/
  $\tpjudge{\Xi}{\Gamma}{\tm}{\tp}$ where $\Xi$ is a context of type
  variables and $\Gamma$ is a context of term variables.  The typing
  rules are standard and can be found in
\iflongversion
 Appendix~\ref{app:step-indexing}.
\else
the long version of the paper.
\fi
Notice, however, that there is no
  context of location variables and no typing judgement for stores: we
  only need to type-check terms that can occur in programs.

\subsection{Operational semantics}
\label{subsec:application:operational_semantics}

We assume a standard one-step relation $\step \co \Powset{\Config \times
\Config}$ on configurations by induction, following the usual
presentation of such relations by means of inference rules (see, e.g., the
online appendix to~\cite{dreyer+:popl10}).  For
simplicity, allocation is deterministic: when allocating a new
reference cell, we choose the smallest location not already in the
store.  Notice that the $\step$ relation is defined on untyped
configurations.  Erroneous configurations are ``stuck.''

So far, we have defined the language and operational semantics exactly
as we would in standard set theory.  Next comes the crucial
difference.  We use Theorem~\ref{thm:internal-banach} to define the predicate 
$\eval \co \Powset{\Term \times \Store \times \Powset{\Value \times \Store}}$, 
\[
\begin{array}{l}
\eval(t,s,Q)\\
\quad \biimpdefn \begin{arraytl}(t \in \Value
~\land~ Q(t,s)) ~\lor\\
(\exists t_1 \co \Term, s_1 \co \Store.\,\\
\quad \mathrm{step}((t,s),(t_1,s_1)) ~\land~ \laterPred \mathrm{eval}(t_1,s_1,Q))\end{arraytl}
\end{array}
\]
Intuitively, the predicate $Q$ is a post-condition, and $\eval(t,s,Q)$
is a \emph{partial} correctness specification, in the sense of Hoare
logic, meaning the following: (1) The configuration $(t,s)$ is safe, i.e., 
it does not lead to an error.  (2) \emph{If} the configuration $(t,s)$
evaluates to some pair $(v,s')$, \emph{then} at that point in time
$(v,s')$ satisfies $Q$.  We shall justify this intuition in
Section~\ref{subsec:application:external} below.  The use of $\laterPred$ ensures that the
predicate is well-defined; in effect, we postulate that one evaluation
step in the programming language actually takes one unit of time in
the sense of the internal logic.  As we shall see below, this
``temporal'' semantics is essential in the proof of the fundamental
theorem of logical relations.

Notice how guarded recursion is used to give a simple,
coinduction-style definition of partial correctness.  The L\"ob rule
can then be conveniently used for reasoning about this definition.
For example, the rule gives a very easy proof that if $(t,s)$ is a
configuration that reduces to itself in the sense that
$\step((t,s),(t,s))$ holds, then $\eval(t,s,Q)$ holds for any $Q$.
The L\"ob rule also proves the following results, which are used
to show the fundamental theorem below.

\begin{prop}
\label{prop:rule-of-consequence}
  Let $Q, Q' \in \Powset{\Value \times \Store}$ such that $Q
  \subseteq Q'$.  Then for all $t$ and $s$ we have that $\mathrm{eval}(t,s,
  Q)$ implies $\mathrm{eval}(t,s, Q')$.
\end{prop}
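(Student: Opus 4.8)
The plan is to work entirely in the internal logic of $\stopos$ and prove the statement by L\"ob induction, using Theorem~\ref{prop:internal-logic-rules}. Concretely, I would set
$\psi \biimpdefn \forall t\co\Term, s\co\Store.\, \mathrm{eval}(t,s,Q) \imp \mathrm{eval}(t,s,Q')$
and invoke the L\"ob rule (Theorem~\ref{prop:internal-logic-rules}(2)), so that it suffices to prove $\laterPred\psi \imp \psi$. Thus I assume $\laterPred\psi$ together with $\mathrm{eval}(t,s,Q)$ for arbitrary $t,s$, and must derive $\mathrm{eval}(t,s,Q')$.

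First I would unfold the defining equivalence of $\mathrm{eval}$ applied to the hypothesis $\mathrm{eval}(t,s,Q)$, which is legitimate since $\mathrm{eval}$ is the unique fixed point supplied by Theorem~\ref{thm:internal-banach}. This yields two cases. In the first, $t\in\Value$ and $Q(t,s)$; since $Q\subseteq Q'$ we obtain $Q'(t,s)$, hence $t\in\Value \land Q'(t,s)$, which is the first disjunct of the unfolding of $\mathrm{eval}(t,s,Q')$. In the second case there exist $t_1\co\Term$, $s_1\co\Store$ with $\step((t,s),(t_1,s_1))$ and $\laterPred\,\mathrm{eval}(t_1,s_1,Q)$.

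To close the second case I would massage the induction hypothesis: by Theorem~\ref{prop:internal-logic-rules}(5), $\laterPred\psi$ implies $\forall t,s.\,\laterPred\bigl(\mathrm{eval}(t,s,Q)\imp \mathrm{eval}(t,s,Q')\bigr)$, and since $\laterPred$ commutes with $\imp$ (Theorem~\ref{prop:internal-logic-rules}(3)) this gives $\forall t,s.\,\laterPred\,\mathrm{eval}(t,s,Q) \imp \laterPred\,\mathrm{eval}(t,s,Q')$. Instantiating at $t_1,s_1$ and applying it to $\laterPred\,\mathrm{eval}(t_1,s_1,Q)$ yields $\laterPred\,\mathrm{eval}(t_1,s_1,Q')$. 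Together with $\step((t,s),(t_1,s_1))$, this is exactly the second disjunct of the unfolding of $\mathrm{eval}(t,s,Q')$, so $\mathrm{eval}(t,s,Q')$ holds, completing the L\"ob induction and hence the proof.

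There is no serious obstacle here; the only point requiring a little care is the direction in which $\laterPred$ interacts with the universal quantifier over $\Term\times\Store$, but only the implication $\laterPred\forall \imp \forall\laterPred$ is needed, and that holds without any totality hypothesis. The same argument template underlies the other L\"ob-rule consequences of the definition of $\mathrm{eval}$ mentioned in the surrounding text.
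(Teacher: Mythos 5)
Your argument is correct and is exactly the route the paper intends: the text merely asserts that Proposition~\ref{prop:rule-of-consequence} follows from the L\"ob rule, and your proof fills in that L\"ob induction, unfolding the fixed-point equation for $\mathrm{eval}$ and using Theorem~\ref{prop:internal-logic-rules} (items (3) and (5), in the directions that need no totality assumption) to push $\laterPred$ through the induction hypothesis. Nothing further is needed.
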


\begin{prop}
\label{prop:eval-move-ectx}
For all stores $s$, all terms $t$, all evaluation contexts $E$ such that
$E[t]$ is closed, and all predicates $Q \in \Powset{\Value \times \Store}$, we have
that $\mathrm{eval}(E[t],s, Q)$ holds iff
$\mathrm{eval}(t,s, \, \lambda (v_1,s_1). \, \mathrm{eval}(E[v_1],s_1, Q))$ holds.
\end{prop}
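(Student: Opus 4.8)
The plan is to prove the biimplication by two separate implications, each established inside the internal logic of $\stopos$ by an appeal to the L\"ob rule (Theorem~\ref{prop:internal-logic-rules}(2)), exactly in the style already used for Proposition~\ref{prop:rule-of-consequence}. The key structural fact we rely on is a standard property of the operational semantics: for an evaluation context $E$ and a closed term $E[t]$, a step $\step((E[t],s),(c'))$ either factors through a step of $t$, i.e.\ $c' = (E[t_1],s_1)$ with $\step((t,s),(t_1,s_1))$, or else $t$ is already a value $v$ and the step is the first step of $E[v]$; moreover $E[t]\in\Value$ iff $E$ is the empty context and $t\in\Value$. This is a routine syntactic lemma about $\step$ and evaluation contexts that we may take for granted (it is the same case analysis used whenever one proves that $E[-]$ preserves and reflects reduction behaviour), and it is what lets the coinductive unfolding of $\eval$ on $E[t]$ be matched up with the unfolding on $t$.

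For the left-to-right direction, fix $E$ and $s$ and prove, by L\"ob induction, the internally-quantified statement $\forall t.\,\eval(E[t],s,Q)\imp\eval(t,s,\lambda(v_1,s_1).\eval(E[v_1],s_1,Q))$; concretely the induction hypothesis available is the $\laterPred$ of this whole formula. Assume $\eval(E[t],s,Q)$ and unfold its definition once. In the first disjunct, $E[t]\in\Value$, so by the syntactic lemma $E$ is empty and $t=v\in\Value$; then the goal $\eval(t,s,\lambda(v_1,s_1).\eval(E[v_1],s_1,Q))$ reduces to $\eval(v,s,\lambda(v_1,s_1).\eval(v_1,s_1,Q))$, which holds by the value case since $\eval(v,s,Q)$ holds. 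In the second disjunct we have $t_1,s_1$ with $\step((E[t],s),(t_1,s_1))$ and $\laterPred\,\eval(t_1,s_1,Q)$; apply the syntactic lemma. If the step came from a step of $t$, then $t_1 = E[t']$ with $\step((t,s),(t',s'))$, and we feed $\laterPred\,\eval(E[t'],s',Q)$ together with the L\"ob hypothesis (applied under $\laterPred$, using that $\laterPred$ commutes with $\imp$, Theorem~\ref{prop:internal-logic-rules}(3)) to get $\laterPred\,\eval(t',s',\lambda\ldots)$, which is exactly what the second disjunct of $\eval(t,s,\lambda\ldots)$ needs. If instead $t=v\in\Value$ and the step is the first step of $E[v]$, then the goal's value disjunct applies: $t=v$ is a value and we must show $\eval(E[v],s,Q)$, which is the hypothesis $\eval(E[t],s,Q)$ itself.

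The right-to-left direction is symmetric: by L\"ob induction prove $\forall t.\,\eval(t,s,\lambda(v_1,s_1).\eval(E[v_1],s_1,Q))\imp\eval(E[t],s,Q)$, unfold the hypothesis once, and in the value case use $\eval(E[v],s,Q)$ directly while in the step case transport the step of $t$ to a step of $E[t]$ via the syntactic lemma and close under $\laterPred$ with the induction hypothesis as before. The main obstacle is purely bookkeeping: getting the case analysis on $\step((E[t],s),-)$ precisely right (in particular handling the boundary case where $t$ is a value and control passes from $t$ to the surrounding context $E$), and being careful that every appeal to the induction hypothesis is made underneath a $\laterPred$, which is legitimate because $\laterPred$ distributes over implication and conjunction. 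No genuinely new idea beyond the L\"ob-rule technique of Proposition~\ref{prop:rule-of-consequence} is required.
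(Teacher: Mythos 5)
Your overall strategy is the right one and is what the paper intends: the paper gives no detailed proof of this proposition, remarking only that it, like Proposition~\ref{prop:rule-of-consequence}, follows by the L\"ob rule, and your combination of L\"ob induction in the internal logic with the standard factorization of a step of $E[t]$ through either a step of $t$ or the case where $t$ is already a value is the natural way to discharge that remark. However, two concrete points in your write-up would fail as stated. First, you fix the store $s$ \emph{outside} the L\"ob induction and induct only on $\forall t.\,\eval(E[t],s,Q)\imp\eval(t,s,\ldots)$. In the step case the configuration $(t,s)$ reduces to $(t',s')$ with a possibly different store $s'$ (allocation and assignment change the store), so the induction hypothesis, which speaks only of the fixed $s$, cannot be applied to $\laterPred\,\eval(E[t'],s',Q)$. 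The statement proved by L\"ob induction must quantify internally over the store as well, e.g.\ $\forall t,s.\,\eval(E[t],s,Q)\imp\eval(t,s,\lambda(v_1,s_1).\,\eval(E[v_1],s_1,Q))$; the context $E$ and the postcondition $Q$ may remain fixed, since they are unchanged along the reduction. This is a small repair, but as written the appeal to the induction hypothesis is illegitimate.

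Second, the auxiliary syntactic claim that $E[t]\in\Value$ iff $E$ is the empty context and $t\in\Value$ is false for this language: since $\fold{v}$ and pairs of values are values, evaluation contexts such as $\fold{[\cdot]}$ or $(v,[\cdot])$ give $E[t]\in\Value$ with $E$ nonempty. What you actually need, and what is true, is weaker: if $E[t]\in\Value$ then $t\in\Value$, and then the value disjunct of the goal is discharged directly by the hypothesis $\eval(E[t],s,Q)$ itself (exactly as in your treatment of the boundary case), with no appeal to emptiness of $E$. With these two corrections the argument goes through; the modal bookkeeping (instantiating the hypothesis under $\laterPred$ via the distribution laws of Theorem~\ref{prop:internal-logic-rules}) is fine as you describe it.
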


\subsection{Definition of Kripke worlds}
\label{subsec:application:worlds}

The main idea behind our interpretation of types is
as in~\cite{BirkedalL:essence,BirkedalST:10}:  Since \Fmuref\
includes reference types, we use a Kripke model of types, where
a semantic type is defined to be a world-indexed
family of sets of syntactic values. A world is a map from locations to
semantic types.  This introduces a circularity between
semantic types $\mathcal{T}$ and worlds~$\mathcal{W}$, which can be expressed 
as a pair of domain equations: $\W = N \tofin \STalt$ and $\STalt = \W \tomon \Powset{\Value}$. 

Rather than solving the above stated domain equations exactly,
we solve a guarded variant. 
More precisely, we define the set
\[
\label{eqn:worldfunctor}
\ST = \mu X.\, \later((N \tofin X)  \tomon \Powset{\Value}) \ .
\]
Here $N \tofin X$ is the set
$\dsum{A}{\finpowset{N}}{X^A}$ where $\finpowset{N} = \{A \subset N
\mid \exists m \forall n \in A \ld\, n<m \}$. The set $\dsum{A}{\finpowset{N}}{X^A}$
is ordered by graph inclusion
and $\tomon$ is the set of monotonic functions realized as a subset
type on the function space. 

The type $\ST$ can be seen to be well-defined as a consequence of 
the theory of Section~\ref{sec:dtt}, in particular Proposition~\ref{prop:functorial:types}.
Alternatively,
observe that the corresponding functor is of the form $F = \later
\mathop\circ G$.  Here $G$ is strong because its action on
morphisms can be defined as a term $Y^X \to GY^{GX}$ in the internal
logic.  Now, since $\later$ is locally contractive so is $F$.  Hence
by Theorem~\ref{thm:rec:dom:eq}, $F$ has a unique fixed point
$\ST$, with an isomorphism $i\co \ST\to F(\ST)$. 
We define
\begin{align*}
  \W &= N \tofin \ST\ ,
  &
   \STalt &= \W \tomon \Powset{\Value}\ ,
\end{align*}
and $  \STalt^{\mathrm{c}} = \W \to \Powset{\Term}$. 
Notice that $\ST$ is isomorphic to $\later\STalt$.
We now define $\App\co
\ST \to \STalt$ and $\Lam \co \STalt \to \ST$ as follows.  
First, $\App$ is the isomorphism $i$ composed with the operator $d \co \later\STalt
\to \STalt$ given by
\begin{displaymath}
  d (f) = \lambda w.\lambda v. \succ( J(J(f)(\next \,w)) (\next \,v)),
\end{displaymath}
where $J$ is the map in~(\ref{eq:J})
and $\succ\co \later \Omega \to \Omega$ is as defined on page~\pageref{page:succ}.
(This is a general way of
lifting algebras for $\later$ to function spaces.)
Here one needs to check that $d$ is well-defined, 
i.e., preserves monotonicity.
Second,
$\Lam\co  \STalt \to \ST$ is defined by $\Lam = i^{-1} \circ \next_\STalt$.

Define $\laterPred : \STalt \to \STalt$ as the
pointwise extension of $\laterPred : \Omega \to \Omega$, i.e., for $\nu \in
\STalt$, $w \in \W$ and $v \in \Value$, we have that $(\laterPred \nu)(w)(v)$ holds iff
$\laterPred(\nu(w)(v))$ holds.  
\begin{lem}
\label{lem:app-lam}
  $\mathrm{app} \circ \mathrm{lam} = \laterPred : \mathcal{T} \to \mathcal{T}$.
\end{lem}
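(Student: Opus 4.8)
The plan is to first collapse the composite $\App \circ \Lam$ to a single map and then reduce to a short equational identity. Unfolding the definitions $\App = d \circ i$ and $\Lam = i^{-1} \circ \next_\STalt$ — where $i \co \ST \iso F(\ST) = \later\STalt$ is the fixed-point isomorphism — gives at once
\[
  \App \circ \Lam \;=\; d \circ i \circ i^{-1} \circ \next_\STalt \;=\; d \circ \next_\STalt \;\co\; \STalt \to \STalt .
\]
Now $\STalt = \W \tomon \Powset\Value$ is a subobject of the exponential $\W \to \Powset\Value$; since $\later$ preserves limits it preserves the corresponding monomorphism, and $\laterPred \co \STalt \to \STalt$ was \emph{defined} as the restriction of the pointwise $\laterPred$ along it. So it suffices to show that $d \circ \next_\STalt$ agrees with the pointwise extension of $\laterPred \co \Omega \to \Omega$, i.e.\ that $d(\next_\STalt\nu)(w)(v) = \laterPred(\nu(w)(v))$ for all $\nu$, $w$, $v$.

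The one substantial ingredient is a ``homomorphism law'' for the applicative map $J$ of~(\ref{eq:J}): for $h \co X \to Y$ one has $J(\next_{X \to Y}(h)) \circ \next_X = \next_Y \circ h \co X \to \later Y$. I would prove this from the fact that $\next \co \id \Rightarrow \later$ is monoidal: the comparison isomorphism $\later(X \times Y) \iso \later X \times \later Y$ is $\pair{\later\pi_1}{\later\pi_2}$, so naturality of $\next$ at $\pi_1,\pi_2$ forces $\pair{\later\pi_1}{\later\pi_2} \circ \next_{X \times Y} = \next_X \times \next_Y$; composing the inverse of this isomorphism with $\later(\ev)$ (which is the uncurried form of $J$) and using naturality of $\next$ at $\ev$ then yields the law. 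Two consequences of it finish the argument. First, the law implies that for any ``later-algebra'' $a \co \later Z \to Z$, its canonical lift $\hat a \co \later(Z^W) \to Z^W$ along the exponential $(-)^W$, namely $\hat a(g) = \lambda w.\, a(J(g)(\next w))$, satisfies $\hat a \circ \next_{Z^W} = (a \circ \next_Z)^W$. Second, by construction $d$ is precisely the iterated canonical lift of the later-algebra $\succ \co \later\Omega \to \Omega$ — first along $(-)^\Value$, then along $(-)^\W$. Applying the first consequence twice to $d \circ \next_\STalt$ and using $\succ \circ \next_\Omega = \laterPred$ (page~\pageref{page:succ}) then shows that $d \circ \next_\STalt$ is exactly the pointwise extension of $\laterPred$, as required.

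I expect the main obstacle to be bookkeeping rather than mathematical content: one must check that the ``$J$'' occurring inside the definition of $d$ is the ambient applicative map on $\later(\W \to \Powset\Value)$ precomposed with $\later$ applied to the inclusion $\STalt \hookrightarrow \W \to \Powset\Value$ (so that the homomorphism law applies to $\next_\STalt\nu$, via naturality of $\next$ along that inclusion), and that the monotonicity subobject $\tomon$ causes no trouble — which it does not, precisely because $d$ was already shown to land in $\STalt$. If the monoidal-naturality argument is felt to be too terse, there is a fully elementary fallback: verify the identity stagewise in $\stopos$, where at stage $1$ both sides are the unique map into the singleton and at stage $n+1$ one unwinds the definitions of $\next$, $J$ and $\succ$ using $\laterPred_n(k) = \min(n,k+1)$ — routine but longer.
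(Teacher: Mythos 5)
Your proof is correct. Note that the paper states Lemma~\ref{lem:app-lam} without giving a proof, so there is nothing to diverge from: your argument --- cancelling $i\circ i^{-1}$ to reduce to $d\circ\next_{\STalt}$, establishing the homomorphism law $J(\next g)(\next x)=\next(g(x))$ from naturality of $\next$ and preservation of finite products by $\later$, and finishing with $\succ\circ\next_{\Omega}=\laterPred$ applied twice through the two exponential lifts (with the inclusion $\STalt\hookrightarrow\Powset{\Value}^{\W}$ handled by naturality of $\next$) --- is precisely the computation the paper leaves implicit, and the stagewise fallback in $\stopos$ would also work.
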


\subsection{Interpretation of types}
 \label{subsec:application:interpretation}

Let $\mathrm{TVar}$ be the set of type variables, and for $\tp \in
\mathrm{OType}$, let $\mathrm{TEnv}(\tp) = \{\,\varphi \in \mathrm{TVar}
  \tofin \STalt \mid \fv{\tp} \subseteq \dom{\varphi}\,\}$.  The
  interpretation of programming-language types is defined by
  induction, as a function
\[
\den{\cdot} : \prod_{\tp \in \mathrm{OType}} \mathrm{TEnv(\tp)} \to
\STalt \, .
\]
We show some cases of the definition here; the complete definition can
be found in
\iflongversion
Appendix~\ref{app:interpretation-types}.
\else
the long version.
\fi
  \begin{align*}
    \mngTp{\alpha}{\Xi}{\varphi} &= \varphi(\alpha)\\
    \mngTp{\tp_1 \times \tp_2}{\Xi}{\varphi} &= \lambda w.\,
\{ \pairtm{v_1}{v_2} \mid
v_1 \in \mngTp{\tp_1}{\Xi}{\varphi}(w) \mathrel{\land}
v_2 \in \mngTp{\tp_2}{\Xi}{\varphi}(w) \}\\
     \mngTp{\reftp \tp}{\Xi}{\varphi} &=
\lambda w.\, \{\, l \mid\begin{arraytl} l \in \dom{w} \mathrel{\land}
\forall w_1 \geq w .\, 
\App (w(l)) (w_1) =
\laterPred(\mngTp{\tp}{\Xi}{\varphi})(w_1) \, \}\end{arraytl}\\
    \mngTp{\forall \alpha . \tp}{\Xi}{\varphi} &= \lambda w.\,\{\,
    \tplam{\alpha}{t}
    \mid\begin{arraytl} \forall \nu \in \STalt.\, \forall w_1 \geq w.\, 
    t \in \compOp(\mngTp{\tp}{\Xi,\alpha}{\varphi[\alpha \mapsto
      \nu]})(w_1) \, \}\end{arraytl}\\
    \mngTp{\mu \alpha . \tp}{\Xi}{\varphi} &= \mathit{fix}\left(\lambda
      \nu. \, \lambda w. \,
    \{\, \fold{v} \mid \laterPred(v \in
    \mngTp{\tp}{\Xi,\alpha}{\varphi[\alpha \mapsto \nu]} \ap (w))
    \}\right)\\
    \mngTp{\arrowtp{\tp_1}{\tp_2}}{\Xi}{\varphi} &= 
\lambda w.\, \{\, \lamtm{x}{t} \mid
\begin{arraytl}\forall w_1 \geq w.\,
\forall v \in \mngTp{\tp_1}{\Xi}{\varphi}(w_1).\, 
t[v/x] \in \compOp(\mngTp{\tp_2}{\Xi}{\varphi})(w_1) \, \}
\end{arraytl}
\end{align*}
Here the operations $\compOp : \STalt \to \STalt^\mathrm{c}$ and
$\statesOp : \W \to \Powset{\Store}$ are given by
\begin{align*}
\compOp(\nu)(w) &= \{\, t \mid \begin{arraytl}\forall s \in \statesOp(w) .\,
\mathit{eval}(t,s,\, 
\lambda (v_1,s_1).\, 
\exists w_1 \geq w.\,\\ \quad v_1 \in \nu(w_1) ~\land~ s_1 \in \statesOp(w_1))
\,\}\end{arraytl}\\
\statesOp(w) &= \{\, s \mid
\begin{arraytl}\dom{s} = \dom{w} ~\mathrel\land~\\
\forall l \in \dom{w}.\, s(l) \in \App (w(l))(w) \, \}.\end{arraytl}
\end{align*}

Notice that this definition is almost as simple as an attempt at a
naive, set-theoretic definition, except for the two explicit uses
of~$\laterPred$.  In the definition of $\den{\mu\alpha.\tp}$, the use
of~$\laterPred$ ensures that the fixed point is well-defined according
to Theorem~\ref{thm:internal-banach}.  As for the definition of
$\den{\reftp \tp}$, the~$\laterPred$ is needed because we have
$\laterPred$ instead of the identity in
Lemma~\ref{lem:app-lam}.
In both
cases, the intuition is the usual one from step-indexing: since an
evaluation step takes a unit of time, it suffices that a certain
formula only holds later.

\begin{prop}[Fundamental theorem]
  \label{prop:fundamental}
  If\/ $\vdash t : \tp$, then for all $w \in \mathcal{W}$ we have $t \in \compOp(\mngTp{\tp}{\emptyset}{\emptyset})(w)$.
\end{prop}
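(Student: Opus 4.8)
The plan is to prove a suitable generalization of the statement for open terms by induction on typing derivations, and then specialize to the closed case. Concretely, I would first extend the interpretation of types to an interpretation of typing contexts: for $\Xi \vdash \Gamma$ and $\varphi \in \mathrm{TEnv}$ with $\fv{\Gamma} \subseteq \dom\varphi$, set $\den{\Gamma}_\Xi\varphi(w)$ to be the set of substitutions $\gamma$ mapping each $(x{:}\tp) \in \Gamma$ to some $\gamma(x) \in \den{\tp}_\Xi\varphi(w)$. Then the statement to prove by induction is: if $\tpjudge{\Xi}{\Gamma}{\tm}{\tp}$, then for all $\varphi \in \mathrm{TEnv}$, all $w \in \W$, and all $\gamma \in \den{\Gamma}_\Xi\varphi(w)$, we have $\tm\gamma \in \compOp(\den{\tp}_\Xi\varphi)(w)$. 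The original Proposition~\ref{prop:fundamental} is the case $\Xi = \Gamma = \emptyset$, where the only substitution is the empty one and $\tm\gamma = \tm$.

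All of the reasoning takes place in the internal logic of $\stopos$, so the proof is essentially a standard logical-relations argument carried out synthetically. For each typing rule I would unfold the relevant clause of $\den{\cdot}$ and of $\compOp$, use Proposition~\ref{prop:eval-move-ectx} to push the $\eval$ predicate through evaluation contexts (handling the congruence cases of $\step$), and apply the induction hypotheses; monotonicity in the world, i.e.\ Proposition~\ref{prop:rule-of-consequence} together with Kripke monotonicity of $\den{\tp}$, is needed whenever we move to a larger world $w_1 \geq w$. The introduction cases for $\times$, $\to$, $\forall$ are routine value-construction arguments; the elimination cases $\fst, \snd, \tm_1\,\tm_2, \tpapp{\tm}{\tp}$ reduce via Proposition~\ref{prop:eval-move-ectx} to the corresponding introduction facts, using a type-substitution lemma $\den{\tp_2[\tp_1/\alpha]}_\Xi\varphi = \den{\tp_2}_{\Xi,\alpha}\varphi[\alpha\mapsto\den{\tp_1}_\Xi\varphi]$ for the polymorphism case. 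The $\reftm{}, \lookup{}, \assign{}{}$ cases use the definition of $\statesOp$ and $\App$: for allocation one extends the current world $w$ by mapping the fresh location to $\Lam(\den{\tp}_\Xi\varphi)$, checks via Lemma~\ref{lem:app-lam} that this records the right (later-)semantic type, and shows the extended store lies in $\statesOp$ of the extended world; dereferencing and assignment read off and update that information, again using Lemma~\ref{lem:app-lam} and the $\laterPred$ appearing in $\den{\reftp\tp}$.

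The two genuinely non-trivial cases are $\fold{}$ and $\unfold{}$, which must interact with the $\fix$ in $\den{\mu\alpha.\tp}$, and the recursion implicit in the fact that $\eval$ itself is defined by guarded recursion. For these I expect to need the L\"ob rule: I would strengthen the inductive statement so that the whole fundamental theorem is proved by L\"ob induction, giving us $\laterPred$ of the entire claim as an additional hypothesis. Then in the $\fold{}$ case we unfold the $\fix$ once (using that $\den{\mu\alpha.\tp}$ satisfies its defining equation, by Theorem~\ref{thm:internal-banach}), reducing membership of $\fold{v}$ to $\laterPred(v \in \den{\tp}_{\Xi,\alpha}\varphi[\alpha\mapsto\den{\mu\alpha.\tp}_\Xi\varphi](w))$, which follows from the $\laterPred$-ed induction hypothesis plus the type-substitution lemma; the $\unfold{}$ case is dual, using that a $\step$ is consumed (so an $\eval$ unfolds to $\laterPred\eval$, matching the available $\laterPred$-hypothesis). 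The main obstacle, then, is setting up the induction so that the L\"ob hypothesis is available uniformly across all cases and correctly threading it through the places where a computation step is taken; once that bookkeeping is arranged, each individual case is a short internal-logic calculation.
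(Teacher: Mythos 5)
Your proposal follows the same overall route as the paper: generalize to open terms and type/term environments, prove a semantic counterpart of each typing rule by induction on derivations inside the internal logic, use Propositions~\ref{prop:rule-of-consequence} and~\ref{prop:eval-move-ectx} (which the paper packages as the ``monadic'' Lemma~\ref{lem:comp-monadic} about $\compOp$), a type-substitution lemma for $\forall$ and $\mu$, and Lemma~\ref{lem:app-lam} with worlds extended by $\Lam(\den{\tp})$ for the reference cases. The one real divergence is your treatment of $\mlfont{fold}/\mlfont{unfold}$: you propose wrapping the entire fundamental theorem in a L\"ob induction so that $\laterPred$ of the whole claim is available, whereas the paper needs no L\"ob at this level. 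The reason it is unnecessary is that the $\laterPred$ produced by consuming a reduction step (the guard in the definition of $\eval$) is exactly matched by the $\laterPred$ occurring in the denotation of $\mu\alpha.\tp$: in the $\mlfont{unfold}$ case one must establish $\laterPred$ of the post-condition, and the hypothesis $\laterPred(v_0\in\den{\tp[\mu\alpha.\tp/\alpha]})$ comes directly from the definition of $\den{\mu\alpha.\tp}$ together with the distribution rules of Theorem~\ref{prop:internal-logic-rules}; in the $\mlfont{fold}$ case the ordinary structural induction hypothesis gives the un-latered fact and monotonicity $p\imp\laterPred p$ supplies the guard. In the paper L\"ob induction appears only in proving the auxiliary facts about $\eval$ (Propositions~\ref{prop:rule-of-consequence} and~\ref{prop:eval-move-ectx}), which you also invoke. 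Your global L\"ob wrapping is sound (the statement is internal, so the rule applies) and would in fact be the natural device for the recursive-function construct $\mlfont{fix}$ of the full language in the appendix, but it buys nothing for $\mlfont{fold}/\mlfont{unfold}$ and adds the bookkeeping you yourself identify as the main obstacle; the paper's plain induction avoids it.
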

\begin{proof}
To show this, one first generalizes to open types and open terms in
the standard way, and then one shows semantic counterparts of all the
typing rules of the language.
\iflongversion
See Appendix~\ref{app:step-indexing-soundness}.
\else
See the long version of the article.
\fi
To illustrate the use of $\laterPred$, we outline the case of
reference lookup: $\vdash \lookup{\tm} :  \tp$.  Here the
essential proof obligation is that $v \in
\mngTp{\reftp \tp}{\emptyset}{\emptyset}(w)$ implies $\lookup v \in
\compOp(\mngTp{\tp}{\emptyset}{\emptyset})(w)$.  To show this,
we unfold the definition of $\compOp$.  Let $s \in \statesOp(w)$ be
given; we must show
\begin{equation}
\label{eq:fundamental-theorem-lookup:maintext}
\mathit{eval}(\lookup v,s,\,\lambda (v_1,s_1).\, 
\exists w_1 \geq w.\,
v_1 \in \mngTp{\tp}{\emptyset}{\emptyset}(w_1)  \land  s_1 \in \statesOp(w_1))\,.
\end{equation}
By the assumption that $v \in \mngTp{\reftp
  \tp}{\emptyset}{\emptyset}(w)$, we know that $v = l$ for some
location $l$ such that $l \in \dom{w}$ and $\App (w(l)) (w_1) =
\laterPred(\mngTp{\tp}{\emptyset}{\emptyset})(w_1)$ for all $w_1 \geq w$.
Since $s \in \statesOp(w)$, we know that $l \in \dom{s} = \dom{w}$ and
$s(l) \in \App(w(l))(w)$.  We therefore have $\step((\lookup v, s),
(s(l), s))$.  Hence, by unfolding the definition of $\eval$ in
\eqref{eq:fundamental-theorem-lookup:maintext} and using the rules from
Proposition~\ref{prop:internal-logic-rules}, it
remains to show that
$\exists w_1 \geq w.\,
\laterPred(s(l) \in \mngTp{\tp}{\emptyset}{\emptyset}(w_1)) ~\land~ \laterPred(s \in \statesOp(w_1))$.
We choose $w_1 = w$.  First, $s \in \statesOp(w)$ and hence
$\laterPred(s \in \statesOp(w))$.  Second, $s(l) \in \App(w(l))(w) =
\laterPred(\mngTp{\tp}{\emptyset}{\emptyset})(w)$, which means exactly that
$\laterPred(s(l) \in \mngTp{\tp}{\emptyset}{\emptyset}(w))$.
\end{proof}

\subsection{The view from the outside}
\label{subsec:application:external}

We now return to the standard universe of sets and give external
interpretations of the internal results above.  One basic ingredient
is the fact that the constant-presheaf functor $\Delta : \Sets \to
\stopos$ commutes with formation of $W$-types.  This fact can be shown
by inspection of the concrete construction of $W$-types for presheaf categories
given in~\cite{palmgren-moerdijk}.

Let $\mathrm{OType'}$ and $\mathrm{OTerm'}$ be the sets of possibly
open types and terms, respectively, as defined by the grammars above.
Similarly, let $\Value'$, $\Store'$, $\Config'$, and $\step'$
be the external counterparts of the definitions
from the previous sections.

\begin{prop}
  $\mathrm{OType} \cong \Delta(\mathrm{OType'})$, and similarly for
  $\mathrm{OTerm}$, $\Value$, $\Store$, and $\Config$.
  Moreover, under these isomorphisms $\step$ corresponds to
  $\Delta\step'$ as a subobject of $\Config \times \Config$.
\end{prop}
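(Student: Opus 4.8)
The plan is to observe that $\Delta\co\Sets\to\stopos$ is compatible with every categorical construction used to build the syntactic objects in question, and then to walk through the definitions of Sections~\ref{subsec:application:language} and~\ref{subsec:application:operational_semantics}, invoking these compatibilities one at a time. First I would record the relevant properties of $\Delta$: it is the inverse image of the geometric morphism $\stopos\to\Sets$, hence preserves finite limits and all colimits; it is full and faithful (its composite with the direct image, i.e.\ global sections, is the identity on $\Sets$); it sends the external natural numbers to $N$, that is $N=\Delta(\N)$; and, by the fact recalled just before the proposition — proved in~\cite{palmgren-moerdijk} by inspecting the concrete construction of $W$-types for presheaf categories — it sends external $W$-types to internal ones. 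From these I would note the derived facts that $\Delta$ also preserves list types, images, quotients by equivalence relations, subobjects carved out by decidable first-order predicates, and least fixed points of \emph{finitary} monotone operators on subobject lattices (each such being the $\omega$-colimit of the iterates of the operator applied to $\bot$). Faithfulness of $\Delta$ is what makes the subobject comparison in the last clause meaningful.

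Next I would treat $\mathrm{OType}$ and $\mathrm{OTerm}$. The raw possibly-open types and terms are initial algebras of the polynomial functors over $\stopos$ read off the grammars, whose constant parameters (type variables, term variables, location constants) are each $\Delta$ of an external set, with $N=\Delta(\N)$ for the locations; since $\Delta$ preserves $W$-types, $\Delta$ of the external initial algebra is again initial, hence isomorphic to the internal one. Then I would observe that $\alpha$-equivalence is an inductively defined equivalence relation on this $W$-type — a least fixed point of a finitary operator built from constructions $\Delta$ already preserves — so $\Delta$ carries external $\alpha$-equivalence to internal $\alpha$-equivalence; since $\Delta$ preserves finite limits and coequalizers it then preserves the quotient, giving $\mathrm{OType}\iso\Delta(\mathrm{OType'})$ and likewise $\mathrm{OTerm}\iso\Delta(\mathrm{OTerm'})$. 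After that: $\mathrm{OValue}$ is an inductively defined subobject of $\mathrm{OTerm}$; $\Term$ and $\Value$ are pullbacks of $\mathrm{OTerm}$ and $\mathrm{OValue}$ along the decidable predicate ``the recursively computed set of free variables is empty''; $\Store$ is the list type over $N\times\Value$ restricted to the decidable subobject of lists with no repeated first component; and $\Config=\Term\times\Store$. Each operation here is preserved by $\Delta$, so each object is canonically $\Delta$ of its external analogue, and the comparison isomorphisms are visibly compatible with the inclusions and projections relating the objects.

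For the final clause I would express $\step$ as a least fixed point. It is defined by a fixed finite set of inference rules, each with finitely many premises, so internally $\step$ is the least fixed point of a finitary monotone operator $\Phi$ on $\Sub(\Config\times\Config)$, namely $\step=\bigcup_{n}\Phi^{n}(\bot)$, and similarly $\step'=\bigcup_{n}\Phi'^{n}(\bot)$ externally. Since $\Phi$ is assembled from finite limits, finite colimits and the term/value/store operations already identified as $\Delta$ of their external counterparts, an induction on $n$ gives $\Delta(\Phi'^{n}(\bot))\iso\Phi^{n}(\bot)$ as subobjects of $\Config\times\Config\iso\Delta(\Config'\times\Config')$, and $\Delta$ commutes with the countable union, so $\Delta(\step')\iso\step$ as subobjects of $\Config\times\Config$.

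The main obstacle is genuinely just the behaviour of $\Delta$ on $W$-types and, underneath it, on the initial algebras of the polynomial functors at hand — but that is exactly the fact recalled before the proposition, which I would take as given; everything else is preservation bookkeeping. The one place that warrants a little care is checking that the quotient by $\alpha$-equivalence and the least fixed point defining $\step$ are reached in a $\Delta$-stable way; both are, because $\alpha$-equivalence and the step rules are finitely branching, so the colimits in play are precisely the $\omega$-colimits that $\Delta$ preserves.
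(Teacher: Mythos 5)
Your proposal is correct and follows exactly the route the paper intends: the paper states this proposition without an explicit proof, offering only the key ingredient that $\Delta$ commutes with formation of $W$-types (citing~\cite{palmgren-moerdijk}), and your argument is precisely the routine verification built on that fact. The remaining bookkeeping you supply --- preservation of finite limits, colimits, quotients, decidable subobjects, and $\omega$-iterated least fixed points for the finitary operators defining $\alpha$-equivalence and $\step$ --- is what the authors leave implicit.
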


This result essentially says that the external interpretation of the
$\step$ relation is world-independent, and has the expected meaning:
for all $n$ we have that $\forces{n}{\step((t',s'),(t',s'))}$ holds iff
$(t,s)$ actually steps to $(t',s')$ in the standard operational
semantics.  We next consider the $\eval$ predicate:

\begin{prop}
  $\forces{n}{\eval(t,s,Q)}$ iff the following property holds: for all
  $m < n$, if $(t,s)$ reduces to $(v,s')$ in $m$ steps, then
  $\forces{(n-m)}{Q(v,s')}$. 
\end{prop}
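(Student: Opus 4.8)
The plan is to prove the equivalence by induction on $n$, unfolding the recursive definition of $\eval$ once at each level and translating everything into the Kripke--Joyal semantics.

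The preliminary observations are these. Since $\eval$ was defined via Theorem~\ref{thm:internal-banach} as the unique fixed point of its (evidently contractive) defining operator --- the recursion variable occurs only under $\laterPred$ --- it satisfies the defining bi-implication internally, and, internal and external equality of global elements coinciding in a topos, for all $n$ and all external $t$, $s$, $Q$ we have
\[
\forces{n}{\eval(t,s,Q)}\iff\forces{n}{\bigl(t\in\Value\land Q(t,s)\bigr)\lor\bigl(\exists t_1{:}\Term,s_1{:}\Store.\ \step((t,s),(t_1,s_1))\land\laterPred\eval(t_1,s_1,Q)\bigr)}.
\]
I would then unfold the right-hand side using the forcing clauses for $\lor$, $\land$, and $\exists$: since $\Term$, $\Store$, and $\Value$ are constant objects, the quantifiers range over the ordinary external sets and restriction maps act trivially; by the preceding proposition, $\forces{n}{\step((t,s),(t_1,s_1))}$ holds iff $(t,s)$ genuinely steps to $(t_1,s_1)$; and by Proposition~\ref{prop:laterPred}, $\forces{1}{\laterPred\eval(t_1,s_1,Q)}$ always holds, while $\forces{n+1}{\laterPred\eval(t_1,s_1,Q)}$ is equivalent to $\forces{n}{\eval(t_1,s_1,Q)}$. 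I will also use that the operational semantics is deterministic (allocation is chosen deterministically), so that the successor witnessing the internal $\exists$ is the unique reduct of $(t,s)$, and that syntactic values are irreducible. Throughout, ``$(t,s)$ reduces to $(v,s')$ in $m$ steps'' is read as reaching the irreducible configuration $(v,s')$ after exactly $m$ steps; since $Q$ is a predicate on value configurations, $\forces{n-m}{Q(v,s')}$ fails whenever $(v,s')$ is a stuck non-value error, which is how the safety content of $\eval$ is recorded in the statement.

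Putting this together, $\forces{n}{\eval(t,s,Q)}$ amounts to: either $t$ is a value and $\forces{n}{Q(t,s)}$, or $(t,s)$ steps to its unique reduct $(t_1,s_1)$ and $\forces{n}{\laterPred\eval(t_1,s_1,Q)}$. For the base case $n=1$ the $\laterPred$-conjunct is vacuously forced, so this says that $t$ is a value with $\forces{1}{Q(t,s)}$, or that $(t,s)$ can step; a three-way case split on whether $(t,s)$ is a value configuration, an error, or reducible matches this against the right-hand side, whose only instance is $m=0$. For the inductive step, the $\laterPred$-conjunct at level $n+1$ becomes $\forces{n}{\eval(t_1,s_1,Q)}$, to which the induction hypothesis applies; I then reindex the terminating reduction sequences out of $(t,s)$ as ``one step to $(t_1,s_1)$, then a reduction of $(t_1,s_1)$'', verify that the arithmetic $n+1-m=n-(m-1)$ matches the shift introduced by $\laterPred$, and once more split on whether $(t,s)$ is terminal or reducible.

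I expect the only real difficulty to be bookkeeping rather than mathematics: two off-by-one conventions must be kept in step --- the one from $\laterPred$ (level $n+1$ ``sees'' level $n$) and the one from counting reduction steps --- and the degenerate cases (zero-step reductions, stuck configurations, the trivial forcing of $\laterPred$ at level $1$) each need checking; it is precisely at level $1$ that $\eval$ is too coarse to look one step into the future, which is why the quantification in the statement starts at $m<n$ rather than $m\le n$. Spelling out that $Q$, being valued on $\Value\times\Store$, cannot hold at a non-value configuration is what makes the ``does not lead to an error'' part of the intuition emerge from the otherwise innocuous-looking right-hand side.
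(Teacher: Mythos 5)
The paper states this proposition without proof, so there is no argument of the authors' to compare yours against; judged on its own, your plan is sound and would expand into a correct proof. The induction on $n$, unfolding the fixed-point equation for $\eval$ once per level and translating through the Kripke--Joyal clauses (constancy of $\Term$, $\Store$, $\Value$, the external characterisation of $\step$ from the preceding proposition, and the two clauses for $\laterPred$), is the natural argument, and you correctly isolate the places where care is needed: the level-$1$ degeneracy of $\laterPred$, the shift $n+1-m=n-(m-1)$, and the fact that determinism of the operational semantics is genuinely used (internally $\eval$ asks for \emph{some} good successor, while the right-hand side constrains \emph{every} $m$-step reduct; these coincide only because reducts are unique).

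One point deserves emphasis, and you did flag it yourself: the equivalence is only true under your reading of the right-hand side, namely that ``reduces to $(v,s')$ in $m$ steps'' means reaching an \emph{irreducible} configuration, with $\forces{(n-m)}{Q(v,s')}$ counting as false when $v$ is not a value. Under the more literal reading (``reaches a value configuration $(v,s')$''), the right-to-left direction fails for any $(t,s)$ that gets stuck within $n-1$ steps: the hypothesis is then vacuous while $\forces{n}{\eval(t,s,Q)}$ is false, since the existential disjunct demands an actual step. So either your convention or an explicit safety clause (``every non-value reachable in fewer than $n$ steps can step again'') must be read into the statement, and your argument is really a proof of that clarified statement --- which is the one matching the partial-correctness intuition the authors give for $\eval$. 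The only bookkeeping you elide is that applying the $\laterPred$ clause at stage $n+1$ yields $\forces{n}{\eval(t_1,s_1,\restrict{Q}{n})}$, so the induction hypothesis should be stated for $Q$ at the appropriate stage; this is harmless because forcing at stage $k$ depends only on $\restrict{Q}{k}$.
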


Using this property and the forcing semantics from
Section~\ref{sec:internal-logic}, one
obtains that the external meaning of the interpretation of types is a
step-indexed model in the standard sense.  
In particular, note that an element of
$\Powset{\Value}(n)$ can be viewed as a set of pairs $(m,v)$ of
natural numbers $m \leq n$ and values which is downwards closed in the
first component. 

\subsection{Discussion}

For simplicity, we have just considered a unary model in this extended
example; we believe the approach scales well both to relational models
and to more sophisticated models for reasoning about local
state~\cite{Ahmed-al:POPL2009,dreyer+:icfp2010,Schwinghammer:Birkedal:Stovring:11}.
In particular, we have experimented with an internal-logic formulation of parts
of~\cite{Schwinghammer:Birkedal:Stovring:11}, which involve recursively
defined relations on recursively defined types. 

As mentioned above, the operational semantics of this example was for
simplicity chosen to be deterministic.  We expect that one can easily
adapt the approach presented here
to non-deterministic languages.  For that, the evaluation predicate
must be changed to quantify universally (rather than existentially)
over computation steps, and errors must explicitly be ruled
out, as in: 
\[
\begin{array}{l}
\eval'(t,s,Q)\\
\quad \stackrel{\mathrm{def}}\iff \begin{arraytl}(t \in \Value
\to Q(t,s)) ~\land~ \lnot\mathrm{error}(t,s) ~\land~\\
(\forall t_1 : \Term, s_1 : \Store.\,\\
\quad \mathrm{step}((t,s),(t_1,s_1)) \to \laterPred \eval'(t_1,s_1,Q)).\end{arraytl}
\end{array}
\]

As mentioned in the Introduction, in~\cite{BirkedalL:essence} the recursive equation for $\STalt$
was solved in the category $\CBUlt$ of
ultrametric spaces. Using the space $\STalt$ the model was then defined in the
usual universe of sets in the standard, explicit
step-indexed style.
Here instead we observe that the relevant part of $\CBUlt$ is a full
subcategory of $\stopos$ (Section~\ref{sec:cbult}), solve the
recursive equation in $\stopos$, and then \emph{stay} within $\stopos$
to give a simpler model that does not refer to step indices.
In particular, the proof of the fundamental theorem is much simpler
when done in $\stopos$.

\section{Dependent Types}
\label{sec:dtt}

\newcommand{\HODTTj}[4]{{#1} \vdash {#3} : {#4}} 
\newcommand{\HODTTej}[5]{{#1} \mid {#2} \vdash {#3} = {#4} : {#5}} 
\newcommand{\HODTTtj}[3]{{#1} \vdash {#3} : \Type} 
\newcommand{\HODTTcj}[1]{{#1} : \Ctx} 
\newcommand{\rect}[2]{\mu{#1}\ld {#2}} 
\newcommand{\cardinality}[1]{\mid{#1}\mid}
\newcommand{\st}{\mathrm{st}}
\newcommand{\functorialcontractiveness}{functorial contractiveness}
\newcommand{\Functorialcontractiveness}{Functorial contractiveness}

\newcommand{\contractivelyfunctorial}{contractively functorial}

\newcommand{\nextSlice}[1]{\next^{#1}}
\newcommand{\symmetric}[1]{\tilde{#1}}
\newcommand{\widesymmetric}[1]{\widetilde{#1}}

\newcommand{\ecat}{\catfont{E}}
\newcommand{\hodtt}{HoDTT}
\newcommand{\elements}[1]{\int #1}
\newcommand{\Psh}[1]{\BigPsh{#1}}
\newcommand{\BigPsh}[1]{\widehat{#1}}

Since $\stopos$ is a topos it models not only higher-order logic over
simple type theory, but also over dependent type theory.  The aim of
this section is to provide the semantic foundation for extending the
dependent type theory with type constructors corresponding to $\later$
and guarded recursive types, although we postpone a detailed syntactic
formulation of such a type theory to a later paper.

Recall that dependent types in context are interpreted in slice
categories,\footnote{For now we follow the practise of ignoring coherence
  issues related to the interpretation of substitution in codomain
  fibrations since there are various ways to avoid these issues,
  e.g.~\cite{Hofmann:94}. See the end of the section for more on this issue.} 
in particular a type $\Gamma \ts A$ is
interpreted as an object of $\SIslice{\den{\Gamma}}$. To extend the
interpretation of dependent type theory with a type constructor
corresponding to $\later$, we must therefore extend the definition of
$\later$ to slice categories.

\paragraph{Slice categories concretely}

Before defining $\laterSlice{I} \co \slice{\stopos}{I} \to
\slice{\stopos}{I}$ we give a concrete description of the slice categories $\SIslice{I}$.

We first recall the construction of the category of elements for presheaves
over partial orders. For $B$ a partial order, we write $\BigPsh{B}$
for the category of presheaves over $B$, i.e., category of functors and natural 
transformations from $\opcat{B}$ to $\Sets$. 

\begin{defi} \label{def:elements} Let $B$ be a partially ordered
  set and let $X$ be a presheaf over $B$. Define the partially ordered set of
  \emph{elements} of $X$ as $\elements{X} = \{ (b,x) \mid b\in B \meet
  x \in X(b) \}$ with order defined as $(b,x) \leq (c,y)$ iff $b \leq
  c$ and $\restrict{y}{b} = x$.
\end{defi}

Note that if one applies this construction to an object $X$ of
$\SItopos$ one gets a forest $\elements X$: the roots are the elements
of $X(1)$ the children of the roots are the elements of $X(2)$ and so
on. Indeed any forest is of the form $\elements X$ for some $X$ in $\SItopos$.

\begin{prop} \label{prop:slices:concretely} Let $B$ be a partially ordered
  set and let $I$ be a presheaf over $B$. Then $\slice{\Psh{B}}{I} \simeq \Psh{\elements{I}}$.
\end{prop}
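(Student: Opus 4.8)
Let $B$ be a partially ordered set and $I$ a presheaf over $B$. Then $\slice{\Psh{B}}{I} \simeq \Psh{\elements{I}}$.

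\medskip

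\noindent\textbf{Plan.} This is an instance of the standard fact that slicing a presheaf category over an object yields presheaves on the category of elements of that object; here $B$ is a poset, so $\elements{I}$ is again a poset by Definition~\ref{def:elements}, and everything can be described very concretely. The plan is to exhibit functors in both directions and check they are mutually (naturally) inverse.

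First I would define the functor $\Phi \co \slice{\Psh{B}}{I} \to \Psh{\elements{I}}$. Given an object $f \co X \to I$ of the slice, and an element $(b,x) \in \elements{I}$ (so $b \in B$, $x \in I(b)$), set $\Phi(f)(b,x) = \{\, y \in X(b) \mid f_b(y) = x \,\}$, the fibre of $f_b$ over $x$. For $(b,x) \leq (c,y)$ in $\elements{I}$ --- meaning $b \leq c$ and $\restrict{y}{b} = x$ --- the restriction map of $X$ sends $X(c) \to X(b)$, and by naturality of $f$ it carries the fibre over $y$ into the fibre over $\restrict{y}{b} = x$; this gives the restriction maps of $\Phi(f)$, and functoriality is immediate from functoriality of $X$. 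On morphisms of the slice, $\Phi$ acts by restriction of the underlying natural transformation to fibres, which is well-defined precisely because the triangle over $I$ commutes.

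Conversely I would define $\Psi \co \Psh{\elements{I}} \to \slice{\Psh{B}}{I}$. Given $P \in \Psh{\elements{I}}$, set $\Psi(P)(b) = \coprod_{x \in I(b)} P(b,x)$, with the evident projection $\Psi(P)(b) \to I(b)$ sending the component $P(b,x)$ to $x$; the restriction maps are assembled from the $P$-restrictions along the order relations $(b,\restrict{x}{b}) \leq (c,x)$, which reindex the coproduct compatibly with the restriction maps of $I$. This yields a presheaf over $B$ together with a map to $I$, i.e. an object of the slice, and the construction is evidently functorial in $P$.

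Finally I would verify $\Phi \Psi \iso \id$ and $\Psi \Phi \iso \id$ naturally. For $\Phi\Psi(P)(b,x)$ one gets the fibre over $x$ of the coproduct projection, which is canonically $P(b,x)$; for $\Psi\Phi(f)$ one reassembles $X(b)$ as the disjoint union of the fibres of $f_b$, which is canonically $X(b)$ over $I(b)$. Both isomorphisms are natural in the respective category, and in both cases they respect restriction maps by construction. The main thing to be careful about --- the only place the argument has any content beyond bookkeeping --- is checking that the restriction maps behave correctly: that reindexing the coproduct along $\elements{I}$-order relations is compatible with the presheaf structure of $I$ on the one side, and that $\Phi$ sends the $X$-restrictions into the right fibres on the other. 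Since $B$ is merely a poset (so there is at most one arrow between any two objects), there are no nontrivial composites to worry about, and these checks are routine diagram chases; the equivalence is in fact an equivalence of categories, even an isomorphism on the nose if one is careful with the coproduct construction.
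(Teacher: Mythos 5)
Your proposal is correct and follows the same route as the paper: the paper simply cites this as a standard fact of sheaf theory (Mac~Lane--Moerdijk, Ex.~III.8) and recalls one direction of the equivalence, namely exactly your fibre functor $\Phi$ sending $\p{X}\co X\to I$ to $(b,x)\mapsto \inv{(\p X)_b}(x)$. You supply the inverse (the coproduct-of-fibres construction) and the routine verification that the two are mutually inverse, which the paper leaves to the reference.
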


\begin{proof}
  This is a standard theorem of sheaf
  theory~\cite[Ex.~III.8]{MacLane:Moerdijk:92}, and we just recall one
  direction of the equivalence. An object $\p X \co X \to I$ of the
  slice category $\slice{\Psh{B}}{I}$ corresponds to the presheaf that maps
  $(b, i) \in \elements{I}$ to $\inv{(\p X)_b}(i)$.
\end{proof}

Thus we conclude that the slices of $\SItopos$ are of the form
presheaves over a forest.

\subsection{Generalising $\later$ to slices} \label{sec:dtt:later:slice}
There is a simple generalisation of the $\later$ functor from
$\SItopos$ to presheaves over any forest $\elements I$: if $X$ is a presheaf
over $\elements I$ then
\[
\laterSlice{I} X(n,i) = \left\{ 
\begin{array}{ll}
 1 & \text{if $n=1$}  \\
 X(n-1, \restrict{i}{n-1}) & \text{if $n>1$}
\end{array}
\right.
\]
In Section~\ref{sec:class:of:models} we shall see how to generalise this even further. 

The map $\next_X: X\to\laterSlice{I}{X}$ is represented by the 
following natural transformation in $\BigPsh{\elements{I}}$:
\begin{align*}
\next_{(1,i)}(x) & \,= * \\
\next_{(n+1,i)}(x) & \, = \restrict{x}{(n, \restrict{i}{n})}
\end{align*}

The fixed point combinator also generalizes to slices. Indeed, if $f : X \to X$
in $\BigPsh{\elements{I}}$ is contractive, in the sense that there exists a $g: \laterSlice{I} {X}\to X$ such that $f= g \circ \next$,
then we can construct a fixed point of $f$ (i.e., a natural transformation $1\to X$)
by:
\begin{displaymath}
  \begin{array}{rcl}
  x_{(1,i)} & = & g_{(1,i)}(*) \\    
  x_{(n+1,i)} & = & g_{(n+1,i)} (x_{(n,\restrict{i}{n})}).
  \end{array}
\end{displaymath}
This construction generalises to a fixed point combinator $\fix_X \co (\laterSlice{I} X \to X) \to X$ satisfying the properties of the global fixed point operator described in Theorem~\ref{thm:fp:op}. 

\begin{prop} \label{prop:later:pb}
 Let $\p{Y} \co Y \to I$ be an object of $\SIslice{I}$. There is a map $\laterSlice{I} Y \to \later Y$  making the diagram below a pullback.
\begin{diagram}[size=2em]
\laterSlice{I} Y \SEpbk & \rTo & \later Y \\
\dTo^{\p{\laterSlice{I} Y}} && \dTo_{\later \p{Y}} \\
I & \rTo^{\next} & \later I
\end{diagram}
\end{prop}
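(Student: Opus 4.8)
The plan is to reduce the statement to a pointwise computation in $\Sets$, using that pullbacks in a presheaf category---in particular in $\stopos = \Psh\omega$---are computed componentwise, and that $\laterSlice I$ was designed precisely to perform a fibrewise one-step shift. So the real work is to make $\laterSlice I Y$ sufficiently explicit as an object of $\stopos$, after which the pullback square is essentially a tautology.

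First I would unfold $\laterSlice I Y$. By Proposition~\ref{prop:slices:concretely} the object $\p{Y}\co Y\to I$ corresponds to the presheaf on $\elements I$ sending $(n,i)$ to the fibre $\inv{(\p{Y})_n}(i)\subseteq Y(n)$; feeding this presheaf into the defining formula for $\laterSlice I$ and then passing back along the equivalence (i.e.\ taking the associated total space over $I$) gives
\[
(\laterSlice I Y)(1)=I(1),\qquad (\laterSlice I Y)(n{+}1)=\{\,(i,y)\mid i\in I(n{+}1),\ y\in Y(n),\ (\p{Y})_n(y)=\restrict i n\,\},
\]
with the restriction maps inherited from those of $I$ and $Y$ (at level $n{+}1\to n$, $(i,y)\mapsto(\restrict i n,\restrmap{n-1}(y))$ for $n\geq 2$, and $(i,y)\mapsto\restrict i 1$ for $n=1$), and with $\p{\laterSlice I Y}$ the first projection (the identity at level $1$). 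This unfolding is the one slightly delicate point: one must check that the restriction maps of the presheaf $\laterSlice I Y$ over $\elements I$ transport under the total-space construction to exactly these maps, and that the degenerate base level $n=1$, where the fibres collapse to a one-point set, is consistent with the generic level.

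Next I would define the comparison morphism $\theta\co\laterSlice I Y\to\later Y$ to be the unique map $I(1)\to\{\star\}=(\later Y)(1)$ at level $1$ and the second projection $(i,y)\mapsto y\in Y(n)=(\later Y)(n{+}1)$ at level $n{+}1$. A direct check shows $\theta$ is a morphism of $\stopos$ (both $\theta_n\circ r$ and $r\circ\theta_{n{+}1}$ send $(i,y)$ to $\restrmap{n-1}(y)$) and that the square commutes, i.e.\ $\later\p{Y}\circ\theta=\next_I\circ\p{\laterSlice I Y}$: at level $n{+}1$ both composites send $(i,y)$ to $(\p{Y})_n(y)=\restrict i n$, and at level $1$ both land in $\{\star\}$.

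Finally I would verify the pullback property pointwise. At level $1$ the square is the trivial pullback $I(1)=I(1)\times_{\{\star\}}\{\star\}$. At level $n{+}1$, by the description of $(\laterSlice I Y)(n{+}1)$ above together with its two projections $(i,y)\mapsto i$ and $(i,y)\mapsto y$, the square is literally the pullback $I(n{+}1)\times_{I(n)}Y(n)$ of $(\p{Y})_n$ along $\restrmap n$. Since pullbacks in $\stopos$ are pointwise, the square is a pullback, as required. The only real obstacle is the bookkeeping in the first step; the rest is routine.
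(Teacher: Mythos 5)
Your proposal is correct. The paper in fact states Proposition~\ref{prop:later:pb} without giving a proof (it only remarks that the pullback square could alternatively be taken as the \emph{definition} of $\laterSlice{I}$), and your argument supplies exactly the routine verification being relied upon: unfolding $\laterSlice{I}Y$ via the equivalence of Proposition~\ref{prop:slices:concretely} into a total space over $I$, defining the comparison map as the evident projection, and checking the square commutes and is a pullback componentwise, with the cospan at level $n{+}1$ being $I(n{+}1)\xrightarrow{\restrmap{n}} I(n)\xleftarrow{(\p{Y})_n} Y(n)$ and the level-$1$ case degenerating to $I(1)\cong I(1)\times_{\{\star\}}\{\star\}$. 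Your handling of the implicit restriction maps of $\laterSlice{I}Y$ (the shifted restrictions of $Y$ together with the collapse to the base at level $1$) is the only point where care is needed, and you treat it correctly.
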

One could have also taken the pullback diagram of Proposition~\ref{prop:later:pb} as a definition of $\laterSlice{I}$, and indeed we do so in our axiomatic treatment of models of guarded recursion in Section~\ref{sec:axioms}.

The definition above allows us to consider $\later$ as a type
constructor on dependent types, interpreting $\den{\Gamma \ts \later
  A} = \laterSlice{\den{\Gamma}}(\den{\Gamma \ts A})$. 
The following proposition expresses that this interpretation of
$\later$ behaves well wrt.\ substitution. 

\begin{prop} 
  For every $u \co J \to
  I$ in $\SItopos$ there is a natural isomorphism $u^* \circ
  \laterSlice{I} \iso \laterSlice{J} \circ u^*$.  As a consequence,
  the collection of functors $(\laterSlice{I})_{I \in \stopos}$ define
  a fibred endofunctor on the codomain fibration. Moreover, $\next$
  defines a fibred natural transformation from the fibred identity on
  the codomain fibration to $\later$.
\end{prop}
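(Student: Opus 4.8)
The plan is to establish the natural isomorphism $u^* \circ \laterSlice{I} \iso \laterSlice{J} \circ u^*$ by a direct computation using the concrete description of slices as presheaves over forests (Proposition~\ref{prop:slices:concretely}), and then read off the two corollaries about fibred structure essentially for free. First I would recall that $u \co J \to I$ induces a monotone map $\elements{u} \co \elements{J} \to \elements{I}$ on categories of elements, sending $(n,j)$ to $(n, u_n(j))$, and that under the equivalences $\slice{\SItopos}{I} \simeq \Psh{\elements{I}}$ and $\slice{\SItopos}{J} \simeq \Psh{\elements{J}}$ the reindexing functor $u^*$ corresponds to precomposition with $\elements{u}$. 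Note that $\elements{u}$ preserves the ``level'' of an object and commutes with taking the predecessor $(n,i) \mapsto (n-1, \restrict{i}{n-1})$, since naturality of $u$ gives $u_{n-1}(\restrict{j}{n-1}) = \restrict{u_n(j)}{n-1}$.

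**Next** I would compute both composites on an object $X \in \Psh{\elements{I}}$ at a point $(n,j) \in \elements{J}$. On one side, $(u^*\laterSlice{I}X)(n,j) = (\laterSlice{I}X)(n, u_n(j))$, which by definition is $1$ if $n=1$ and $X(n-1, \restrict{u_n(j)}{n-1})$ if $n>1$. On the other side, $(\laterSlice{J}u^*X)(n,j)$ is $1$ if $n=1$ and $(u^*X)(n-1, \restrict{j}{n-1}) = X(n-1, u_{n-1}(\restrict{j}{n-1}))$ if $n>1$. By the naturality equation just noted, these agree \emph{on the nose}; the restriction maps on both sides are likewise the same, so in fact the isomorphism is an identity, which is naturality in $X$ for free. (If one prefers to phrase $\laterSlice{I}$ via the pullback of Proposition~\ref{prop:later:pb} rather than the explicit formula, the argument is instead: $u^*$ preserves pullbacks, $u^*(\later Y) \iso \later(u^* Y)$ because $\later$ on the base topos is just a functor and $u^*\!\circ\!(\text{domain of }\later)$ reshuffles components in the obvious way, and $u^* \next_I = \next_J$ up to the canonical iso; then $u^*\laterSlice{I}Y$ is computed as a pullback that, by the pasting lemma, also computes $\laterSlice{J}u^*Y$.)

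**For the two consequences**, the first claim --- that $(\laterSlice{I})_{I}$ forms a fibred endofunctor on the codomain fibration --- is exactly the statement that it is a family of functors commuting with reindexing up to coherent natural isomorphism, which is what the displayed isomorphism provides; one should remark that the coherence conditions (that the isomorphisms for $u$ and $v$ compose to the one for $uv$, and that the iso for an identity is the identity) hold because, in the computation above, each isomorphism was literally an identity of presheaves, so coherence is trivial. The second claim, that $\next$ is a fibred natural transformation $\id \Rightarrow \later$ on the codomain fibration, amounts to checking $u^*(\next_{I,X}) = \next_{J, u^*X}$ under the isomorphism; again this follows by inspecting components, since $\next$ was defined componentwise by the unique map to $1$ and by restriction maps, both of which are preserved by precomposition with $\elements{u}$.

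**The main obstacle** is not any deep mathematics but the bookkeeping of the equivalence $\slice{\SItopos}{I} \simeq \Psh{\elements{I}}$: one must be careful that $u^*$ on slices (pullback of maps into $I$ along $u$) really does correspond to precomposition with $\elements{u}$ on presheaves over the forests, and that this identification is compatible with how we defined $\laterSlice{I}$ and $\next$. Once that dictionary is pinned down --- and Proposition~\ref{prop:slices:concretely} together with its proof already hands us one direction --- everything reduces to the single naturality identity $u_{n-1}(\restrict{j}{n-1}) = \restrict{u_n(j)}{n-1}$, and the coherence and fibred-naturality statements are immediate because all the isomorphisms involved are identities.
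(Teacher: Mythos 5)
Your argument is correct, but it takes a genuinely different route from the paper's. The paper does not prove this proposition by direct computation in $\stopos$; it proves the general statements (for any locally cartesian closed model of guarded recursive terms $\etopos$) by exploiting the decomposition $\laterSlice{I} = \next^* \circ \later \co \eslice{I} \to \eslice{\later I} \to \eslice{I}$: the square comparing $u^*\laterSlice{I}$ with $\laterSlice{J}u^*$ is pasted from two squares, the left one commuting because $\later$ preserves pullbacks and the right one being the naturality square for $\next$; fibred naturality of $\next$ is then checked by factoring an arbitrary morphism of the arrow category into a vertical and a cartesian part. Your proof instead transports everything along $\slice{\SItopos}{I} \simeq \Psh{\elements{I}}$, identifies $u^*$ with precomposition along $\elements{u}\co(n,j)\mapsto(n,u_n(j))$, and reduces the whole statement to the single naturality identity $u_{n-1}(\restrict{j}{n-1}) = \restrict{u_n(j)}{n-1}$, so that the comparison isomorphisms are literally identities. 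What your approach buys is strictness: coherence of the isos and the reindexing compatibility $u^*(\next_{I,X}) = \next_{J,u^*X}$ become trivial, which is exactly the split presentation the paper only invokes later in its ``On Coherence'' discussion. What it costs is generality and some bookkeeping: it works only for $\stopos$ (and presheaves over forests), and it depends on pinning down that pullback along $u$ really is precomposition with $\elements{u}$ under the equivalence of Proposition~\ref{prop:slices:concretely} --- a point you rightly flag --- whereas the paper's abstract argument applies verbatim to every model $\etopos$ and avoids that dictionary. One small presentational remark: for the last claim you should also say explicitly that $\next$ is natural within each fibre (i.e., with respect to vertical morphisms), which together with your cartesian check gives fibred naturality; this fibrewise naturality is already part of how $\next$ was introduced, so it is not a gap, just worth stating.
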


We remark that each $\laterSlice{I}$ has a left adjoint, but in Section~\ref{sec:left:adjoint} we prove that this
family of left adjoints does not commute with reindexing. As a consequence, it does not define
a well-behaved dependent type constructor. 

\subsection{Recursive dependent types} \label{sec:rec:dep:types}
Since the slices of $\stopos$ are cartesian closed, the notions of strong
functors and locally contractive functors from
Definition~\ref{def:locally:contractive} also make sense in slices. 
Thus we can formulate a version of Theorem~\ref{thm:rec:dom:eq} generalised to all slices of $\SItopos$. 
The next theorem does that, and further generalises to parametrized domain equations, 
a step necessary for modelling nested recursive types. 

For the statement of the theorem recall the symmetrization $\symmetric{F} \co
(\opcat\ccat \times \ccat)^{n} \to \opcat\ccat \times \ccat$ of a
functor $F \co (\opcat\ccat \times \ccat)^{n} \to \ccat$ defined as
$\symmetric{F}(\vec X, \vec Y) = \pair{F(\vec Y, \vec X)}{F(\vec X,
  \vec Y)}$.

\begin{thm} \label{exist:unique:rec:dom:eq} Let $F\co
  (\opcat{(\SIslice{I})} \times \SIslice{I})^{n+1} \to \SIslice{I}$ be strong and 
  locally contractive in the $(n+1)th$ variable pair. Then there
  exists a unique (up to isomorphism) \[\Fix F \co
  (\opcat{(\SIslice{I})} \times \SIslice{I})^{n} \to \SIslice{I}\] such
  that $F \circ \pair{\id}{\symmetric{\Fix F}} \iso \Fix F$.
Moreover, if $F$ is locally contractive in all variables, so is $\Fix F$. 
\end{thm}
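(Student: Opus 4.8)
The plan is to reduce the parametrized, mixed-variance statement to the non-parametrized covariant fixed-point existence result, exactly in the spirit of the sketch following Theorem~\ref{thm:rec:dom:eq} but carried out uniformly in the parameters. First I would symmetrize in the $(n+1)$th variable pair: given $F\co(\opcat{(\SIslice{I})}\times\SIslice{I})^{n+1}\to\SIslice{I}$, form $\widetilde{F}_{\mathrm{last}}\co(\opcat{(\SIslice{I})}\times\SIslice{I})^{n}\times(\opcat{(\SIslice{I})}\times\SIslice{I})\to\opcat{(\SIslice{I})}\times\SIslice{I}$ sending the last pair $(X,Y)$ to $\pair{F(\vec A,\vec B,Y,X)}{F(\vec A,\vec B,X,Y)}$ (with $\vec A,\vec B$ the remaining parameters in appropriate variance). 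For each fixed value of the parameters this is a strong, locally contractive endofunctor on the cartesian closed category $\opcat{(\SIslice{I})}\times\SIslice{I}$ (using Lemma~\ref{lem:contractive:comp} and the fact that local contractiveness is stable under the product/opposite constructions, as remarked after Definition~\ref{def:locally:contractive}), so by Theorem~\ref{thm:rec:dom:eq} applied in that slice it has a fixed point, and since local contractiveness forces the enriched hom-functor $\xi=\lam{k}{}{g\circ Fk\circ f^{-1}}$ to be contractive, that fixed point is simultaneously an initial algebra and a final coalgebra and hence unique up to canonical isomorphism. Restricting along the diagonal $\pair{\id}{\symmetric{-}}$ then yields, for each value of the remaining parameters, an object $\Fix F(\vec A,\vec B)$ with a chosen isomorphism $F\circ\pair{\id}{\symmetric{\Fix F}}\iso\Fix F$.

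The second step is to upgrade this pointwise assignment to an actual functor $\Fix F\co(\opcat{(\SIslice{I})}\times\SIslice{I})^{n}\to\SIslice{I}$. Here I would appeal to the standard Freyd-style argument (cited at the end of the sketch, via \cite{Freyd:90,Freyd:90a,Freyd:91}): because each fibre fixed point is canonically both an initial $F(\vec A,\vec B,-,-)$-algebra and a final coalgebra, dinaturality of the construction in the parameters is automatic — a morphism in the parameter category induces, by initiality/finality, a unique coherent comparison map, and functoriality laws follow from uniqueness. Equivalently, one can build $\Fix F$ directly as the (parametrized) limit of the sequence $F^{k}(\vec A,\vec B,\vec A,\vec B,\dots,1)$ as in diagram~(\ref{eq:fixed:point:sequence}), where the $n$-isomorphism estimate of Lemma~\ref{lem:lcf:n:iso} is applied in the $(n+1)$th slot and all the transition maps are natural in the parameters by construction; the diagonal object then depends functorially on the parameters because each $F^{k}$ does. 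Uniqueness of $\Fix F$ up to isomorphism is inherited from the fibrewise uniqueness together with the uniqueness of the mediating comparison maps.

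For the final clause, suppose $F$ is locally contractive in all variable pairs, not just the last. I must show the induced strength $\strengthof{(\Fix F)}{-}{-}$ is contractive, with witness respecting identity and composition. The idea is that $\Fix F$ is obtained as a limit of the maps $F^{k}$, each of which is locally contractive in the surviving variables (local contractiveness being closed under composition with strong functors, and under the limit passage because $\later$ preserves limits and a filtered/countable limit of contractive maps with the coherence data is again contractive); concretely, the strength of the limit object is computed componentwise from the strengths of the $F^{k}$, each of which factors through $\next$, so the limit strength factors through $\next$ as well, and the enriched-identity and enriched-composition diagrams of Definition~\ref{def:locally:contractive} commute because they commute at every stage $F^{k}$ and are preserved by the limit. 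I expect this last clause to be the main obstacle: one has to choose the witnessing $G$ for $\Fix F$ coherently with the chosen isomorphism $F\circ\pair{\id}{\symmetric{\Fix F}}\iso\Fix F$, and check that the "diagonal limit" construction genuinely transports the identity- and composition-coherence of $G_F$ through the initial-algebra/final-coalgebra identification — the bookkeeping of variances (the opposite-category factors contribute contravariantly to the strength) is where the argument is most delicate, and is most cleanly handled by the general enriched-algebraic-compactness machinery developed in Section~\ref{sec:class:of:models} rather than by bare hands here.
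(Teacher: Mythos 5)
Your overall strategy is in fact the paper's: the theorem is proved (in Sections~\ref{sec:axioms}--\ref{sec:class:of:models}) by exactly this two-stage reduction --- fibrewise existence and uniqueness of the mixed-variance fixed point via symmetrization and the initial-dialgebra/final-coalgebra argument, functoriality in the remaining parameters from uniqueness of mediating maps, and existence of covariant fixed points via a chain/limit construction exploiting that locally contractive functors improve ``isomorphism up to a stage''. However, two of your steps are precisely where the paper's mathematical content lies, and as written they are gaps. First, ``by Theorem~\ref{thm:rec:dom:eq} applied in that slice'' is circular: Theorem~\ref{thm:rec:dom:eq} is stated (and only sketched) for $\SItopos$ itself, and its slice version is exactly the $n=0$ instance of the theorem you are proving. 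What is actually needed is that every locally contractive endofunctor on $\SIslice{I}\simeq\Psh{\elements{I}}$ has a (unique) fixed point; this requires redoing the diagonal-limit construction over the forest $\elements{I}$, i.e.\ a slice analogue of Lemma~\ref{lem:lcf:n:iso} and the limit of the chain $F^k 1$, which the paper carries out in the general setting of sheaves over a complete Heyting algebra with well-founded basis (Theorem~\ref{thm:gen:sol}, applied to slices via Proposition~\ref{prop:idl:compl} and Lemma~\ref{lem:local:contractive:slices}). Your alternative route via diagram~(\ref{eq:fixed:point:sequence}) gestures at this but does not supply the slice-level argument. (Also, $\opcat{(\SIslice{I})}\times\SIslice{I}$ is not cartesian closed; what the argument needs, and what holds, is that it is $\SIslice{I}$-enriched.)

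Second, the ``moreover'' clause is not established by your ``limit of strengths'' argument: the action of $\Fix F$ on the parameter hom-objects is defined by initiality/finality (mediating maps), not as a limit of the strengths of the iterates, so factoring each stage through $\next$ does not by itself produce an enriched witness for $\Fix F$ satisfying the identity and composition coherences of Definition~\ref{def:locally:contractive} --- and you concede this point by deferring to the general machinery. The paper's actual argument is different and sharper: Lemma~\ref{lem:initial:algebra:computable} internalizes the assignment of mediating homomorphisms as a morphism $\Hom{\ccat}{FZ}{Z}\to\Hom{\ccat}{X}{Z}$ (itself obtained as a fixed point of a contractive map), and Lemma~\ref{lem:initial:algebra:functor} defines the enrichment of the parametrized fixed point by precomposing this with the map $\Hom{\ccat}{X}{Y}\to\Hom{\ccat}{F(X,\mu F(Y))}{\mu F(Y)}$; when $F$ is locally contractive in the parameter variables this first stage factors through $\later$, which is what makes $\Fix F$ locally contractive, with the enriched-functoriality of the witness coming from the same uniqueness properties. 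So the proposal is right in outline but leaves unproved exactly the two facts (slice-level contractive completeness, and contractiveness of the induced enrichment of $\Fix F$) that the paper's Sections~\ref{sec:axioms} and~\ref{sec:class:of:models} are devoted to.
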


We postpone the proof of this theorem to Section~\ref{sec:class:of:models},
where we prove the existence of solutions to recursive domain equations
for a wider class of categories and functors. 

One can prove that the fixed points obtained by
Theorem~\ref{exist:unique:rec:dom:eq} are initial dialgebras in the
sense of Freyd~\cite{Freyd:90,Freyd:90a,Freyd:91}. This universal property generalises initial algebras
and final coalgebras to mixed-variance functors, and can be used to
prove mixed induction / coinduction principles~\cite{Pitts:96}.

The formation of recursive types is well-behaved wrt. substitution:

\begin{prop} \label{prop:rec:types:subst}
If 
\begin{diagram}[size=2em]
 (\opcat{(\SIslice{I})} \times \SIslice{I})^{n+1} & \rTo^{F} & \SIslice{I} \\
 \dTo^{u^{*}} && \dTo_{u^{*}}\\
 (\opcat{(\SIslice{J})} \times \SIslice{J})^{n+1} & \rTo^{G} & \SIslice{J} 
\end{diagram}
commutes up to isomorphism, so does
\begin{diagram}[size=2em]
 (\opcat{(\SIslice{I})} \times \SIslice{I})^{n} & \rTo^{\Fix F} & \SIslice{I} \\
 \dTo^{u^{*}} && \dTo_{u^{*}}\\
 (\opcat{(\SIslice{J})} \times \SIslice{J})^{n} & \rTo^{\Fix G} & \SIslice{J} 
\end{diagram}
\end{prop}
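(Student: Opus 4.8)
The plan is to reduce the statement to the uniqueness clause of Theorem~\ref{exist:unique:rec:dom:eq}. Writing $u^{*}$ also for the induced reindexing functor $(\opcat{(\SIslice{I})} \times \SIslice{I})^{m} \to (\opcat{(\SIslice{J})} \times \SIslice{J})^{m}$ (acting by $u^{*}$ in every coordinate, contravariantly in the first factor of each pair), so that the hypothesis reads as a natural isomorphism $u^{*}\circ F \iso G\circ u^{*}$, I would exhibit the two functors $u^{*}\circ\Fix F$ and $\Fix G\circ u^{*}$ as fixed points, up to isomorphism, of one and the same operator $\Phi$ on functors $(\opcat{(\SIslice{I})} \times \SIslice{I})^{n}\to\SIslice{J}$, and then invoke the fact that such an operator has an essentially unique fixed point.

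First I would record two purely formal facts about how $u^{*}$ interacts with the symmetrization $\symmetric{(-)}$: since $u^{*}$ is a functor it commutes with the coordinate swap used in the definition of $\symmetric{(-)}$, so $\symmetric{(u^{*}\circ H)} = u^{*}\circ\symmetric{H}$ and $\symmetric{(K\circ u^{*})} = \symmetric{K}\circ u^{*}$. Set $\Phi(H) = G\circ\pair{u^{*}}{\symmetric{H}}$. Then, using the defining isomorphism $F\circ\pair{\id}{\symmetric{\Fix F}}\iso\Fix F$, the hypothesis, and the first bookkeeping fact,
\begin{align*}
u^{*}\circ\Fix F
  &\iso u^{*}\circ F\circ\pair{\id}{\symmetric{\Fix F}}
   \iso G\circ u^{*}\circ\pair{\id}{\symmetric{\Fix F}} \\
  &= G\circ\pair{u^{*}}{\symmetric{(u^{*}\circ\Fix F)}}
   = \Phi(u^{*}\circ\Fix F),
\end{align*}
and symmetrically, from $G\circ\pair{\id}{\symmetric{\Fix G}}\iso\Fix G$ and the second bookkeeping fact, $\Fix G\circ u^{*}\iso G\circ\pair{u^{*}}{\symmetric{(\Fix G\circ u^{*})}} = \Phi(\Fix G\circ u^{*})$. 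So both functors solve the fixed-point equation for $\Phi$.

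It then remains to see that $\Phi$ has a fixed point unique up to isomorphism. Here I would note that $\Phi$ is of exactly the shape handled in Section~\ref{sec:class:of:models}: it is the symmetrization of a strong functor of one variable pair which, because $G$ is locally contractive in its $(n+1)$th variable pair and the parameter $u^{*}$ does not interact with that pair, is itself locally contractive. Applying Theorem~\ref{thm:rec:dom:eq} (the parameter-free case of Theorem~\ref{exist:unique:rec:dom:eq}), now in the functor category $[(\opcat{(\SIslice{I})} \times \SIslice{I})^{n},\SIslice{J}]$ — which has all limits, computed pointwise, and the enrichment needed for the machinery of Section~\ref{sec:class:of:models} because $\SIslice{J}$ does — yields the required unique fixed point, and hence $u^{*}\circ\Fix F \iso \Fix G\circ u^{*}$, which is precisely the asserted commutativity.

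The main obstacle is this last step: one must check that the functor category together with $\Phi$ really satisfies the hypotheses of the general existence/uniqueness theorem of Section~\ref{sec:class:of:models} — in particular its enrichment and completeness (this is also where whatever smallness care is taken for Theorem~\ref{exist:unique:rec:dom:eq} must be repeated), and that $\Phi$ inherits local contractiveness and strength from $G$, which rests on the fact, already needed for Theorem~\ref{exist:unique:rec:dom:eq}, that symmetrization and precomposition with a fixed functor preserve strength and local contractiveness. A lower-tech alternative in the covariant case would be to use the explicit construction of $\Fix F$ from Section~\ref{sec:class:of:models} as the limit of the chain $\cdots\to F^{k}(1)\to\cdots$: since $u^{*}$ is a right adjoint it preserves this limit, and the hypothesis square carries the $u^{*}$-image of the chain for $F$ to the chain for $G$ on the reindexed parameters; but making this precise for the mixed-variance parametrized fixed points is more laborious than the uniqueness argument above.
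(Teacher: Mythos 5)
Your core argument is exactly the paper's: the paper, too, shows that $u^{*}\circ\Fix F$ solves the fixed-point equation $G(u^{*}(\vec X,\vec Y),\widesymmetric{H}(\vec X,\vec Y))\iso H(\vec X,\vec Y)$ by the same three-step computation (the defining isomorphism for $\Fix F$, the hypothesis $u^{*}\circ F\iso G\circ u^{*}$, and the fact that coordinatewise reindexing commutes with symmetrization), observes that $\Fix G\circ u^{*}$ solves the same equation, and concludes by uniqueness of solutions. Note also that the paper's proof is explicitly conditional on the existence and uniqueness of the fixed points provided by Theorem~\ref{exist:unique:rec:dom:eq}, so no new existence result has to be produced here at all.

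The one place you diverge is the justification of that uniqueness, and it is the weak link in your write-up. You propose to view $\Phi(H)=G\circ\pair{u^{*}}{\symmetric{H}}$ as a locally contractive endofunctor on the functor category $[(\opcat{(\SIslice{I})}\times\SIslice{I})^{n},\SIslice{J}]$ and to rerun the machinery of Section~\ref{sec:class:of:models} there. As you concede, this incurs genuine unchecked obligations: the $\Sh{A}$-enrichment of such a (large) functor category would require ends over a large category, which need not exist; its completeness and cartesian-closedness are not free; and $\Phi$ is in fact of mixed variance in $H$ (since $\symmetric{H}$ uses $H$ contravariantly in one coordinate), so the covariant, parameter-free Theorem~\ref{thm:rec:dom:eq} does not literally apply. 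None of this detour is needed. The uniqueness you want is already available at a lower level: for each fixed $(\vec X,\vec Y)$, both $u^{*}(\Fix F(\vec X,\vec Y))$ and $\Fix G(u^{*}(\vec X,\vec Y))$ are fixed points of the locally contractive mixed-variance functor $G(u^{*}(\vec X,\vec Y),-,-)$ on $\SIslice{J}$, hence isomorphic by Lemma~\ref{lem:initial:dialgebra} (equivalently, this is the uniqueness clause of Theorem~\ref{exist:unique:rec:dom:eq}, applied with the parameter coordinates reindexed along $u^{*}$, which the paper invokes when it asserts that $\Fix G\circ u^{*}$ is the unique solution). Replacing your last step by this observation makes your argument coincide with the paper's proof.
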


For the moment, our proof of Proposition~\ref{prop:rec:types:subst} is conditional on the existence of unique fixed points, i.e., we prove that if $\Fix F$ and $\Fix G$ exist, then they make the required diagram commute up to isomorphism.

\begin{proof}
  Note that $\Fix G \circ u^{*}$ is the unique $H$ up to isomorphism such
  that
  \[G(u^{*}(\vec X, \vec Y), \widesymmetric{H}(\vec X, \vec
  Y))) \iso H(\vec X, \vec Y). \]
 Now,
  \begin{align*}
    G(u^{*}(\vec X, \vec Y), (\widesymmetric{(u^{*}\circ \Fix F)}(\vec X, \vec Y))) & 
          \iso G(u^{*}(\vec X, \vec Y), u^{*}(\widesymmetric{\Fix F}(\vec X, \vec Y))) \\
    & 
          \iso u^{*}(F(\vec X, \vec Y, \widesymmetric{\Fix F}(\vec X, \vec Y))) \\
    & \iso u^{*} \widesymmetric{\Fix F}(\vec X, \vec Y))
  \end{align*}
  and so we conclude $u^{*} \Fix F(\vec X, \vec Y) \iso \Fix G
  (u^{*}(\vec X, \vec Y))$
\end{proof}

\paragraph{A higher order dependent type theory with guarded recursion}
In this section we sketch a type theory for guarded recursive types in
combination with dependent types and explain how it can be interpreted soundly in
$\SItopos$.  Since the type theory is an extension of standard higher-order 
dependent type theory, which can be interpreted in any
topos, we focus on the extension to guarded recursion, and refer to~\cite{Jacobs:99} for
details on dependent higher-order type theory and its interpretation
in a topos.   This section is meant to illustrate how the semantic results above can be 
understood type theoretically; we leave a full investigation of
the syntactic aspects of the type theory to future work.

Recursive types are naturally formulated using type variables, and thus
we allow types to contain type variables. Hence our type judgements live 
in contexts $\Gamma$ that can be formed using the rules below
\[
\begin{prooftree}
\justifies
\HODTTcj{()}
\end{prooftree}
\GAP
\begin{prooftree}
\HODTTtj{\Gamma}{-}{\tau}
\justifies
\HODTTcj{(\Gamma, x \co \tau)}
\end{prooftree}
\GAP
\begin{prooftree}
\HODTTcj{\Gamma}
\justifies
\HODTTcj{(\Gamma, X \co \Type)}
\end{prooftree}
\]
Type variables can be introduced as types using the rule
\[
\begin{prooftree}
\HODTTcj{\Gamma}
\justifies
\HODTTtj{\Gamma}{}{X}
\using 
X \co \Type \in \Gamma
\end{prooftree}
\]
The exchange rule of dependent type theory should be extended to allow a type variable $X$ to be exchanged with a term variable $x \co \sigma$ if $X$ does not appear in $\sigma$.

Dependent products and sums and subset types are added to the type theory in the usual way~\cite{Jacobs:99}, but we also add a special type constructor called $\later$ which
acts as a functor. The rules are
\[
\begin{prooftree}
\HODTTtj{\Gamma}{\Theta}{\tau}
\justifies
\HODTTtj{\Gamma}{\Theta}{\later\tau}
\end{prooftree}
\GAP
\begin{prooftree}
\HODTTj{\Gamma}{\Theta}{M}{\sigma\to\tau}
\justifies
\HODTTj{\Gamma}{\Theta}{\later(M)}{\later\sigma\to\later\tau}
\end{prooftree}
\]
and the external equality rules include equations expressing 
the functoriality of $\later$. Moreover, we add, for each pair of types $\sigma, \tau$ in the same context, a term of type $\later\sigma \times \later \tau \to \later(\sigma \times \tau)$ plus equations stating that this is inverse to $\pair{\later(\pi_1)}{\later(\pi_2)} \co \later(\sigma \times \tau) \to \later \sigma\times \later \tau$. 

The natural transformation $\next$ is introduced as follows:
\[
\begin{prooftree}
\HODTTtj{\Gamma}{\Theta}{\tau}
\justifies
\HODTTj{\Gamma}{\Theta}{\next_\tau}{\tau\to\later\tau}
\end{prooftree}
\]
plus equality rules stating that $\next_\tau$ is natural in $\tau$
(i.e., $\next_\sigma \circ u = \later(u) \circ \next_\tau$).
We omit term formation rules for fixed point terms. 

We now introduce the notion of {\functorialcontractiveness} which will be used as a condition ensuring well-formedness of recursive types. The definition is a syntactic reformulation of the semantic notion of local contractiveness. 

A type $\tau$ is functorial in $\vec X$ if there is some way to split up the occurences of the variables $\vec X$ in $\tau$ into positive and negative ones, in such a way that $\tau$ becomes a functor expressible in the type theory. Above, and in the exact definition below we use vectors $\vec X$ to denote vectors of type variables and use $\vec x \co \vec \sigma$ to denote vectors of typing assumptions of the form $x_{1}\co \sigma_{1} \dots x_{n}\co \sigma_{n}$. An assumption of the form $\vec f \co \vec X \to \vec Y$ means $f_1 \co X_1 \to Y_1, \dots f_n \co X_n \to Y_n$. 
\begin{defi} \label{def:functorial:type}
Let $\HODTTtj{\Gamma,  \vec X \co \Type}{}{\tau}$ be a valid typing judgement. We say that $\tau$ is \emph{functorial} in $\vec X$ if there exists some other type judgement $\HODTTtj{\Gamma, \vec X \co \Type, \vec Y\co \Type}{}{\tau'}$ and a term 
\[ \HODTTj{\Gamma, \vec X_{-}, \vec Y_{-}, \vec X_{+}, \vec Y_{+}, \vec f \co \vec X_{+} \to \vec X_{-}, \vec g \co \vec Y_{-} \to \vec Y_{+}}{}{\st(\vec f, \vec g)}{\tau'(\vec X_{-}, \vec Y_{-}) \to \tau'(\vec X_{+}, \vec Y_{+})}
\]
(writing $\tau'(\vec X_{-}, \vec Y_{-})$ for $\tau'[\vec X_{-}, \vec Y_{-} / \vec X, \vec Y]$)
such that $\tau'(\vec X, \vec X) = \tau$, and such that $\st$ is functorial in the sense that $\st(\vec \id, \vec \id) = \id$, $\st(\vec f \circ \vec f', \vec g' \circ \vec g) = \st(\vec f', \vec g') \circ \st(\vec f, \vec g)$. 
\end{defi}

The definition of  $\tau$ being {\contractivelyfunctorial} in $\vec X$ is similar, except that the strength $\st(\vec f, \vec g)$ must be defined for $\vec f \co \later(\vec X_{+} \to \vec X_{-}), \vec g \co \later(\vec Y_{-} \to \vec Y_{+})$. To make sense of functoriality write $f' \circ f$ for the composite 
\begin{diagram}
\later(X \to Y) \times \later(Y \to Z) & \rTo^\iso & \later((X \to Y) \times (Y \to Z)) & \rTo^{\later(\comp)} & \later(X \to Z) 
\end{diagram}
applied to $f'$ and $f$. 

\begin{defi} \label{def:functorially:contractive}
Let $\HODTTtj{\Gamma,  \vec X \co \Type}{}{\tau}$ be a valid typing judgement. We say that $\tau$ is \emph{\contractivelyfunctorial} in $\vec X$ if there exists some other type judgement $\HODTTtj{\Gamma, \vec X \co \Type, \vec Y\co \Type}{}{\tau'}$ and a term 
\[ \HODTTj{\Gamma, \vec X_{-}, \vec Y_{-}, \vec X_{+}, \vec Y_{+}, \vec f \co \later(\vec X_{+} \to \vec X_{-}), \vec g \co \later(\vec Y_{-} \to \vec Y_{+})}{}{\st(\vec f, \vec g)}{\tau'(\vec X_{-}, \vec Y_{-}) \to \tau'(\vec X_{+}, \vec Y_{+})}
\]
such that $\tau'(\vec X, \vec X) = \tau$, and such that $\st$ is functorial in the sense that $\st(\vec \id, \vec \id) = \id$, $\st(\vec f \circ \vec f', \vec g' \circ \vec g) = \st(\vec f', \vec g') \circ \st(\vec f, \vec g)$. 
\end{defi}

\begin{lem}
If $\tau$ is {\contractivelyfunctorial} in $\vec X$ then it is also functorial in $\vec X$.
\end{lem}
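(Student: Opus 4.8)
The plan is to reuse the auxiliary type supplied by the hypothesis and to convert the contractive strength into an ordinary one by precomposing with $\next$. Concretely, suppose $\HODTTtj{\Gamma, \vec X \co \Type, \vec Y \co \Type}{}{\tau'}$ and the term $\st$ witness that $\tau$ is \contractivelyfunctorial{} in $\vec X$, so that $\tau'(\vec X, \vec X) = \tau$ and $\st(\vec f, \vec g) \co \tau'(\vec X_{-}, \vec Y_{-}) \to \tau'(\vec X_{+}, \vec Y_{+})$ is derivable for $\vec f \co \later(\vec X_{+} \to \vec X_{-})$ and $\vec g \co \later(\vec Y_{-} \to \vec Y_{+})$. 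I would take the \emph{same} $\tau'$ as the witness for functoriality and define the required strength by
\[
\st'(\vec f, \vec g) \;:=\; \st(\vec{\next}(\vec f),\, \vec{\next}(\vec g)),
\qquad \vec f \co \vec X_{+} \to \vec X_{-},\quad \vec g \co \vec Y_{-} \to \vec Y_{+},
\]
where $\vec{\next}(-)$ denotes componentwise application of the maps $\next_{(-)}$. Since $\next_{\sigma\to\rho} \co (\sigma\to\rho)\to\later(\sigma\to\rho)$ is a term of the theory, $\st'$ is well-typed with the codomain $\tau'(\vec X_{-}, \vec Y_{-}) \to \tau'(\vec X_{+}, \vec Y_{+})$ demanded by Definition~\ref{def:functorial:type}, and $\tau'(\vec X, \vec X) = \tau$ is inherited.

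It then remains to verify the two functoriality equations for $\st'$, i.e.\ $\st'(\vec{\id}, \vec{\id}) = \id$ and $\st'(\vec f \circ \vec f', \vec g' \circ \vec g) = \st'(\vec f', \vec g') \circ \st'(\vec f, \vec g)$. I would reduce these to the corresponding equations for $\st$ using two properties of $\next$: (i) $\next_{\sigma\to\sigma}(\id_\sigma)$ is the unit of the $\later$-composition of Definition~\ref{def:functorially:contractive}, so that $\vec{\next}(\vec{\id})$ is exactly the tuple of identities occurring in the functoriality law for $\st$; and (ii) $\next$ is compatible with composition, $\next_{\sigma\to\rho'}(v \circ u) = \next_{\rho\to\rho'}(v) \circ \next_{\sigma\to\rho}(u)$, where the right-hand $\circ$ is the $\later$-composition. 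Granting these, $\st'(\vec{\id},\vec{\id}) = \st(\vec{\next}(\vec{\id}),\vec{\next}(\vec{\id})) = \id$ by functoriality of $\st$, and $\st'(\vec f \circ \vec f', \vec g' \circ \vec g) = \st(\vec{\next}(\vec f)\circ\vec{\next}(\vec f'),\, \vec{\next}(\vec g')\circ\vec{\next}(\vec g)) = \st(\vec{\next}(\vec f'),\vec{\next}(\vec g')) \circ \st(\vec{\next}(\vec f),\vec{\next}(\vec g)) = \st'(\vec f',\vec g') \circ \st'(\vec f,\vec g)$, again by functoriality of $\st$.

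The main obstacle is establishing (i) and (ii). Intuitively they hold because $\next$ is the canonical lax monoidal comparison for $\later$, hence preserves identities and composition; but a rigorous argument requires unfolding the $\later$-composition through the isomorphism $\later\sigma \times \later\rho \iso \later(\sigma\times\rho)$ followed by $\later(\comp)$, and chasing the resulting diagram using naturality of $\next$ together with the equations the type theory postulates for $\later$ on products (in particular that $\next$ commutes with the product projections). All remaining steps are routine bookkeeping with the substitution structure.
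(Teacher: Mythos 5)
Your proof is correct and is exactly the intended argument: the paper states this lemma without proof, the implicit reason being precisely your construction of precomposing the contractive strength with $\next$, which is sound because $\next$ respects the $\later$-composition and its identity (semantically, this is the statement that $\next$ is an enriched functor $\ccat\to{}_{\later}\ccat$, cf.\ Definition~\ref{def:gen:locally:contractive} and Lemma~\ref{lem:loc:contractive:comp}). The two side conditions you flag do follow routinely from naturality of $\next$ together with the postulated inverse $\pair{\later(\pi_1)}{\later(\pi_2)}$ of the product comparison map, so there is no gap.
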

 
We now give the introduction rule for recursive types
\[
\begin{prooftree}
\HODTTtj{\Gamma, X \co \Type}{}{\tau}
\justifies
\HODTTtj{\Gamma}{}{\rect{X}{\tau}}
\using 
\text{$\tau$ {\contractivelyfunctorial} in $X$}
\end{prooftree}
\]
As usual, there are associated term constructors $\fold{M}$ and $\unfold{M}$ 
that mediate between the recursive type and its unfolding together with
equations expressing that $\mlfont{fold}$ and $\mlfont{unfold}$ are each others inverses.

There is a rich supply of types {\contractivelyfunctorial} in $\vec X$ as can be seen from the following proposition. Proposition~\ref{prop:functorial:types} is stated compactly, and some of the items in fact cover two statements. For example, item~(\ref{item:dep:prod:functorial}) states that if $\sigma$ is functorial, so are $\Prod_{i\co I} \sigma$ and $\Sum_{i\co I} \sigma$ and if $\sigma$ is {\contractivelyfunctorial} so are $\Prod_{i\co I} \sigma$ and $\Sum_{i\co I} \sigma$. 

\begin{prop} \label{prop:functorial:types}
Let $\vec X$ be type variables and let $\sigma, \tau$ be types
\begin{enumerate}[\em(1)]
\item any type variable $X$ is functorial in $\vec X$
\item if $\vec X$ do not appear in $\sigma$ then $\sigma$ is {\contractivelyfunctorial} in $\vec X$
\item\label{item:to:functorial} if $\sigma$ and $\tau$ are both (contractively) functorial in $\vec X$ so are $\sigma \to \tau$ and $\sigma \times \tau$
\item\label{item:dep:prod:functorial} if $\sigma$ is (contractively) functorial in $\vec X$ and $\vec X$ do not appear in $I$ then $\Prod_{i\co I} \sigma$ and $\Sum_{i\co I} \sigma$  are both (contractively) functorial in $\vec X$
\item\label{item:subset:functorial} If $\sigma$ is (contractively) functorial in $\vec X$ (witnessed by $\sigma'$ and $\st_\sigma$) and $\phi$  is a predicate on $\sigma'$ such that 
 \[\phi_{\vec X_{-}, \vec Y_{-}}(x) 
 \imp \phi_{\vec X_{+}, \vec Y_{+}}(\st(\vec f, \vec g)(x)) \]
then $\{ x \co \sigma \mid \phi[\vec X/\vec Y](x) \}$ is (contractively) functorial in $\vec X$. 
\item\label{item:later:functorial} If $\sigma$ is functorial in $\vec X$, then $\later \sigma$ is {\contractivelyfunctorial} in $\vec X$.
\end{enumerate}
\end{prop}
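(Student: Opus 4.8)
The plan is to go through the six clauses one at a time. For each (contractively) functorial type we must produce the auxiliary type $\tau'$ in context $\Gamma, \vec X\co\Type, \vec Y\co\Type$ and the strength term $\st(\vec f, \vec g)$ demanded by Definition~\ref{def:functorial:type} (resp.\ Definition~\ref{def:functorially:contractive}), check that $\tau'[\vec X, \vec X/\vec X, \vec Y] = \tau$, and verify the two equations $\st(\vec\id, \vec\id) = \id$ and $\st(\vec f\circ\vec f', \vec g'\circ\vec g) = \st(\vec f', \vec g')\circ\st(\vec f, \vec g)$. The guiding idea is that $\vec X$ records the negative and $\vec Y$ the positive occurrences of the recursion variables, so that collapsing $\vec Y$ onto $\vec X$ recovers $\tau$. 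In every clause the contractively functorial variant is witnessed by the very same type expression $\tau'$ and term construction; the only change is that $\vec f, \vec g$ — and hence the appeals to the given strengths $\st_\sigma$, $\st_\tau$ — are taken under $\later$, with $\circ$ read as the $\later$-composition described after Definition~\ref{def:functorially:contractive}. So it suffices to give one argument per clause.

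First the base and purely structural cases. For item~(1), $\tau = X_j$: take $\tau' = Y_j$ and $\st(\vec f, \vec g) = g_j$ (a type variable not among the $\vec X$ falls under item~(2)). For item~(2), take $\tau' = \sigma$ and $\st(\vec f, \vec g) = \id_\sigma$; this is well-typed even when $\vec f, \vec g$ sit under $\later$, which is why not occurring yields contractive functoriality rather than merely functoriality. For item~(4), if $\sigma'$, $\st_\sigma$ witness functoriality of $\sigma$ and $\vec X$ does not occur in $I$, take $\dprod{i}{I}{\sigma'}$ (resp.\ $\dsum{i}{I}{\sigma'}$) with strength the pointwise action of $\st_\sigma(\vec f, \vec g)$ on the family. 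For item~(5), the assumed implication $\phi_{\vec X_{-}, \vec Y_{-}}(x)\imp\phi_{\vec X_{+}, \vec Y_{+}}(\st_\sigma(\vec f, \vec g)(x))$ says exactly that $\st_\sigma(\vec f, \vec g)$ restricts to a map $\{x\co\sigma'(\vec X_{-}, \vec Y_{-})\mid\phi_{\vec X_{-}, \vec Y_{-}}(x)\}\to\{x\co\sigma'(\vec X_{+}, \vec Y_{+})\mid\phi_{\vec X_{+}, \vec Y_{+}}(x)\}$, so the witness $\{x\co\sigma'\mid\phi(x)\}$ inherits a strength, and the functoriality equations hold because they held for $\st_\sigma$.

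The clauses that carry the actual content are~(3) for function types and~(6). For $\sigma\times\tau$ one takes $\sigma'\times\tau'$ with strength $\st_\sigma(\vec f, \vec g)\times\st_\tau(\vec f, \vec g)$. For $\sigma\to\tau$ the domain reverses variance, so the witness is $\sigma'[\vec Y, \vec X/\vec X, \vec Y]\to\tau'$ — whose collapse at $(\vec X, \vec X)$ is $\sigma\to\tau$ — and the strength sends $h$ to $\st_\tau(\vec f, \vec g)\circ h\circ\st_\sigma(\vec g, \vec f)$, where the swapped arguments to $\st_\sigma$ type-check precisely because the $\vec X$- and $\vec Y$-roles are interchanged in the domain; the identity law is immediate and the composition law reduces to those of $\st_\sigma$, $\st_\tau$ together with associativity of $\circ$. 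For item~(6), given $\sigma$ functorial with witness $\sigma'$, $\st_\sigma$, take $\later\sigma'$: since $\later$ commutes with type substitution, its collapse is $\later\sigma$. Its strength is built by applying the functorial action $\later(-)$ to $\st_\sigma$ (viewed as a term $(\vec X_{+}\to\vec X_{-})\times(\vec Y_{-}\to\vec Y_{+})\to(\sigma'(\vec X_{-}, \vec Y_{-})\to\sigma'(\vec X_{+}, \vec Y_{+}))$, a product of arrow types one per component), precomposing with the canonical isomorphism $\later(\vec X_{+}\to\vec X_{-})\times\later(\vec Y_{-}\to\vec Y_{+})\iso\later((\vec X_{+}\to\vec X_{-})\times(\vec Y_{-}\to\vec Y_{+}))$ and postcomposing with the map $J$ of~(\ref{eq:J}); this sends $\vec f\co\later(\vec X_{+}\to\vec X_{-})$, $\vec g\co\later(\vec Y_{-}\to\vec Y_{+})$ to a map $\later\sigma'(\vec X_{-}, \vec Y_{-})\to\later\sigma'(\vec X_{+}, \vec Y_{+})$, which is exactly where a $\later$ promotes functoriality to contractive functoriality.

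The hard part will be the two functoriality equations for the strength of item~(6): they rest on the coherence of $J$ — that $\later$, $J$ and the isomorphisms $\later A\times\later B\iso\later(A\times B)$ together form an enriched functor, as observed after Definition~\ref{def:locally:contractive} — and one has to line this up with the $\later$-composition and the distinguished identity of $\later$-wrapped morphisms used in Definition~\ref{def:functorially:contractive}. The variance bookkeeping in the function-type half of~(3) is a second, more mechanical source of potential index errors, but is harmless once the convention ``$\vec X$ negative, $\vec Y$ positive'' is fixed and the remaining clauses are phrased relative to it.
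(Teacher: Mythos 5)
Your proposal is correct and follows essentially the same route as the paper: the paper's own proof is a sketch of the same ``standard construction of functors from type expressions,'' exhibiting exactly your witnesses for the function-type case of~(3) (with the swapped arguments $\st_\sigma(\vec g,\vec f)$), for~(\ref{item:dep:prod:functorial}) (pointwise strength, using the exchange rule), and for~(\ref{item:subset:functorial}) (the hypothesis is precisely the restriction condition). Your additional treatment of~(1), (2) and of~(\ref{item:later:functorial}) via $\later(\st_\sigma)$, the product isomorphism and $J$ is the construction the paper leaves implicit (it is the type-theoretic counterpart of $\later$ being locally contractive as witnessed by $J$ and of Lemma~\ref{lem:loc:contractive:comp}), so no divergence in method.
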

Item~(\ref{item:subset:functorial}) uses the notation $\phi_{\vec X_{-}, \vec Y_{-}}$ for $\phi[\vec X_{-}, \vec Y_{-} / \vec X, \vec Y]$. 

\proof
The proof is a standard construction of functors from type expressions, and we just show a few examples. For~(\ref{item:to:functorial}) if $\sigma'$ and $\tau'$ along with $\st_\sigma$ and $\st_\tau$ witness that $\sigma$ and $\tau$ are functorial, then $\sigma'(\vec Y, \vec X) \to \tau'$ along with $\st_{\sigma\to\tau}(\vec f, \vec g)$ defined as
\[\lambda h \co \sigma'(\vec Y_{-}, \vec X_{-}) \to \tau'(\vec X_{-}, \vec Y_{-}) \ld \st_{\tau}(\vec f, \vec g) \circ h \circ \st_{\sigma}(\vec g, \vec f)
\]
witness that $\sigma\to\tau$ is functorial.

For~(\ref{item:dep:prod:functorial}) the assumption gives us a type $\sigma'$ plus a term
\[
\HODTTj{\Gamma, i\co I, \vec X_{-}, \vec Y_{-}, \vec X_{+}, \vec Y_{+}, \vec f \co \vec X_{+} \to \vec X_{-}, \vec g \co \vec Y_{-} \to \vec Y_{+}}{}{\st_{\sigma}(\vec f, \vec g)}{\sigma'(\vec X_{-}, \vec Y_{-})
\to \sigma'(\vec X_{+}, \vec Y_{+})}
\]
and we can define $\st_{\prod_{i \co I}\sigma}(\vec f, \vec g)$ as
\[
\lambda x \co \Prod_{i\co I} \sigma'(\vec X_{-}, \vec Y_{-}) 
\ld \lambda i \co I \ld \st_{\sigma} (\vec f, \vec g) (x(i))
\]
(This uses the exchange rule mentioned earlier.)

For item \ref{item:subset:functorial} the assumption is exactly the condition needed to show that $\st_{\sigma} (\vec f, \vec g)$ restricts to a term of the type 
\[
\{ x\co \sigma'(\vec X_{-}, \vec Y_{-}) \mid \phi_{\vec X_{-}, \vec Y_{-}}(x) \} \to 
\{ x\co \sigma'(\vec X_{+}, \vec Y_{+}) \mid \phi_{\vec X_{+}, \vec
  Y_{+}}(x) \}\eqno{\qEd}
\]

To allow for \emph{nested} recursive types, one needs to prove that if $\sigma$ is functorial in $\vec X$ and contractively functorial in $Y$, then $\rect{Y}{\sigma}$ is functorial in $\vec X$. In the type theory sketched above this is not provable because in general $\st_{\rect{Y}{\sigma}}(\vec f, \vec g)$ is not definable, but as we shall see when we sketch the interpretation of the type theory, it is safe to add $\st_{\rect{Y}{\sigma}}$ as a constant, together with appropriate equations, such that nested recursive types can in fact be defined.

\begin{rem}
The rules for well-definedness of recursive types are complicated because of the subset types, which require explicit mention of the syntactic strength $\st$. Alternatively, one could give a simple grammar for well-defined recursive types not including subset types, but including nested recursive types not mentioning $\st$, and then show how to interpret these by inductively constructing the contractive strength in the model. We chose the above approach because it is more expressive and because the subset types are needed in applications as illustrated in Section~\ref{subsec:application:worlds}.
\end{rem}

\subsection{Interpreting the type theory} 
\label{sec:interp:dtt} 

\newcommand{\dttenvproj}[2]{p_{#1, #2}}

The interpretation of an open type $\HODTTtj{\Gamma}{}{\sigma}$ is defined modulo an environment mapping the type variables in $\Gamma$ to semantic types, i.e., objects in slice categories. Precisely, if $\Gamma$ is of the form $\Gamma', X\co \Type, \Gamma''$ then $\rho$ should map $X$ to an object of $\SIslice{\den{\Gamma'}_{\rho'}}$ where $\rho'$ is the restriction of $\rho$ to the type variables of $\Gamma'$. The interpretation of open types is defined by induction and most of the cases are exactly as in the usual interpretation of dependent type theory~\cite{Jacobs:99}, and we just mention the new cases. The interpretation of a type variable introduction is defined as $\den{\HODTTtj{\Gamma', X\co \Type, \Gamma''}{}{X}} = \dttenvproj{\Gamma}{\Gamma'}^*(\rho(X))$, where $\dttenvproj{\Gamma}{\Gamma'}$ denotes the projection $\den{\Gamma}_\rho \to \den{\Gamma'}_\rho$. The interpretation of $\later$ is defined as $\den{\HODTTtj{\Gamma}{}{\later \sigma}} = \laterSlice{\den{\Gamma}_{\rho}}(\den{\HODTTtj{\Gamma}{}{\later \sigma}})$. 

For the interpretation of recursive types, note that for every type $\HODTTtj{\Gamma, \vec X}{}{\sigma}$ functorial in $\vec X$ and every environment $\rho$ mapping the free type variables in $\Gamma$ to semantic types, one can define a strong functor of the type
\[
\den{\sigma}_\rho \co (\opcat{\SIslice{\den{\Gamma}_\rho}} \times \SIslice{\den{\Gamma}_\rho})^{\cardinality{\vec X}} \to \SIslice{\den{\Gamma}_\rho} 
\]
as follows. Assuming that the functoriality of $\sigma$ is witnessed by $\sigma'$ and $\st$ as in Definition~\ref{def:functorial:type}, the action of $\den{\sigma}_\rho$ on objects is defined by the interpretation of $\sigma'$. Given objects $\vec A_-, \vec A_+, \vec B_-, \vec B_+$ of $\SIslice{\den{\Gamma}_\rho}$ the interpretation of $\st$ is a morphism in $\SIslice{\den{\Gamma}_\rho}$ of the type 
\[
A_{-,1}^{A_{+,1}} \times \dots \times A_{-,n}^{A_{+,n}} \times B_{+,1}^{B_{-,1}} \times \dots \times B_{+,n}^{B_{-,n}}  \to \den{\sigma}_\rho(\vec A_+, \vec B_+)^{\den{\sigma}_\rho(\vec A_-, \vec B_-)}
\]
where the products and exponentials are those of the slice $\SIslice{\den{\Gamma}_\rho}$. The interpretation of $\st$ defines the strength of $\den{\sigma}_\rho$, from which the  action of $\den{\sigma}_\rho$ on morphisms can be derived in the usual way. 

Similarly, if $\sigma$ is functorial in the $n$ first type variables and {\contractivelyfunctorial} in the last one then the interpretation of the witness $\st$ defines a strong functor which is locally contractive in the last variable and so 
we can define $\den{\rect{X}{\tau}}_\rho = \Fix(\den{\tau}_\rho)$ using the fixed point given by Theorem~\ref{exist:unique:rec:dom:eq}. 

There is a question of well-definedness here, since the fixed point of $\den{\sigma}_{\rho}$ a priori could depend on the choice of $\sigma'$ and $\st$. The uniqueness of the fixed point of Theorem~\ref{exist:unique:rec:dom:eq}, however, ensures that even for different such choices, the resulting $\den{\sigma}_{\rho}$ will be isomorphic. Usually, $\sigma$ comes with a canonical choice of $\sigma'$ and $\st$ as given by Proposition~\ref{prop:functorial:types}.

As mentioned earlier, for allowing nested recursive types in the type theory we need to add constants of the form $\st_{\rect{Y}{\sigma}}(\vec f, \vec g)$. Having sketched the interpretation of the type theory we can now see that it is safe to do so: $\st_{\rect{Y}{\sigma}}(\vec f, \vec g)$ can be interpreted using the strength of $\Fix\den{\sigma}_\rho$ which exists by Theorem~\ref{exist:unique:rec:dom:eq}. 

\paragraph{On Coherence}

Above, we have worked in the codomain fibration and ignored coherence
issues, i.e., the fact that the codomain fibration and the associated
fibred functors needed for the interpretation of the type theory are
not split.  One further advantage of the concrete representation of
slices $\stopos/I$ as presheaves over $\elements{I}$ is that the
latter gives rise to a split model. The idea is to form a split
indexed category $P : \stopos \to \opcat{\catcat}$, with fibre over $I$
given by $P(I) = \BigPsh{\elements{I}}$, and reindexing $P(u : I\to
J)$ given by $P(u)(X)(n,i) = X(u_n(i))$.  By forming the Grothendieck
construction~\cite{Jacobs:99} on $P$ one obtains a split fibration
$\Fam{\stopos}\to\stopos$ which is equivalent to the codomain
fibration. Then one uses this fibration to interpret the types and terms without
free type variables, and uses split fibred functors 
\[
(\opcat{\Fam{\stopos}_{\den{\Gamma}}} \times \Fam{\stopos}_{\den{\Gamma}})^{\cardinality{\Theta}} 
\to \Fam{\stopos}_{\den{\Gamma}}
\]
to interpret open types $\HODTTtj{\Gamma}{\Theta}{\tau}$.
Finally, one checks that the fibred constructs (e.g., right adjoints
to reindexing) used to interpret the
dependent type theory are split, and that $\later$ and the construction of recursive types 
is also split.  The latter essentially boils down to observing that 
the actual construction of initial algebras in Section~8 is done
fibrewise and thus preserved on-the-nose by reindexing. We omit
further details.

\section{Relation to metric spaces}
\label{sec:cbult}

Let $\CBUlt$ be the category of complete bounded ultrametric spaces
and non-expansive maps. In~\cite{BirkedalL:essence,SchwinghammerEtAl:2010:A-Semantic,Schwinghammer:Birkedal:Stovring:11,BirkedalST:10,Birkedal:Schwinghammer:Stovring:10:Nakano}
only those spaces that were also bisected were used:
a metric space is \emph{bisected} if all non-zero distances are of the
form $2^{-n}$ for some natural number $n \geq 0$.  Let $\BiCBUlt$ be the
full subcategory of $\CBUlt$ of bisected spaces, and let $\BiUlt$ be the
category of all bisected ultrametric spaces (necessarily bounded).

Let $t\stopos$ be the full subcategory of $\stopos$ on the total objects. 
\begin{prop}
  There is an adjunction between $BiUlt$ and $\stopos$, which restricts
  to an equivalence between $t\stopos$ and $\BiCBUlt$, as in the diagram:
\begin{diagram}[size=3em]
t\stopos                 & \pile{\lTo\\ \top \\ \rTo}  & \stopos \\
\uTo \iso \dTo      &                               & \uTo^F \dashv \dTo \\
\BiCBUlt                 & \pile{\lTo \\ \bot \\ \rTo}  & \BiUlt \\
\end{diagram}
\end{prop}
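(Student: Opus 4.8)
First I would construct the two functors underlying the left-hand vertical edge of the displayed diagram, together with the adjunction $F \dashv G$ on the right. Let $F \colon \BiUlt \to \stopos$ send a bisected ultrametric space $M$ to the presheaf with $F(M)(n) = M/{\sim_n}$, where $x \sim_n y$ iff $d(x,y)\le 2^{-n}$, and with restriction maps the evident quotients $M/{\sim_{n+1}}\to M/{\sim_n}$ (the precise indexing convention is immaterial). A non-expansive map descends to these quotients, so this is functorial, and $F(M)$ is always a \emph{total} object since the quotient maps are surjective. In the other direction let $G \colon \stopos \to \BiUlt$ send $X$ to the set $\lim_n X(n)$ of threads $(x_n)_n$, with $x_n \in X(n)$ and $r_n(x_{n+1}) = x_n$, metrised by $d\bigl((x_n),(y_n)\bigr) = 2^{-N}$ where $N\ge 0$ is the largest index with $x_m = y_m$ for all $m\le N$ (distance $0$ if the threads agree). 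One checks this is a complete, bisected (hence bounded) ultrametric space, so $G$ in fact factors through $\BiCBUlt$; a morphism $X\to Y$ induces a non-expansive map levelwise.

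Next I would establish $F \dashv G$ by giving the unit and counit and checking the triangle identities. The unit $\eta_M \colon M \to G F M$ is $x\mapsto ([x]_n)_n$, well-defined because $[x]_{n+1}$ restricts to $[x]_n$, and non-expansive. The counit $\epsilon_X \colon F G X \to X$ is, at level $n$, the map $(GX)/{\sim_n}\to X(n)$ sending the class of a thread $(x_m)_m$ to $x_n$; this is well-defined because $\sim_n$ on $GX$ identifies exactly the threads agreeing at level $n$, and it is natural in $X$. Both triangle identities are then immediate levelwise computations. This yields the right-hand adjunction of the diagram, with $G$ landing in $\BiCBUlt$.

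The core of the proof is that $\epsilon_X$ is an isomorphism precisely when $X$ is total, and $\eta_M$ is an isomorphism precisely when $M$ is complete. For $\epsilon_X$: at level $n$ the map $(GX)/{\sim_n}\to X(n)$ is injective by construction of $\sim_n$, and it is surjective iff every $x\in X(n)$ can be extended to a thread --- which is possible for all $n$ exactly when all restriction maps of $X$ are surjective, i.e.\ $X$ is total. For $\eta_M$: it is always injective, since distinct points have positive distance and hence differ in some $[\,\cdot\,]_n$; it is surjective iff every thread of classes $([x_n]_n)_n$ comes from an actual point, and it does when $M$ is complete, since the representatives $(x_n)_n$ form a Cauchy sequence whose limit maps to the thread; and it is \emph{isometric} because $M$ is bisected --- if $d(x,y)=2^{-k}$ then $x\sim_n y$ holds exactly for $n\le k$, so the image threads agree up to level $k$ and no further. (Without bisectedness one only obtains that $GFM$ is the metric completion of $M$, which is precisely why the bottom edge of the diagram is completion.) Hence $F$ and $G$ restrict to an adjoint equivalence $\BiCBUlt \equi t\stopos$.

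It remains to identify the other two edges, both essentially standard. The bottom edge is the reflection of $\BiUlt$ onto $\BiCBUlt$ with reflector the metric completion (bisectedness is preserved since the set of admissible distances is closed), the unit being the universal non-expansive map $M\to \hat M$ into a complete space. The top edge is the coreflection of $\stopos$ onto $t\stopos$: the inclusion $t\stopos\hookrightarrow\stopos$ has a right adjoint sending $X$ to the presheaf $R(X)$ with $R(X)(n)=\bigcap_{m\ge n} r_n^m\bigl(X(m)\bigr)$, the ``stable image'' (here $r_n^m\colon X(m)\to X(n)$ is the composite of restriction maps); this $R(X)$ is total, the inclusion $R(X)\hookrightarrow X$ is the counit, and any morphism from a total object factors through $R(X)$ because totality permits lifting along the $r_n^m$. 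Assembling the four functors gives the diagram. The delicate points, and hence the main obstacle, are the completeness of $\lim_n X(n)$ together with the verification of $F\dashv G$ (routine but notation-heavy), and --- above all --- the precise matching of surjectivity of the presheaf restriction maps with extendability of level-$n$ data to threads, plus the essential use of bisectedness that upgrades $\eta_M$ from a dense isometric embedding to an isomorphism; keeping the $2^{-n}$-versus-level bookkeeping consistent throughout is where the care goes.
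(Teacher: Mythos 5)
Your construction of $F$ (quotients by the relations $d\le 2^{-n}$), of the right adjoint $G$ (the limit of threads with the agreement metric), the unit/counit analysis showing $\epsilon_X$ is iso iff $X$ is total and $\eta_M$ is iso iff $M$ is complete (with bisectedness giving the isometry), and the Cauchy completion as the bottom edge all follow the same route as the paper's proof sketch, and that part is correct. The genuine problem is your description of the top edge. The right adjoint to the inclusion $t\stopos\hookrightarrow\stopos$ is \emph{not} the ``stable image'' presheaf $R(X)(n)=\bigcap_{m\ge n} r_n^m(X(m))$: that presheaf need not be total, so it does not even take values in $t\stopos$, and it can be strictly larger than the actual coreflection. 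Concretely, take $X(1)=\{\ast\}$ and $X(n)=\{m\in\N \mid m\ge n\}$ for $n\ge 2$, with restriction maps the inclusions $X(n+1)\subseteq X(n)$ for $n\ge 2$ and the unique map $X(2)\to X(1)$. Then every $r_1^m$ is surjective, so $R(X)(1)=\{\ast\}$, while $\bigcap_{m\ge 2} r_2^m(X(m))=\emptyset$, so $R(X)(2)=\emptyset$ and the restriction $R(X)(2)\to R(X)(1)$ is not surjective; moreover $\lim_n X(n)=\emptyset$, so the true coreflection of this $X$ is the empty presheaf, not $R(X)$. Your remark that every map from a total object factors through $R(X)$ is true but not enough: the value of the right adjoint must itself be total.

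The fix is cheap: the coreflector is $X\mapsto F(G(X))$, whose $n$-th component is the image of $\lim_m X(m)\to X(n)$, i.e.\ the elements of $X(n)$ that extend to a global thread (contained in, and in general strictly smaller than, the stable image); this is always total, the levelwise inclusion into $X$ is the counit, and the universal property is exactly the adjunction $F\dashv G$ combined with the fact that every total object is isomorphic to $F(GT)$. Equivalently, once you have $F\dashv G$ and the equivalence $\BiCBUlt\simeq t\stopos$, the top adjunction follows formally with right adjoint $FG$ (or completion composed with $G$, transported along the equivalence), so no separate construction is needed. With that correction your argument establishes the proposition.
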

\begin{proof}[Proof sketch]
  The functor $F: \BiUlt \to \stopos$ is defined as follows. A space
  $(X,d) \in \BiUlt$ gives rise to an indexed family of equivalence
  relations by $x \mathrel{=_n} x' \iff d(x,x') \leq 2^{-n}$, which
  can then be viewed as a presheaf: at index $n$, it is the quotient
  $X/(=_n)$, see, e.g.~\cite{DiGianantonio:Miculan:04}.
  One can check that $F$ in fact maps into $t\stopos$ and that $F$
  has a right adjoint that maps into $\BiCBUlt$.
  The right adjoint maps a variable set into a metric space on the limit
  of the family of variable sets; the metric expresses up to what level 
  elements in the limit agree.  The left
  adjoint  from $BiUlt$ to $\BiCBUlt$ is 
  given by the Cauchy-completion. 
\end{proof}

\begin{prop}
  A morphism in $\BiCBUlt$ is contractive in the metric sense iff it is
  contractive in the internal sense of~$\stopos$.
\end{prop}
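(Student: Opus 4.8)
The plan is to transport the internal structure $(\later,\next)$ across the equivalence $t\stopos\simeq\BiCBUlt$ of the preceding proposition and recognise it as a standard metric construction. Recall that under this equivalence a total object $X$ of $\stopos$ corresponds to the complete bisected space $\bar X$ with underlying set $\varprojlim_n X(n)$, where $x=_n x'$ (that is, $d(x,x')\leq 2^{-n}$) holds exactly when $x$ and $x'$ have the same image in $X(n)$. Applying $\later$ merely shifts the indexing, $\later X(n+1)=X(n)$ and $\later X(1)=\{\star\}$, so the space corresponding to $\later X$ has the same underlying set as $\bar X$ but with every distance halved. Hence, writing $\tfrac12\cdot(-)$ for the evident endofunctor on $\BiCBUlt$ that rescales each metric by $\tfrac12$ (which preserves completeness, boundedness and bisectedness), the first step is to check that $\later$ preserves total objects and that, under the equivalence, $\later|_{t\stopos}$ corresponds to $\tfrac12\cdot(-)$ while $\next_X\co X\to\later X$ corresponds to the canonical non-expansive map $\bar X\to\tfrac12\cdot\bar X$ that is the identity on underlying sets. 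This is a routine unfolding of the functor $F$ from the previous proof; the only fiddly point is the bottom level, where the degenerate component $\later X(1)=\{\star\}$ matches the fact that, every space being $1$-bounded, the relation ``$x=_0 x'$'' is vacuous.

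Given this identification, the statement follows quickly. A morphism $f$ of $\BiCBUlt$, regarded as a map $\bar X\to\bar Y$, is contractive in the internal sense of Definition~\ref{def:contractive} iff it factors as $g\circ\next$ for some non-expansive $g\co\tfrac12\cdot\bar X\to\bar Y$. Since $\next$ is the identity on underlying sets, such a $g$ exists iff $f$ itself, viewed as a function $\tfrac12\cdot\bar X\to\bar Y$, is non-expansive, i.e.\ iff $d(f(x),f(x'))\leq\tfrac12\,d(x,x')$ for all $x,x'$; and then $g$ is forced to agree with $f$ on underlying sets, so it is unique. Finally, because $\bar X$ is bisected all its distances lie in $\{2^{-n}\}$, so this $\tfrac12$-Lipschitz condition is equivalent to the existence of some $c<1$ with $d(f(x),f(x'))\leq c\cdot d(x,x')$ for all $x,x'$, which is contractiveness in the metric sense. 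Both implications are thereby established.

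The main work, and the only place where care is needed, is the first step: verifying precisely that the pair $(\later,\next)$ corresponds to $(\tfrac12\cdot(-),\text{canonical map})$ under the equivalence, including the index bookkeeping at level $1$. Once that is in place, the remainder is the elementary observation that factoring through $\next$ is the same as being $\tfrac12$-Lipschitz, together with the standard fact that for bisected spaces this coincides with metric contractiveness.
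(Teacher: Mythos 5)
There is a genuine gap, and it is exactly the subtlety the paper flags in the two sentences following this proposition. The statement concerns the \emph{internal} notion of contractiveness, i.e.\ the predicate $\contractivePred$ of Definition~\ref{defn:internally-contractive} (this is the notion the forward reference in Section~\ref{sec:internal-logic} points at), whereas your argument works throughout with the \emph{external} notion of Definition~\ref{def:contractive}, factorization through $\next$. These two notions agree only when the domain is total \emph{and inhabited}, and the discrepancy matters here: the equivalence you actually prove --- metrically contractive iff the corresponding morphism factors through $\next$ in $\stopos$ --- is false at the empty space. The identity on the empty object of $\BiCBUlt$ is vacuously contractive in the metric sense (and $\contractivePred$ of it holds vacuously), yet it admits no factorization $g\circ\next$, since any $g\co\later X\to Y$ would need a component $g_1\co\{\star\}\to\emptyset$. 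For the same reason your ``first step'' breaks down there: $\later$ applied to the empty total presheaf is not total, and $\later$ does not correspond to $\frac{1}{2}\cdot(-)$ on the empty space; the paper records precisely this ($F(\frac{1}{2}X)\iso\later(FX)$ only for non-empty $X$) immediately after the proposition, which is one reason the proposition is phrased with the internal notion in the first place.

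For non-empty spaces your computation is essentially right and can be completed: there $\later X$ is total, morphisms $\later X\to Y$ correspond to non-expansive maps $\frac{1}{2}\cdot\bar X\to\bar Y$, $\next$ corresponds to the identity on points, so factoring through $\next$ is the $\frac{1}{2}$-Lipschitz condition, which for bisected spaces is metric contractiveness; combining this with the stated fact that the external and internal notions coincide for total, inhabited domains gives the proposition away from the empty space, whose trivial case must then be checked separately for $\contractivePred$. A cleaner route avoids the external notion altogether: unfold $\contractivePred(f)$ with the Kripke--Joyal semantics --- it says that $f_1$ is constant and that $\restrict{x}{n}=\restrict{x'}{n}$ implies $f_{n+1}(x)=f_{n+1}(x')$ --- and observe that, under the equivalence (using that for total objects the projections from the limit $\bar X$ onto each $X(n)$ are surjective), this is exactly the condition $d(f(x),f(x'))\leq\frac{1}{2}\,d(x,x')$ for all points $x,x'$, which for bisected spaces is metric contractiveness; this argument needs no case split on emptiness.
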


The later operator on $\stopos$ corresponds to multiplying by $\frac{1}{2}$ in
ultra-metric spaces, except on the empty space. Specifically, $F(\frac{1}{2} X)$ is isomorphic
to $\later(F X)$, for all non-empty $X$.  For ultra-metric spaces, the 
formulation of existence of solutions to guarded recursive domain equations
has to consider the empty space as a special case. Here, 
in $\stopos$, we do not have to do so, since $\later$ behaves better than
$\frac{1}{2}$ on the empty set.

\section{General models of guarded recursive terms}
\label{sec:axioms}
\newcommand{\DTTj}[3]{{#1} \vdash {#2} : {#3}} 
\newcommand{\DTTej}[4]{{#1} \vdash {#2} = {#3} : {#4}} 

\newcommand{\ladj}[2]{#1 \dashv #2}
\newcommand{\arrowcat}[1]{#1^{\to}}

Having presented the specific model $\SItopos$ we now turn to general models of 
guarded recursion. We give an axiomatic definition of what models 
of guarded recursion are, and in Section~\ref{sec:class:of:models} we show that $\SItopos$ 
is just one in a large class of models. 

We start by defining a notion of model of guarded recursive terms, and
showing that the class of such models is closed under taking slices.
This result is not only of interest in its own right, but also needed
for showing that the general models of
Section~\ref{sec:class:of:models} model guarded recursive dependent
types.
 
\begin{defi} \label{def:model:guarded:rec:term} A model of
  guarded recursive terms is a category $\etopos$ with finite products
  together with an endofunctor $\later \co \etopos \to \etopos$ and a
  natural transformation $\next \co \id \to \later$ such that
\begin{iteMize}{$\bullet$}
\item for every morphism $f \co \later X \to X$ there exists a unique
  morphism $h \co 1 \to X$ such that $f \circ \next \circ h =
  h$. 
\item $\later$ preserves finite limits
\end{iteMize}
\end{defi}

\begin{lem}
If $\etopos$ models guarded recursive terms then $\later$ is strong.
\end{lem}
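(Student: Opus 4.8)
The plan is to exhibit an explicit tensorial strength for $\later$. Since $\etopos$ is only assumed to have finite products, ``$\later$ is strong'' here should be read as: $\later$ carries a natural transformation $\strengthof{\strength}{X}{Y} \co X \times \later Y \to \later(X \times Y)$ satisfying the usual unit and associativity coherence laws for the cartesian monoidal structure $(\etopos, \times, 1)$ — this is the form of strength that makes sense without cartesian closure, and, when $\etopos$ happens to be cartesian closed, it corresponds via Kock~\cite{Kock:72} to the enrichment $Y^X \to \later Y^{\later X}$ used in Section~\ref{sec:recdom}.

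First I would exploit that $\later$ preserves finite limits, hence finite products: this makes the canonical comparison map $\pair{\later\pi_1}{\later\pi_2} \co \later(X\times Y) \to \later X \times \later Y$ an isomorphism, whose inverse I will call $m_{X,Y}$, and it makes $\later 1$ terminal so that $\next_1 \co 1 \to \later 1$ is (the unique map, hence) an isomorphism. Then I would set
\[
\strengthof{\strength}{X}{Y} \;:=\; m_{X,Y} \circ (\next_X \times \id_{\later Y}) \co X \times \later Y \to \later(X \times Y),
\]
that is: apply $\next$ to the first component, then recombine using $m$.

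The remaining steps are verifications. Naturality of $\strengthof{\strength}{X}{Y}$ in both arguments follows from naturality of $m$ (standard for a product-preserving functor) together with naturality of $\next$. For the coherence laws I would first record the auxiliary fact that $\next$ is \emph{monoidal}: the universal property of the product forces $\next_{X\times Y} = m_{X,Y}\circ(\next_X\times\next_Y)$, and $\next_1$ is the structural isomorphism $1 \iso \later 1$ noted above. Granting this, both the unit triangle and the associativity pentagon for $\strengthof{\strength}{}{}$ unwind to the coherence of the monoidal functor $\later$ (equivalently, of the isomorphisms $m_{X,Y}$) combined with the monoidality of $\next$, which are routine diagram chases.

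I do not expect a genuine obstacle. The only point needing care is pinning down which notion of strength is intended in this more general setting — tensorial strength, since $\etopos$ need not be cartesian closed — and checking that $\next$ interacts correctly with the product-preservation isomorphism; once that observation is in place the coherence conditions are mechanical.
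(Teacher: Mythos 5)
Your proposal is correct and is essentially the paper's own proof: the strength is defined as the composite $X \times \later Y \xrightarrow{\next_X \times \id} \later X \times \later Y \xrightarrow{\iso} \later(X \times Y)$, using that $\later$ preserves finite products. The paper leaves the naturality and coherence checks implicit, which you spell out, but the construction is identical.
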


\proof
Using $\next$ one can define a strength for $\later$ as the composite
\[ \iso \circ \, \next\times \id \co X \times \later Y \to \later X \times \later Y \to \later (X \times Y) \, .\eqno{\qEd}
\]

The notion of contractive morphism as well as Lemma~\ref{lem:contractive:comp} and Theorem~\ref{thm:fp:op} generalises directly to the current setting.

\begin{thm} \label{theorem:slices} If $\etopos$ is a locally
  cartesian closed model of guarded recursive terms, then so is every slice
  of $\etopos$.
\end{thm}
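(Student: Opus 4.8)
The plan is to verify, for each object $I$, the ingredients of Definition~\ref{def:model:guarded:rec:term} for the slice $\slice{\etopos}{I}$, using the standard fact that a slice of a locally cartesian closed category is again locally cartesian closed (via the equivalences $(\slice{\etopos}{I})/(f\co J\to I)\simeq\slice{\etopos}{J}$). In particular $\slice{\etopos}{I}$ has all finite limits (hence finite products), its terminal object is $\id_I$, and the reindexing functor $I^{*}\co\etopos\to\slice{\etopos}{I}$ along $I\to 1$ has a right adjoint $\Pi_I$. It then remains to supply on $\slice{\etopos}{I}$ an endofunctor $\laterSlice{I}$ that preserves finite limits, a natural transformation $\next^{I}\co\id\to\laterSlice{I}$, and the unique-fixed-point property; the last point is the real content.

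First I would define $\laterSlice{I}\co\slice{\etopos}{I}\to\slice{\etopos}{I}$ on an object $\p Y\co Y\to I$ as the pullback of $\later\p Y\co\later Y\to\later I$ along $\next_I\co I\to\later I$, with the projection to $I$ as its structure map; this is the abstract form of Proposition~\ref{prop:later:pb}, so that $\laterSlice{I}=\next_I^{*}\circ\overline{\later}$, where $\overline{\later}\co\slice{\etopos}{I}\to\slice{\etopos}{\later I}$ sends $\p Y$ to $\later\p Y$. Functoriality of $\laterSlice{I}$ is immediate from functoriality of $\later$ and uniqueness of maps into pullbacks. Next, $\laterSlice{I}$ preserves finite limits: $\overline{\later}$ does, since $\later$ preserves finite limits and finite limits in slices are computed from those of $\etopos$ (concretely, $\overline{\later}$ sends $\id_I$ to $\id_{\later I}$ and pullbacks over $I$ to pullbacks over $\later I$), and $\next_I^{*}$ preserves all limits, having the left adjoint $\Sigma_{\next_I}$. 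The transformation $\next^{I}_{Y}\co Y\to\laterSlice{I}Y$ is the map into the pullback induced by $\p Y$ and $\next_Y$, which form a cone over the cospan $\next_I,\later\p Y$ by naturality of $\next$; naturality of $\next^{I}$ is again forced by the universal property of the pullback.

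For the unique-fixed-point property, let $f\co\laterSlice{I}Y\to Y$ be a morphism of $\slice{\etopos}{I}$. I would construct a natural map $\theta_Y\co\later(\Pi_I Y)\to\Pi_I(\laterSlice{I}Y)$ as the $\Pi_I$-transpose of the morphism $I^{*}(\later(\Pi_I Y))\to\laterSlice{I}Y$ that, by the pullback property of $\laterSlice{I}Y$, is determined by the second projection together with $\later\epsilon\circ(\id\times\next_I)\co\later(\Pi_I Y)\times I\to\later Y$ (here $\epsilon\co I^{*}\Pi_I Y\to Y$ is the counit of $I^{*}\dashv\Pi_I$, and one uses that $\later$ preserves the product $\Pi_I Y\times I$). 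Put $g:=\Pi_I(f)\circ\theta_Y\co\later(\Pi_I Y)\to\Pi_I Y$. Applying the unique-fixed-point axiom of $\etopos$ to $g$ gives a unique $k\co 1\to\Pi_I Y$ with $g\circ\next_{\Pi_I Y}\circ k=k$. Transporting along the natural bijection $\Hom{\etopos}{1}{\Pi_I Y}\iso\Hom{\slice{\etopos}{I}}{\id_I}{Y}$ (the sections of $\p Y$) yields $h\co I\to Y$ with $\p Y\circ h=\id_I$; and since the identity $g\circ\next_{\Pi_I Y}=\Pi_I(f\circ\next^{I}_{Y})$ holds (see below) and the bijection is natural in $Y$, the equation $g\circ\next_{\Pi_I Y}\circ k=k$ corresponds exactly to $f\circ\next^{I}_{Y}\circ h=h$. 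Hence fixed points of $g$ in $\etopos$ and fixed points of $f$ in $\slice{\etopos}{I}$ are in bijection, and uniqueness of $k$ yields uniqueness of $h$.

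The step I expect to be the main obstacle is the identity $\theta_Y\circ\next_{\Pi_I Y}=\Pi_I(\next^{I}_{Y})$ (whence $g\circ\next_{\Pi_I Y}=\Pi_I(f\circ\next^{I}_{Y})$ by functoriality of $\Pi_I$). This is a diagram chase: one checks that both sides have the same $\Pi_I$-transpose into the pullback $\laterSlice{I}Y$, namely the morphism whose $\later Y$-component is $\next_Y\circ\epsilon$ and whose $I$-component is the projection, using naturality of $\next$, the counit triangle $\p Y\circ\epsilon=\pi_2$, and the identity $\next_{A\times B}=\next_A\times\next_B$ that comes from $\later$ preserving products. Everything else --- functoriality, preservation of finite limits, and the bookkeeping with the adjunctions $I^{*}\dashv\Pi_I$ and $\Sigma_{\next_I}\dashv\next_I^{*}$ --- is routine.
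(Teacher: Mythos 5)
Your proposal is correct and follows essentially the same route as the paper: you define $\laterSlice{I}$ by the pullback of Proposition~\ref{prop:later:pb} (equivalently the composite $\next^{*}\circ\later$), get finite-limit preservation from $\later$ and the left adjoint to $\next^{*}$, and reduce fixed points in the slice to global ones via $\Pi_I$, your comparison map $\theta_Y\co\later(\Pi_I Y)\to\Pi_I(\laterSlice{I}Y)$ and the identity $\theta_Y\circ\next_{\Pi_I Y}=\Pi_I(\next^{I}_Y)$ being exactly the content of the paper's Lemma~\ref{lem:locally:contractive:slice} (the triangle there is verified by naturality of $\next$, which is your transpose chase in different clothing). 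The only cosmetic difference is that the paper phrases the reduction as ``$\Prod_{i\co I}$ sends contractive morphisms of $\eslice{I}$ to contractive morphisms of $\etopos$'' before invoking the bijection between global fixed points and sections, whereas you transpose the endomorphism directly.
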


To prove Theorem~\ref{theorem:slices} we must first show how to
generalise $\later$ to slices.
We do this by taking the pullback diagram of
Proposition~\ref{prop:later:pb} as a definition of $\laterSlice{I}X$.
In other words we define $\laterSlice{I}$ as the composite
\begin{diagram}[LaTeXeqno] \label{eq:later:composite}
\eslice{I} & \rTo^{\later} & \eslice{\later I} & \rTo^{\next^*} & \eslice{I}
\end{diagram}
where the first functor maps $\p X \co X \to I$ to $\later(\p X) \co
\later X \to \later I$ and the second is given by pullback along
$\next$. Recall that $\next^*$ has a left adjoint $\Coprod_{\next}$
mapping $\p Y \co Y \to I$ to $\next \circ \p Y$ and so preserves
limits. It is easy to see that also the first functor of
(\ref{eq:later:composite}) preserves finite limits because $\later$ does,
and thus we have the following:

\begin{lem} \label{lem:later:slices:limits} The functor
  $\laterSlice{I} \co \eslice{I} \to \eslice{I}$ preserves finite
  limits.
\end{lem}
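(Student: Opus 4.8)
The plan is to use the factorisation of $\laterSlice{I}$ recorded in~(\ref{eq:later:composite}), namely $\laterSlice{I} = \next^{*} \circ \later_{*}$, where $\later_{*} \co \eslice{I} \to \eslice{\later I}$ sends $\p X \co X \to I$ to $\later \p X \co \later X \to \later I$, and $\next^{*} \co \eslice{\later I} \to \eslice{I}$ is pullback along $\next \co I \to \later I$. Since a composite of finite-limit-preserving functors preserves finite limits, it suffices to treat the two factors separately.

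For $\later_{*}$, I would recall that finite limits in a slice category $\eslice{I}$ are generated by the terminal object $\id_I$ together with pullbacks, and that the forgetful functor $\eslice{I} \to \etopos$ creates pullbacks (indeed all connected limits). Hence it is enough to check two things: that $\later_{*}$ sends $\id_I$ to the terminal object of $\eslice{\later I}$, which is immediate since $\later \id_I = \id_{\later I}$; and that $\later_{*}$ sends a pullback square over $I$ to a pullback square over $\later I$, which holds because such a square is in particular a pullback in $\etopos$ and $\later$ preserves finite limits by Definition~\ref{def:model:guarded:rec:term}.

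For $\next^{*}$, I would invoke the standard fact that a pullback functor always has a left adjoint $\Coprod_{\next}$, here given by post-composition with $\next$, which exists because $\etopos$ has pullbacks (one could also appeal to local cartesian closedness to get a right adjoint $\Prod_{\next}$, but the left adjoint already suffices). Being a right adjoint, $\next^{*}$ preserves all limits, in particular finite ones. Combining the two preservation statements yields the lemma.

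I do not expect a genuine obstacle here: the only mildly delicate point is the bookkeeping that reduces ``preserves finite limits in a slice'' to ``preserves the terminal object and pullbacks'', and that is entirely routine. If one preferred to avoid even that reduction, one could instead verify preservation of the empty limit and of binary pullbacks in $\eslice{I}$ by hand, using the description of $\laterSlice{I}$ via the pullback square of Proposition~\ref{prop:later:pb}.
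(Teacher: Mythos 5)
Your proposal is correct and follows essentially the same route as the paper: the paper also factors $\laterSlice{I}$ as in~(\ref{eq:later:composite}), notes that pullback along $\next$ preserves limits since it is right adjoint to $\Coprod_{\next}$ (post-composition with $\next$), and that the first factor preserves finite limits because $\later$ does. Your extra bookkeeping (terminal object plus pullbacks, creation of pullbacks by the forgetful functor) just spells out the step the paper calls ``easy to see.''
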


We define $\next_I: \p{Y}\to \p{\laterSlice{I} Y}$ in the slice over $I$ 
as indicated in the diagram below 
\begin{diagram}
Y \\
& \rdTo \rdTo(4,2)^{\next} \rdTo(2,4)_{\p{Y}}   \\
&& \laterSlice{I} Y \SEpbk & \rTo & \later Y \\
&& \dTo_{\p{\laterSlice{I} Y}} && \dTo_{\later \p{Y}} \\
&& I & \rTo^{\next} & \later I
\end{diagram}
It is easy to show that $\next_I$ is a natural transformation.

The following proposition states that $\later$ defines a fibred functor and
hence can serve as a type constructor in the dependent type theory of $\etopos$. 

\begin{prop} \label{prop:later:fibred}
For every $u \co J \to I$ in $\etopos$ the following diagram commutes up to isomorphism
\begin{diagram}
\eslice{I} & \rTo^{\laterSlice{I}} & \eslice{I} \\
\dTo^{u^*} &  & \dTo_{u^*} \\
\eslice{J} & \rTo^{\later_J} & \eslice{J}  \, .
\end{diagram}
As a consequence, the collection of functors $(\laterSlice{I})_{I \in
  \etopos}$ define a fibred endofunctor on the codomain fibration.
\end{prop}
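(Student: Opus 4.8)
The plan is to unfold both functors along the composite~(\ref{eq:later:composite}) and reduce the claim to two facts: pseudo-functoriality of reindexing (pullbacks compose up to canonical isomorphism), and the observation that applying $\later$ to a morphism commutes with pullback. Write $L_I \co \eslice{I} \to \eslice{\later I}$ for the functor sending $\p{X} \co X \to I$ to $\later(\p{X}) \co \later X \to \later I$, so that by~(\ref{eq:later:composite}) we have $\laterSlice{I} = \next_I^{*} \circ L_I$, where $\next_I \co I \to \later I$ is the component of $\next$ at $I$. Given $u \co J \to I$, naturality of $\next$ gives the strict equality $\next_I \circ u = \later u \circ \next_J$. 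First I would compute, using pseudo-functoriality of the codomain fibration,
\begin{align*}
\laterSlice{J} \circ u^{*} = \next_J^{*} \circ L_J \circ u^{*}
  &\iso \next_J^{*} \circ (\later u)^{*} \circ L_I
   \iso (\later u \circ \next_J)^{*} \circ L_I \\
  &= (\next_I \circ u)^{*} \circ L_I
   \iso u^{*} \circ \next_I^{*} \circ L_I = u^{*} \circ \laterSlice{I},
\end{align*}
where every isomorphism but the first is an instance of pseudo-functoriality and the middle equality is naturality of $\next$; the first isomorphism is the genuine content, namely a natural isomorphism $L_J \circ u^{*} \iso (\later u)^{*} \circ L_I$ of functors $\eslice{I} \to \eslice{\later J}$.

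To prove that isomorphism, fix $\p{X} \co X \to I$ and apply $\later$ to the pullback square defining $u^{*}X$ over $J$. Since $\later$ preserves finite limits (Definition~\ref{def:model:guarded:rec:term}), the resulting commuting square over $\later J$ is again a pullback, exhibiting $\later(u^{*}X)$ together with $\later(\p{u^{*}X})$ as a pullback of $\later\p{X}$ along $\later u$. Hence the canonical mediating map $L_J(u^{*}\p{X}) \to (\later u)^{*}(L_I\p{X})$ over $\later J$ is an isomorphism, and — being the canonical comparison of limits — it is natural in $\p{X}$. Composing with $\next_J^{*}$ as above yields $u^{*} \circ \laterSlice{I} \iso \laterSlice{J} \circ u^{*}$.

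For the concluding sentence, to see that $(\laterSlice{I})_{I \in \etopos}$ assembles into a fibred endofunctor of the codomain fibration one must check that these isomorphisms are coherent, i.e.\ compatible with composition of reindexing and with identities. This is automatic because every isomorphism used above is a canonical mediating map between limits, and such maps satisfy the required coherence conditions; equivalently, one may define $\later$ directly on the total category $\arrowcat{\etopos}$ of the codomain fibration by $\p{X} \mapsto \p{\laterSlice{I}X}$ with $I = \codmap(\p{X})$, note that $\codmap$ is preserved strictly, and observe that the square above shows that cartesian morphisms (pullback squares) are sent to cartesian morphisms — which is precisely the definition of a fibred endofunctor. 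There is no real obstacle in the argument; the only point needing a little care is this last coherence bookkeeping, since the statement quietly upgrades ``the square commutes up to isomorphism'' to ``defines a fibred functor''.
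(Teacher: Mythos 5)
Your proof is correct and follows essentially the same route as the paper's: decompose $\laterSlice{(-)}$ via~(\ref{eq:later:composite}) into the two squares, the left one commuting because $\later$ preserves pullbacks and the right one by naturality of $\next$ together with pseudo-functoriality of reindexing. You merely spell out in more detail (including the coherence remark for the fibred-functor claim) what the paper states tersely.
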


\begin{proof}
We can write the diagram as a composite as below. 
\begin{diagram}
\eslice{I} & \rTo^{\later} & \eslice{\later I} & \rTo^{\next^{*}} &  \eslice{I}\\
\dTo^{u^*} &  & \dTo_{(\later u)^*} &  & \dTo_{u^*} \\
\eslice{J} & \rTo^{\later} & \eslice{\later J}  & \rTo^{\next^{*}} &  \eslice{J} \, .
\end{diagram}
The square on the left commutes because $\later$ preserves pullbacks, the one on the right follows from the naturality square for $\next$. 
\end{proof}

\begin{prop} \label{prop:next:fibred}
  The collection of $\next$ morphisms defines a fibred natural
  transformation from the fibred identity on the codomain fibration to
  $\later$:
\begin{diagram}
\arrowcat{\etopos} && \pile{\rTo^{\id} \\ \Downarrow \next \\ \rTo_{\later}} && \arrowcat{\etopos} \\
& \rdTo_{\mathrm{cod}} && \ldTo_{\mathrm{cod}} \\
&& \etopos
\end{diagram}
\end{prop}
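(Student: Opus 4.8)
The plan is to reduce the claim to a single reindexing-compatibility identity and then verify it by a pullback chase. Recall that a fibred natural transformation from $\id$ to $\later=(\laterSlice{I})_{I\in\etopos}$ on the codomain fibration amounts to a natural transformation $\next_I\co\id_{\eslice{I}}\Rightarrow\laterSlice{I}$ for each $I$ — which has already been supplied and shown natural above — together with the requirement that these be stable under reindexing. Concretely, writing $\beta_u\co u^*\circ\laterSlice{I}\iso\laterSlice{J}\circ u^*$ for the structural isomorphism produced in the proof of Proposition~\ref{prop:later:fibred}, one must check, for every $u\co J\to I$ in $\etopos$ and every $\p Y\co Y\to I$ in $\eslice{I}$, the equation
\[
\beta_{u,\p Y}\circ u^*\!\big((\next_I)_{\p Y}\big)\;=\;(\next_J)_{u^*\p Y}\co u^*Y\longrightarrow\laterSlice{J}(u^*Y)\qquad\text{in }\eslice{J}.
\]
This identity is the whole content of the proposition; the naturality of each $\next_I$ within its fibre is already in hand.

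I would prove the identity by observing that both sides are forced by a universal property. By the definition of $\laterSlice{I}$ via (\ref{eq:later:composite}), $\laterSlice{I}Y$ is the pullback of $\later Y\xrightarrow{\later\p Y}\later I\xleftarrow{\next}I$, and by its construction $(\next_I)_{\p Y}\co Y\to\laterSlice{I}Y$ is the unique morphism over $I$ whose composite with the projection $\laterSlice{I}Y\to\later Y$ is the ambient $\next_Y\co Y\to\later Y$; the same characterises $(\next_J)_{u^*\p Y}$ over $J$. Since reindexing along $u$ is pullback along $u$, and $\later$ preserves finite limits (Definition~\ref{def:model:guarded:rec:term}), pasting of pullback squares — exactly the computation in the proof of Proposition~\ref{prop:later:fibred}, using $\next\circ u=\later u\circ\next$ — identifies both $u^*(\laterSlice{I}Y)$ and $\laterSlice{J}(u^*Y)$ with the pullback of $\later Y\to\later I\xleftarrow{\,\later u\circ\next\,}J$, and $\beta_{u,\p Y}$ is precisely the composite of these two identifications. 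In particular $\beta_{u,\p Y}$ respects the projections down to $J$ and the projections onto ``$\later$ of the domain''.

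The verification then comes down to comparing the two morphisms $u^*Y\to\laterSlice{J}(u^*Y)$ after post-composing with the jointly monic pullback projections $\laterSlice{J}(u^*Y)\to\later(u^*Y)$ and $\laterSlice{J}(u^*Y)\to J$. The components into $J$ agree because all maps in play lie over $J$. For the components into $\later(u^*Y)$ — which itself splits, via $\later$ preserving the pullback $u^*Y=Y\times_I J$, as a pullback $\later Y\times_{\later I}\later J$, so that it suffices to compare the further composites onto $\later Y$ and $\later J$ — everything reduces to naturality of the ambient $\next$ applied to the two projections $u^*Y\to Y$ and $u^*Y\to J$, together with the defining property of $(\next_I)_{\p Y}$ and the fact that $u^*((\next_I)_{\p Y})$ sits over $(\next_I)_{\p Y}$ in the canonical cartesian square. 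Both composites come out equal to the ambient $\next_{u^*Y}$, so the universal property of $\laterSlice{J}(u^*Y)$ gives the identity.

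I expect the main obstacle to be purely organisational bookkeeping: the diagram is moderately large, and the one slightly delicate point is to write $\beta_{u,\p Y}$ out explicitly as the pasting of the isomorphism witnessing that $\later$ preserves the pullback $Y\times_I J$ with the isomorphism witnessing that pullback squares compose along $\next\circ u=\later u\circ\next$, and to track that this pasting respects the relevant projections. Conceptually nothing is needed beyond naturality of $\next$ and preservation of finite limits by $\later$, so once $\beta$ is spelled out the chase is routine.
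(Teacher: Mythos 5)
Your proposal is correct and is essentially the paper's own argument: the paper likewise reduces the claim to naturality of $\next_I$ within each fibre (already established) plus naturality at cartesian morphisms, which is exactly your reindexing identity $\beta_{u,\p{Y}}\circ u^*\bigl((\next_I)_{\p{Y}}\bigr)=(\next_J)_{u^*\p{Y}}$. The paper omits that diagram chase, whereas you carry it out, correctly, via the universal property of the pullback defining $\laterSlice{J}(u^*Y)$, preservation of pullbacks by $\later$, and naturality of the ambient $\next$.
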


\begin{proof}
  A fibred natural transformation between fibred functors is a natural
  transformation with vertical components. The components of $\next$
  are clearly vertical, but we must show that $\next$ defines a natural
  transformation between the two functors on the total category
  $\arrowcat{\etopos}$. So consider a morphism in $\arrowcat{\etopos}$
  from $Y \to I$ to $X \to J$, and write it as a composition
\begin{diagram}
Y & \rTo^{g} & f^{*}X \SEpbk & \rTo^{\bar f} & X \\
& \rdTo & \dTo && \dTo \\
&& I & \rTo^{f} & J 
\end{diagram}
of a vertical morphism $g$ and a cartesian morphism $\bar f$. We must
verify naturality diagrams for $\next$ with respect to $\bar f$ and
$g$. Naturality wrt.\ $g$ is just naturality of $\next$ as a functor
$\eslice{I} \to \eslice{I}$,
and naturality wrt.\ $\bar f$ can be verified by a diagram chase that we omit.
\end{proof}

It remains to prove the existence (and uniqueness) of fixed points in slices. We do that by reducing those to global fixed points. In the next lemma we use internal language notation, writing $\Prod_{i\co I} X_i$ for the functor 
\begin{diagram}
\eslice I & \rTo^{\Prod_{\bang \co I \to 1}} & \eslice 1 & \rTo^{\iso} & \etopos
\end{diagram}
applied to an object $\p X \co X \to I$, where $\Prod_{\bang \co I \to 1}$ is the right adjoint to $\bang^*$, and using similar notation for the result of applying the same functor to morphisms.

\begin{lem} \label{lem:locally:contractive:slice} Suppose that $f
  \co \p{X} \to \p{Y}$ is a contractive morphism in slice $\eslice I$. Then $\Prod_{i\co I}f_i \co \Prod_{i\co
    I}X_i \to \Prod_{i\co I} Y_i$ is a contractive morphism in
  $\etopos$. As a consequence any contractive endomorphism in $\eslice I$ has a unique fixed point.
\end{lem}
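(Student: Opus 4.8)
The plan is to reduce the claim to the global fixed-point property of $\etopos$ (Definition~\ref{def:model:guarded:rec:term}) by transporting everything along the adjunction $\bang_I^{*}\dashv\Prod_{\bang_I}$, where $\bang_I\co I\to 1$ is the unique map and $\Prod_{i\co I}(-)$ is, by definition, the right adjoint $\Prod_{\bang_I}\co\eslice I\to\eslice 1\iso\etopos$ of the reindexing functor $\bang_I^{*}$; write $\epsilon\co\bang_I^{*}\Prod_{\bang_I}\Rightarrow\id$ for the counit, and note $\bang_I^{*}$ preserves the terminal object, so $\bang_I^{*}1\iso 1_{\eslice I}$. If $f\co\p X\to\p Y$ is contractive in $\eslice I$, say $f=g\circ\next_I$ with $g\co\laterSlice I X\to Y$, then $\Prod_{\bang_I}f=\Prod_{\bang_I}g\circ\Prod_{\bang_I}(\next_I)$, so it suffices to produce a morphism
\[
\theta_X\co\later(\Prod_{\bang_I}X)\longrightarrow\Prod_{\bang_I}(\laterSlice I X)\qquad\text{satisfying}\qquad\Prod_{\bang_I}(\next_I)=\theta_X\circ\next_{\Prod_{\bang_I}X}
\]
(where $\next_I$ denotes the $X$-component of the slice unit and $\next_{\Prod_{\bang_I}X}$ the $\etopos$-unit); granting this, $\Prod_{\bang_I}f=(\Prod_{\bang_I}g\circ\theta_X)\circ\next_{\Prod_{\bang_I}X}$ is contractive in $\etopos$ with witness $\Prod_{\bang_I}g\circ\theta_X\co\later\Prod_{\bang_I}X\to\Prod_{\bang_I}Y$, which gives the first claim.

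To construct $\theta_X$ I would instantiate the fibredness isomorphism of Proposition~\ref{prop:later:fibred} at $u=\bang_I$, which — identifying $\laterSlice 1$ with $\later$ on $\eslice 1\iso\etopos$ — supplies a natural isomorphism $\beta\co\bang_I^{*}\,\later\iso\laterSlice I\,\bang_I^{*}$, and then define $\theta_X$ to be the transpose under $\bang_I^{*}\dashv\Prod_{\bang_I}$ of the composite in $\eslice I$
\[
\bang_I^{*}(\later\Prod_{\bang_I}X)\xrightarrow{\ \beta\ }\laterSlice I(\bang_I^{*}\Prod_{\bang_I}X)\xrightarrow{\ \laterSlice I(\epsilon_X)\ }\laterSlice I X.
\]
Naturality of $\beta$ and $\epsilon$ makes $\theta$ natural in $X$. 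To verify the displayed identity one transposes both sides along $\bang_I^{*}\dashv\Prod_{\bang_I}$: the transpose of $\Prod_{\bang_I}(\next_I)$ is $\next_I\circ\epsilon_X$ (naturality of $\epsilon$), while the transpose of $\theta_X\circ\next_{\Prod_{\bang_I}X}$ is $\laterSlice I(\epsilon_X)\circ\beta\circ\bang_I^{*}(\next_{\Prod_{\bang_I}X})$; Proposition~\ref{prop:next:fibred} (that $\next$ is a fibred natural transformation) identifies $\beta\circ\bang_I^{*}(\next_{\Prod_{\bang_I}X})$ with the slice unit $\next_I$ at $\bang_I^{*}\Prod_{\bang_I}X$, and naturality of $\next_I\co\id\Rightarrow\laterSlice I$ at $\epsilon_X$ then rewrites the remaining composite as $\next_I\circ\epsilon_X$, so the two transposes coincide and the identity holds.

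For the consequence, let $f\co\p X\to\p X$ be a contractive endomorphism of $\eslice I$. By the first part $\Prod_{\bang_I}f$ is a contractive endomorphism of $\etopos$, hence by Definition~\ref{def:model:guarded:rec:term} (equivalently Theorem~\ref{thm:fp:op}, which generalises to $\etopos$) has a unique fixed point $h\co 1\to\Prod_{\bang_I}X$. Transposing along $\bang_I^{*}\dashv\Prod_{\bang_I}$, using $\bang_I^{*}1\iso 1_{\eslice I}$, gives a natural bijection $\Hom{\etopos}{1}{\Prod_{\bang_I}X}\iso\Hom{\eslice I}{1_{\eslice I}}{X}$ that carries $\Prod_{\bang_I}(f)\circ h=h$ to $f\circ\hat h=\hat h$ (with $\hat h$ the transpose of $h$) and carries any fixed point of $f$ back to one of $\Prod_{\bang_I}f$, which must be $h$; so $\hat h$ is the unique fixed point of $f$. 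Together with Lemma~\ref{lem:later:slices:limits} this completes the proof of Theorem~\ref{theorem:slices}. I expect the only genuinely non-formal step to be the construction of $\theta_X$ and, above all, the verification that it commutes with $\next$: this is exactly where one needs $\later$ on slices to be coherent with reindexing and with the dependent product $\Prod_{\bang_I}$, which is supplied by Propositions~\ref{prop:later:fibred} and~\ref{prop:next:fibred}; everything else is routine manipulation of adjunctions.
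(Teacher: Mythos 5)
Your proof is correct and follows essentially the same route as the paper's: reduce to the global fixed-point property along $\bang_I^*\dashv\Prod_{\bang_I}$, factor $\Prod_{i\co I}f_i$ through a canonical comparison map $\later\Prod_{i\co I}X_i\to\Prod_{i\co I}\laterSlice{I}X_i$, verify the resulting triangle by transposing along the adjunction, and finish with the bijective correspondence of fixed points. The only difference is presentational: where the paper checks the triangle by an internal-language sketch (naturality of $\next$) and remarks that it can be turned into a formal diagrammatic argument, you carry out that diagrammatic verification explicitly using Propositions~\ref{prop:later:fibred} and~\ref{prop:next:fibred}.
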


\begin{proof}{}
The assumption gives us a $g$ such that $f = g \circ \next$ and from that we can derive a factorisation of $\Prod_{i\co I}f_i$ as 
\begin{diagram}
\Prod_{i\co I} X_i & \rTo^{\Prod_{i\co I}\next} & \Prod_{i\co I} \later X_i & \rTo^{\Prod_{i\co I}g_i} & \Prod_{i\co I} Y_i
\end{diagram}
To show $\Prod_{i\co I}f_i$ contractive, it suffices to show commutativity of the triangle
\begin{diagram}[LaTeXeqno] \label{eq:Prod:next:contract}
\Prod_{i\co I} X_i && \rTo^{\Prod_{i\co I} \next} && \Prod_{i\co I} \later X_i \\
& \rdTo_{\next} && \ruTo\\
&& \later \Prod_{i\co I} X_i 
\end{diagram}
Writing $\pi_i$ for the term $\DTTj{i \co I}{\lambda x\co \Prod_{i\co I} X_i \ld x_i}{X_i}$ the adjoint correspondent of (\ref{eq:Prod:next:contract}) can be expressed in the internal language of $\etopos$ as
\[
\DTTej{i\co I, x\co \Prod_{i\co I} {X_{i}}}{\later(\pi_i)\circ \next(x)}{\next \circ \pi_i(x)}{\later (X_{i})}
\]
which is simply naturality of $\next$.
This sketch in the internal language can be turned into a formal diagrammatic
argument.

Now, it is easy to see that if $f$ is an endomorphism then there is a bijective correspondence between fixed points of $\Prod_{i\co I}f_i$ in the global sense, and fixed points of $f$ in the slice. 
\end{proof}

\begin{proofof}{Theorem~\ref{theorem:slices}}
We have seen how every slice of $\etopos$ has an endofunctor $\laterSlice{I}$ and a natural transformation $\next \co \id \to \laterSlice{I}$, and we have seen that $\laterSlice{I}$ preserves finite limits (Lemma~\ref{lem:later:slices:limits}). Lemma~\ref{lem:locally:contractive:slice} gives existence of the needed fixed points. 
\end{proofof}

\subsection{A left adjoint to $\later$} \label{sec:left:adjoint}

In our model $\SItopos$, the functor $\later$ has a left adjoint $\earlier$ mapping the presheaf 
\[X(1) \from X(2) \from X(3) \from \ldots \]
to the presheaf
\[X(2) \from X(3) \from X(4) \from \ldots \, .\] Moreover, $\earlier$
preserves limits and so $\ladj{\earlier}{\later}$ defines a geometric
morphism from $\SItopos$ to itself, in fact it is an embedding. 
Hence $\laterSlice{I}$, as defined in (\ref{eq:later:composite}),
has a left adjoint $\earlierSlice{I}$ because
$\next^{*}$ has a left
adjoint $\Sum_{\next}$ and also $\later \co \eslice{I} \to
\eslice{\later I}$ has a left adjoint defined by mapping $\p X \co X
\to \later I$ to its adjoint correspondent $\earlier X \to I$.

Even though $\earlier$ preserves limits, $\earlierSlice{I}$ does
not. The simplest counter example is that of the terminal object
$\id_{I}$ of $\eslice{I}$ which is mapped to the adjoint correpondent
$\previous \co \earlier I \to I$ of $\next \co I \to \later I$. So, in
particular, $\ladj{\earlierSlice{I}}{\laterSlice{I}}$ does not define
a geometric morphism.

We choose not to take $\earlier$ as part of the basic structure of a
model of guarded recursion because $\earlier$ in $\SItopos$ does not
define a fibred functor, and so it cannot be used in an internal
language based on dependent type theory. To see why, observe that if
$f \co J \to I$ then $\earlierSlice{J} f^{*}(\id_{I}) \iso
\earlierSlice{J} (\id_{J}) = \previous_{J}$ and $f^{*}
\earlierSlice{I} (\id_{I}) = f^{*}\previous_{I}$, and these two are in
general not isomorphic. 

Observe also that $\later$ does not preserve dependent products, i.e., the diagram 
\begin{diagram}
\eslice{J} & \rTo^{\laterSlice{J}} & \eslice{J} \\
\dTo^{\Prod_u} &  & \dTo_{\Prod_u} \\
\eslice{I} & \rTo^{\later_I} & \eslice{I}  \, .
\end{diagram}
does not in general commute. The reason is that the diagram obtained
by taking left adjoints to all functors above is the diagram stating
that $\earlier$ is a fibred functor, which we have just established
does not commute.

\subsection{An operation on predicates}

We now assume that $\etopos$ is a topos modelling guarded recursion
and we shall see how to obtain the principle of L{\"o}b induction in
$\etopos$.

As we have seen, $\laterSlice{X}$ preserves limits, hence monos,
and thus defines a map $\laterPred\co\Sub{X} \to \Sub{X}$ for all $X$,
which is easily seen to be order preserving. The term $\next_X$
verifies that $m \leq \laterPred{m}$. As a consequence of
Proposition~\ref{prop:later:fibred} this family is natural in $X$ and
thus, by the usual Yoneda argument, it 
corresponds to an operation on propositions $\laterPred \co \Omega
\to \Omega$.   We now embark on proving the following theorem.

\begin{thm}[L{\"o}b induction] \label{thm:lob:induction} The
  reasoning principle $\forall p \co \Prop\ld (\laterPred p \imp p)
  \imp p$ is valid in $\etopos$.
\end{thm}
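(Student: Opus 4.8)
The plan is to reduce L\"ob induction to the unique‑fixed‑point theorem, which carries over from Theorem~\ref{thm:fp:op} to any cartesian closed model of guarded recursive terms, in particular to the topos $\etopos$. The one preliminary needed is that $\laterPred\co\Omega\to\Omega$ is a contractive operation. For this, note that by construction $\laterSlice{X}=\next_X^{*}\circ\later$, so for a subobject $S\rightarrowtail X$ the subobject $\laterPred S$ is the pullback of $\later S\rightarrowtail\later X$ along $\next_X$ (and $\later S\rightarrowtail\later X$ is mono since $\later$ preserves finite limits). Hence $\chi_{\laterPred S}=\chi_{\later S}\circ\next_X$, and since $\later S=\later(\chi_S)^{*}(\later\top)$ with $\later\top\co\later 1\rightarrowtail\later\Omega$ (where $\later 1\iso1$) classified by some $g_0\co\later\Omega\to\Omega$, naturality of $\next$ gives $\chi_{\laterPred S}=g_0\circ\later(\chi_S)\circ\next_X=g_0\circ\next_\Omega\circ\chi_S$. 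Taking $S$ to be the generic subobject yields $\laterPred=g_0\circ\next_\Omega$, so $\laterPred$ is contractive; and since $g_0$ classifies $\later\top$ and $\next_1$ is the canonical iso $1\iso\later 1$, we also obtain $\laterPred\circ\top=\top$, i.e.\ $\laterPred\top\bimp\top$. In particular any internal‑logic formula $\varphi(r)$ in which $r\co\Prop$ occurs only under $\laterPred$ denotes a morphism $\Omega\times B\to\Omega$ contractive in the first variable, for any parameter context $B$ (since it factors through $\next_\Omega\times\id_B$; use Lemma~\ref{lem:contractive:comp}).

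Now I would argue in the internal logic of $\etopos$. Fix $p\co\Prop$ and assume $\laterPred p\imp p$. Consider $f(r)\mathrel{:=}(\laterPred r\imp p)\imp p$ on $\Prop$; by the preceding remark $f$ is contractive in $r$, so by Theorem~\ref{thm:fp:op} (applied with the ambient parameter context) it has a unique fixed point. But $\top$ is a fixed point, since $(\laterPred\top\imp p)\imp p\bimp(\top\imp p)\imp p\bimp(p\imp p)\bimp\top$ using $\laterPred\top\bimp\top$; and $p$ is a fixed point, since $p\imp\bigl((\laterPred p\imp p)\imp p\bigr)$ is trivial while the converse follows by applying the hypothesis $\laterPred p\imp p$. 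By uniqueness $p\bimp\top$, i.e.\ $p$ holds. Discharging the hypothesis and generalising over $p$ gives $\forall p\co\Prop\ld(\laterPred p\imp p)\imp p$.

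Semantically, this argument lives in the slice over $Q\mathrel{:=}\den{\laterPred p\imp p}\rightarrowtail\Omega$, which by Theorem~\ref{theorem:slices} is again a topos modelling guarded recursion, with $\laterPred$ restricting correctly by Proposition~\ref{prop:later:fibred}; there it reduces to the claim that a subobject $T\rightarrowtail Y$ with $\laterPred T\le T$ (equivalently $\laterPred T=T$, since $\next_Y$ witnesses $T\le\laterPred T$) must be all of $Y$, because then both $\chi_T$ and $\top\circ{!_Y}$ are fixed points of the contractive endomorphism $\laterPred\circ(-)\co\Omega^{Y}\to\Omega^{Y}$ and hence equal. I expect the genuine content to sit in the first step — verifying that the Yoneda‑defined $\laterPred$ really factors as $g_0\circ\next_\Omega$ — together with the bookkeeping of applying Theorem~\ref{thm:fp:op} in the presence of the parameter context carrying the hypothesis $\laterPred p\imp p$; the remaining logical manipulations are routine.
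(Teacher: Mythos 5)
Your proposal is correct and takes essentially the same route as the paper: you establish that $\laterPred$ is represented by $\succ\circ\next_\Omega$ via the classifying map of $\later\top$ (the paper's Lemma~\ref{lem:later:char:map}), deduce that post-composition with $\laterPred$ on $\Omega^{Y}$ is contractive so that any subobject $T$ with $\laterPred T\leq T$ is maximal (the paper's Corollary~\ref{cor:unique:fixed:pt:later:prop}), and reduce validity of the L\"ob principle to that subobject statement, only phrasing the last reduction by pulling back to $Q=\den{\laterPred p\imp p}$ instead of citing the Kripke--Joyal soundness theorem of Lambek--Scott. The internal-logic detour through the two fixed points of $(\laterPred r\imp p)\imp p$ is redundant (and on its own would need the sliced uniqueness you then supply anyway), since your closing semantic paragraph already is the complete proof.
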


To prove the theorem, we need a few lemmas. The first describes the action of
$\laterPred \co \Sub{X} \to \Sub{X}$ as an action on characteristic
maps.

\begin{lem} \label{lem:later:char:map} Let $m \co M \to X$ be a mono
  and let $\charmap{m} \co X \to \Omega$ be its characteristic
  map. Then $\successor \circ \later \charmap{m} \circ \next$ is the
  characteristic map of $\laterPred (m)$, where $\successor\co \later
  \Omega \to \Omega$ is the characteristic map of the mono $\later
  \top \co \later 1 \to \later \Omega$.
\end{lem}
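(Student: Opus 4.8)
The plan is to verify that $\successor \circ \later\charmap{m} \circ \next_X$ classifies the subobject $\laterPred(m)$ by exhibiting the appropriate pullback square and appealing to uniqueness of characteristic maps. Recall from Proposition~\ref{prop:later:fibred} (applied to $\next_X \co X \to \later X$) that $\laterSlice{X}$ agrees with pulling back along $\next_X$: more precisely, $\laterPred(m)$ is by definition $\next_X^*(\later m)$, the pullback of $\later m \co \later M \to \later X$ along $\next_X$. Thus it suffices to show that the outer rectangle of
\begin{diagram}[size=2.2em]
\laterPred(M) & \rTo & \later M & \rTo & \later 1 \\
\dTo && \dTo_{\later m} && \dTo_{\later \top} \\
X & \rTo^{\next_X} & \later X & \rTo^{\later \charmap{m}} & \later \Omega
\end{diagram}
is a pullback, because then composing with $\successor\co\later\Omega\to\Omega$ (the characteristic map of $\later\top$) along the top-right shows $\successor \circ \later\charmap{m}\circ\next_X$ classifies $\laterPred(M)\rightarrowtail X$, and characteristic maps are unique.

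The two inner squares are pullbacks for separate reasons. The left square is a pullback by the very definition of $\laterPred(m) = \next_X^*(\later m)$ just recalled. For the right square: $m \co M \rightarrowtail X$ being classified by $\charmap{m}$ means the square with corners $M, 1, X, \Omega$ (maps $\top$ and $\charmap{m}$) is a pullback; since $\later$ preserves finite limits (part of Definition~\ref{def:model:guarded:rec:term}, and established for $\stopos$), applying $\later$ to that square yields a pullback square with corners $\later M, \later 1, \later X, \later\Omega$, which is exactly the right square. Pasting the two pullback squares gives the outer rectangle as a pullback (the pullback pasting lemma), which is what we needed.

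I expect the only subtlety — not really an obstacle — to be bookkeeping about which definition of $\laterPred$ on subobjects is in force: in the general axiomatic setting of Section~\ref{sec:axioms}, $\laterPred \co \Sub{X}\to\Sub{X}$ was defined as $\laterSlice{X}$ restricted to monos, and $\laterSlice{X}$ was in turn defined via the pullback of Proposition~\ref{prop:later:pb}, i.e.\ as $\next_X^*\circ\later$; so the identification $\laterPred(m)=\next_X^*(\later m)$ is immediate and no compatibility check is needed. One should also note in passing that $\later m$ is again mono (as $\later$ preserves finite limits, hence monos), so that $\later\top \co \later 1 \to \later\Omega$ being a mono with characteristic map $\successor$ makes sense and the right square is genuinely a classifying square. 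With those remarks the proof is just the two-pullback pasting argument above.
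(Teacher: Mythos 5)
Your proof is correct and follows essentially the same route as the paper: the paper's proof is precisely the pasting of the three pullback squares (the defining square of $\laterPred(m)$ along $\next_X$, the image under $\later$ of the classifying square of $m$, and the classifying square of $\later\top$ via $\successor$), concluding that the outer rectangle is a pullback. Your remarks on which definition of $\laterPred$ is in force and on $\later$ preserving monos are accurate but not needed beyond what you already state.
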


\begin{proof}
Consider the diagram
\begin{diagram}
\laterPred{m} \SEpbk & \rTo & \later M \SEpbk & \rTo & \later 1 \SEpbk & \rTo & 1 \\
\dTo && \dTo^{\later m} & & \dTo^{\later \top} && \dTo^{\top} \\
X & \rTo^{\next} & \later X & \rTo^{\later \charmap{m}}  & \later \Omega & \rTo^{\successor} & \Omega \, .
\end{diagram}
All the squares are pullbacks, and so also the outer square is a pullback, which proves the lemma.
\end{proof}

Subobjects of $X$ correspond to morphisms $X \to \Omega$ which in turn
correspond to global elements of $\Omega^{X}$. As a consequence of
Lemma~\ref{lem:later:char:map}, the operation $\laterPred$ on
subobjects corresponds to composing the global elements with the
morphism $\Omega^{X} \to \Omega^{X}$ mapping $\charmap{m}$ to
$\successor \circ \later \charmap{m} \circ \next$. Since this morphism
is contractive, it has a unique fixed point.

\begin{cor} \label{cor:unique:fixed:pt:later:prop} Let $m$ be a
  subobject of $X$. If $\laterPred(m) \leq m$ then $m$ is the maximal
  subobject.
\end{cor}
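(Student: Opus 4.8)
The plan is to leverage the observation recorded immediately before the statement, that the action of $\laterPred$ on $\Sub{X}$ is realised by post-composing a characteristic map with a \emph{contractive} endomorphism $\mu \co \Omega^X \to \Omega^X$, namely $\charmap{m} \mapsto \successor \circ \later\charmap{m} \circ \next$ as given by Lemma~\ref{lem:later:char:map}. Being contractive, $\mu$ has by Theorem~\ref{thm:fp:op} a \emph{unique} global fixed point $1 \to \Omega^X$; transporting along the bijection between subobjects of $X$ and morphisms $X \to \Omega$ (equivalently, global elements of $\Omega^X$), this says exactly that there is a unique subobject $m_0$ of $X$ with $\laterPred(m_0) = m_0$.

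First I would check that the maximal subobject $\top_X$ is one such fixed point. Since $\laterPred \co \Sub{X} \to \Sub{X}$ is order-preserving and $\next_X$ witnesses $n \leq \laterPred(n)$ for every subobject $n$, we have $\top_X \leq \laterPred(\top_X) \leq \top_X$, hence $\laterPred(\top_X) = \top_X$; by the uniqueness just noted, $m_0 = \top_X$. Next, given $m$ with $\laterPred(m) \leq m$, the general inequality $m \leq \laterPred(m)$ (again via $\next_X$) supplies the reverse containment, so $m = \laterPred(m)$; that is, $m$ is a fixed point of $\laterPred$ on $\Sub{X}$. Applying uniqueness once more yields $m = m_0 = \top_X$, i.e.\ $m$ is the maximal subobject.

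There is no real obstacle beyond chaining the already-established facts; the only point needing a little care is the passage between ``fixed point of $\mu$ as a global element of $\Omega^X$'' and ``fixed point of $\laterPred$ as a subobject of $X$'', which is immediate from the subobject classifier bijection together with Lemma~\ref{lem:later:char:map}.
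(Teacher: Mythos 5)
Your proof is correct and is essentially the argument the paper intends: the corollary follows from the observation that $\laterPred$ on $\Sub{X}$ is induced by a contractive endomorphism of $\Omega^X$ with a unique fixed point, that the maximal subobject is a fixed point, and that $\laterPred(m)\leq m$ together with the general inequality $m\leq\laterPred(m)$ (witnessed by $\next_X$) makes $m$ a fixed point as well. The only cosmetic difference is that you cite Theorem~\ref{thm:fp:op} for uniqueness, whereas one could equally invoke the fixed-point clause of Definition~\ref{def:model:guarded:rec:term}, as the paper notes these transfer to the general setting.
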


\begin{proofof}{Theorem~\ref{thm:lob:induction}}
  The principle is proved using Joyal-Kripke semantics,
  see~\cite[Thm~8.4]{Lambek:Scott:86}. Using items (7) and (6) of the
  referenced theorem, it suffices to show that for any $X$ and any $f
  \co X \to \Omega$, if the map $\lam{x}{X}{\laterPred f(x) \imp
    f(x)}$ factors through $\top \co 1 \to \Omega$, then so does
  $f$. Expressing this using subobjects rather than representable
  maps, we must show that, for any subobject $m$ of $X$, if
  $\laterPred m \imp m$ is the maximal subobject, then so is $m$. But
  $\laterPred m \imp m$ is maximal iff $\laterPred m \leq m$, and so
  the principle follows from
  Corollary~\ref{cor:unique:fixed:pt:later:prop}.
\end{proofof}

\section{General models of guarded recursive types}
\label{axioms:rec:types}
\newcommand{\ecatpush}[2]{\,_{#1}{#2}}

In this section we formulate the most general existence theorem for recursive types in models of guarded recursion. 
Moreover, we reduce the problem of solving general  recursive domain equations to that of
solving covariant domain equations using the uniqueness of fixed points in combination with
Freyd's theory of algebraic compactness
~\cite{Freyd:90, Freyd:90a,
  Freyd:91}. 

Note first that Definition~\ref{def:locally:contractive} of locally
contractive functor on our concrete model $\stopos$, carries over
verbatim to general cartesian closed models $\etopos$ of guarded recursive terms.

\begin{defi}
  A \emph{model of guarded recursive types} is a cartesian closed model of
  guarded recursive terms (in the sense of
  Definition~\ref{def:model:guarded:rec:term}) $\etopos$ such that
  every locally contractive functor $F \co \etopos \to
  \etopos$ has a fixed point (up to isomorphism). A model of guarded
  recursive \emph{dependent} types is a locally
  cartesian closed category whose slices all are models of guarded
  recursive types.
\end{defi}

As a justification of the above definition we shall prove that fixed points for locally contractive covariant
functors give fixed points of general (locally contractive) mixed
variance functors. 
In fact, we state and prove this not only for functors on
$\etopos$, but, more generally, for functors on $\etopos$-enriched
categories.  This is in line with classical work on recursive types in
$O$-categories~\cite{Smyth-Plotkin:SIAMJoC1982} (categories enriched
in complete partial orders) and more recent work on recursive types in
$M$-categories~\cite{BirkedalL:metric-enriched-journal} (categories
enriched in complete bounded ultrametric spaces).

Recall that an $\etopos$-enriched category $\ccat$ is a
collection of objects together with for each pair of objects $X,Y$ of
$\ccat$ an $\etopos$-object $\Hom{\ccat}{X}{Y}$ together with
composition morphisms $\Hom{\ccat}{X}{Y} \times \Hom{\ccat}{Y}{Z} \to
\Hom{\ccat}{X}{Z}$ and morphisms $\nameof{\id_X} \co 1 \to
\Hom{\ccat}{X}{X}$ satisfying commutative diagrams corresponding to
the rules for morphism composition in category theory~\cite{Kelly:enriched-cats}.
To each enriched category $\ccat$ we can associate a
category in the usual sense with the same objects as $\ccat$ and set
of morphisms from $X$ to $Y$ all $\etopos$-morphisms from $1$ to
$\Hom{\ccat}{X}{Y}$. This category is called the
\emph{externalisation} of $\ccat$. Given a category $\ccat$ in the
usual sense, we say that it is $\etopos$-enriched if there exists an
$\etopos$-enriched category whose externalisation is $\ccat$. Any
cartesian closed category $\ccat$ is self-enriched: one can take
$\Hom{\ccat}{X}{Y}$ to be the exponent $Y^X$.

The notion of locally contractive functor readily generalises to
$\etopos$-enriched categories: if $\ccat$ is $\etopos$-enriched
consider the $\etopos$-enriched category $\ecatpush{\later}{\ccat}$
with the same objects as $\ccat$, hom-objects
$\Hom{\ecatpush{\later}{\ccat}}{X}{Y} = \later\Hom{\ccat}{X}{Y}$, 
composition given as the composite
\begin{diagram}
\later\Hom{\ccat}{\PA}{\PB} \times \later\Hom{\ccat}{\PB}{\PC}\iso \later (\Hom{\ccat}{\PA}{\PB} \times \Hom{\ccat}{\PB}{\PC}) & \rTo^{{\later (\comp)}} &  \later\Hom{\ccat}{\PA}{\PC} \\
\end{diagram}
and identity as 
$\next \circ \nameof{\id} \co 1 \to \later\Hom{\ccat}{\PA}{\PA}$. 
Note that $\ecatpush{\later}{(\ccat \times \dcat)} \iso
\ecatpush{\later}{\ccat} \times \ecatpush{\later}{\dcat}$  and 
$\ecatpush{\later}{(\opcat{\ccat})}  \iso
\opcat{(\ecatpush{\later}{\ccat})}$. 
The natural
transformation $\next$ defines an enriched functor~\cite{Kelly:enriched-cats} $\ccat \to
\ecatpush{\later}{\ccat}$ whose action on objects is the identity and
whose action on morphisms is given by $\next \co \Hom{\ccat}{X}{Y} \to
\later\Hom{\ccat}{X}{Y}$.

\begin{defi} \label{def:gen:locally:contractive} An enriched
  functor $F \co \dcat \to \ccat$ is \emph{locally contractive} if it
  factors as a composition of enriched functors
\begin{diagram}
\dcat & \rTo^{\next} & \ecatpush{\later}{\dcat} & \rTo & \ccat
\end{diagram}
\end{defi}
Specialising Definition~\ref{def:gen:locally:contractive} to the case
of $\SItopos$ as self-enriched gives
Definition~\ref{def:locally:contractive}.

\begin{lem}\label{lem:loc:contractive:comp}
\mbox{}
\begin{enumerate}[\em(1)]
\item \label{item:loc:contractive:comp} If $F \co \bcat \to \ccat$ and $G\co \ccat \to \dcat$ are enriched functors and either $F$ or $G$ is locally contractive also $GF$ is locally contractive.
\item \label{item:loc:contractive:prod} If $F\co \ccat \to \dcat$ and $G \co \ccat' \to \dcat'$ are locally contractive, so is $F \times G \co \ccat\times \ccat' \to \dcat\times \dcat'$. 
\item \label{item:loc:contractive:curry} Let $H \co \bcat\times \ccat \to \dcat$ be enriched and suppose the enriched functor category $\dcat^{\ccat}$ exists. Then $H$ is locally contractive in the first variable iff  $\hat H \co \bcat \to \dcat^{\ccat}$ is locally contractive. 
\end{enumerate}
\end{lem}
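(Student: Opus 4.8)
\noindent The plan is to derive all three parts from one structural fact: $\later$ extends to an operation $F \mapsto \ecatpush{\later}{F}$ on enriched functors, sending $F \co \bcat \to \ccat$ to the enriched functor $\ecatpush{\later}{F} \co \ecatpush{\later}{\bcat} \to \ecatpush{\later}{\ccat}$ with the same action on objects and with $\later(F_{X,Y})$ on hom-objects, and that $\next$ is natural with respect to it: for every enriched $F$,
\[
\next_{\ccat} \circ F \;=\; \ecatpush{\later}{F} \circ \next_{\bcat}
\]
as enriched functors $\bcat \to \ecatpush{\later}{\ccat}$. On objects both sides send $X$ to $FX$; on hom-objects both are the composite $\Hom{\bcat}{X}{Y} \to \later\Hom{\ccat}{FX}{FY}$ built from $\next \co \id \to \later$ and $F_{X,Y}$, so they agree by naturality of $\next$ at the morphism $F_{X,Y}$. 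That $\ecatpush{\later}{F}$ is a genuine enriched functor (respects the composition and identities of $\ecatpush{\later}{\bcat}$) uses that $\later$ preserves finite products, so it commutes with the structure isomorphisms $\later A \times \later B \iso \later(A \times B)$ entering the definition of $\ecatpush{\later}{\bcat}$; the same product-preservation yields the isomorphism $\ecatpush{\later}{(\ccat \times \dcat)} \iso \ecatpush{\later}{\ccat} \times \ecatpush{\later}{\dcat}$ and shows it intertwines $\next_{\ccat\times\dcat}$ with $\next_{\ccat} \times \next_{\dcat}$.

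For~(\ref{item:loc:contractive:comp}): if $F$ is locally contractive, write $F = \bar F \circ \next_{\bcat}$ with $\bar F \co \ecatpush{\later}{\bcat} \to \ccat$ enriched; then $GF = (G\bar F) \circ \next_{\bcat}$ is again of this form. If instead $G$ is locally contractive, write $G = \bar G \circ \next_{\ccat}$; then by the naturality fact above $GF = \bar G \circ \next_{\ccat} \circ F = (\bar G \circ \ecatpush{\later}{F}) \circ \next_{\bcat}$, and $\bar G \circ \ecatpush{\later}{F} \co \ecatpush{\later}{\bcat} \to \dcat$ is enriched. For~(\ref{item:loc:contractive:prod}): writing $F = \bar F \circ \next_{\ccat}$ and $G = \bar G \circ \next_{\ccat'}$, the identification $\ecatpush{\later}{(\ccat\times\ccat')} \iso \ecatpush{\later}{\ccat}\times\ecatpush{\later}{\ccat'}$ turns $\next_{\ccat\times\ccat'}$ into $\next_{\ccat} \times \next_{\ccat'}$, so $F\times G = (\bar F\times\bar G)\circ(\next_{\ccat}\times\next_{\ccat'})$ factors through $\next_{\ccat\times\ccat'}$, as required.

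For~(\ref{item:loc:contractive:curry}), recall that ``$H$ is locally contractive in the first variable'' means $H$ factors as $\bcat\times\ccat \xrightarrow{\next_{\bcat}\times\id_{\ccat}} \ecatpush{\later}{\bcat}\times\ccat \xrightarrow{\bar H} \dcat$ for some enriched $\bar H$. The enriched exponential adjunction gives a bijection $\bar H \mapsto \widehat{\bar H}$ between enriched functors $\mathcal{A}\times\ccat \to \dcat$ and enriched functors $\mathcal{A}\to\dcat^{\ccat}$, natural in $\mathcal{A}$; naturality says that currying commutes with precomposition by an enriched functor, in particular $\widehat{\bar H \circ (\next_{\bcat}\times\id_{\ccat})} = \widehat{\bar H}\circ\next_{\bcat}$. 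Hence $H = \bar H\circ(\next_{\bcat}\times\id_{\ccat})$ iff $\hat H = \widehat{\bar H}\circ\next_{\bcat}$; since $\bar H \mapsto \widehat{\bar H}$ is a bijection, the latter condition, quantified over all enriched $\bar H$, is exactly ``$\hat H$ is locally contractive''.

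I expect the real work to be in making~(\ref{item:loc:contractive:curry}) precise — fixing the meaning of local contractiveness in one variable for an enriched bifunctor and checking that enriched currying is compatible with $\next$ — together with, underlying all three parts, the routine but slightly fiddly verification that $\ecatpush{\later}{-}$ is well-defined on enriched functors and that the displayed naturality square for $\next$ commutes, which is precisely where the hypothesis that $\later$ preserves finite limits is used.
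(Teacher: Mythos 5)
Your proposal is correct. The paper in fact states this lemma without proof (treating it, like its term-level analogue Lemma~\ref{lem:contractive:comp}, as routine), so there is no official argument to compare against; your route — extending $\later$ to an operation $F \mapsto \ecatpush{\later}{F}$ on enriched functors, using that $\later$ preserves finite products for its well-definedness, invoking naturality of $\next$ for the square $\next_{\ccat}\circ F = \ecatpush{\later}{F}\circ\next_{\bcat}$, and handling~(3) via the enriched exponential adjunction's compatibility with precomposition along $\next_{\bcat}$ — is precisely the formalization the paper's definition of local contractiveness (factoring through $\next \co \ccat \to \ecatpush{\later}{\ccat}$) invites, and the verification points you flag are exactly the ones that need checking.
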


\begin{defi}
  An $\etopos$-enriched category $\ccat$ is \emph{\complete} if any locally
  contractive functor $F\co\ccat \to \ccat$ has a fixed point, i.e.,
  an object $X$ such that $FX\iso X$.
\end{defi}

The isomorphism $FX\iso X$ is an isomorphism in the externalisation of
$\ccat$. Similarly, the notation $f \co X \to Y$ always refers to
morphisms in the external version of $\ccat$.

We can now state the main theorem. It uses the symmetrization of $\symmetric{G}$
of a mixed variance functor $G$ defined in Section~\ref{sec:rec:dep:types}. The proof
follows after a brief series of lemmas. 

\begin{thm} \label{thm:rec:types:from:init:alg} Let $\etopos$ be a
  model of guarded recursive terms, $\ccat$ be $\etopos$-enriched and
  contractively complete, and
  let $F\co (\opcat{\ccat} \times \ccat)^{n+1} \to \ccat$ be locally
  contractive in the $(n+1)th$ variable pair. Then there exists a
  unique (up to isomorphism) $\Fix F \co (\opcat{\ccat} \times
  \ccat)^{n} \to \ccat$ such that $F \circ \pair{\id}{\widesymmetric{\Fix
      F}} \iso \Fix F$.
Moreover, if $F$ is locally contractive in all variables, so is $\Fix F$. In particular, the above statement holds for $\ccat$= $\etopos$ if $\etopos$ is a model of guarded recursive types. 
\end{thm}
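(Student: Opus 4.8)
The plan is to reduce the mixed-variance, parametrized statement to the covariant, unparametrized case --- where existence of a fixed point is precisely the assumption that $\ccat$ is contractively complete --- by carrying over, to the enriched setting, the ``contractive homomorphism equation'' argument sketched after Theorem~\ref{thm:rec:dom:eq}, and then invoking Freyd's reduction of mixed-variance recursive equations to covariant ones~\cite{Freyd:90,Freyd:90a,Freyd:91}.

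\emph{Step 1 (bifreeness and functoriality of fixed points).} First I would prove: if $\ccat$ is $\etopos$-enriched and $F\co\ccat\to\ccat$ is locally contractive, then any fixed point $\alpha\co FX\iso X$ is simultaneously an initial $F$-algebra and a final $F$-coalgebra, hence is the unique fixed point up to a unique isomorphism. Indeed, for any algebra $g\co FY\to Y$, the operator $k\mapsto g\circ F(k)\circ\inv{\alpha}$ on $\Hom{\ccat}{X}{Y}$ is contractive --- since $F$ is locally contractive (Definition~\ref{def:gen:locally:contractive}) its action on hom-objects factors through $\next$, and then one uses Lemma~\ref{lem:contractive:comp} --- so by the evident generalisation of Theorem~\ref{thm:fp:op} it has a unique fixed point, which is exactly the unique $F$-algebra homomorphism $(X,\alpha)\to(Y,g)$; dually for coalgebras. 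Running the same argument with a parameter shows that \emph{parametrized} bifree algebras are functorial: if $G\co\dcat\times\ccat\to\ccat$ is locally contractive in the second variable and $\beta_d\co G(d,B_d)\iso B_d$ is the bifree algebra for $G(d,-)$, then $d\mapsto B_d$ extends uniquely to a functor, by initiality and uniqueness of the mediating homomorphisms.

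\emph{Step 2 (mixed variance and parameters).} Next I would observe that $\opcat\ccat$ and $\opcat\ccat\times\ccat$ are again $\etopos$-enriched --- using $\ecatpush{\later}{(\opcat\ccat)}\iso\opcat{(\ecatpush{\later}{\ccat})}$ and the compatibility of $\ecatpush{\later}{(-)}$ with products recorded above --- and that they inherit contractive completeness from $\ccat$: for $\opcat\ccat$ this is immediate by dualising; for $\opcat\ccat\times\ccat$ one uses a B\'eki\v{c}-style decomposition, solving a locally contractive endofunctor $\pair{H_1}{H_2}$ by first forming, for each $X$, the bifree fixed point $Y(X)$ of the locally contractive covariant endofunctor $H_2(X,-)$ (functorial in $X$ by Step~1), and then the bifree fixed point of $X\mapsto H_1(X,Y(X))$, local contractiveness of all the composites involved being guaranteed by Lemma~\ref{lem:loc:contractive:comp}. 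Now, given $F\co(\opcat\ccat\times\ccat)^{n+1}\to\ccat$ locally contractive in the last variable pair, I fix the first $n$ pairs $\vec Z$ and form the symmetrization $\Phi_{\vec Z}$ of $F(\vec Z,-,-)$ as a covariant endofunctor of $\opcat\ccat\times\ccat$; it is locally contractive, hence has a bifree fixed point $(A_{\vec Z},B_{\vec Z})$ with $F(B_{\vec Z},A_{\vec Z})\iso A_{\vec Z}$ and $F(A_{\vec Z},B_{\vec Z})\iso B_{\vec Z}$. Since $\Phi_{\vec Z}$ is invariant, up to the coordinate-swap auto-equivalence of $\opcat\ccat\times\ccat$, under exchanging its two coordinates, $(B_{\vec Z},A_{\vec Z})$ is also a fixed point of $\Phi_{\vec Z}$; by uniqueness $A_{\vec Z}\iso B_{\vec Z}$, and composing isomorphisms yields $F(\vec Z,A_{\vec Z},A_{\vec Z})\iso A_{\vec Z}$. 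Putting $\Fix F(\vec Z)=A_{\vec Z}$, functoriality in $\vec Z$ follows from the parametrized version of Step~1, and the iso just produced is $F\circ\pair{\id}{\widetilde{\Fix F}}\iso\Fix F$.

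\emph{Step 3 (uniqueness and the ``moreover'').} For uniqueness, if $G$ also satisfies $F\circ\pair{\id}{\widetilde G}\iso G$, then for each $\vec Z$ both $\Fix F(\vec Z)$ and $G(\vec Z)$ are objects $A$ with $F(\vec Z,A,A)\iso A$, hence give fixed points $(A,A)$ of $\Phi_{\vec Z}$ and are therefore canonically isomorphic by Step~1; these isomorphisms are natural in $\vec Z$ because they are the unique algebra homomorphisms. For the last clause, when $F$ is locally contractive in all variables, one checks that the strength of $\Fix F$, obtained from that of $F$ as the solution of a contractive operator on the relevant hom-object, again factors through $\next$, so that $\Fix F$ is locally contractive; this uses Lemma~\ref{lem:loc:contractive:comp} together with the explicit form of the fixed-point construction. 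I expect the main obstacle to be bookkeeping rather than ideas: keeping the variances straight in the B\'eki\v{c} decomposition that establishes contractive completeness of $\opcat\ccat\times\ccat$, promoting the pointwise fixed point to a genuine (and then locally contractive) functor, and pinning down exactly which instances of contractive completeness Freyd's constructions invoke. This is the enriched, ``locally contractive'' analogue of the classical passage from algebraic compactness to solutions of arbitrary recursive domain equations.
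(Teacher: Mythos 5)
Your proposal is correct and takes essentially the same route as the paper: any fixed point of a locally contractive functor is bifree because the homomorphism equation is a contractive self-map of the hom-object (Lemmas~\ref{lem:initial:algebras}--\ref{lem:initial:algebra:functor}), parametrized fixed points assemble into (locally contractive) enriched functors, and mixed variance is resolved by the swap-symmetry and uniqueness of the resulting dialgebra (Lemma~\ref{lem:initial:dialgebra}). Your B\'eki\v{c}-style argument for contractive completeness of $\opcat{\ccat}\times\ccat$ applied to the symmetrization, once unfolded, is exactly the paper's construction: $Z$ a fixed point of $X\mapsto F(\mu F(X),X)$ and $W=\mu F(Z)$.
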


\begin{lem} \label{lem:initial:algebras} Let $\ccat$ be 
  $\etopos$-enriched and let $F \co \ccat \to \ccat$ be a locally contractive
  functor. If $X \iso F(X)$, then the two directions of the
  isomorphism give an initial algebra structure and a final coalgebra
  structure for $F$ on $X$.  In particular, if $F(X) \iso X$ and $F(Y)
  \iso Y$, then $X\iso Y$.
\end{lem}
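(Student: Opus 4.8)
The plan is to lift, to the $\etopos$-enriched setting, the argument already sketched in the excerpt just after Theorem~\ref{thm:rec:dom:eq}. Throughout, $\etopos$ is the ambient model of guarded recursive terms (Definition~\ref{def:model:guarded:rec:term}). Fix an isomorphism $f \co F(X) \to X$ with inverse $\inv f$, and regard $(X,f)$ as an $F$-algebra and $(X,\inv f)$ as an $F$-coalgebra. I would show that $(X,f)$ is an initial $F$-algebra and $(X,\inv f)$ is a final $F$-coalgebra; the last sentence of the lemma then follows from uniqueness of initial objects.

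First I would treat initial algebras. Let $(Y, g \co F(Y) \to Y)$ be an arbitrary $F$-algebra. An algebra homomorphism $h \co (X,f) \to (Y,g)$ is a morphism $h \co X \to Y$ in the externalisation of $\ccat$ with $h \circ f = g \circ F(h)$, equivalently — since $f$ is invertible — with $h = g \circ F(h) \circ \inv f$. I would build an endomorphism $\xi \co \Hom{\ccat}{X}{Y} \to \Hom{\ccat}{X}{Y}$ of $\etopos$ as the composite of: the action of $F$ on hom-objects $\Hom{\ccat}{X}{Y} \to \Hom{\ccat}{F(X)}{F(Y)}$ (part of the enriched-functor structure of $F$); pre-composition with $\inv f$ and post-composition with $g$, both assembled from the enriched composition morphisms of $\ccat$ together with the global elements $\nameof{\inv f}$ and $\nameof g$. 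Since $F$ is locally contractive (Definition~\ref{def:gen:locally:contractive}), its action on hom-objects factors through $\next_{\Hom{\ccat}{X}{Y}}$, hence so does $\xi$; that is, $\xi$ is a contractive endomorphism in $\etopos$. By the fixed-point requirement of Definition~\ref{def:model:guarded:rec:term} (equivalently, the generalisation of Theorem~\ref{thm:fp:op} to $\etopos$), $\xi$ has a unique fixed point $\nameof h \co 1 \to \Hom{\ccat}{X}{Y}$. Unwinding the definition of $\xi$ — using that the $F$-action sends $\nameof h$ to $\nameof{F(h)}$ and that composition with global elements performs actual composition in the externalisation — the equation $\xi \circ \nameof h = \nameof h$ is precisely $h = g \circ F(h) \circ \inv f$. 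Thus $h$ is the unique algebra homomorphism $(X,f) \to (Y,g)$, so $(X,f)$ is initial. The final-coalgebra statement is dual: for a coalgebra $(Y, g \co Y \to F(Y))$ I would use the contractive endomorphism $\zeta \co \Hom{\ccat}{Y}{X} \to \Hom{\ccat}{Y}{X}$ built from $\zeta(k) = f \circ F(k) \circ g$, whose unique fixed point is the unique coalgebra homomorphism $(Y,g) \to (X,\inv f)$.

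For the last sentence, if $F(X) \iso X$ and $F(Y) \iso Y$ then both carry initial-$F$-algebra structures by what we have just shown; initial objects of the category of $F$-algebras are unique up to (unique) isomorphism, and the underlying $\ccat$-isomorphism gives $X \iso Y$. The only real content is the middle step: checking that $\xi$ (and $\zeta$) is genuinely a contractive morphism of $\etopos$, i.e., correctly threading the factorisation through $\next$ supplied by local contractiveness through the composite defining $\xi$, and checking compatibility with the correspondence between global elements of $\Hom{\ccat}{X}{Y}$ and morphisms of the externalisation so that a fixed point of $\xi$ really is an algebra homomorphism. Existence and uniqueness of the fixed point, and uniqueness of the initial object, are then routine.
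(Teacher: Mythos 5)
Your proposal is correct and follows essentially the same route as the paper: the paper also observes that $h$ is an algebra homomorphism iff it is a fixed point of $h \mapsto g \circ F(h) \circ \inv{f}$, notes this is a contractive endomorphism on the hom-object because $F$ is locally contractive, and invokes uniqueness of fixed points, with the final-coalgebra case dual. Your extra detail on assembling $\xi$ from the enriched hom-action and composition with global elements just makes explicit what the paper leaves implicit.
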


\begin{proof}
  Given an isomorphism $f \co FX \to X$ and some other algebra $g \co
  FZ \to Z$, $h\co X \to Z$ is an algebra homomorphism iff the diagram
\begin{diagram}
 FX & \rTo^{Fh} & FZ \\
 \uTo^{\inv f} && \dTo_{g} \\
 X & \rTo^{h} & Z
\end{diagram}
commutes, i.e., iff $h$ is a fixed point of the map $h \mapsto g\circ
F(h) \circ \inv f$, which is a contractive endomorphism on
$\Hom{\ccat}{X}{Z}$ (as $F$ is locally contractive). Since this map
has exactly one fixed point, we conclude that there is exactly one
algebra homomorphism from $f$ to $g$. The argument for final
coalgebras is similar.
\end{proof}

There is also a morphism in $\etopos$ computing the unique mediating
homomorphism from the initial algebra.

\begin{lem} \label{lem:initial:algebra:computable} Let $\ccat$ and
  $F$ be as in Lemma~\ref{lem:initial:algebras}, and let $f \co FX \to
  X$ be an isomorphism. For any $Z$ there exists a morphism $k \co
  \Hom{\ccat}{FZ}{Z} \to \Hom{\ccat}{X}{Z}$ such that $\forall g \co
  \Hom{\ccat}{FZ}{Z} \ld k(g) \circ f = g \circ F(k(g))$ holds in the
  internal language of $\etopos$.
\end{lem}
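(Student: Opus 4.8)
The plan is to internalise, uniformly in the parameter $g$, the construction used in the proof of Lemma~\ref{lem:initial:algebras}: there, for a \emph{fixed} external algebra $g\co FZ\to Z$, the mediating homomorphism was obtained as the unique fixed point of the contractive endomorphism $h\mapsto g\circ F(h)\circ\inv f$ on $\Hom{\ccat}{X}{Z}$. The point is to turn this fixed point into a morphism of $\etopos$ that depends on $g$.

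First I would assemble, out of the enriched structure of $\ccat$, an $\etopos$-morphism
\[
\Phi\co \Hom{\ccat}{X}{Z}\times\Hom{\ccat}{FZ}{Z}\to\Hom{\ccat}{X}{Z}
\]
internalising $(h,g)\mapsto g\circ F(h)\circ\inv f$. Concretely, $\Phi$ is the composite of: the hom-object action $F_{X,Z}\co\Hom{\ccat}{X}{Z}\to\Hom{\ccat}{FX}{FZ}$ of the enriched functor $F$; precomposition with the global element $\nameof{\inv f}\co 1\to\Hom{\ccat}{X}{FX}$ via the composition morphism of $\ccat$, landing in $\Hom{\ccat}{X}{FZ}$; and finally postcomposition with $g$ via the composition morphism $\Hom{\ccat}{FZ}{Z}\times\Hom{\ccat}{X}{FZ}\to\Hom{\ccat}{X}{Z}$, with the parameter $g$ threaded through the product.

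Next I would check that $\Phi$ is contractive in its first variable. This is precisely where local contractiveness of $F$ is used: by Definition~\ref{def:gen:locally:contractive}, $F$ factors through $\next\co\ccat\to\ecatpush{\later}{\ccat}$, so $F_{X,Z}$ factors as some morphism $\later\Hom{\ccat}{X}{Z}\to\Hom{\ccat}{FX}{FZ}$ precomposed with $\next_{\Hom{\ccat}{X}{Z}}$. Hence $\Phi$ factors through $\next\times\id$, which is the definition of being contractive in the first variable (in the form valid for any model of guarded recursive terms, as noted in Section~\ref{sec:axioms}). By Theorem~\ref{thm:fp:op} there is then a unique $k\co\Hom{\ccat}{FZ}{Z}\to\Hom{\ccat}{X}{Z}$ with $\Phi\circ\pair{k}{\id}=k$. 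Unwinding this equation in the internal language gives $\forall g\ld\, g\circ F(k(g))\circ\inv f = k(g)$; composing on the right with the isomorphism $f$ yields the claimed identity $\forall g\ld\, k(g)\circ f = g\circ F(k(g))$.

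The main obstacle will be the enriched-categorical bookkeeping in the first two steps: realising $\Phi$ honestly as an $\etopos$-morphism in terms of the composition morphisms of $\ccat$ and the action $F_{X,Z}$, and verifying that the factorisation of $F$ through $\next$ really descends to the factorisation of $\Phi$ through $\next\times\id$ witnessing contractiveness in the first variable. Once $\Phi$ and its contractiveness are in place, the remainder is a direct appeal to the parametrised fixed-point operator of Theorem~\ref{thm:fp:op}.
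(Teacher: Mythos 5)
Your proposal is correct and follows essentially the same route as the paper: the paper also defines $k$ as the (parametrised) fixed point of the map $\Hom{\ccat}{FZ}{Z}\times\Hom{\ccat}{X}{Z}\to\Hom{\ccat}{X}{Z}$ sending $(g,h)$ to $g\circ F(h)\circ\inv{f}$, with contractiveness in the $h$-variable coming from local contractiveness of $F$ and the fixed point supplied by the parametrised fixed-point theorem. Your write-up just makes explicit the enriched bookkeeping and the appeal to Theorem~\ref{thm:fp:op} that the paper leaves implicit.
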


\begin{proof}
  Define $k$ to be the fixed point of the map $\Hom{\ccat}{FZ}{Z}
  \times \Hom{\ccat}{X}{Z} \to \Hom{\ccat}{X}{Z}$ mapping $g, h$ to
  $g\circ Fh \circ \inv f$.
\end{proof}

\begin{lem} \label{lem:initial:algebra:functor}
Let $\ccat, \dcat$ be $\etopos$-enriched categories and let $F\co \dcat \times \ccat \to \ccat$ be enriched and locally contractive in the second variable. If the functor $F(X, -) \co \ccat \to \ccat$ has an initial algebra for all $X$ in $\dcat$, then there is an $\etopos$-enriched functor $\mu F \co \dcat \to \ccat$ mapping $X$ to the carrier of the initial algebra. If, moreover, $F$ is locally contractive in the first variable, then $\mu F$ is locally contractive. 
\end{lem}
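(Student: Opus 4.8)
The plan is to construct $\mu F$ following the standard recipe for parametrised initial algebras, but to carry out the construction at the enriched level by using the ``computing'' morphism supplied by Lemma~\ref{lem:initial:algebra:computable}. First I would fix the action on objects, sending $X$ to the carrier $\mu F(X)$ of the chosen initial algebra of $F(X,-)\co\ccat\to\ccat$, and write $f_X\co F(X,\mu F(X))\to\mu F(X)$ for its structure map. Since $F$ is locally contractive in the second variable, each $F(X,-)$ is locally contractive, so by Lemma~\ref{lem:initial:algebras} the map $f_X$ is an isomorphism. To define the action on hom-objects, fix $X,Y$, apply Lemma~\ref{lem:initial:algebra:computable} to the functor $F(X,-)$, the algebra $f_X$ and the object $Z=\mu F(Y)$, and obtain a morphism $k_{X,Y}\co\Hom{\ccat}{F(X,\mu F(Y))}{\mu F(Y)}\to\Hom{\ccat}{\mu F(X)}{\mu F(Y)}$ of $\etopos$ satisfying $k_{X,Y}(g)\circ f_X = g\circ F(\id_X,k_{X,Y}(g))$ in the internal language. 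I would then let $\mu F_{X,Y}$ be the composite of: (i) the first-variable functoriality map $\Hom{\dcat}{X}{Y}\to\Hom{\ccat}{F(X,\mu F(Y))}{F(Y,\mu F(Y))}$ coming from the enrichment of $F$ with the second argument held at $\id_{\mu F(Y)}$; (ii) post-composition with $f_Y$; and (iii) $k_{X,Y}$. On global elements this is precisely the usual assignment that sends $a\co X\to Y$ to the unique $F(X,-)$-algebra homomorphism from $f_X$ to $f_Y\circ F(a,\mu F(Y))$.

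The second step is to check that these data constitute an $\etopos$-enriched functor, that is, that $\mu F_{X,X}\circ\nameof{\id_X}=\nameof{\id_{\mu F(X)}}$ and that the square expressing preservation of composition commutes \emph{in} $\etopos$. I would argue by uniqueness of mediating homomorphisms. As in the proof of Lemma~\ref{lem:initial:algebras}, the operator $h\mapsto g\circ F(\id_X,h)\circ\inv{f_X}$ is a contractive endomorphism of $\Hom{\ccat}{\mu F(X)}{\mu F(Y)}$, hence has a unique fixed point by the generalisation of Theorem~\ref{thm:fp:op}; consequently any $h$ with $h\circ f_X=g\circ F(\id_X,h)$ must equal $k_{X,Y}(g)$. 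Writing the two sides of each functoriality equation in the internal language of $\etopos$ and verifying that each satisfies this defining equation --- using the enriched functoriality of $F$ and the equation for $k$ --- forces them to coincide. I expect this verification to be the main obstacle: it is the only part that is not a formal consequence of the earlier lemmas, it has to be done as an equality of $\etopos$-morphisms rather than pointwise, and some care is needed with the bookkeeping; it is, however, formally the same argument as the one used to build parametrised recursive-type functors in the classical $O$-categorical and $M$-categorical settings.

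Finally, for the local-contractiveness claim I would assume in addition that $F$ is locally contractive in the first variable, i.e.\ that $F$ factors as an enriched functor $\dcat\times\ccat\xrightarrow{\next\times\id}\ecatpush{\later}{\dcat}\times\ccat\to\ccat$. Then the first leg of the composite defining $\mu F_{X,Y}$ --- the first-variable functoriality map --- factors through $\next\co\Hom{\dcat}{X}{Y}\to\later\Hom{\dcat}{X}{Y}$ by Lemma~\ref{lem:loc:contractive:comp}, and hence so does $\mu F_{X,Y}$ itself. Since the object assignment of $\mu F$ is unaffected, this exhibits $\mu F$ as a composite $\dcat\xrightarrow{\next}\ecatpush{\later}{\dcat}\to\ccat$ of enriched functors; that the second factor is enriched (respects identities and composition) is checked exactly as for $\mu F$ in the previous step. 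By Definition~\ref{def:gen:locally:contractive} this means $\mu F$ is locally contractive, which finishes the plan.
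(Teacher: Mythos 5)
Your proposal is correct and follows essentially the same route as the paper: the enrichment of $\mu F$ is obtained exactly as in the paper's proof, by composing the map $\Hom{\dcat}{X}{Y}\to\Hom{\ccat}{F(X,\mu F(Y))}{\mu F(Y)}$ (first-variable action of $F$ followed by post-composition with the initial algebra structure map) with the morphism of Lemma~\ref{lem:initial:algebra:computable}, and local contractiveness is deduced from contractiveness of that first stage. The extra verifications you flag (enriched functoriality via uniqueness of fixed points, and that the witness through $\ecatpush{\later}{\dcat}$ is itself enriched) are precisely the routine steps the paper leaves implicit, so there is no substantive difference.
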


\begin{proof}
The functor $\mu F$ is defined (as is standard) to map $f \co X \to Y$ to the unique $\mu F(f)$ making the diagram 
\begin{diagram}[LaTeXeqno] \label{eq:def:init:alg:functor}
F(X, \mu F(X)) & & \rTo && \mu F(X) \\
\dTo^{F(X, \mu F(f))} &&&& \dTo_{\mu F(f)} \\
F(X, \mu F(Y)) & \rTo^{F(f, \id)} & F(Y, \mu F(Y)) & \rTo & \mu F(Y)
\end{diagram}
commute. Now, the enrichment of $\mu F$ is obtained by composing the morphism $\Hom{\dcat}{X}{Y} \to \Hom{\ccat}{F(X, \mu F(Y))}{\mu F(Y)}$ mapping $f$ to the composite in the bottom line of (\ref{eq:def:init:alg:functor}) with the morphism of Lemma~\ref{lem:initial:algebra:computable}. In the case of $F$ being locally contractive in both variables, the first stage of this composite morphism is contractive and so $\mu F$ becomes locally contractive. 
\end{proof}

Recall that an \emph{initial dialgebra} for $G \co \opcat{\ccat} \times \ccat \to \ccat$ 
is an initial algebra of $\symmetric{G}$~\cite{Freyd:90,Freyd:90a,Freyd:91}.

\begin{lem} \label{lem:initial:dialgebra} Let $\ccat$ be
  $\etopos$-enriched and $G \co \opcat{\ccat} \times \ccat \to \ccat$
  be a locally contractive functor. If $G(X,Y) \iso Y$ and $G(Y,X)
  \iso X$ then the pair $(X,Y)$ together with the isomorphisms
  constitute an initial dialgebra for $G$. In particular $(X,Y)$ is
  unique up to isomorphism with this property. Moreover $X \iso Y$.
\end{lem}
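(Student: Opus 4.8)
The plan is to reduce the statement to Lemma~\ref{lem:initial:algebras} applied to the symmetrization $\symmetric{G}\co\ecat\to\ecat$, where $\ecat=\opcat\ccat\times\ccat$ and $\symmetric G$ is regarded as a \emph{covariant} endofunctor. First I would record that $\ecat$ is itself $\etopos$-enriched, with $\Hom{\ecat}{(X,Y)}{(X',Y')}=\Hom{\ccat}{X'}{X}\times\Hom{\ccat}{Y}{Y'}$, and that $\symmetric G$ is again locally contractive. For the latter, a factorisation $G=\bar G\circ\next$ with $\bar G$ an enriched functor out of $\ecatpush{\later}{(\opcat\ccat\times\ccat)}$ induces, via the canonical isomorphisms $\ecatpush{\later}{(\ccat\times\dcat)}\iso\ecatpush{\later}{\ccat}\times\ecatpush{\later}{\dcat}$ and $\ecatpush{\later}{(\opcat\ccat)}\iso\opcat{(\ecatpush{\later}{\ccat})}$, an enriched functor $\symmetric{\bar G}\co\ecatpush{\later}{\ecat}\to\ecat$ with $\symmetric G=\symmetric{\bar G}\circ\next_{\ecat}$; on objects this is immediate because $\next$ is the identity on objects, and on hom-objects it is just the naturality and functoriality of $\next$. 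Checking this compatibility of symmetrization with $\ecatpush{\later}{(-)}$, $\opcat{(-)}$ and $\next$ is the one place where the definitions have to be unfolded rather than quoted; I expect it to be routine bookkeeping with the enriched structure rather than a genuine obstacle.

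Granting this, the two hypothesised isomorphisms say exactly that $\symmetric G(X,Y)=\pair{G(Y,X)}{G(X,Y)}\iso\pair{X}{Y}=(X,Y)$, so $(X,Y)$ is a fixed point of $\symmetric G$ in $\ecat$. By Lemma~\ref{lem:initial:algebras} the two directions of this isomorphism equip $(X,Y)$ with an initial-algebra structure (and a final-coalgebra structure) for $\symmetric G$. Since an initial dialgebra for $G$ is by definition an initial algebra of $\symmetric G$, and since an $\symmetric G$-algebra structure on $(X,Y)$ unwinds to a pair of maps $X\to G(Y,X)$ and $G(X,Y)\to Y$, this is precisely the first assertion of the lemma. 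The same application of Lemma~\ref{lem:initial:algebras} gives that any two fixed points of $\symmetric G$ are isomorphic, and reading off what a fixed point of $\symmetric G$ is, this is exactly the claimed uniqueness up to isomorphism among pairs $(X,Y)$ with $G(X,Y)\iso Y$ and $G(Y,X)\iso X$.

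Finally, for $X\iso Y$: the hypotheses are symmetric under swapping the two arguments of $G$, so $\symmetric G(Y,X)=\pair{G(X,Y)}{G(Y,X)}\iso\pair{Y}{X}=(Y,X)$ as well, i.e.\ $(Y,X)$ is also a fixed point of $\symmetric G$. By the uniqueness just established, $(X,Y)\iso(Y,X)$ in $\opcat\ccat\times\ccat$; the first component of such an isomorphism is an isomorphism $X\to Y$ in $\opcat\ccat$, hence (isomorphisms being reversible) $X\iso Y$ in $\ccat$, as required.
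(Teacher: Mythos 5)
Your proposal is correct and follows essentially the same route as the paper: observe that $\symmetric G$ is locally contractive, apply Lemma~\ref{lem:initial:algebras} to it to get the initial dialgebra structure and uniqueness, and then exploit the symmetry of the hypotheses in $X$ and $Y$ to conclude $(X,Y)\iso(Y,X)$ and hence $X\iso Y$. The extra bookkeeping you supply (enrichment of $\opcat\ccat\times\ccat$, compatibility of symmetrization with $\ecatpush{\later}{(-)}$ and $\next$) is exactly the detail the paper leaves implicit.
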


\begin{proof}
  If $G$ is locally contractive, so is $\symmetric G$. Thence
  Lemma~\ref{lem:initial:algebras} proves that $(X,Y)$ is an
  initial dialgebra. To show $X \iso Y$ note that the hypothesis of
  the lemma is symmetric in $X$ and $Y$, so we may apply what we have
  just proved to conclude that $(Y,X)$ is an initial dialgebra. By
  uniqueness of initial dialgebras $(X, Y) \iso (Y, X)$.
\end{proof}

We can now give the promised proofs of the main theorem and proposition 
in this section.

\begin{proofof}{Theorem~\ref{thm:rec:types:from:init:alg}}
  Consider first the case of \mbox{$n = 0$}. Recall the functor $\mu F
  \co \opcat{\ccat} \to \ccat$ from
  Lemma~\ref{lem:initial:algebra:functor} mapping $X$ to the unique
  fixed point of $F(X, -)$. Define $Z$ to be the unique fixed point of
  the functor $X \mapsto F(\mu F(X), X)$ and define $W = \mu
  F(Z)$. Then $F(W, Z) = F(\mu F(Z), Z) \iso Z$ and $F(Z, W) = F(Z,
  \mu F(Z)) \iso \mu F(Z) = W$, and so
  Lemma~\ref{lem:initial:dialgebra} applies giving the unique solution
  to $F$ and proving that $W \iso Z$.

In the general case of $n \neq 0$, Lemma~\ref{lem:initial:algebra:functor} applies to give the functor $\Fix F$. 
\end{proofof}

The statement and proof of Proposition~\ref{prop:rec:types:subst} carries over verbatim from the case of $\SItopos$ to the general case of 
$\etopos$ a model of guarded recursive dependent types.

\section{A class of models of guarded recursion}
\label{sec:class:of:models}
\newcommand{\sheaf}{\mathbf{a}}
\newcommand{\nextpresheaf}{\mathrm{next}^{pre}}
\newcommand{\predinv}{\pred^*}

\newcommand{\lub}{\bigvee}
\newcommand{\Idl}[1]{\textit{Idl}(#1)} 

\newcommand{\pred}{p}

\newcommand{\Sh}[1]{\textit{Sh}(#1)}
\newcommand{\base}{K}
\newcommand{\bel}{k} 

\newcommand{\sol}[1]{X_{#1}}
\newcommand{\solIso}[1]{\phi_{#1}}
\newcommand{\solProj}[2]{\pi_{#1,#2}}
\newcommand{\mainOrd}{\ordMap{\lub A}}
\newcommand{\ordA}{\alpha}
\newcommand{\ordB}{\beta}
\newcommand{\ordC}{\beta'}
\newcommand{\ordD}{\gamma}
\newcommand{\card}[1]{|#1|}
\newcommand{\ordMap}[1]{\textit{Ord}(#1)}
\newcommand{\downset}[1]{\downarrow #1}

The aim of this section is to establish a large class of models of
guarded recursive dependent types including our main example, the topos
$\SItopos$. This involves showing existence of fixed points for
locally contractive functors. The special case of $\SItopos$, together
with the results of Section~\ref{axioms:rec:types}, prove 
Theorem~\ref{exist:unique:rec:dom:eq}. 

The class of models we consider are sheaves over  a complete Heyting
algebra with a well-founded basis.  In this section we assume some
familiarity with the basics of complete Heyting algebras and sheaves
over such~\cite{MacLane:Moerdijk:92}.

\begin{defi}
  A partial order $A$ is \emph{well-founded} if there are no infinite
  descending sequences $a_0 > a_1 > a_2 > \dots$
\end{defi}
Here $a>a'$ means $a \geq a'$ and $a \neq a'$ as usual. 
Note that any forest is well-founded.

\begin{defi}
  Let $A$ be a partial order and let $\base\subseteq A$. Then $\base$
  is a \emph{basis} for $A$ if each $a \in A$ is a least upper bound
  of all the base elements below it, i.e. $a = \lub \{ \bel \in \base
  \mid \bel \leq a\}$.
\end{defi}

\begin{exa} \label{ex:ideal:completion} If $\base$ is a
  well-founded partial order then the ideal completion $\Idl{\base}$
  consisting of down-closed subsets of $\base$ is a complete Heyting
  algebra and the set $\{\downset \bel \mid \bel
  \in \base \}$, where $\downset \bel = \{ \bel' \in \base \mid \bel'
  \leq \bel\}$ is a well-founded basis.
\end{exa}

In the following we reserve $a$'s and $b$'s for elements of $A$ and
$\bel$'s for elements in $\base$.  A sieve $B$ on $a$ in $A$ is just a
downward closed subset of $\{b \in A \mid b \leq a\}$ and it is
covering if $\lub B = a$. If $A$ is a complete Heyting algebra then this defines a Grothendieck topology, and
the corresponding category $\Sh{A}$ of sheaves is the full subcategory
of presheaves $\PA$ such that $(\PA(\lub B) \to \PA(b))_{b \in B}$ is
a limiting cone for all $B \subseteq A$.
We recall the following well-known fact.

\begin{prop} \label{prop:idl:compl}
 If $A$ is a partial order then $\Sh{\Idl{A}} \simeq \Psh{A}$.
\end{prop}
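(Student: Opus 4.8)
The plan is to describe the equivalence explicitly through the canonical full embedding $j \co A \to \Idl{A}$ sending $a$ to the principal ideal $\downset a = \{b \in A \mid b \leq a\}$; this is fully faithful since $\downset a \subseteq \downset b$ iff $a \leq b$. Restriction along $j$ gives a functor $R \co \Sh{\Idl{A}} \to \Psh{A}$ with $R(F)(a) = F(\downset a)$, and the candidate quasi-inverse is the right Kan extension of a presheaf along $j^{\mathrm{op}}$, which on an ideal $I$ is the limit over the principal ideals it contains: $E(G)(I) = \lim_{a \in I} G(a)$, with restriction maps the evident projections. The proof then consists of three steps: (i) $E(G)$ really is a sheaf; (ii) $R$ and $E$ are mutually quasi-inverse; (iii) functoriality and naturality, which are routine.

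For (i), the key observation is that a down-closed $B \subseteq \Idl{A}$ has $\downset a \in B$ for every $a \in \lub B = \bigcup_{J \in B} J$; using this one checks that the canonical comparison $E(G)(\lub B) = \lim_{a \in \lub B} G(a) \to \lim_{J \in B} \lim_{a \in J} G(a) = \lim_{J \in B} E(G)(J)$ is a bijection (a compatible family over $B$ is determined by, and reconstructible from, its values at the principal subideals), which is exactly the sheaf condition. For (ii), one half is immediate: $\downset a$ has greatest element $a$, so $R(E(G))(a) = \lim_{b \leq a} G(b) \iso G(a)$ naturally — equivalently, $j$ being fully faithful makes the counit of the Kan extension an isomorphism. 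The other half, $E(R(F)) \iso F$ for a sheaf $F$, is where the work lies: the principal ideals $\{\downset a \mid a \in I\}$ generate a covering sieve of $I$, so the sheaf condition for $F$ identifies $F(I)$ with the limit over that sieve, and one has to check this limit agrees with $\lim_{a \in I} F(\downset a) = E(R(F))(I)$ — which follows by another application of the sheaf condition, noting that every ideal in the sieve is itself the join of the principal ideals below it.

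Conceptually the statement is an instance of the comparison lemma (see e.g.\ \cite{MacLane:Moerdijk:92}): $j$ exhibits $A$ as a dense subsite of $\Idl{A}$ — it is fully faithful, every ideal is the join of the principal ideals it contains, and the topology on $\Idl{A}$ restricts along $j$ to the trivial topology on $A$ (if a sieve $S$ on $a$ has $\lub_{b \in S} \downset b = \downset a$ then $a \in \downset b$ for some $b \in S$, forcing $b = a \in S$, so $S$ is maximal), whence restriction gives an equivalence $\Sh{\Idl{A}} \simeq \Psh{A}$, since sheaves for the trivial topology are all presheaves. I expect the main obstacle to be step (ii), the isomorphism $E(R(F)) \iso F$: unwinding it amounts to re-proving the relevant case of the comparison lemma, and the only genuinely non-formal point is the compatibility, for a sheaf $F$, between its value at an ideal and its values at the principal subideals — together with the (easy but essential) fact that the induced topology on $A$ is trivial, which is what collapses the right-hand side to plain presheaves.
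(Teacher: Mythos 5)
Your proposal is correct and matches the paper's own (very terse) proof: the paper exhibits exactly the same pair of functors, sending a sheaf $Y$ to $\lambda a \ld Y(\downset a)$ and a presheaf $X$ to $\lambda B \ld \lim_{b \in B} X(b)$, leaving the verifications you spell out implicit. Your comparison-lemma framing is a reasonable conceptual gloss, but the core argument is the same.
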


\begin{proof}
  The equivalence maps $\PA$ in $\Psh{A}$ to $\lambda B \ld \lim_{b \in
    B} \PA(b)$ (we shall write $\bar \PA$ for this sheaf) and $\PB$ in
  $\Sh{\Idl{A}}$ to $\lambda a \ld \PB(\downset a)$.
\end{proof}

Collectively Proposition~\ref{prop:idl:compl} and
Example~\ref{ex:ideal:completion} state that the general class of
models we consider include all toposes of the form $\Psh{A}$ for $A$ a
well-founded partial order, in particular all slices of $\SItopos$.

\begin{thm} \label{thm:Sh:model:g:rec:types} Let $A$ be a complete
  Heyting algebra with a well-founded base. Then $\Sh{A}$ is a model
  of guarded recursive dependent types. In particular $\SItopos$ and
  indeed any topos of the form $\Psh{A}$ for $A$ a well-founded
  partial order is a model of guarded recursive dependent types.
\end{thm}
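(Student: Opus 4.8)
The plan is to verify the two clauses of Definition~\ref{def:model:guarded:rec:term} for $\Sh{A}$ and then to let Theorem~\ref{theorem:slices} take care of the slices. Since $\Sh{A}$ is a Grothendieck topos it is cartesian closed and locally cartesian closed, so it suffices to (i) produce a functor $\later\co\Sh{A}\to\Sh{A}$ together with $\next\co\id\to\later$ making $\Sh{A}$ a model of guarded recursive terms, and (ii) show that for every object $I$ each locally contractive endofunctor of $\Sh{A}/I$ has a fixed point. Given (i), Theorem~\ref{theorem:slices} makes each slice $\Sh{A}/I$ a locally cartesian closed model of guarded recursive terms (with its own $\laterSlice{I}$, as in Proposition~\ref{prop:later:pb}), and such a slice is again a topos, hence cartesian closed; adding (ii) then makes each slice a model of guarded recursive types, which is exactly what is asked of a model of guarded recursive \emph{dependent} types.

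For (i) I would first work over presheaves, setting $\later^{pre}X(a)=\lim_{b<a}X(b)$ (so $\later^{pre}X(a)=1$ when $a$ has no strict predecessors), with $\nextpresheaf\co X\to\later^{pre}X$ the canonical comparison built from the restriction maps, and then take $\later=\sheaf\circ\later^{pre}$ on $\Sh{A}$ and $\next$ the sheafification of $\nextpresheaf$. Since limits commute with limits, $\later^{pre}$ preserves all limits pointwise, and since sheafification $\sheaf$ is left exact, $\later$ preserves finite limits. For the clause on unique fixed points, observe that a global element $1\to X$ of a sheaf is a compatible family $(x_{\bel})_{\bel\in\base}$ with $x_{\bel}\in X(\bel)$, and that $f\circ\next\circ h=h$ forces $x_{\bel}$ to be computed from $(x_{\bel'})_{\bel'<\bel}$ through the $\bel$-component of $f$; since $\base$ is well-founded this recursion has exactly one solution, and the sheaf condition shows the resulting family is compatible, i.e.\ is a genuine morphism $1\to X$.

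For (ii) I would run the usual limit construction, uniformly in $I$. The point is that $\Sh{A}/I$ is generated by a \emph{well-founded} poset of ``stages'', namely the elements $(a,i)$ of the category of elements of $I$ (well-founded because $\base$ is), and that $\laterSlice{I}$ drops the rank of a stage by one. Given a locally contractive $F\co\Sh{A}/I\to\Sh{A}/I$, put $\sol{0}=1$, $\sol{\ordA+1}=F(\sol{\ordA})$, and $\sol{\lambda}=\lim_{\ordA<\lambda}\sol{\ordA}$ at limit stages, with connecting maps $\solProj{\ordB}{\ordA}\co\sol{\ordB}\to\sol{\ordA}$ ($\ordA\le\ordB$) generated by $\solProj{1}{0}={\bang}$, $\solProj{\ordA+1}{\ordB+1}=F(\solProj{\ordA}{\ordB})$, and the limit universal properties. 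Because $F$ is locally contractive it factors through $\next$, so $Ff$ at a stage of rank $\ordA$ depends only on the restriction of $f$ to stages of rank $<\ordA$; comparing with $F(\id)$ yields the analogue of Lemma~\ref{lem:lcf:n:iso}: if $f$ is an isomorphism at every stage of rank $<\ordA$, then $Ff$ is an isomorphism at every stage of rank $\le\ordA$. Writing $\ordMap{\bel}$ for the rank of $\bel\in\base$ and $\mainOrd$ for the supremum of the ordinals $\ordMap{\bel}+1$, a transfinite induction then shows that the connecting maps become isomorphisms at ever more stages, and in particular that $\solProj{\mainOrd+1}{\mainOrd}\co F(\sol{\mainOrd})=\sol{\mainOrd+1}\to\sol{\mainOrd}$ is an isomorphism at every stage, hence an isomorphism; this is the desired $F(\sol{\mainOrd})\iso\sol{\mainOrd}$. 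Together with Theorem~\ref{thm:rec:types:from:init:alg} this also gives the parametrised, mixed-variance statement of Theorem~\ref{exist:unique:rec:dom:eq}. Finally, the last sentence is immediate from Proposition~\ref{prop:idl:compl} and Example~\ref{ex:ideal:completion}: $\SItopos=\Psh{\omega}\simeq\Sh{\Idl{\omega}}$, and more generally $\Psh{A}\simeq\Sh{\Idl{A}}$ for $A$ any well-founded partial order, with $\Idl{A}$ carrying the well-founded basis $\{\downset\bel\mid\bel\in\base\}$.

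I expect the main obstacle to be clause (ii): extracting the ``$F$ improves isomorphisms by one rank'' lemma from the enriched definition of local contractiveness (Definition~\ref{def:gen:locally:contractive}), threading the rank/ordinal bookkeeping correctly through limit stages, and checking that $\sol{\mainOrd}$ really is a fixed point of $F$ rather than merely the target of eventually-invertible connecting maps --- all while keeping the construction parametric in the slice $I$. Clause (i) should be comparatively routine once one commits to factoring $\later$ through sheafification; there the only delicate points are left-exactness and checking that the well-founded recursion producing the fixed point lands among the sheaf morphisms.
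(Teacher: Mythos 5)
Your high-level strategy is the same as the paper's: the paper also obtains $\later$ on $\Sh{A}$ by sheafifying a predecessor-reindexing functor and gets unique fixed points of terms by well-founded induction (Theorem~\ref{thm:Sh:model:g:rec:terms} via Lemma~\ref{lem:fix}), and it also solves recursive type equations by an ordinal-indexed limit of iterates of $F$ together with an ``$F$ improves isomorphisms'' lemma (Theorem~\ref{thm:gen:sol}, Lemma~\ref{lem:b:iso}). For clause (i) your sketch is essentially the paper's argument, though note that the fixed-point equation must be verified at \emph{every} $a\in A$, not only at basis elements; at elements with $\pred(a)=a$ (e.g.\ the top of $\omegabar$) this is exactly where Lemma~\ref{lem:fix} needs its strengthened induction hypothesis and case split, which your ``the sheaf condition shows compatibility'' remark glosses but which is repairable.

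The genuine gaps are in clause (ii). First, your reduction of the slice $\Sh{A}/I$ to (pre)sheaves over a well-founded poset of stages $(a,i)$ is not available at this level of generality: $A$ itself need not be well-founded even though its basis is (take $A=\mathcal{P}(\mathbb{N})$ with the singletons as basis and the descending chain $\mathbb{N}\supsetneq\mathbb{N}\setminus\{0\}\supsetneq\cdots$), so the elements of $I$ do not form a well-founded poset, and Proposition~\ref{prop:slices:concretely} identifying slices with presheaves on the category of elements is a fact about presheaf toposes, not about $\Sh{A}$. The paper avoids any site presentation of the slices: it observes that every slice is complete and $\Sh{A}$-enriched (hom-objects $\Prod_{i\co I}Y_i^{X_i}$), proves the fixed-point theorem for arbitrary complete $\Sh{A}$-enriched categories using only the ranks $\ordMap{a}$ of elements of the base (Theorem~\ref{thm:Sh:model:g:rec:types:enr}), and then uses Lemma~\ref{lem:local:contractive:slices} (via Lemma~\ref{lem:locally:contractive:slice}) to pass from ``locally contractive as an endofunctor of the slice'' --- which is what the definition of a model of guarded recursive dependent types actually requires --- to ``locally contractive in the $\Sh{A}$-enriched sense''. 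Some such bridge is needed on any route and is absent from your plan. Second, the rank-improving lemma you assert (``iso at all ranks $<\alpha$ implies $Ff$ iso at ranks $\le\alpha$'') is the crux, and ``comparing with $F(\id)$'' does not prove it: the inverse of $f$ below rank $\alpha$ is not a morphism to which $F$ can be applied. One must instead apply the contractive witness $H$ to the partial family of inverses and use the requirement, built into Definition~\ref{def:locally:contractive}, that the witness preserves composition and identities, to conclude that $H_b(\inv{f}_{\pred(b)})$ is a two-sided inverse of $(Ff)_b$ (Lemma~\ref{lem:b:iso}); the paper explicitly remarks that this is why the notion of locally contractive functor was strengthened relative to the conference version. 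Since your proposal flags this step as the main obstacle without supplying the idea that resolves it, clause (ii) --- and hence the theorem --- is not yet established by the proposal.
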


Di Gianantonio and Miculan~\cite{DiGianantonio:Miculan:04} essentially
prove that $\Sh{A}$ is a model of guarded recursive \emph{terms} if $A$ is
the set of opens of a topological space with a well-founded basis; here we extend
their results to guarded recursive \emph{types} and, moreover, consider more general
models (not necessarily arising from topological spaces). 

\begin{thm} \label{thm:Sh:model:g:rec:types:enr} Let $A$ be a
  complete Heyting algebra with a well-founded basis and let $\ccat$ be
  a $\Sh{A}$-enriched category. If $\ccat$ is complete (precisely, the
  externalisation of $\ccat$ is complete in the usual sense) then it
  is {\complete}.
\end{thm}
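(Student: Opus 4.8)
The plan is to prove the theorem by constructing, for an arbitrary locally contractive enriched functor $F \co \ccat \to \ccat$, a fixed point $X$ with $FX \iso X$ as the limit of a transfinite tower, mimicking the diagonal/limit construction used for $\SItopos$ (Lemma~\ref{lem:lcf:n:iso} and the sequence~(\ref{eq:fixed:point:sequence})) but now indexed by the ordinal rank coming from the well-founded basis. Throughout I use that $\Sh{A}$ is itself a model of guarded recursive terms — its later functor $\later$ and $\next \co \id \to \later$ are built from the restriction maps along the basis $\base$, with $\later$ dropping one ``layer'' — so that the enriched category $\ecatpush{\later}{\ccat}$ and the factorisation of $F$ through $\next$ and an enriched functor $G \co \ecatpush{\later}{\ccat} \to \ccat$ are available.

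First I would fix, using well-foundedness of $\base$, an ordinal rank $\ordMap{a}$ for each $a \in A$ (its rank in $\base$, equivalently in $\Idl{\base}$), and an ordinal $\lambda$ with $\ordMap{\bel} < \lambda$ for every $\bel \in \base$ (so $\lambda = \mainOrd$, or $\mainOrd + 1$ if $\lub A \in \base$). Then, using that the externalisation of $\ccat$ is complete, I would define by transfinite recursion on $\ordA \leq \lambda$ objects $\sol{\ordA}$ and morphisms $\solProj{\ordB}{\ordA} \co \sol{\ordB} \to \sol{\ordA}$ for $\ordA \leq \ordB$: put $\sol{0} = 1$ the terminal object, $\sol{\ordA + 1} = F\sol{\ordA}$ with $\solProj{1}{0} = \bang \co F1 \to 1$ and $\solProj{\ordB+1}{\ordA+1} = F\solProj{\ordB}{\ordA}$, and for limit $\ordA$ let $\sol{\ordA} = \lim_{\ordB < \ordA} \sol{\ordB}$ with $\solProj{\ordA}{\ordB}$ the limit projections and the remaining transition maps the induced mediating morphisms; a routine transfinite induction shows the $\solProj{\ordB}{\ordA}$ form a consistent tower. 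The candidate fixed point is $X = \sol{\lambda}$, and the isomorphism $FX \iso X$ will be $\solProj{\lambda + 1}{\lambda} \co F\sol{\lambda} = \sol{\lambda+1} \to \sol{\lambda}$.

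The core of the argument is to show this last map is an isomorphism, for which I would introduce a notion of \emph{$\ordA$-isomorphism}: a morphism $f \co Y \to Z$ of $\ccat$ is an $\ordA$-isomorphism if it becomes invertible after restricting the $\Sh{A}$-enrichment to the basis elements of rank $<\ordA$ (equivalently, it is an isomorphism at every $\bel \in \base$ with $\ordMap{\bel} < \ordA$). This generalises Lemma~\ref{lem:lcf:n:iso}, and the key lemma to establish is that a locally contractive functor sends $\ordA$-isomorphisms to $(\ordA+1)$-isomorphisms — this is exactly where the factorisation $F = G \circ \next$ and the fact that $\later$ on $\Sh{A}$ lowers rank by one are used. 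Granting this, a transfinite induction on $\ordA$ shows $\solProj{\ordB}{\ordA}$ is an $\ordA$-isomorphism whenever $\ordA \leq \ordB$ (base case vacuous; successor case the key lemma; limit case because $\ordA$-isomorphisms for limit $\ordA$ are precisely the morphisms that are $\ordC$-isomorphisms for all $\ordC < \ordA$, a class closed under the limits occurring in the tower and satisfying $2$-out-of-$3$). In particular $\solProj{\lambda+1}{\lambda}$ is a $\lambda$-isomorphism, i.e.\ an isomorphism at every basis element; since a morphism of sheaves invertible at every element of a basis is invertible, $FX \iso X$. Combined with Theorem~\ref{thm:rec:types:from:init:alg} this also yields the mixed-variance and parametrised fixed points, and via Theorem~\ref{theorem:slices} the full statement that $\Sh{A}$ is a model of guarded recursive dependent types (Theorem~\ref{thm:Sh:model:g:rec:types}).

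The main obstacle is the key lemma together with pinning down the correct definition of $\ordA$-isomorphism: one must work out carefully how the ordinal rank on $\base$, the order on $\base$, and the later functor of $\Sh{A}$ interact, and verify that $\later$ improves a rank-$\ordA$-isomorphism by exactly one rank level — the precise analogue of ``$F$ maps $n$-isomorphisms to $(n+1)$-isomorphisms''. A secondary, more bookkeeping-heavy point is checking that the class of $\ordA$-isomorphisms is stable under the (transfinite) limits used at limit stages of the tower, which relies on the sheaf condition.
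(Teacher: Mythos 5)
Your proposal follows essentially the same route as the paper's proof: the paper also assigns an ordinal rank to each element of $A$ via the well-founded basis, builds the transfinite tower $\sol{\ordA}=\lim_{\ordB<\ordA}F(\sol{\ordB})$ (which agrees with your terminal-object/successor/limit description), generalises $n$-isomorphisms to $a$- and $\ordA$-isomorphisms, proves exactly your key lemma that a locally contractive functor (via its enriched witness $G$ and the factorisation through $\next$) raises the isomorphism level by one, and concludes by a transfinite induction, strengthened with statements about the limit projections, that the comparison map at the top stage is an isomorphism, amalgamating inverses along the basis by the sheaf condition. The points you flag as remaining work (the key lemma and stability of $\ordA$-isomorphisms under the limits in the tower) are precisely the content of the paper's Lemma on $b$-isomorphisms and the strengthened induction hypothesis, so the plan is sound and matches the published argument.
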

Note that the notion of completeness assumed for $\ccat$ above is the usual one
(rather than the enriched notion of completeness).

In the remainder of this section we prove
Theorems~\ref{thm:Sh:model:g:rec:types:enr}
and~\ref{thm:Sh:model:g:rec:types}. We start by showing that $\Sh{A}$
models guarded recursive terms.

\subsection{Modelling recursive terms}

Following~\cite{DiGianantonio:Miculan:04} we give the following definition.

\begin{defi}
  Define the \emph{predecessor} map $\pred \co A \to A$ by
  \[\pred(a) = \lub \{ \bel \in \base \mid \bel < a\} \, .
  \]
\end{defi}

The predecessor map induces an endofuntor on the category of presheaves
on $A$; following standard notation, we write $\predinv \co \Psh{A} \to
\Psh{A}$ for this functor, defined by $\predinv(\PA) = \PA \circ\pred$.
We define $\later \co \Sh{A} \to \Sh{A}$ by $\later \PA =
\sheaf(\predinv \PA)$,  where $\sheaf$ is the associated sheaf
functor. Define $\nextpresheaf \co \PA \to \predinv\PA$ by
$\nextpresheaf_a (x \in \PA(a)) = \restrict{x}{\pred(a)}$ and define
$\next = \sheaf(\nextpresheaf) \co \PA \to \later \PA$ for all sheaves
$\PA$.

Note that 
\begin{equation}
\next = \eta \circ \nextpresheaf \label{eq:next}
\end{equation}
where $\eta$ is the unit of the adjunction
$\ladj{\sheaf}{I}$, with $I\co \Sh{A}\to\Psh{A}$ the inclusion of
sheaves into presheaves.  This can be seen by applying
$\sheaf$ to both sides of the equation since $\sheaf$ fixes maps
between sheaves and because $\sheaf(\eta)$ is the identity.

\begin{rem}
  The use of the associated sheaf functor $\sheaf$ in the definition
  of $\later$ is necessary, because $\predinv\PA$ needs not be a
  sheaf. Consider, for example, the situation where $A$ is the
  powerset of a 2-element set $\{a,b\}$. Then a sheaf is a presheaf
  $\PA$ such that $\PA(\emptyset) = 1$ and $\PA(\{a,b\}) = \PA(\{a\})
  \times \PA(\{b\})$. The map $\pred$ is
  \begin{align*}
    \pred(\emptyset) & = \emptyset &  \pred(\{a\}) & = \emptyset \\
    \pred(\{b\}) & = \emptyset & \pred(\{a, b\}) & = \{a, b\}
  \end{align*}
  So $\predinv\PA(\{a, b\}) = \PA(\{a, b\})$, but $\predinv\PA(\{a\})
  = \predinv\PA(\{b\}) = 1$, in particular $\predinv\PA$ is in general not a
  sheaf. On the other hand $\later \PA = 1$.
\end{rem}

\begin{lem}
The functor $\later$ preserves finite limits.
\end{lem}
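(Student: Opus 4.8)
The plan is to exhibit $\later$ as a composite of three functors, each of which preserves finite limits, and then conclude. Unwinding the definition $\later \PA = \sheaf(\predinv \PA)$ --- in which the sheaf $\PA$ is silently regarded as a presheaf via the inclusion $I \colon \Sh{A} \to \Psh{A}$ --- the functor $\later$ factors as
\[
\Sh{A} \xrightarrow{\; I \;} \Psh{A} \xrightarrow{\; \predinv \;} \Psh{A} \xrightarrow{\; \sheaf \;} \Sh{A},
\]
where $\predinv$ is precomposition with the predecessor map $\pred$ and $\sheaf$ is the associated sheaf functor.

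First I would note that $I$ preserves all limits, being a right adjoint (to $\sheaf$). Second, $\predinv$ preserves all limits: limits in $\Psh{A} = \Sets^{\opcat A}$ are computed pointwise, so for any diagram $(\PA_j)_j$ one has $\predinv(\lim_j \PA_j)(a) = (\lim_j \PA_j)(\pred a) = \lim_j\bigl(\PA_j(\pred a)\bigr) = (\lim_j \predinv \PA_j)(a)$, naturally in $a$. Third, the associated sheaf functor $\sheaf$ preserves finite limits; this is the classical left-exactness of sheafification, for which I would refer to~\cite{MacLane:Moerdijk:92}. Composing, $\later = \sheaf \circ \predinv \circ I$ preserves finite limits.

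I do not expect a genuine obstacle here. The only points that deserve a word are (i) that $\pred$ is monotone --- if $a \le a'$ then $\{\bel \in \base \mid \bel < a\} \subseteq \{\bel \in \base \mid \bel < a'\}$, hence $\pred(a) = \lub\{\bel \in \base \mid \bel < a\} \le \lub\{\bel \in \base \mid \bel < a'\} = \pred(a')$ --- which is needed both for $\predinv$ to be well-defined as a functor on presheaves and for the pointwise computation above, and (ii) the appeal to left-exactness of $\sheaf$, which is standard sheaf theory. Everything else is formal.
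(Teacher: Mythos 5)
Your proof is correct, and it is exactly the argument implicit in the paper's definition $\later \PA = \sheaf(\predinv\PA)$: the paper states this lemma without proof, and the intended justification is precisely your factorization $\later = \sheaf \circ \predinv \circ I$ with $I$ limit-preserving as a right adjoint, $\predinv$ limit-preserving since limits of presheaves are pointwise (using monotonicity of $\pred$), and $\sheaf$ left exact by standard sheaf theory.
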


We will now show that the above definition of $\later$
generalises the definition of $\later$ from Section~\ref{sec:dtt:later:slice}
on slices of $\SItopos$, see Proposition~\ref{prop:later:psh:trees}
below. For that we first need a lemma:

\begin{lem}
  Let $A$ be a partial order. The composite
  \begin{diagram}
    \BigPsh{\Idl{A}} & \rTo^{\sheaf} & \Sh{\Idl{A}} & \rTo^{\simeq} &
    \BigPsh{A}
  \end{diagram}
  maps $P$ to $\lambda b \ld P(\downset b)$. In other words $\sheaf
  P(\downset b) = P(\downset b)$.
\end{lem}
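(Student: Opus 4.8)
The plan is to prove the displayed equality $\sheaf P(\downset b)=P(\downset b)$ directly and then read off the statement about the composite from Proposition~\ref{prop:idl:compl}. The underlying reason is that the Grothendieck topology on $\Idl{A}$ is trivial at every principal ideal: for $b\in A$, the only sieve covering $\downset b$ is the maximal one. Granting this, the associated sheaf functor leaves the value of a presheaf at $\downset b$ unchanged, which is the entire content of the lemma.

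First I would check the claim about covering sieves. A sieve $S$ on an object $c$ of $\Idl{A}$ is a set of ideals contained in $c$, down-closed under inclusion, and it is covering precisely when $\lub S=c$; since every member of $S$ is contained in $c$ and a union of down-closed subsets of $A$ is again down-closed, in fact $\lub S=\bigcup_{I\in S}I$. Taking $c=\downset b$ and assuming $\bigcup_{I\in S}I=\downset b$, we have $b\in I$ for some $I\in S$; but $I\subseteq\downset b$ together with $b\in I$ forces $I=\downset b$, and since $S$ is down-closed under inclusion while $\downset b$ is the greatest element of $\{I\in\Idl{A}\mid I\subseteq\downset b\}$, the sieve $S$ must be maximal.

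Next I would use the plus-construction description of sheafification, $\sheaf P=(P^{+})^{+}$, where $P^{+}(c)$ is the colimit over the covering sieves of $c$ (ordered by reverse inclusion) of their sets of matching families; see~\cite{MacLane:Moerdijk:92}. By the previous paragraph this colimit, for $c=\downset b$, is taken over a one-element poset whose sole member is the maximal sieve $M$ on $\downset b$; and a matching family for $M$ is a compatible assignment of an element of $P(I)$ to each ideal $I\subseteq\downset b$, which, $\downset b$ being the top of $\{I\mid I\subseteq\downset b\}$, is nothing but an element of $P(\downset b)$. Hence $P^{+}(\downset b)=P(\downset b)$, and applying the same argument to $P^{+}$ in place of $P$ yields $\sheaf P(\downset b)=(P^{+})^{+}(\downset b)=P^{+}(\downset b)=P(\downset b)$; all these identifications are natural in $b$ and in $P$. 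Composing with the equivalence of Proposition~\ref{prop:idl:compl}, which sends a sheaf $\PB$ on $\Idl{A}$ to $\lambda b\ld\PB(\downset b)$, we conclude that $P$ is sent to $\lambda b\ld\sheaf P(\downset b)=\lambda b\ld P(\downset b)$. The only substantive point is the identification of the covering sieves of a principal ideal; after that, the argument is routine bookkeeping with the plus construction and the description of matching families for a poset site.
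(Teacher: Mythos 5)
Your proof is correct, but it takes a genuinely different route from the paper's. The paper argues by uniqueness of left adjoints: since $\sheaf$ is left adjoint to the inclusion $\Sh{\Idl{A}}\hookrightarrow\Psh{\Idl{A}}$ and the equivalence of Proposition~\ref{prop:idl:compl} carries a presheaf $X$ on $A$ to the sheaf $\bar X=\lambda B\ld\lim_{b\in B}X(b)$, the composite in the lemma is left adjoint to $X\mapsto\bar X$; one then checks that the candidate functor $P\mapsto\lambda b\ld P(\downset b)$ (restriction along $b\mapsto\downset b$, whose right adjoint is exactly the limit formula $\bar{(-)}$) is also left adjoint to $X\mapsto\bar X$, so the two agree up to natural isomorphism. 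You instead compute the sheafification directly: your key observation that the only sieve covering a principal ideal $\downset b$ is the maximal one is correct (if $\lub S=\bigcup_{I\in S}I=\downset b$ then some $I\in S$ contains $b$, hence equals $\downset b$, and down-closure of $S$ does the rest), and from there the plus-construction computation $\sheaf P(\downset b)=(P^{+})^{+}(\downset b)=P(\downset b)$, with naturality in $b$ and $P$, is routine and sound. The paper's argument is shorter and avoids any machinery of covers and matching families, piggybacking entirely on universal properties already in play; yours is more concrete and explains \emph{why} sheafification is invisible at principal ideals (they admit no nontrivial covers), and it exhibits the isomorphism explicitly via the unit of the plus construction rather than abstractly via adjoint uniqueness. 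Both are acceptable proofs of the lemma.
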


\begin{proof}
  Since $\sheaf$ is left adjoint to the inclusion, the composite
  sought for is left adjoint to the functor $P \mapsto \bar P$, and it
  is easy to check that the functor of the lemma satisfies this
  condition.
\end{proof}

\begin{prop} \label{prop:later:psh:trees} Let $I$ be an object of $\SItopos$. The composite
\begin{diagram}
\textstyle{\Psh{\elements I} \simeq \Sh{\Idl{\int I}}} & \rTo^{\later} &
  \textstyle{\Sh{\Idl{\elements I}} \simeq \Psh{\elements I}}
\end{diagram}
  which we shall also call $\later$ agrees with $\laterSlice{I}$ as defined in Section~\ref{sec:dtt:later:slice}
\end{prop}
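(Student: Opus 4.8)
The plan is to unwind the composite on objects using the lemma stated just before the proposition, and then to compute the predecessor map $\pred$ in $\Idl{\elements I}$ by hand. Fix $X \in \Psh{\elements I}$. Under the equivalence of Proposition~\ref{prop:idl:compl}, $X$ corresponds to the sheaf $\bar X = \lambda B \ld \lim_{(m,j) \in B} X(m,j)$ on $\Idl{\elements I}$, the general later functor sends it to $\later \bar X = \sheaf(\predinv \bar X) = \sheaf(\bar X \circ \pred)$, and transporting back to $\Psh{\elements I}$ evaluates this sheaf at the base elements $\downset(n,i)$. By the preceding lemma, $\sheaf(\predinv\bar X)(\downset(n,i)) = \predinv\bar X(\downset(n,i)) = \bar X\bigl(\pred(\downset(n,i))\bigr)$, so the whole computation reduces to (i) identifying the ideal $\pred(\downset(n,i))$ and (ii) evaluating $\bar X$ there.

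For step (i): since $(m,j) \leq (n,i)$ in $\elements I$ precisely when $m \leq n$ and $\restrict{i}{m} = j$, the down-set $\downset(n,i)$ is the finite chain $\{(m, \restrict{i}{m}) \mid m \leq n\}$, and the base elements strictly below it are exactly the $\downset(m, \restrict{i}{m})$ for $m < n$. Hence $\pred(\downset(n,i)) = \lub \{ \downset(m, \restrict{i}{m}) \mid m < n \}$, which is the empty ideal $\emptyset$ when $n = 1$ (there is no $m < 1$) and equals $\downset(n-1, \restrict{i}{n-1})$ when $n > 1$, since the union of the chains $\downset(m, \restrict{i}{m})$ for $m \leq n-1$ is $\downset(n-1, \restrict{i}{n-1})$. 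For step (ii): $\bar X(\emptyset)$ is a limit over the empty diagram, i.e.\ the terminal object $1$, matching $\laterSlice{I}X(1,i) = 1$; and for $n > 1$ the ideal $\downset(n-1, \restrict{i}{n-1})$ has a greatest element, namely $(n-1,\restrict{i}{n-1})$, so the limit defining $\bar X$ on it collapses to the value at that element, giving $X(n-1, \restrict{i}{n-1})$, which is $\laterSlice{I}X(n,i)$. This matches the two functors on objects; to finish, I would trace the same equivalences to verify that the identification is natural in $X$ and compatible with the restriction maps of $\Psh{\elements I}$, a routine diagram chase (alternatively, one could observe that the general $\later$ also satisfies the pullback characterisation of Proposition~\ref{prop:later:pb}, which determines $\laterSlice{I}$ up to unique isomorphism).

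The main obstacle is the explicit computation of $\pred$ on $\Idl{\elements I}$, together with the two degeneracies that make the argument work: the $n = 1$ case, where the indexing set is empty and the empty limit supplies the one-point set $\{\star\}$, and the observation that down-sets of elements of a forest are finite chains possessing a top element — which is exactly what forces $\bar X\bigl(\pred(\downset(n,i))\bigr)$ to reduce to a single value of $X$. Everything else is bookkeeping with the equivalences already recalled earlier in the section.
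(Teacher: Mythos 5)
Your proposal is correct and follows essentially the same route as the paper: unwind the composite via the lemma immediately preceding the proposition (so that the sheafification can be dropped when evaluating at principal ideals), compute $\pred(\downset(n,i))$ to be $\emptyset$ for $n=1$ and $\downset(n-1,\restrict{i}{n-1})$ otherwise, and conclude that the value is $1$ resp.\ $X(n-1,\restrict{i}{n-1})$, matching $\laterSlice{I}$. You are in fact somewhat more explicit than the paper (which asserts the computation of $\pred$ as "easy to see" and does not comment on naturality), and your remark that one could instead invoke the pullback characterisation of Proposition~\ref{prop:later:pb} is a reasonable alternative, but the core argument coincides.
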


\begin{proof}
  We compute
  \begin{align*}
    \later P(n,i) & = \later \bar P(\downset (n,i)) \\
    & = (\sheaf \,p^* \bar P)(\downset (n,i)) \\
    & = (p^* \bar P)(\downset (n,i)) \\
    & = \bar P(p(\downset (n,i)))
  \end{align*}
  Now, it is easy to see that if $n=1$ then $p(\downset (n,i)) =
  \emptyset$ so that $\later P(1,i) = \bar P(\emptyset) = 1$ and 
   and otherwise  
\begin{align*}
\bar P(p(\downset (n,i))) & \, = \bar P( \downset (n-1, \restrict{i}{n-1})) \\
& \, = P(n-1, \restrict{i}{n-1}) \,
\end{align*}
which implies the result.
\end{proof}

Using the well-founded basis we can reason by well-founded induction
over $A$ as the following easy lemma shows.

\begin{lem} \label{lem:induction} 
  Let $\phi(a)$ be a predicate on $A$. If
  \begin{align*}
    \forall a \in A \ld (\forall \bel\co \base \ld \bel < a \imp
    \phi(\bel)) \imp \phi(a)
  \end{align*}
  then $\phi(a)$ holds for all $a$ in $A$.
\end{lem}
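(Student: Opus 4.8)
The plan is to reduce the statement to ordinary well-founded induction on the basis $\base$, exploiting that $\base$ has no infinite strictly descending sequences and that the hypothesis of the lemma, once specialised to a base element, is exactly the inductive step of such an induction. The only structural input actually used is the well-foundedness of $\base$; the basis property of $\base$ plays no role in this particular argument.

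First I would establish the auxiliary claim that $\phi(\bel)$ holds for all $\bel \in \base$, by well-founded induction over $\base$. So let $\bel \in \base$ and assume $\phi(\bel')$ for every $\bel' \in \base$ with $\bel' < \bel$. Instantiating the hypothesis of the lemma at $a = \bel$ (legitimate since $\bel \in A$), its antecedent $\forall \bel' \co \base \ld \bel' < \bel \imp \phi(\bel')$ is precisely our inductive assumption, so the hypothesis yields $\phi(\bel)$. The validity of well-founded induction over $\base$ is exactly what the definition of well-foundedness provides: if $\phi$ failed at some $\bel_0 \in \base$, then by the inductive step just verified there would be $\bel_1 < \bel_0$ in $\base$ with $\phi(\bel_1)$ false, and iterating this produces an infinite descending chain $\bel_0 > \bel_1 > \bel_2 > \cdots$ in $\base$, contradicting well-foundedness.

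Given the claim, the lemma follows immediately: for arbitrary $a \in A$, the antecedent $\forall \bel \co \base \ld \bel < a \imp \phi(\bel)$ of the lemma's hypothesis holds by the claim, so the hypothesis delivers $\phi(a)$, and hence $\phi$ holds on all of $A$.

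There is no real obstacle here — this is the \emph{easy} lemma the text advertises. The single point needing a moment's attention is the observation that specialising the lemma's hypothesis to $a = \bel$ turns it into the step of a well-founded induction over $\base$, together with the (classical, meta-level) fact that the ``no infinite descending sequence'' formulation of well-foundedness suffices to license that induction, via the descending-chain / minimal-counterexample argument sketched above.
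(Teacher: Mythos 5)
Your proof is correct and follows essentially the same route as the paper's: first establish $\phi$ on all of $\base$ by well-founded induction over the basis, then apply the hypothesis once more at an arbitrary $a \in A$. The extra detail you give on why the no-infinite-descending-chain formulation licenses the induction is fine and only makes explicit what the paper leaves implicit.
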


\begin{proof}
  First use well-founded induction to conclude that $\phi(\bel)$
  holds for all $\bel\in \base$, then use the condition again to
  conclude that $\phi(a)$ holds for all $a$.
\end{proof}

We now aim to show that any morphism $f \co \later \PA \to \PA$ has a
unique fixed point. Since the associated sheaf functor is left adjoint
to the inclusion of sheaves into presheaves such morphisms correspond
bijectively to morphisms of presheaves $\hat f \co \predinv\PA \to
\PA$ (where $\hat f = f \circ \eta$), and we shall start by constructing fixed points of morphisms of
the latter form.

\begin{lem} \label{lem:fix} 
  Let $\PA$ be a sheaf and let $f \co \predinv\PA \to \PA$ and $a
  \in A$. Then there exists a unique family $(x_b)_{b \leq a}$ such
  that
\begin{enumerate}[\em(1)]
\item $\restrict{x_a}{b} = x_{b}$  for all $b \leq a$ \label{item:compatibility}
\item $f_{b}(x_{\pred b}) = x_{b}$ for all $b \leq a$ \label{item:fp}
\end{enumerate}
\end{lem}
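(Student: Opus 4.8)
The plan is to prove the statement by well-founded induction in the form of Lemma~\ref{lem:induction}; this formulation is essential because $A$ itself need not be well-founded --- only its basis is --- and Lemma~\ref{lem:induction} packages precisely the induction over $\base$ that one can still perform. I take $\phi(a)$ to be the full assertion of the lemma for the given $a$, i.e.\ ``there is a unique family $(x_b)_{b\le a}$ satisfying (\ref{item:compatibility}) and (\ref{item:fp})''. Two preliminary observations are used throughout. First, restricting a witnessing family for $a$ to the elements $\le c$ yields a witnessing family for $c$, for every $c\le a$; this is immediate. Combined with uniqueness it yields a \emph{coherence} property of the inductive data: if $\bel_1,\bel_2\in\base$ are both $<a$ and $c\le\bel_1$, $c\le\bel_2$, then the corresponding witnessing families agree at $c$ --- one compares their values on the basis elements $\bel\le c$, where they agree by the uniqueness part of $\phi(\bel)$, and then invokes the sheaf property of $\PA$ together with the fact that the $\bel\in\base$ with $\bel\le c$ cover $c$. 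Second, every verification below restricts the relevant elements of $\PA$ down to basis elements and applies the sheaf property; whenever $f$ occurs one uses the naturality square of $f\co\predinv\PA\to\PA$, for which the transition map $\predinv\PA(a)\to\predinv\PA(b)$ is the $\PA$-restriction $\PA(\pred a)\to\PA(\pred b)$ (monotonicity of $\pred$).

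For the induction step, assume $\phi(\bel)$ for all $\bel\in\base$ with $\bel<a$, with witnessing families $(x^\bel_c)_{c\le\bel}$. For each $b<a$ every basis element $\bel\le b$ satisfies $\bel<a$, so by coherence the family $(x^\bel_\bel)_{\bel\in\base,\ \bel\le b}$ is compatible and has a unique amalgamation $x_b\in\PA(b)$ (the $\bel\le b$ cover $b$). Standard sheaf manipulations together with naturality of $f$ then show that this sub-family satisfies (\ref{item:compatibility}) and (\ref{item:fp}) among the $b<a$, and that $x_{\pred\bel}=x^\bel_{\pred\bel}$ for each basis $\bel<a$. To define $x_a$ I distinguish two cases. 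If $\pred a<a$, I set $x_a:=f_a(x_{\pred a})$, so that (\ref{item:fp}) holds at $a$ by construction. If $\pred a=a$ --- which in particular covers the case $a\notin\base$, where $\pred a=a$ always --- then the basis elements $\bel<a$ cover $a$ (their join is $\pred a=a$), and I let $x_a$ be the amalgamation of $(x^\bel_\bel)_{\bel\in\base,\ \bel<a}$; here (\ref{item:fp}) at $a$, i.e.\ $f_a(x_a)=x_a$, is obtained by restricting both sides to each basis $\bel<a$ and checking $\restrict{f_a(x_a)}{\bel}=f_\bel(x_{\pred\bel})=f_\bel(x^\bel_{\pred\bel})=x^\bel_\bel=\restrict{x_a}{\bel}$, via naturality of $f$ and (\ref{item:fp}) for $\phi(\bel)$, and then invoking the sheaf property. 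In either case (\ref{item:compatibility}) follows by restricting $\restrict{x_a}{b}$ and $x_b$ to the basis elements $\bel\le b$ and comparing. Finally, uniqueness of the whole family reduces, via the restriction observation, to uniqueness on basis elements $<a$ (given by hypothesis), then to the remark that for $b<a$ the element $x_b$ is the unique amalgamation of its basis values, and, in the remaining case $b=a$ with $\pred a<a$, to the equation $x_a=f_a(x_{\pred a})$ combined with uniqueness at $\pred a<a$.

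The main obstacle is the bookkeeping forced by two features of the setup: the inductive hypothesis is available only at basis elements, so all data at a non-basis $b$ must be reconstructed by gluing, and consequently each clause (coherence, (\ref{item:compatibility}), (\ref{item:fp}), uniqueness) has to be re-derived through the sheaf condition and the naturality of $f$. The genuinely delicate point is the ``limit'' case $\pred a=a$: there $x_a$ is not produced by the recursion equation $f_a(x_{\pred a})=x_a$ but only by amalgamating the values at the strictly smaller basis elements, and one must verify after the fact that this amalgamation does satisfy the recursion equation. Everything else is routine sheaf-theoretic manipulation.
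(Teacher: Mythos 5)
Correct, and essentially the paper's own argument: well-founded induction via Lemma~\ref{lem:induction}, gluing the inductively given families at basis elements to obtain $(x_b)_{b<a}$, then extending to $x_a$ with the same naturality-of-$f$ and sheaf-gluing verifications and the same case distinction on whether $\pred(a)=a$ (the paper just defines $x_a=f_a(y)$ uniformly, with $y$ the amalgamation of $(x_b)_{b<a}$, which coincides with your two-case definition). The only nitpick is that in the uniqueness part you should also explicitly cover $b=a$ when $\pred(a)=a$, where clause~(\ref{item:compatibility}) plus the sheaf condition determines $x_a$ --- exactly the gluing observation you already use to construct it.
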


\begin{proof}
  The proof is by well-founded induction on $a$ using
  Lemma~\ref{lem:induction}. Thus suppose the lemma holds for all $\bel <
  a$, i.e., for any $\bel < a$ there exists a unique family
  $(x_{\bel,b})_{b \leq \bel}$ satisfying the requirements. Note that
  by uniqueness, if $b \leq \bel' \leq \bel$ then $x_{\bel,b} =
  x_{\bel',b}$, so for any $b < a$ we can define $x_b$ to be the unique
  amalgamation of the family $(x_{\bel,\bel})_{\bel \leq b}$. This
  gives us a compatible family $(x_b)_{b <a}$, i.e., $x_{b'} =
  \restrict{x_b}{b'}$ if $b' < b$. To see that this family also
  satisfies (\ref{item:fp}), for all $b < a$, note that it suffices to
  show that $\restrict{f_{b}(x_{\pred b})}{\bel} = x_{\bel}$, for all
  $\bel \leq b$. But
  \begin{align*}
    \restrict{f_{b}(x_{\pred b})}{\bel} \, & = f_{\bel}(x_{\pred \bel}) \\
    & = x_{\bel} \\
  \end{align*}
  since the family $(x_{\bel,b})_{b \leq \bel}$ satisfied (\ref{item:fp}).

  It only remains to extend this family with a component $x_a$.  By
  the sheaf condition there is a unique $y$ in $\PA(\pred(a))$ such
  that $\restrict{y}{b} = x_b$. Define $x_a = f_a(y)$. We must check
  that the extended family $(x_b)_{b \leq a}$ satisfies the
  conditions, and all that remains to prove is the case of $b = a$.

  For (\ref{item:compatibility}) we must show that
  $\restrict{x_a}{b} = x_b$ for all $b < a$.
  \begin{align*}
  \restrict{x_a}{b} & = \restrict{f_a(y)}{b} \\
   & =  f_b (\restrict{y}{\pred b}) \\
  & =  f_b (x_{\pred b}) \\
  & = x_b
  \end{align*}

  For (\ref{item:fp}) we branch on
  whether $a = \pred a$ or not (using classical reasoning).
  If $\pred a < a$ then $y= x_{\pred a}$, and we are done. If
  $a = \pred a$ then, by the sheaf condition, it suffices to prove
  that $\restrict{f_a(x_a)}{b} = x_b$ for all $b < a$. But
  \begin{align*}
    \restrict{f_a(x_a)}{b} & = f_b (\restrict{(f_a(y))}{\pred(b)}) \\
    & = f_b (f_{\pred (b)}(\restrict{y}{\pred\pred(b)})) \\
    & = f_b (f_{\pred (b)}(x_{\pred\pred(b)})) \\
    & = f_b (x_{\pred b}) \\
    & = x_b
  \end{align*}
  For the proof of uniqueness, we must show that $x_a$ as defined
  above gives the unique extension of $(x_b)_{b < a}$ satisfying the
  conditions. Again we branch on $\pred a = a$ or $\pred a < a$. In
  the first case, (\ref{item:compatibility}) together with the
  sheaf condition gives uniqueness and in the second it is 
  (\ref{item:fp}) that gives uniqueness.
\end{proof}

\begin{thm} \label{thm:Sh:model:g:rec:terms} If $A$ is a complete
  Heyting algebra with a well-founded basis then every slice of $\Sh{A}$ is a model
  of guarded recursive terms.
\end{thm}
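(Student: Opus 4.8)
The plan is to first establish that $\Sh{A}$ itself is a locally cartesian closed model of guarded recursive terms, and then to obtain the statement for all slices by a direct appeal to Theorem~\ref{theorem:slices}. Since $\Sh{A}$ is a Grothendieck topos it is in particular locally cartesian closed, and we have already observed that $\later$ preserves finite limits. Hence the only point requiring work is the fixed-point axiom of Definition~\ref{def:model:guarded:rec:term}: that every $f\co\later\PA\to\PA$ admits a unique $h\co 1\to\PA$ with $f\circ\next\circ h = h$.

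For this, recall that by the adjunction $\ladj{\sheaf}{I}$ a sheaf morphism $f\co\later\PA\to\PA$ corresponds bijectively to a presheaf morphism $\hat f = f\circ\eta\co\predinv\PA\to\PA$. A global element $h\co 1\to\PA$ is precisely a compatible family $(x_a)_{a\in A}$ with $x_a\in\PA(a)$ and $\restrict{x_a}{b}=x_b$ for $b\leq a$, compatibility being automatic from naturality. Using the identity $\next=\eta\circ\nextpresheaf$ of \eqref{eq:next} together with $\nextpresheaf_a(x_a)=\restrict{x_a}{\pred(a)}=x_{\pred(a)}$, one computes that, componentwise, $(f\circ\next\circ h)_a = \hat f_a(x_{\pred(a)})$. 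Thus $f\circ\next\circ h = h$ holds iff $\hat f_a(x_{\pred(a)}) = x_a$ for every $a\in A$ --- that is, iff the family $(x_a)_a$ satisfies condition~(2) of Lemma~\ref{lem:fix} at each $a$.

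Now apply Lemma~\ref{lem:fix} to the sheaf $\PA$ and the presheaf map $\hat f$: for each $a\in A$ it produces a unique family $(x_b)_{b\leq a}$ satisfying compatibility~(1) and the fixed-point condition~(2). By the uniqueness clause, for $a'\leq a$ the two families agree on $\{b\mid b\leq a'\}$; hence taking $x_a$ to be the $a$-component of the family produced at $a$ yields a single compatible family $(x_a)_{a\in A}$ with $\hat f_a(x_{\pred(a)})=x_a$ for all $a$. By the previous paragraph this is exactly a global element $h\co 1\to\PA$ with $f\circ\next\circ h = h$, and uniqueness of such an $h$ again follows from the uniqueness in Lemma~\ref{lem:fix}. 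Therefore $\Sh{A}$ is a model of guarded recursive terms.

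Finally, $\Sh{A}$ being locally cartesian closed, Theorem~\ref{theorem:slices} yields that every slice of $\Sh{A}$ is again a model of guarded recursive terms, with later operator $\laterSlice{I}$ given by the composite \eqref{eq:later:composite}; by Propositions~\ref{prop:idl:compl} and~\ref{prop:later:psh:trees} this agrees in the relevant cases with the concrete description of Section~\ref{sec:dtt:later:slice}. I expect the only genuine subtlety to be the bookkeeping in the second and third paragraphs --- passing between the pointwise formulation of Lemma~\ref{lem:fix} (one family per $a$) and a genuine global section, and unwinding $\next=\eta\circ\nextpresheaf$ to reduce the equation $f\circ\next\circ h = h$ to condition~(2); the substantive content, namely the well-founded induction, has already been discharged in Lemma~\ref{lem:fix}.
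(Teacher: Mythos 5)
Your proof is correct and follows essentially the same route as the paper's: reduce to the unsliced case via Theorem~\ref{theorem:slices}, then use the correspondence $f \mapsto \hat f = f\circ\eta$ together with the identity $\next=\eta\circ\nextpresheaf$ to translate the fixed-point equation into condition~(2) of Lemma~\ref{lem:fix}, whose existence and uniqueness clauses give the result. The only (immaterial) difference is that the paper invokes Lemma~\ref{lem:fix} once at the top element $\lub A$ to get the whole family directly, whereas you apply it at each $a$ and glue by uniqueness.
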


\begin{proof}
  By Theorem~\ref{theorem:slices} it suffices to show that $\Sh{A}$
  is a model of guarded recursive terms, and for this it remains to show 
  that if $f \co \later \PA \to \PA$, then there
  exists a unique $\fix (f) \co 1 \to \PA$ such that $f \circ \next
  \circ \fix (f) = \fix (f)$

  Consider first $f\circ \eta \co \predinv\PA \to \PA$.  The family
  $(x_b)_{b \leq \lub A}$ given by Lemma~\ref{lem:fix} defines a map
  $\fix(f) \co 1 \to \PA$: the naturality condition needed to have a map in $\Psh{A}$
  is (\ref{item:compatibility}) and
  (\ref{item:fp}) states
  \begin{equation}
    f\circ \eta \circ \nextpresheaf \circ \fix (f) = \fix (f) \label{eq:presheaf:fix:pt}
  \end{equation}
  which by (\ref{eq:next}) is equivalent to $f \circ
  \next \circ \fix (f) = \fix (f)$.
  In fact we see that to give a map $\fix (f) \co 1 \to \PA$
  satisfying the (\ref{eq:presheaf:fix:pt}) is the same as giving a
  family $(x_b)_{b \leq \lub A}$ and so the uniqueness statement of
  Lemma~\ref{lem:fix} shows that $\fix (f) \co 1 \to \PA$ is the
  unique such map.
\end{proof}

\subsection{Recursive types in sheaf models}

Having proved that $\Sh{A}$ models guarded recursive terms, we now
show that it models guarded recursive dependent types. We first prove
Theorem~\ref{thm:Sh:model:g:rec:types:enr} and then show how
Theorem~\ref{thm:Sh:model:g:rec:types} follows from it. So in the
following, let $\ccat$ be a complete $\Sh{A}$-enriched
category. 

In the technical development it is simpler to work with presheaves and
$\predinv$ than it is to work with sheaves and $\later$, so we first
reformulate the definition of local contractiveness in terms of
$\predinv$. Note that we can define $\ecatpush{\predinv}{\ccat}$
in the same way as we defined $\ecatpush{\later}{\ccat}$, using
$\predinv$ rather than $\later$. This gives us an $\Psh{A}$-enriched
category rather than a $\Sh{A}$-enriched one. Any $\Sh{A}$-enriched
category is also $\Psh{A}$-enriched and so in particular, $\ccat$ and
$\ecatpush{\later}{\ccat}$ are $\Psh{A}$-enriched.

There is a commutative diagram of $\Psh{A}$-enriched functors
\begin{diagram}
\ccat & \rTo^{\nextpresheaf} & \ecatpush{\predinv}{\ccat} \\
& \rdTo^{\next} & \dTo_{\eta} \\
&& \ecatpush{\later}{\ccat}
\end{diagram}
and the following lemma tells us that we can proceed to 
work with $\predinv$ and presheaves rather than $\later$ and sheaves.

\begin{lem} \label{lem:local:contractive:alt:form} An enriched
  functor $F\co \ccat \to \ccat$ is \emph{locally contractive} iff
  there exist a $\Psh{A}$-enriched functor $H \co
  \ecatpush{\predinv}{\ccat} \to \ccat$ such that $H \circ
  \nextpresheaf = F$.
\end{lem}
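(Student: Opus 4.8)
The approach is to read the lemma off the commutative triangle of $\Psh{A}$-enriched functors $\next = \eta\circ\nextpresheaf$ displayed just above, together with the universal property of the associated-sheaf functor $\sheaf$. The point to keep in mind throughout is that the hom-objects of $\ccat$ are sheaves (as $\ccat$ is $\Sh{A}$-enriched) and that the hom-objects of $\ecatpush{\later}{\ccat}$ are $\later\Hom{\ccat}{X}{Y}=\sheaf(\predinv\Hom{\ccat}{X}{Y})$, hence also sheaves; so $\ecatpush{\later}{\ccat}$ is itself an $\Sh{A}$-enriched category, and any functor into $\ccat$ whose hom-components are sheaf morphisms and whose coherence diagrams live among sheaves is an $\Sh{A}$-enriched functor.

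For the ``only if'' direction: if $F$ is locally contractive then, by Definition~\ref{def:gen:locally:contractive}, $F = G\circ\next$ for some $\Sh{A}$-enriched functor $G\co\ecatpush{\later}{\ccat}\to\ccat$. Viewing $G$ as $\Psh{A}$-enriched and precomposing with the $\Psh{A}$-enriched functor $\eta\co\ecatpush{\predinv}{\ccat}\to\ecatpush{\later}{\ccat}$ from the triangle, set $H := G\circ\eta$. Then $H$ is $\Psh{A}$-enriched and, by \eqref{eq:next}, $H\circ\nextpresheaf = G\circ(\eta\circ\nextpresheaf) = G\circ\next = F$.

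For the ``if'' direction: given $H\co\ecatpush{\predinv}{\ccat}\to\ccat$ $\Psh{A}$-enriched with $H\circ\nextpresheaf = F$, I would take $G$ to have the same action on objects as $H$, and for each $X,Y$ obtain the hom-component $G_{X,Y}\co\later\Hom{\ccat}{X}{Y}\to\Hom{\ccat}{HX}{HY}$ as the factorization of the presheaf morphism $H_{X,Y}\co\predinv\Hom{\ccat}{X}{Y}\to\Hom{\ccat}{HX}{HY}$ through the sheafification unit, which exists and is unique because $\Hom{\ccat}{HX}{HY}$ is a sheaf; thus $G_{X,Y}\circ\eta = H_{X,Y}$, whence $G_{X,Y}\circ\next = G_{X,Y}\circ\eta\circ\nextpresheaf = H_{X,Y}\circ\nextpresheaf = F_{X,Y}$ by \eqref{eq:next}, i.e.\ $G\circ\next = F$. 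It then remains to check that $G$ is an enriched functor. Preservation of identities is immediate from that of $H$ since $\next = \eta\circ\nextpresheaf$. For preservation of composition I would observe that $G_{X,Z}$ composed with the composition morphism of $\ecatpush{\later}{\ccat}$, and the composition morphism of $\ccat$ composed with $G_{X,Y}\times G_{Y,Z}$, are both morphisms out of $\later\Hom{\ccat}{X}{Y}\times\later\Hom{\ccat}{Y}{Z}$; since $\sheaf$ and $\predinv$ preserve finite products, this object is the sheafification of $\predinv\Hom{\ccat}{X}{Y}\times\predinv\Hom{\ccat}{Y}{Z}$ with unit $\eta\times\eta$, so the two morphisms coincide as soon as they agree after precomposition with $\eta\times\eta$; and there the equation collapses, using that $\eta$ commutes with composition (it is the enriched functor $\ecatpush{\predinv}{\ccat}\to\ecatpush{\later}{\ccat}$) and that $H$ is an enriched functor, to $H$'s own composition law. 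I expect this last bookkeeping --- threading the composition morphisms of $\ecatpush{\predinv}{\ccat}$ and $\ecatpush{\later}{\ccat}$ through the product-preservation of $\sheaf$ --- to be the only genuinely fiddly point; everything else is a direct application of the adjunction $\ladj{\sheaf}{I}$.
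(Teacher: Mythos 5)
Your proposal is correct and follows essentially the same route as the paper's (very terse) proof: one direction takes $H = G\circ\eta$, and the other recovers $G$ by transposing each hom-action $H_{X,Y}$ across the adjunction $\ladj{\sheaf}{I}$ (the paper phrases this as ``applying $\sheaf$ to each hom-action of $H$'', which is the same as your factorization through the unit $\eta$, since $\Hom{\ccat}{HX}{HY}$ is a sheaf). Your additional verification of the enriched-functoriality of $G$ via precomposition with $\eta\times\eta$ and product-preservation of $\sheaf$ just spells out what the paper leaves implicit.
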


\begin{proof}
  If $F$ is locally contractive and $G$ is a witness of this, we can
  construct $H$ by precomposing $G$ with $\eta$.  On the other hand,
  given $H$ as above we can construct $G$ by applying $\sheaf$ to
  each hom-action of $H$.
\end{proof}

Now suppose $F \co \ccat \to \ccat$ is locally contractive. We will
construct a fixed point for $F$ by a sufficiently large induction. To determine the height of the induction we
start by assigning to each element $a$ of $A$ an ordinal by well-founded
induction on $a$. We use ordinals (rather than just the elements of $A$)
to get a linear diagram to take limits over when constructing the fixed
point for $F$.

\begin{defi} \label{def:ord:map} Define for each $a \in A$ the
  ordinal $\ordMap{a} = \sup \{\ordMap{\bel} + 1 \mid \bel < a \meet
  \bel \in \base\}$.
\end{defi}

\begin{lem} \label{lem:ordMap:order:pres}
  Definition~\ref{def:ord:map} defines an order preserving map $\ordMap{-} \co A \to
  \mainOrd$. If $\bel < a$ and $\bel \in \base$ then $\ordMap{\bel} <
  \ordMap{a}$.
\end{lem}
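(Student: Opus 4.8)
The statement is essentially immediate from Definition~\ref{def:ord:map}, so the plan is short. The only point of substance is well-definedness of $\ordMap{-}$, and this is where I would start: because $\base$ is well-founded, the clause $\ordMap{a} = \sup\{\ordMap{\bel}+1 \mid \bel < a \meet \bel \in \base\}$ restricted to $a \in \base$ is a legitimate recursion on $(\base,<)$ — its right-hand side mentions only values $\ordMap{\bel'}$ with $\bel' \in \base$ and $\bel' < \bel$ — so it determines $\ordMap{\bel}$ for all $\bel \in \base$; and then, for an arbitrary $a \in A$, $\ordMap{a}$ is read off directly from the already-defined values on $\base$. This is exactly the two-stage structure underlying Lemma~\ref{lem:induction}. (If one prefers, one can state a recursion-principle companion to Lemma~\ref{lem:induction} and cite it.)

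Next I would prove the strict inequality, which is a one-liner: if $\bel \in \base$ and $\bel < a$, then $\ordMap{\bel}+1$ is literally one of the ordinals over which the supremum defining $\ordMap{a}$ is taken, so $\ordMap{\bel}+1 \leq \ordMap{a}$, hence $\ordMap{\bel} < \ordMap{a}$.

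For order preservation, suppose $a \leq a'$. The key observation is that $\bel \in \base$ and $\bel < a$ imply $\bel < a'$: indeed $\bel \leq a \leq a'$, and $\bel = a'$ would give $a' \leq a \leq a'$, forcing $a = a' = \bel$ and contradicting $\bel < a$. Hence the index set of the supremum defining $\ordMap{a}$ is contained in that of $\ordMap{a'}$, so $\ordMap{a} \leq \ordMap{a'}$. Finally, to see that $\ordMap{-}$ really takes values in $\mainOrd$, note that $A$ is a complete Heyting algebra, so $\lub A$ is its top element and $a \leq \lub A$ for every $a$; order preservation then yields $\ordMap{a} \leq \ordMap{\lub A} = \mainOrd$. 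I do not expect any genuine obstacle: the only step that requires care is the legitimacy of the recursive definition, which is precisely what the well-foundedness hypothesis on $\base$ delivers, and the two inequalities are immediate from reading $\ordMap{a}$ as a supremum.
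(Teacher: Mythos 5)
Your proof is correct. The paper states Lemma~\ref{lem:ordMap:order:pres} without proof, and your argument supplies exactly the intended routine verification: well-definedness via recursion on the well-founded basis followed by direct extension to all of $A$ (the same two-stage pattern as Lemma~\ref{lem:induction}), the strict inequality because $\ordMap{\bel}+1$ is one of the ordinals in the supremum defining $\ordMap{a}$, and monotonicity because $\bel<a\leq a'$ with $\bel\in\base$ forces $\bel<a'$, so the index sets of the two suprema are nested.
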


We shall use $\pred \co \mainOrd \to \mainOrd$ defined as
$\pred(\ordA) = \lub \{ \ordB \mid \ordB < \ordA \}$.

In the following we distinguish notationally between ordinals and
elements of $A$ by using Greek letters for the former and latin
letters for the latter.

Next we generalise the notion of $n$-isomorphism of
Lemma~\ref{lem:lcf:n:iso}. Recall that a morphism $f \co X \to
Y$ in $\ccat$ is the same as a morphism $1 \to \Hom{\ccat}{X}{Y}$ in
$\Sh{A}$, which is the same as a family $(f_{a})_{a \in A}$ with
$f_{a} \in \Hom{\ccat}{X}{Y}_{a}$ such that $\restrict{f_{a}}{b} =
f_{b}$ for all $a$ and $b\leq a$. We say that $f_{a}$ is an
isomorphism if there exists $g_{a} \in \Hom{\ccat}{X}{Y}_{a}$ such
that $\comp_{a}(f_{a}, g_{a}) = \id_{a}$ and $\comp_{a}(g_{a}, f_{a})
= \id_{a}$. In the following we shall simply write $f_{a} \circ g_{a}$
for $\comp_{a}(g_{a}, f_{a})$.

\begin{defi}
  Let $f \co \PA \to \PB$ be a morphism in $\ccat$, let $a \in A$ and
  let $\ordA$ be an ordinal. We say that $f$ is an
  \emph{$a$-isomorphism} if for all $b \leq a$ the component $f_{b}$
  is an isomorphism. We say that $f$ is an \emph{$\ordA$-isomorphism} if
  it is a $b$-isomorphism, for all $b$ such that $\ordMap{b} \leq
  \ordA$.
\end{defi}

\begin{lem} \label{lem:b:iso} Let $F \co \ccat \to \ccat$ be locally
  contractive, and suppose $f \co \PA \to \PB$ is a $b$-isomorphism
  for all $b < a$. Then $Ff$ is an $a$-isomorphism. As a consequence,
  $Ff$ is an $\ordA$-isomorphism if $f$ is a $\ordB$-isomorphism, for
  all $\ordB < \ordA$, or, equivalently, if $f$ is a $\pred(\ordA)$
  isomorphism.
\end{lem}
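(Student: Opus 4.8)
The plan is to reduce the statement to the concrete analysis of local contractiveness via the predecessor map $\pred$ on $A$. By Lemma~\ref{lem:local:contractive:alt:form}, since $F$ is locally contractive there is a $\Psh{A}$-enriched functor $H \co \ecatpush{\predinv}{\ccat} \to \ccat$ with $H \circ \nextpresheaf = F$. Unwinding the enrichment, the action of $F$ on hom-objects is the composite $\Hom{\ccat}{\PA}{\PB} \xrightarrow{\nextpresheaf} \predinv\Hom{\ccat}{\PA}{\PB} \xrightarrow{H} \Hom{\ccat}{F\PA}{F\PB}$. The key point is that the first map, at stage $a$, is the restriction $\Hom{\ccat}{\PA}{\PB}(a) \to \Hom{\ccat}{\PA}{\PB}(\pred(a))$; so $(Ff)_a$ depends only on the family $(f_b)_{b \le \pred(a)}$, i.e., on $f$ below $\pred(a)$.

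First I would fix $a \in A$ and assume $f$ is a $b$-isomorphism for all $b < a$; I must show $(Ff)_b$ is an isomorphism for every $b \le a$. Write $(Ff)_b = H_b\bigl(\nextpresheaf_b(f)\bigr) = H_b\bigl((f_c)_{c \le \pred(b)}\bigr)$. Since $\pred(b) = \lub\{\bel \in \base \mid \bel < b\} \le b \le a$, every component $f_c$ occurring here has $c \le \pred(b) < a$ unless $\pred(b) = b$; in either case each such $f_c$ is an isomorphism (when $c < a$ by hypothesis, and when $c = \pred(b) = b = a$ we use that $f$ being a $b$-iso for all $b<a$ still forces $(Ff)_a$ to depend only on $f$ strictly below $a$, because $\pred(a) < a$ — the only delicate case). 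Now, $H$ is an enriched functor, so it preserves the identity and composition morphisms of the hom-objects; hence if $g_c$ witnesses that $f_c$ is an isomorphism for all relevant $c$, then $H_b$ applied to the inverse family produces a two-sided inverse of $(Ff)_b$. Concretely, $\nextpresheaf$ sends the inverse family $(g_c)$ to an element of $\predinv\Hom{\ccat}{\PB}{\PA}(b)$, and functoriality of $H$ gives $(Ff)_b \circ H_b((g_c)_c) = H_b((f_c \circ g_c)_c) = H_b((\id_c)_c) = \id_{(F\PB)_b}$, and symmetrically on the other side. This yields that $Ff$ is an $a$-isomorphism.

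For the ``as a consequence'' clause, I would invoke Lemma~\ref{lem:ordMap:order:pres}: if $f$ is a $\ordB$-isomorphism for all $\ordB < \ordA$, then for any $b$ with $\ordMap{b} < \ordA$, $f$ is a $b$-isomorphism; and if $\ordMap{b} = \ordA$ then for every $\bel \in \base$ with $\bel < b$ we have $\ordMap{\bel} < \ordMap{b} = \ordA$, so $f$ is a $\bel$-isomorphism for all $\bel < b$ in the basis, hence (by the basis-induction argument, Lemma~\ref{lem:induction}) a $b'$-isomorphism for all $b' < b$; the first part then gives that $Ff$ is a $b$-isomorphism. Ranging over all such $b$ shows $Ff$ is an $\ordA$-isomorphism. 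The equivalence with being a $\pred(\ordA)$-isomorphism is immediate from $\pred(\ordA) = \lub\{\ordB \mid \ordB < \ordA\}$.

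The main obstacle I anticipate is the boundary case $\pred(b) = b$ (equivalently, $b$ is a limit in the $\pred$-sense), where $(Ff)_b$ genuinely sees $f_b$ itself rather than only strictly-smaller components. Here one must be careful: when $b = a$ we still need $(Ff)_a$ to be an isomorphism even though $f_a$ need not be. The resolution is that whenever $\pred(a) = a$, the sheaf/amalgamation condition forces $\Hom{\ccat}{\PA}{\PB}(a)$ to be determined by its values at basis elements $\bel < a$ (this is exactly the kind of well-founded reasoning used in Lemma~\ref{lem:fix}), so $(Ff)_a = H_a((f_{\bel})_{\bel < a, \bel \in \base})$ after all, and each such $f_{\bel}$ is an isomorphism by hypothesis. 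I would spell out this amalgamation step carefully, as it is the only place the argument is not a routine diagram chase.
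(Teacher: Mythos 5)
Correct, and essentially the paper's own argument: you invoke Lemma~\ref{lem:local:contractive:alt:form} to write $(Ff)_b = H_b(f_{\pred(b)})$, use the functoriality equations for $H$ to turn an inverse of $f_{\pred(b)}$ into an inverse of $(Ff)_b$, obtain the inverse at stage $\pred(a)$ by sheaf-amalgamating the inverses at stages below $a$, and deduce the ordinal-indexed statement by reducing to basis elements via Lemma~\ref{lem:ordMap:order:pres} and the sheaf condition --- exactly the steps of the paper's proof. The only blemish is the mid-proof parenthetical asserting $\pred(a) < a$ in the delicate case (it is precisely the case $\pred(a) = a$), but your final paragraph identifies and resolves that case correctly by amalgamation, just as the paper does.
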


\begin{proof}
Formulating the assumption of local contractiveness using the equivalent condition of Lemma~\ref{lem:local:contractive:alt:form} we get maps $H_{\PA, \PB} \co \predinv\Hom{\ccat}{\PA}{\PB} \to \Hom{\ccat}{F\PA}{F\PB}$ such that 
\[(Ff)_{b} = H_{b}(f_{\pred(b)})
\]
The functoriality conditions on $H$ are commutative diagrams in $\Psh{A}$. These amount to the following equations required to hold for each $b$ in $A$
\begin{align}
H_{b}(f_{\pred(b)} \circ g_{\pred(b)}) & = H_{b}(f_{\pred (b)}) \circ H_{b}(g_{\pred (b)}) \label{eq:pres:comp} \\
H_{b}(\id_{\pred(b)}) & = \id_{b} \label{eq:pres:id} 
\end{align}
Now, suppose $f \co \PA \to \PB$ is a $b$-isomorphism, for all $b <
a$. Define $\inv{f}_{\pred(a)}$ to be the unique amalgamation of $(\inv{f}_{b})_{{b<a}}$. Then $\inv{f_{\pred(a)}}$
is an inverse to $f_{\pred(a)}$: to show $\inv{f}_{\pred(a)} \circ f_{\pred(a)} = \id_{\pred(a)}$ it suffices to show $\restrict{(\inv{f}_{\pred(a)} \circ f_{\pred(a)})}{b} = \id_{b}$ for all $b<a$, which is clear since composition commutes with restriction.

So $f_{b}$ has an inverse
$\inv{f}_{b}$ for all $b \leq \pred(a)$, in particular $f_{\pred(b)}$
has an inverse, for all $b \leq a$. Equations (\ref{eq:pres:comp}) and
(\ref{eq:pres:id}) then say that $H_{b}(\inv {f}_{\pred(b)})$ is an
inverse of $F(f)_{b}$, for all $b \leq a$.

For the last statement, suppose $f$ is a $\ordB$-isomorphism for all
$\ordB < \ordA$, and suppose $\ordMap{a} \leq \ordA$. We must show that
$Ff$ is an $a$-isomorphism. By what we have just proved, it suffices
to show that $f$ is a $b$-isomorphism, for all $b < a$, and for this,
by the sheaf property, it suffices to show that $f$ is a
$\bel$-isomorphism, for all $\bel < a$, $\bel \in \base$. But this is
true because $\ordMap{\bel} < \ordMap{a} \leq \ordA$. 
\end{proof}

\begin{rem}
  The strengthening of the definition of locally contractive functor
  compared to the definition used in the conference version of this
  paper~\cite{BirkedalL:sgdt} was introduced in order to make
  Lemma~\ref{lem:b:iso} true, also with the weaker notion of
  $a$-isomorphism used here. Without the requirement of functoriality
  of $H$, equation (\ref{eq:pres:comp}) only holds for families
  $(f_{b})_{b \leq \pred (a)}$, $(g_{b})_{b \leq \pred (a)}$ in the
  image of $\next$, i.e., families that extend to families $(f_{b})_{b
    \leq a}$, $(g_{b})_{b \leq a}$
\end{rem}
 
We construct, by well-founded induction, for every $\ordA \leq
\ordMap{\lub A}$ a $\ccat$-object $\sol{\ordA}$ and maps
\begin{displaymath}
  \solIso{\ordA}  \co F(\sol{\ordA}) \to \sol{\ordA} \qquad \text{and}\qquad
  \solProj{\ordA}{\ordB} \co \sol{\ordA} \to \sol{\ordB}, \quad\text{for}\; \ordB < \ordA
\end{displaymath}
by 
\begin{align*}
\sol{\ordA} & = \lim_{\ordB < \ordA} F(\sol{\ordB}) 
\end{align*}
and 
\begin{diagram}
\solProj{\ordA}{\ordB}: \, &  \lim_{\ordB' < \ordA} F(\sol{\ordB'})  & \rTo^{\pi_{\ordB}} & F(\sol{\ordB})  & \rTo^{\solIso{\ordB}} & \sol{\ordB} \\
\solIso{\ordA} : \, &  F(\lim_{\ordB < \ordA} F(\sol{\ordB})) & \rTo^{F(\lim_{\ordB < \ordA} \solIso{\ordB})} & F(\lim_{\ordB < \ordA} (\sol{\ordB})) & \rTo & \lim_{\ordB < \ordA} F(\sol{\ordB}) 
\end{diagram}
Precisely, each $\ordA$ is an ordered set and so can be considered a category. We define 
$\sol{\ordA}$ as the limit of a diagram indexed over
$\ordA$ mapping an inequality $\ordC \leq \ordB < \ordA$ to
$F(\solProj{\ordB}{\ordC}) \co F(\sol{\ordB}) \to F(\sol{\ordC})$.

\begin{thm} \label{thm:gen:sol} Each $\solProj{\ordA}{\ordB}$ is a
  $\ordB$-isomorphism and each $\solIso{\ordA}$ is an
  $\ordA$-isomorphism. In particular, $\solIso{\ordMap{\lub A}} \co
  F(\sol{\ordMap{\lub A}}) \to \sol{\ordMap{\lub A}}$ is an
  isomorphism.
\end{thm}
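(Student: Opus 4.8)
The plan is to prove both assertions by transfinite induction on the ordinal $\ordA\le\ordMap{\lub A}$, carried out together with the recursion that defines $\sol\ordA$, $\solIso\ordA$ and the $\solProj\ordA\ordB$ (for $\ordB<\ordA$); the induction hypothesis at $\ordA$ must therefore also carry the coherence equations relating the projections to one another and to the $\solIso$'s that make the defining diagrams legitimate, and I would dispatch these routine diagram chases first. The engine is Lemma~\ref{lem:b:iso}, which here takes over the role played by Lemma~\ref{lem:lcf:n:iso} in the $\SItopos$-case: its instance with ``$\ordA$'' equal to $\ordB+1$ --- so that $\pred(\ordB+1)=\ordB$ --- says that $F$ sends $\ordB$-isomorphisms to $(\ordB+1)$-isomorphisms, and hence in particular to $\ordB$-isomorphisms. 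Two of the three cases of the induction are immediate. If $\ordA=0$ then $\sol 0=1$ and $\solIso 0$ is the canonical map $F(1)\to 1$; after harmlessly adjoining the bottom $\bot$ of $A$ to the basis, $b=\bot$ is the only element with $\ordMap b=0$, and at $\bot$ every hom-sheaf is a singleton, so $\solIso 0$ is trivially a $0$-isomorphism. If $\ordA=\ordD+1$ is a successor, the indexing poset $\{\ordB\mid\ordB<\ordD+1\}$ has $\ordD$ as greatest element, so the defining limits collapse: $\sol\ordA=F(\sol\ordD)$, $\solIso\ordA=F(\solIso\ordD)$, $\solProj\ordA\ordD=\solIso\ordD$, and $\solProj\ordA\ordB=\solIso\ordB\circ F(\solProj\ordD\ordB)$ for $\ordB<\ordD$; the three claims at $\ordA$ then follow from the induction hypothesis and Lemma~\ref{lem:b:iso}.

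The substance is the limit ordinal case, so let $\ordA$ be a limit ordinal and fix $\ordB<\ordA$. By the induction hypothesis, for $\ordB\le\ordD\le\ordC<\ordA$ the map $\solProj\ordC\ordD$ is a $\ordD$-isomorphism, so $F(\solProj\ordC\ordD)$ is one too (Lemma~\ref{lem:b:iso}) and a fortiori a $\ordB$-isomorphism; these $F(\solProj\ordC\ordD)$ are exactly the transition maps of the subdiagram of $\ordC\mapsto F(\sol\ordC)$ on $\{\ordC\mid\ordB\le\ordC<\ordA\}$. This subposet is cofinal in the linearly ordered $\{\ordC\mid\ordC<\ordA\}$, so $\sol\ordA$ is equally the limit of that subdiagram; passing from that limit to the components indexed by $\{b\in A\mid\ordMap b\le\ordB\}$ --- a restriction which, in a $\Sh A$-enriched category, commutes with the limits involved and which, by the sheaf condition together with the well-founded basis, detects precisely the $\ordB$-isomorphisms --- leaves the limit of a connected diagram whose arrows are all isomorphisms, so its leg $\pi_\ordB\co\sol\ordA\to F(\sol\ordB)$ is a $\ordB$-isomorphism. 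Therefore $\solProj\ordA\ordB=\solIso\ordB\circ\pi_\ordB$ is a composite of $\ordB$-isomorphisms, establishing the first claim at $\ordA$. For the second, the construction yields the identity $\pi_\ordB\circ\solIso\ordA=F(\solProj\ordA\ordB)$; since $\solProj\ordA\ordB$ has just been shown to be a $\ordB$-isomorphism, so is $F(\solProj\ordA\ordB)$ (Lemma~\ref{lem:b:iso}), and as $\pi_\ordB$ is a $\ordB$-isomorphism we conclude that $\solIso\ordA$ is a $\ordB$-isomorphism. As this holds for every $\ordB<\ordA$, $\solIso\ordA$ is invertible at each component $b$ with $\ordMap b<\ordA$; a further routine well-founded argument, resting on the sheaf condition, upgrades this to invertibility at the components $b$ with $\ordMap b=\ordA$, so $\solIso\ordA$ is an $\ordA$-isomorphism.

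Finally, for the ``in particular'': $\ordMap{-}$ is order preserving (Lemma~\ref{lem:ordMap:order:pres}) and $b\le\lub A$ for every $b\in A$, so $\ordMap b\le\ordMap{\lub A}$ for all $b$; thus being an $\ordMap{\lub A}$-isomorphism amounts to being invertible at every component, and componentwise inverses glue (by uniqueness of inverses and functoriality of restriction), so $\solIso{\ordMap{\lub A}}\co F(\sol{\ordMap{\lub A}})\to\sol{\ordMap{\lub A}}$ is an isomorphism in $\ccat$. The step I expect to be the main obstacle is the ``cofinality plus truncation'' argument for $\pi_\ordB$: one must make precise, in the merely $\Sh A$-enriched setting, that passing to the components below a fixed ordinal level commutes with the limits used to build the $\sol\ordA$, and that together with the sheaf condition this both reflects and detects the $\ordB$-isomorphism property --- it is here, and not in the enrichment or in the subsequent appeal to Freyd's algebraic compactness, that the hypotheses of a well-founded basis and of ordinary completeness of $\ccat$ are genuinely used. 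The coherence bookkeeping that legitimates the limit diagrams, and the well-founded argument at $\ordMap$-value exactly $\ordA$, are routine but not entirely free.
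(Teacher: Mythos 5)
Your overall shape (transfinite induction driven by Lemma~\ref{lem:b:iso}, cofinal truncation of the limit diagrams, and the identity $\pi_{\ordB}\circ\solIso{\ordA}=F(\solProj{\ordA}{\ordB})$) matches the paper, and your successor case is correct and in fact cleaner than the paper's uniform treatment. But there is a genuine gap at the heart of the limit case: the upgrade of $\solIso{\ordA}$ from ``a $\ordB$-isomorphism for every $\ordB<\ordA$'' to ``an $\ordA$-isomorphism''. You dismiss this as ``a further routine well-founded argument, resting on the sheaf condition'', but that argument only works at elements $b$ with $\ordMap{b}=\ordA$ that are covered by the strictly smaller elements, i.e.\ with $\pred(b)=b$. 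In general such $b$ need not be covered from below: take $\base=\omega\cup\{\top\}$ and $A=\Idl{\base}$, so that $b=\downset\top$ has $\ordMap{b}=\omega$ (a limit) while $\pred(b)$ is the ideal of finite elements, strictly below $b$; then no gluing along a cover by smaller elements is available, and invertibility of $(\solIso{\omega})_{b}$ simply does not follow from invertibility below. (It does work for $\SItopos$, i.e.\ $A=\omegabar$, because there $\pred(\infty)=\infty$; but the theorem is stated for arbitrary well-founded bases.) This is exactly the point for which the paper strengthens the induction hypothesis: it carries, besides the statements of the theorem, the auxiliary claims that the projections $\pi_{\ordB}$ out of $\lim_{\ordC<\ordA}F(\sol{\ordC})$ are $\ordD$-isomorphisms for every $\ordD$ with $\pred\ordD\le\ordB$ (note $\ordD$ may exceed $\ordB$ by one level, thanks to local contractiveness). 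It then factors $\solIso{\ordA}$ as the comparison map $F(\lim\sol{\ordC})\to\lim F(\sol{\ordC})$ composed with $F(\lim_{\ordC<\ordA}\solIso{\ordC})$: the second factor reaches level $\ordA$ by Lemma~\ref{lem:b:iso} (invertibility of $(Ff)_{b}$ only needs invertibility of $f$ at $\pred(b)$, which is reached from below), and the first is handled by the strengthened hypothesis together with Lemma~\ref{lem:lub}. None of this mechanism --- the $\pred\ordD\le\ordB$ bookkeeping, the factorisation through $F$, Lemma~\ref{lem:lub} --- appears in your proposal, and without it the claim at components of rank exactly $\ordA$ is unproved.

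Two smaller points. First, your base case rests on adjoining $\bot$ to the basis so that $\bot$ becomes the only element of rank $0$; this changes $\ordMap{-}$ and hence the meaning of ``$\ordA$-isomorphism'' and the length of the chain, so you are proving a variant of the stated theorem (harmless for the fixed-point corollary, but it should be said, and without the modification your claim that rank-$0$ components are trivial is false, since minimal basis elements also have rank $0$). Second, the step you flag as the main obstacle --- that a projection out of a limit of a diagram of $\ordB$-isomorphisms is again a $\ordB$-isomorphism, in a merely $\Sh{A}$-enriched complete $\ccat$ --- is indeed left unproved in your text; the paper disposes of it with the phrase ``limits are computed pointwise'', so here you are no worse off than the paper, but in a self-contained proof it still needs an argument.
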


Before we give the proof we record the following simple lemma.

\begin{lem} \label{lem:smaller:limit} Let $\ordA > \ordB$ and let
  $(Y_{\ordC})_{\ordC < \ordA}$ be a diagram over $\ordA$ considered a
  category. The morphism $\lim_{\ordC < \ordA} Y_{\ordC} \to
  \lim_{\ordB \leq \ordC < \ordA} Y_{\ordC}$ given by diagram
  inclusion is an isomorphism.
\end{lem}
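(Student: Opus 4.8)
The plan is to reduce the claim to a bijection between cones, implemented by the comparison morphism, and then invoke Yoneda. Write $y_{\ordD}^{\ordC}\co Y_{\ordD}\to Y_{\ordC}$ for the connecting morphism of the diagram attached to an inequality $\ordC\le\ordD<\ordA$ (so $y_{\ordC}^{\ordC}=\id$ and $y_{\ordD}^{\ordC}\circ y_{\ordE}^{\ordD}=y_{\ordE}^{\ordC}$ for $\ordC\le\ordD\le\ordE$). First I would record what a cone is: a cone on $(Y_{\ordC})_{\ordC<\ordA}$ with apex $L$ is a family $(p_{\ordC}\co L\to Y_{\ordC})_{\ordC<\ordA}$ with $y_{\ordD}^{\ordC}\circ p_{\ordD}=p_{\ordC}$ whenever $\ordC\le\ordD<\ordA$, and a cone on the restricted diagram is a family $(q_{\ordC})_{\ordB\le\ordC<\ordA}$ satisfying the analogous compatibility. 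Restriction of cones sends $(p_{\ordC})_{\ordC<\ordA}$ to $(p_{\ordC})_{\ordB\le\ordC<\ordA}$ and is visibly natural in $L$.

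Next I would build the inverse. Given a restricted cone $(q_{\ordC})_{\ordB\le\ordC<\ordA}$, the hypothesis $\ordA>\ordB$ guarantees that $q_{\ordB}$ is one of its components, and I extend by setting $p_{\ordC}:=q_{\ordC}$ for $\ordB\le\ordC<\ordA$ and $p_{\ordC}:=y_{\ordB}^{\ordC}\circ q_{\ordB}$ for $\ordC<\ordB$. Checking that $(p_{\ordC})_{\ordC<\ordA}$ is a cone means verifying $y_{\ordD}^{\ordC}\circ p_{\ordD}=p_{\ordC}$ in the three exhaustive cases determined by the position of $\ordC\le\ordD$ relative to $\ordB$: if $\ordD<\ordB$, the identity $y_{\ordD}^{\ordC}\circ y_{\ordB}^{\ordD}=y_{\ordB}^{\ordC}$ gives it at once; if $\ordB\le\ordC$, it is just the compatibility of $q$; and if $\ordC<\ordB\le\ordD$, one combines $y_{\ordD}^{\ordC}=y_{\ordB}^{\ordC}\circ y_{\ordD}^{\ordB}$ with $y_{\ordD}^{\ordB}\circ q_{\ordD}=q_{\ordB}$. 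Each of these is a one-line computation.

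Finally, restriction after extension is the identity (the extension leaves the components at indices $\ge\ordB$ unchanged), and extension after restriction is the identity since the added components below $\ordB$ are forced by the cone condition; both composites are natural in $L$. Hence the presheaves on $\ccat$ sending $L$ to the set of cones on $(Y_{\ordC})_{\ordC<\ordA}$, resp.\ on $(Y_{\ordC})_{\ordB\le\ordC<\ordA}$, are naturally isomorphic, and the natural isomorphism is precisely composition with the comparison morphism $\lim_{\ordC<\ordA}Y_{\ordC}\to\lim_{\ordB\le\ordC<\ordA}Y_{\ordC}$; by the Yoneda lemma this morphism is an isomorphism. I do not expect any real obstacle: the only subtlety is that the extension step genuinely uses $\ordA>\ordB$ (if the tail $[\ordB,\ordA)$ were empty the statement would be false, the restricted limit being terminal), so the writeup must make that use visible. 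An alternative route is to observe that the inclusion $[\ordB,\ordA)\hookrightarrow[0,\ordA)$ of ordered sets is (co)final in the sense needed for limits, but the explicit cone bijection above is the most transparent.
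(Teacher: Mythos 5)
Your argument is correct: the tail $[\ordB,\ordA)$ contains its least element $\ordB$ (this is where $\ordB<\ordA$ enters), so restricting cones of the inverse system is a bijection, natural in the apex, with inverse given by composing with the connecting maps $y^{\ordC}_{\ordB}$, and Yoneda then makes the comparison morphism an isomorphism. The paper states this lemma without proof as a "simple lemma," and your cone-bijection (cofinality) argument is exactly the standard one the authors evidently have in mind, with the role of the hypothesis $\ordA>\ordB$ made explicit.
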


\begin{lem} \label{lem:lub}
$\ordA \leq \lub \{\ordD \mid \pred \ordD < \ordA \}$  
\end{lem}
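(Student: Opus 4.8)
The plan is a short direct argument about ordinals, using the concrete description of $\pred$. Since $\pred(\ordD) = \lub\{\ordB \mid \ordB < \ordD\}$, the predecessor map sends every successor ordinal $\ordB + 1$ to $\ordB$ (and fixes $0$ and every limit ordinal). The one observation I would make first is: for every ordinal $\ordB$ with $\ordB < \ordA$ we have $\pred(\ordB + 1) = \ordB < \ordA$, so that $\ordB + 1$ belongs to the set $\{\ordD \mid \pred(\ordD) < \ordA\}$. Note also that $\ordB < \ordA$ entails $\ordB + 1 \leq \ordA$, so all the ordinals involved stay within the ordinal $\mainOrd$ on which $\pred$ is defined.

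Granting this observation, the inequality is immediate: since each $\ordB + 1$ with $\ordB < \ordA$ lies in $\{\ordD \mid \pred(\ordD) < \ordA\}$, we get $\ordB + 1 \leq \lub\{\ordD \mid \pred(\ordD) < \ordA\}$ for all such $\ordB$, and hence
\[
\ordA \;=\; \lub\{\,\ordB + 1 \mid \ordB < \ordA\,\} \;\leq\; \lub\{\,\ordD \mid \pred(\ordD) < \ordA\,\}\, ,
\]
where the first equality is the standard fact that every ordinal is the supremum of the successors of its strict predecessors: this holds for $\ordA = 0$ under the convention $\lub\emptyset = 0$, for a successor $\ordA$ because $\ordA$ itself occurs among the $\ordB + 1$, and for a limit $\ordA$ by the definition of limit ordinal.

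I do not foresee any genuine obstacle; the argument is one line once the observation above is isolated. The only points that need a little care are the degenerate case $\ordA = 0$ (handled by the empty-join convention) and recording that $\ordB < \ordA \leq \mainOrd$ forces $\ordB + 1 \leq \mainOrd$, so that $\pred$ is applicable to it. Everything else is routine ordinal bookkeeping.
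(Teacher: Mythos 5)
Your proof is correct and is essentially the paper's own argument: the paper also witnesses the inequality with successors, taking for each $x \in \ordA$ the ordinal $\ordD = \{\ordB \mid \ordB \leq x\} = x+1$, whose predecessor is $x < \ordA$. Your phrasing via $\ordA = \lub\{\ordB+1 \mid \ordB < \ordA\}$ is just a reformulation of that same element-wise choice, so nothing further is needed.
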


\begin{proof}
Recall that for ordinals $\ordB < \ordD$ is equivalent to $\ordB \in \ordD$, and so $\pred(\ordD) = \bigcup_{\ordB \in \ordD} \ordB$. For the lemma we must show that if $x \in \ordA$ also $x \in \bigcup_{\pred \ordD \in \alpha}  \ordD$, i.e., there exists a $\ordD$ such that $x \in \ordD$ and $(\bigcup_{\ordB \in \ordD} \ordB) \in \alpha$. Take $\ordD = \{ \ordB \mid \ordB \leq x\}$.
\end{proof}

\begin{proofof}{Theorem~\ref{thm:gen:sol}}
  The theorem is proved by induction on $\ordA$, but the induction
  hypothesis must be strengthened with the following two statements.
\begin{enumerate}[(1)]
\item \label{item:proj:basic} For all $\ordB < \ordA$, the
  projection \[\pi_{\ordB} \co \lim_{\ordC < \ordA} \sol{\ordC} \to
  \sol{\ordB}\] is a $\ordB$-isomorphism.
\item \label{item:proj:F} For all $\ordB < \ordA$ and all $\ordD$ such
  that $\pred \ordD \leq \ordB$, the projection
\[\pi_{\ordB} \co \lim_{\ordC < \ordA} F(\sol{\ordC}) \to F(\sol{\ordB})\] 
is a $\ordD$-isomorphism. In particular, each $\pi_{\ordB}$ is a $\ordB$-isomorphism.
\end{enumerate}

We now give the induction steps of the inductive proof, proving each part of the induction hypothesis in turn. 

For (\ref{item:proj:basic}) note first that by Lemma~\ref{lem:smaller:limit} we may replace the limit $\lim_{\ordC < \ordA} \sol{\ordC}$ by $\lim_{\ordB \leq \ordC < \ordA} \sol{\ordC}$. By the induction hypothesis, all morphisms of the form 
$\solProj{\ordC}{\ordC'} \co \sol{\ordC} \to \sol{\ordC'}$ 
for $\ordB \leq \ordC' < \ordC < \ordA$ are $\ordB$-isomorphisms. Therefore the limit $\lim_{\ordB \leq \ordC < \ordA} \sol{\ordC}$ is a limit of a diagram of $\ordB$-isomorphisms. Since limits are computed pointwise, the projections are $\ordB$-isomorphisms. 

For (\ref{item:proj:F}) we reason similarly and conclude by
Lemma~\ref{lem:b:iso} that each $F(\solProj{\ordC}{\ordC'})$ is a
$\ordD$-isomorphism. So in this case the limit $\lim_{\ordB \leq \ordC
  < \ordA} F(\sol{\ordC})$ is a limit of a diagram of
$\ordD$-isomorphisms and each projection $\pi_{\ordB}$ is a $\ordD$-isomorphisms.

Now consider the case of $\solProj{\ordA}{\ordB} = \solIso{\ordB} \circ \pi_\ordB$. By (\ref{item:proj:F}) above and the induction hypothesis, this is a $\ordB$-isomorphism. 

We will now show that $\solIso{\ordA}$ is an $\ordA$-isomorphism. Consider the following commutative diagram
\begin{diagram}
\lim_{\ordC < \ordA} F(\sol{\ordC}) & \rTo^{\lim_{\ordC < \ordA} \solIso{\ordC}} & \lim_{\ordC < \ordA} (\sol{\ordC}) \\
\dTo_{\pi_\ordB} & & \dTo_{\pi_\ordB} \\
F(\sol{\ordB}) & \rTo_{\solIso{\ordB}} & \sol{\ordB}
\end{diagram}
Since (\ref{item:proj:basic}) and (\ref{item:proj:F}) state that both projections $\pi_\ordB$ are $\ordB$-isomorphisms and by induction hypothesis $\solIso{\ordB}$ is a $\ordB$-isomorphism, also $\lim_{\ordC < \ordA} \solIso{\ordC}$ must be a $\ordB$-isomorphism. Since this holds for all $\ordB < \ordA$, by Lemma~\ref{lem:b:iso} also $F(\lim_{\ordB < \ordA} \solIso{\ordB})$ must be an $\ordA$-isomorphism.

Now, consider the diagram
\begin{diagram}
F(\lim_{\ordC < \ordA} \sol{\ordC}) & \rTo & \lim_{\ordC < \ordA} F(\sol{\ordC}) \\
& \rdTo_{F(\pi_\ordB)} & \dTo_{\pi_\ordB} \\
&& F(\sol{\ordB})
\end{diagram}
It only remains to show that the vertical map is an $\ordA$-isomorphism. 
By induction hypothesis (\ref{item:proj:F}) the maps $F(\pi_\ordB)$ and $\pi_\ordB$ are $\ordD$-isomorphisms for any $\ordD$ such that $\pred \ordD \leq \ordB$. Since this holds for all $\ordB$, the vertical map is a $\lub \{\ordD \mid \pred \ordD < \ordA \}$-isomorphism, and we conclude by Lemma~\ref{lem:lub}. 
\end{proofof}

\begin{proofof}{Theorem~\ref{thm:Sh:model:g:rec:types:enr}}
We must show that any locally contractive endofunctor $F\co \ccat \to \ccat$ has a fixed point, but Theorem~\ref{thm:gen:sol} gives such a fixed point.
\end{proofof}

For Theorem~\ref{thm:Sh:model:g:rec:types} it remains to show that any
slice of $\Sh{A}$ is a model of guarded recursive types. We do that by
reducing to Theorem~\ref{thm:Sh:model:g:rec:types:enr}, using the
fact that slices of $\Sh{A}$ are all $\Sh{A}$-enriched. Indeed
this holds for any locally cartesian closed category $\etopos$,
because one can take as homobject from $\p X$ to $\p Y$ the object $\Prod_{i\co I} {Y_{i}}^{X_{i}}$ 
(using internal language notation as in Lemma~\ref{lem:locally:contractive:slice}). 
Since each slice $\eslice{I}$ is also self-enriched,
this gives us two possible notions of local contractiveness. The next
lemma states a relation between the two.

\begin{lem} \label{lem:local:contractive:slices}
Let $\etopos$ be a locally cartesian closed model of guarded recursive terms, and let $F\co \eslice{I} \to \eslice{I}$ be a functor. If $F$ is locally contractive in the $\eslice{I}$-enriched sense then it is also locally contractive in the $\etopos$-enriched sense.
\end{lem}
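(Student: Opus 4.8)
The plan is to observe that the claimed $\etopos$-enrichment of $\eslice{I}$, with hom-objects $\Prod_{i\co I}{Y_i}^{X_i} = \Prod_{\bang}({\p Y}^{\p X})$ for $\bang\co I\to 1$, is obtained from the self-enrichment of the cartesian closed category $\eslice{I}$ (by its exponentials ${\p Y}^{\p X}$) simply by applying $\Prod_{\bang}$ hom-wise. Since $\Prod_{\bang}$ is a right adjoint it preserves finite products, hence is strong monoidal, and so sends $\eslice{I}$-enriched categories and $\eslice{I}$-enriched functors to $\etopos$-enriched ones. Under this transport, $F$ regarded as an $\etopos$-enriched functor is precisely $\Prod_{\bang}$ applied hom-wise to $F$ regarded as a self-enriched functor of $\eslice{I}$.

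First I would unfold the hypothesis. By Definition~\ref{def:gen:locally:contractive} applied to $\eslice{I}$ self-enriched, local contractiveness of $F$ in the $\eslice{I}$-enriched sense means that $F$ factors as a composite of enriched functors $\eslice{I}\xrightarrow{\next_I}\ecatpush{\laterSlice{I}}{\eslice{I}}\xrightarrow{G}\eslice{I}$. Applying $\Prod_{\bang}$ hom-wise then yields an $\etopos$-enriched functor $\Prod_{\bang}G$ whose domain $\Prod_{\bang}(\ecatpush{\laterSlice{I}}{\eslice{I}})$ is the $\etopos$-enriched category with hom-objects $\Prod_{\bang}(\laterSlice{I}({\p Y}^{\p X}))$, together with an $\etopos$-enriched functor $\Prod_{\bang}(\next_I)$ into it.

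The crux is then to exhibit an $\etopos$-enriched functor $\Theta\co\ecatpush{\later}{\eslice{I}}\to\Prod_{\bang}(\ecatpush{\laterSlice{I}}{\eslice{I}})$ (identity on objects, with $\eslice{I}$ carrying its $\etopos$-enrichment) such that $\Theta\circ\next = \Prod_{\bang}(\next_I)$ as enriched functors. On hom-objects this amounts to a natural transformation $\theta_{\p Z}\co\later\Prod_{\bang}\p Z\to\Prod_{\bang}\laterSlice{I}\p Z$ which is monoidal — compatible with the finite-limit-preservation isomorphisms of $\later$, $\laterSlice{I}$ and $\Prod_{\bang}$, and with identities — and satisfies $\theta_{\p Z}\circ\next = \Prod_{\bang}(\next_I)$. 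Such a $\theta$ is exactly the mate, under $\bang^*\dashv\Prod_{\bang}$, of the canonical isomorphism $\bang^*\circ\later\iso\laterSlice{I}\circ\bang^*$ witnessing that $\later$ is a fibred functor (Proposition~\ref{prop:later:fibred}); equivalently it is the comparison map appearing in diagram~(\ref{eq:Prod:next:contract}) in the proof of Lemma~\ref{lem:locally:contractive:slice}, where its compatibility with $\next$ is already checked. Its monoidality follows because $\later$, $\laterSlice{I}$, $\bang^*$ and $\Prod_{\bang}$ all preserve finite limits, so the defining isomorphism is monoidal and the mate construction preserves this, and because $\next$ is a fibred natural transformation (Proposition~\ref{prop:next:fibred}), which gives the identity-compatibility clause.

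Finally I would set $H := (\Prod_{\bang}G)\circ\Theta\co\ecatpush{\later}{\eslice{I}}\to\eslice{I}$, an $\etopos$-enriched functor as a composite of enriched functors, and compute $H\circ\next = (\Prod_{\bang}G)\circ(\Theta\circ\next) = (\Prod_{\bang}G)\circ\Prod_{\bang}(\next_I) = \Prod_{\bang}(G\circ\next_I) = \Prod_{\bang}(F)$, which is exactly $F$ viewed as an $\etopos$-enriched functor. By Definition~\ref{def:gen:locally:contractive} this witnesses that $F$ is locally contractive in the $\etopos$-enriched sense. The main obstacle is the coherence bookkeeping: verifying that $\Prod_{\bang}$ really does transport the relevant enriched structure (enriched categories and functors, the $\ecatpush{\later}{(-)}$ construction, and the identity-on-objects functor $\next$) coherently, and that the mate $\theta$ is genuinely monoidal and natural in the hom-objects. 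These are routine but somewhat lengthy diagram chases; once they are in place the argument is purely formal.
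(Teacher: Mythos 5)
Your argument is correct and follows essentially the same route as the paper: the paper also takes the witness to be $\Prod_{\bang}$ applied to the slice-level witness, precomposed with the canonical comparison $\later \Prod_{i\co I} X_i \to \Prod_{i\co I} \laterSlice{I} X_i$ from Lemma~\ref{lem:locally:contractive:slice}, and then notes that compatibility with composition and identity follows from the naturality (coherence) of that comparison map. Your change-of-base-along-$\Prod_{\bang}$ and mate formulation just makes explicit the bookkeeping the paper leaves terse.
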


\begin{proof}
The assumption gives an $\eslice{I}$-enrichment of $F$ as a composite 
\begin{diagram}
{\p Y}^{\p X} & \rTo^{\next} & \laterSlice{I} ({\p Y}^{\p X}) & \rTo^{\strengthof{G}{\p X}{\p Y}} & {\p {FY}}^{\p {FX}}
\end{diagram}
Lemma~\ref{lem:locally:contractive:slice} then tells us that each
$\Prod_{i \co I} F_{X_i, Y_i}$ is contractive in the
$\etopos$-enriched sense. To show that $F$ is locally contractive in
the $\etopos$-enriched sense one must check that the derived witness
of contractiveness commutes with composition and identity, but this
follows from naturality of the morphism $\later \Prod_{i\co I} X_i \to
\Prod_{i\co I} \later X_i$ used in
Lemma~\ref{lem:locally:contractive:slice}.

\end{proof}

\begin{proofof}{Theorem~\ref{thm:Sh:model:g:rec:types}}
  We have already shown (Theorem~\ref{thm:Sh:model:g:rec:terms}) that
  every slice of $\Sh{A}$ is a model of guarded recursive terms.
  It remains to show that any functor $F\co \Sh{A}/{I} \to \Sh{A}/{I}$, which
  is locally contractive in the $\Sh{A}/{I}$-enriched sense, has a
  fixed point. Since $\Sh{A}$ is complete~\cite[Prop.~III.4.4]{MacLane:Moerdijk:92},
  its slices $\Sh{A}/{I}$ are also complete and thus the required
  follows from Theorem~\ref{thm:Sh:model:g:rec:types:enr} and
  Lemma~\ref{lem:local:contractive:slices}.
\end{proofof}

\section{Conclusion and Future Work}
\label{sec:future-work}

We have shown that the topos of sheaves over a complete Heyting algebra with a well-founded basis,
in particular $\stopos$, the topos of trees, provides a model for an extension
of higher-order logic over dependent type theory with guarded
recursive types and terms. Moreover, we have argued that this logic
provides the right setting for the synthetic construction of
step-indexed models of programming languages and program logics, by
constructing a model of the programming language \Fmuref\ in the
logic.

In this paper we have focused solely on guarded recursion.  As future
work, it would be interesting to study further the connections
between guarded and unguarded recursion in $\stopos$. For example, it
might be possible to show the existence of recursive types in which
only negative occurrences of the recursion variable were guarded.

We plan to make a tool for formalized reasoning in the internal logic
of $\stopos$.  We have conducted some initial experiments by adding
axioms to Coq and used it to formalize some of the proofs
from~\cite{Schwinghammer:Birkedal:Stovring:11} involving recursively
defined relations on recursively defined types.  These experiments
suggest that it will be important to have special support for the
manipulation of the isomorphisms involved in recursive type equations,
such as the coercions and canonical structures
of~\cite{gonthier:structure}. 
An alternative approach, inspired by the conference version of the
present paper, has recently been proposed by Jaber
et.\ al.~\cite{Jaber:lics12}, who show how to internalize the
construction of the topos of trees in Coq and thus model guarded
recursive types.  Future work includes
investigating how easy or difficult it is in practice to develop and work with 
step-indexed models using that approach.

Future work also includes studying further applications of guarded
recursion in connection with step-indexed models. In particular, we
plan to give a synthetic account of a recent step-indexed model by the
first and third author for a language with countable
non-determinism~\cite{Schwinghammer:nondet}. That model uses
step-indexing over $\omega_1$, the first uncountable ordinal, so would
naturally live in sheaves over $\omega_1$. Indeed, this was part of the
motivation for generalizing the study of models of guarded recursion from
$\stopos$ to general sheaf categories $\Sh{A}$.

It could also be interesting to study predicative models of guarded
recursive dependent type theory, thus extending the work of Moerdijk
and Palmgren~\cite{palmgren-moerdijk,moerdijk-palmgren} on ``predicative
toposes''.

\paragraph{Acknowledgments}
We thank Andy Pitts and Paul Blain Levy for encouraging discussions.
This work was supported in part by grants from the Danish research council
(Birkedal and M{\o}gelberg) and from the Carlsberg Foundation (St{\o}vring).

{

}

\iflongversion
\clearpage
\appendix
\section{More details on the application to step-indexing}
\label{app:step-indexing}

Here are some more details on the application in
Section~\ref{sec:application}.  Everything is this appendix should be
understood within the logic of $\stopos$.

\subsection{Language}
\label{app:language}

The full language considered in the application is shown in
Figure~\ref{fig:language}.

\begin{figure*}
  \centering
  \begin{eqnarray*}
    \text{Types:} \qquad \tp & ::= & 1 \mid \tp_1
    \times \tp_2 \mid 0 \mid \tp_1 + \tp_2 \mid \mu \alpha. \tp
    \mid \forall \alpha. \tp \mid \alpha \mid \arrowtp{\tp_1}{\tp_2} \mid \reftp{\tp}
  \end{eqnarray*}
  \begin{eqnarray*}
    \text{Terms:} \qquad \tm & ::= & x \mid l \mid \unitcst \mid
    \pair{\tm_1}{\tm_2} 
    \mid \fst\tm \mid \snd\tm
\mid \void\tm \mid \inl\tm \mid \inr\tm
\\&&{} \mid
    \case{\tm_0}{x_1}{\tm_1}{x_2}{\tm_2} \mid \fold\tm \mid \unfold\tm
    \\&&{} \mid
    \tplam \alpha \tm \mid \tpapp \tm \tp \mid
    \lam x \tm \mid \tm_1 \ap \tm_2
    \mid \fixtm{f}{x}{\tm}
    \mid \reftm\tm \mid \lookup\tm \mid \assign{\tm_1}{\tm_2}
  \end{eqnarray*}
Typing rules:\hfill\strut\\[.3cm]

\newcommand{\ruleskip}{.7cm}

\begin{tabular}{cc}
\AxiomC{} \UnaryInfC{$\Xi \mid \Gamma \vdash x :
\tp$} \DisplayProof ($\Xi \vdash \Gamma$, $\Gamma(x) = \tp$) 
&
\AxiomC{} \UnaryInfC{$\Xi \mid \Gamma \vdash \unitcst : 1$} \DisplayProof ($\Xi \vdash
\Gamma$) 
\\[\ruleskip]
\AxiomC{$\Xi \mid \Gamma \vdash \tm_1 : \tp_1$}
\AxiomC{$\Xi \mid \Gamma \vdash \tm_2 : \tp_2$}
\BinaryInfC{$\Xi \mid \Gamma \vdash \pair{\tm_1}{\tm_2} :
\tp_1 \times \tp_2$} \DisplayProof 
&
\AxiomC{$\Xi \mid \Gamma \vdash \tm : 0$}
\UnaryInfC{$\Xi \mid \Gamma \vdash
\void \tm : \tp$}
\DisplayProof ($\Xi \vdash \tp$) 
\\[\ruleskip]
\AxiomC{$\Xi \mid \Gamma \vdash \tm : \tp_1 \times
\tp_2$} \UnaryInfC{$\Xi \mid \Gamma \vdash
\fst\tm : \tp_1$} \DisplayProof
&
\AxiomC{$\Xi \mid
\Gamma \vdash \tm : \tp_1 \times \tp_2$}
\UnaryInfC{$\Xi \mid \Gamma \vdash \snd\tm :
\tp_2$} \DisplayProof 
\\[\ruleskip]
\AxiomC{$\Xi \mid \Gamma \vdash \tm : \tp_1$}
\UnaryInfC{$\Xi \mid \Gamma \vdash
\inl \tm : \tp_1 + \tp_2$}
\DisplayProof ($\Xi \vdash \tp_2$) 
&
\AxiomC{$\Xi \mid \Gamma \vdash \tm : \tp_2$}
\UnaryInfC{$\Xi \mid \Gamma \vdash
\inr \tm : \tp_1 + \tp_2$}
\DisplayProof ($\Xi \vdash \tp_1$) 
\\[\ruleskip]
\multicolumn{2}{c}{%
\AxiomC{$\Xi \mid \Gamma \vdash \tm_0 : \tp_1 + \tp_2$}
\AxiomC{$\Xi \mid \Gamma, x_i :\tp_i \vdash \tm_i : \tp \quad (i = 1,2)$}
\BinaryInfC{$\Xi \mid \Gamma \vdash
  \case{\tm_0}{x_1}{\tm_1}{x_2}{\tm_2} : \tp$} \DisplayProof}
\\[\ruleskip]
\AxiomC{$\Xi \mid \Gamma \vdash \tm : \tp[\mu
\alpha.\tp / \alpha]$} \UnaryInfC{$\Xi \mid \Gamma
\vdash \fold \tm : \mu \alpha.\tp$}
\DisplayProof 
&
\AxiomC{$\Xi \mid \Gamma \vdash \tm : \mu \alpha.\tp$}
\UnaryInfC{$\Xi \mid \Gamma \vdash
\unfold \tm : \tp[\mu \alpha.\tp /
\alpha]$} \DisplayProof 
\\[\ruleskip]
\AxiomC{$\Xi,\alpha \mid \Gamma \vdash \tm : \tp$}
\UnaryInfC{$\Xi \mid \Gamma \vdash \tplam \alpha \tm :
\forall \alpha.\tp$} \DisplayProof ($\Xi
\vdash \Gamma$) 
&
\AxiomC{$\Xi \mid \Gamma \vdash \tm : \forall
\alpha.\tp_0$} \UnaryInfC{$\Xi \mid \Gamma \vdash
\tpapp{\tm}{\tp_1} : \tp_0[\tp_1 / \alpha]$} \DisplayProof ($\Xi \vdash
\tp_1$) 
\\[\ruleskip]
\AxiomC{$\Xi \mid \Gamma,
x:\tp_0 \vdash \tm : \comptp{\tp_1}$} \UnaryInfC{$\Xi \mid \Gamma
\vdash \lam{x}{\tm} : \arrowtp{\tp_0}{\tp_1}$} \DisplayProof 
&
\AxiomC{$\Xi \mid \Gamma \vdash \tm_1: \arrowtp{\tp}{\tp'}$}
\AxiomC{$\Xi \mid \Gamma \vdash \tm_2: \tp$}
\BinaryInfC{$\Xi \mid \Gamma \vdash \tm_1 \ap \tm_2 : \comptp{\tp'}$}
\DisplayProof
\\[\ruleskip]
\AxiomC{$\Xi \mid \Gamma, f: \arrowtp{\tp_0}{\tp_1},
x:\tp_0 \vdash \tm : \comptp{\tp_1}$} \UnaryInfC{$\Xi \mid \Gamma
\vdash \fixtm{f}{x}{\tm} : \arrowtp{\tp_0}{\tp_1}$} \DisplayProof 
&
\AxiomC{$\Xi \mid \Gamma \vdash \tm : \tp$}
\UnaryInfC{$\Xi \mid \Gamma \vdash \reftm\tm : \reftp\tp$}
\DisplayProof\\[\ruleskip]
\AxiomC{$\Xi \mid \Gamma \vdash \tm : \reftp{\tp}$}
\UnaryInfC{$\Xi \mid \Gamma \vdash \lookup\tm : \comptp\tp$}
\DisplayProof
&
\AxiomC{$\Xi \mid \Gamma \vdash \tm_1 : \reftp\tp$}
\AxiomC{$\Xi \mid \Gamma \vdash \tm_2 : \tp$}
\BinaryInfC{$\Xi \mid \Gamma \vdash \assign{\tm_1}{\tm_2} :
\comptp{1}$} \DisplayProof
\end{tabular}
\vskip.2cm

  \caption{Programming language}
  \label{fig:language}
\end{figure*}

\subsection{Interpretation of types}
\label{app:interpretation-types}

Recall that we have
\begin{align*}
  \W &= N \tofin \ST\\
  \STalt &= \W \tomon \Powset{\Value}\\
  \STalt^{\mathrm{c}} &= \W \to \Powset{\Term}
\end{align*}
and
\[
\App : \ST \to \STalt, \qquad \Lam : \STalt \to \ST
\]
with $\App \circ \Lam = \laterPred : \STalt \to \STalt$.

Let $\mathrm{TVar}$ be the set of type variables, and for $\tp \in
\mathrm{OType}$, let $\mathrm{TEnv}(\tp) = \{\,\varphi \in
\mathrm{TVar} \tofin \STalt \mid \fv{\tp} \subseteq
\dom{\varphi}\,\}$.  The interpretation of programming-language types
is defined by induction, as a function
\[
\den{\cdot} : \prod_{\tp \in \mathrm{OType}} \mathrm{TEnv(\tp)} \to
\STalt \, .
\]
  \begin{align*}
    \mngTp{\alpha}{\Xi}{\varphi} &= \varphi(\alpha)\\
    \mngTp{1}{\Xi}{\varphi} &= \lambda w.\, \{()\}\\
    \mngTp{0}{\Xi}{\varphi} &= \lambda w.\, \emptyset\\
    \mngTp{\tp_1 \times \tp_2}{\Xi}{\varphi} &= \lambda w.\,
\{\, \pairtm{v_1}{v_2} \mid
\begin{arraytl}v_1 \in \mngTp{\tp_1}{\Xi}{\varphi}(w) \mathrel{\land} 
v_2 \in \mngTp{\tp_2}{\Xi}{\varphi}(w) \, \}\end{arraytl}\\
    \mngTp{\tp_1 + \tp_2}{\Xi}{\varphi} &=
\lambda w.\, \begin{arraytl}\{\, \inl \ap v_1 \mid v_1 \in \mngTp{\tp_1}{\Xi}{\varphi}(w) \, \} \mathrel{\cup} 
\{\, \inr \ap v_2 \mid v_2 \in \mngTp{\tp_2}{\Xi}{\varphi}(w) \, \}\end{arraytl}\\
     \mngTp{\reftp \tp}{\Xi}{\varphi} &=
\lambda w.\, \{\, l \mid\begin{arraytl} l \in \dom{w} \mathrel{\land}
\forall w_1 \geq w .\,
\App(w(l)) (w_1) =
\laterPred(\mngTp{\tp}{\Xi}{\varphi})(w_1) \, \}\end{arraytl}\\
    \mngTp{\forall \alpha . \tp}{\Xi}{\varphi} &= \lambda w.\,\{\, \tplam{\alpha}{t}
    \mid\begin{arraytl} \forall \nu \in \STalt.\, \forall w_1 \geq w.\, 
    t \in \compOp(\mngTp{\tp}{\Xi,\alpha}{\varphi[\alpha \mapsto
      \nu]})(w_1) \, \}\end{arraytl}\\
    \mngTp{\mu \alpha . \tp}{\Xi}{\varphi} &= \mathit{fix}(\begin{arraytl}\lambda
      \nu. \, \lambda w. \, 
    \{\, \fold{v} \mid \laterPred(v \in
    \mngTp{\tp}{\Xi,\alpha}{\varphi[\alpha \mapsto \nu]} \ap (w))\,
    \})\end{arraytl}\\
    \mngTp{\arrowtp{\tp_1}{\tp_2}}{\Xi}{\varphi} &= 
\lambda w.\, \{\, \lamtm{x}{t} \mid
\begin{arraytl}\forall w_1 \geq w.\,
\forall v \in \mngTp{\tp_1}{\Xi}{\varphi}(w_1).\, 
t[v/x] \in \compOp(\mngTp{\tp_2}{\Xi}{\varphi})(w_1) \, \}
\end{arraytl}
\end{align*}
Here the operations $\compOp : \STalt \to \STalt^\mathrm{c}$ and
$\statesOp : \W \to \Powset{\Store}$ are given by
\begin{align*}
\compOp(\nu)(w) &= \{\, t \mid \begin{arraytl}\forall s \in \statesOp(w) .\,
\mathit{eval}(t,s,\, 
\lambda (v_1,s_1).\, 
\exists w_1 \geq w.\,\\ 
\quad 
v_1 \in \nu(w_1) ~\land~
s_1 \in \statesOp(w_1))
\,\}\end{arraytl}\\
\statesOp(w) &= \{\, s \mid
\begin{arraytl}\dom{s} = \dom{w} ~\mathrel\land~\\
\forall l \in \dom{w}.\, s(l) \in \App (w(l))(w) \, \}\,.\end{arraytl}
\end{align*}

\subsection{Soundness and the fundamental theorem}
\label{app:step-indexing-soundness}

Given $\Xi$ and $\Gamma$ such that $\Gamma$ is well-formed in $\Xi$, and
given $\varphi \in \STalt^\Xi$, define
\[
  \mngTp{\Gamma}{\Xi}{\varphi}(w) =
      \{\rho : \Value^{\dom{\Gamma}} | \forall (x,\tp) \in \Gamma.\, \rho(x) \in \mngTp{\tp}{\Xi}{\varphi}(w)\}.
\]
Abbreviate $\mngTpC{\tp}{\Xi}{\varphi} = \compOp(\mngTp{\tp}{\Xi}{\varphi})$.

Now we define semantic validity.  The notation
\[
  \semrel{\Xi}{\Gamma}{t}{\tp}
\]
means: For all $w \in W$, all $\varphi \in \STalt^\Xi$, and all $\rho \in \mngTp{\Gamma}{\Xi}{\varphi}(w)$, we have $\rho(t) \in \mngTpC{\tp}{\Xi}{\varphi}(w)$. (Here
$\rho(t)$ is $\rho$ acting by substitution on $t$.)

To show the fundamental theorem, we must show semantic counterparts of
all the typing rules.  First we need some ``monadic'' properties of
the $\compOp$ operator.
For $\nu \in \STalt$ and $\xi \in \STaltC$ and $w \in \W$, let
$\funrel{\nu}{\xi}{w}$ be the set of closed evaluation contexts $E$
that satisfy the following property: for all $w_1 \geq w$ and $v \in
\nu(w_1)$ we have $E[v] \in \xi(w_1)$.

\begin{lem}\hfill
\label{lem:comp-monadic}
  \begin{enumerate}[\em(1)]
  \item If\/ $v \in \nu(w)$, then $v \in \compOp(\nu)(w)$.
  \item If\/ $t \in \compOp(\nu_1)(w)$ and $E \in
    \funrel{\nu_1}{\compOp(\nu_2)}{w}$, then $E[t] \in \compOp(\nu_2)(w)$.
  \end{enumerate}
\end{lem}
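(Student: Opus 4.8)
The plan is to handle the two items independently, in each case unfolding the definition of $\compOp$ and then invoking the two propositions about $\eval$ already established (Propositions~\ref{prop:rule-of-consequence} and~\ref{prop:eval-move-ectx}). Everything takes place in the internal logic of $\stopos$, but since those propositions are themselves internal statements, no extra care about forcing is needed.

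For item~(1), assume $v \in \nu(w)$. To show $v \in \compOp(\nu)(w)$ I unfold $\compOp$: fixing an arbitrary $s \in \statesOp(w)$, I must establish $\eval(v, s,\, \lambda(v_1,s_1).\, \exists w_1 \geq w.\, v_1 \in \nu(w_1) \land s_1 \in \statesOp(w_1))$. Since $v$ is a value, I take the first disjunct of $\eval$, so it suffices to prove the postcondition at $(v,s)$; and this holds by instantiating the existential with $w_1 = w$, using the hypothesis $v \in \nu(w)$ and the assumption $s \in \statesOp(w)$. This case uses neither guarded recursion nor the $\laterPred$ modality.

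For item~(2), assume $t \in \compOp(\nu_1)(w)$ and $E \in \funrel{\nu_1}{\compOp(\nu_2)}{w}$. To prove $E[t] \in \compOp(\nu_2)(w)$, fix $s \in \statesOp(w)$ and write $Q_2$ for $\lambda(v_2,s_2).\, \exists w_2 \geq w.\, v_2 \in \nu_2(w_2) \land s_2 \in \statesOp(w_2)$; the goal is $\eval(E[t], s, Q_2)$. By Proposition~\ref{prop:eval-move-ectx} this is equivalent to $\eval(t, s,\, \lambda(v_1,s_1).\, \eval(E[v_1], s_1, Q_2))$. From $t \in \compOp(\nu_1)(w)$ and $s \in \statesOp(w)$ I obtain $\eval(t, s, Q_1)$ with $Q_1 = \lambda(v_1,s_1).\, \exists w_1 \geq w.\, v_1 \in \nu_1(w_1) \land s_1 \in \statesOp(w_1)$, so by the rule of consequence (Proposition~\ref{prop:rule-of-consequence}) it remains to verify $Q_1 \subseteq \lambda(v_1,s_1).\, \eval(E[v_1], s_1, Q_2)$. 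So I take $v_1, s_1$ with $w_1 \geq w$, $v_1 \in \nu_1(w_1)$, $s_1 \in \statesOp(w_1)$; then $E \in \funrel{\nu_1}{\compOp(\nu_2)}{w}$ yields $E[v_1] \in \compOp(\nu_2)(w_1)$, and unfolding $\compOp$ at the state $s_1 \in \statesOp(w_1)$ gives $\eval(E[v_1], s_1,\, \lambda(v_2,s_2).\, \exists w_2 \geq w_1.\, v_2 \in \nu_2(w_2) \land s_2 \in \statesOp(w_2))$. A last application of Proposition~\ref{prop:rule-of-consequence} upgrades this to $\eval(E[v_1], s_1, Q_2)$, using transitivity of $\geq$ on $\W$ to pass from $w_2 \geq w_1 \geq w$ to $w_2 \geq w$.

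There is no serious obstacle; the proof is essentially bookkeeping. The only points to watch are the two chained uses of the rule of consequence and the threading of the existentially quantified worlds through $w \leq w_1 \leq w_2$, plus checking the closedness side conditions for Proposition~\ref{prop:eval-move-ectx} (satisfied since $t$ and each value $v_1$ are closed, so $E[t]$ and $E[v_1]$ are closed).
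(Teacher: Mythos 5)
Your proof is correct and follows essentially the same route as the paper's: item (1) by directly unfolding $\compOp$ and $\eval$, and item (2) by applying Proposition~\ref{prop:eval-move-ectx} to reduce to a postcondition about $E[v_1]$, then discharging it with two uses of Proposition~\ref{prop:rule-of-consequence} and the definition of $\funrel{\nu_1}{\compOp(\nu_2)}{w}$, with the same threading of worlds $w \leq w_1 \leq w_2$.
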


\begin{proof}
  The first part follows immediately from the definitions of $\compOp$
  and $\eval$.  As for the second part, let $t \in \compOp(\nu_1)(w)$
  and $E \in \funrel{\nu_1}{\compOp(\nu_2)}{w}$ be given; we must show
  that $E[t] \in \compOp(\nu_2)(w)$.  We unfold the definition of
  $\compOp$.  Let $s \in \statesOp(w)$ be given; we must show that
  $\eval(E[t],s,Q)$ where
\[
Q(v_2,s_2) = \exists w_2 \geq w.\, v_2 \in \nu_2(w_2) ~\land~
s_2 \in \statesOp(w_2)).
\]
By Proposition~\ref{prop:eval-move-ectx}, it suffices to show
\begin{equation}
\label{eq:comp-monadic-proof}
\eval(t,s,\, \lambda (v_1,s_1). \, \eval(E[v_1],s_1, Q)).
\end{equation}

Since $t \in \compOp(\nu_1)(w)$ and $s \in \statesOp(w)$ we know that
\[
\eval(t,s,\, \begin{arraytl}\lambda(v_1,s_1).\, \exists w_1 \geq w.\\
  \quad \, v_1 \in \nu_1(w_1) ~\land~
s_1 \in \statesOp(w_1)).\end{arraytl}
\]
We can therefore use Proposition~\ref{prop:rule-of-consequence} to
show~\eqref{eq:comp-monadic-proof}: it suffices to show that $\exists w_1 \geq w.\, v_1 \in \nu_1(w_1) ~\land~
s_1 \in \statesOp(w_1))$ implies $\eval(E[v_1],s_1, Q)$.  So let $w_1
\geq w$ be given and assume that $v_1 \in \nu_1(w_1)$ and $s_1 \in
\statesOp(w_1)$.  Then, since $E \in \funrel{\nu_1}{\compOp(\nu_2)}{w}$,
we have $E[v_1] \in \compOp(\nu_2)(w_1)$ and hence
\[
\eval(E[v_1], s_1, \, \begin{arraytl}\lambda(v_2,s_2).\, \exists w_2
  \geq w_1.\\ \quad v_2 \in \nu_2(w_2) ~\land~
s_2 \in \statesOp(w_2)).\end{arraytl}
\]
Since $w_1 \geq w$, another use of Proposition~\ref{prop:rule-of-consequence}
gives $\eval(E[v_1],s_1, Q)$, which is what we had to show.
\end{proof}

\begin{proof}[Proof of Proposition~\ref{prop:fundamental} (fundamental theorem)]
We show four key cases.

\paragraph{Case {}``allocation'':}

If $\semrel{\Xi}{\Gamma}t{\tp}$, then $\semrel{\Xi}{\Gamma}{\reftm t}{\reftp{\tp}}.$ 

Let $w\in\W$ and $\varphi\in\STalt^{\Xi}$ and
$\rho\in\mngTp{\Gamma}{\Xi}{\varphi}$ be given; we must show that
$\rho(\reftm t)\in\mngTpC{\reftp{\tp}}{\Xi}{\varphi}(w)$.  Since
$\semrel{\Xi}{\Gamma}t{\tp}$ holds we know that
$\rho(t)\in\mngTpC{\tp}{\Xi}{\varphi}(w)$.  Therefore, by
Lemma~\ref{lem:comp-monadic}, it suffices to show that
$\reftm{\textnormal{\textrm{-}}}\in\funrel{\mngTp{\tp}{\Xi}{\varphi}}{\mngTpC{\reftp{\tp}}{\Xi}{\varphi}}w$.
To that end, let $w_{1}\geq w$ and
$v\in\mngTp{\tp}{\Xi}{\varphi}(w_{1})$ be given. We must show that
$\reftm v\in\mngTpC{\reftp{\tp}}{\Xi}{\varphi}(w_{1})$.

Let $s\in\statesOp(w_{1})$ be given. By definition of $\compOp$
we must show
\begin{displaymath}
\eval(\reftm v,s,\:\lambda(v_{1},s_{1}).\,\exists w_{2}\geq w_{1}.\,
v_{1}\in\mngTp{\reftp{\tp}}{\Xi}{\varphi}(w_{2})\land
s_{1}\in\statesOp(w_{2})).
\end{displaymath}
Let $l$ be the smallest location not in $s$. Then we have $step((\reftm v,s),(l,s_{1}))$
where $s_{1}=s[l\mapsto v]$. Therefore, by definition of $\eval$
and Proposition~\ref{prop:internal-logic-rules}, it suffices to show\[
\exists w_{2}\geq w_{1}.\, l\in\mngTp{\reftp{\tp}}{\Xi}{\varphi}(w_{2})\land s_{1}\in\statesOp(w_{2}).\]
(In fact, we are only required to show $\laterPred$ applied to this
formula, which is weaker by Proposition~\ref{prop:internal-logic-rules}(1).)
To that end, we choose $w_{2}=w_{1}[l\mapsto\Lam(\mngTp{\tp}{\Xi}{\varphi})]$.
It remains to show\begin{gather}
l\in\mngTp{\reftp{\tp}}{\Xi}{\varphi}(w_{2})\label{eq:case-alloc-one}\\
s_{1}\in\statesOp(w_{2}).\label{eq:case-alloc-two}\end{gather}

As for \eqref{eq:case-alloc-one}, we expand the definition of $\den{\reftp{\tp}}$.
Clearly we have $l\in dom(w_{2})$ as required. Now let $w_{3}\geq
w_{2}$ be
given; Lemma~\ref{lem:app-lam} gives
\begin{align*}
\App(w_{2}(l))(w_{3}) & =  \App(\Lam(\mngTp{\tp}{\Xi}{\varphi}))(w_{3})\\
 & =  \laterPred(\mngTp{\tp}{\Xi}{\varphi})(w_{3})
\end{align*}
as required.

As for \eqref{eq:case-alloc-two}, we first have that $\dom{s_{1}}=\dom{w_{2}}$
since $s\in\statesOp(w_{1}).$ Second, we must show that $s_{1}(l')\in\App(w_{2}(l'))(w_{2})$
for all $l'\in\dom{s_{1}}$. For $l'=l$ we have
$\App(w_{2}(l))(w_{2}) = \laterPred(\mngTp{\tp}{\Xi}{\varphi})(w_{2})$
as above. But $s_{1}(l)=v$, and we know that $v\in\mngTp{\tp}{\Xi}{\varphi}(w_{1})$
where\[
\mngTp{\tp}{\Xi}{\varphi}(w_{1})\subseteq\mngTp{\tp}{\Xi}{\varphi}(w_{2})\subseteq\laterPred(\mngTp{\tp}{\Xi}{\varphi})(w_{2})\]
by monotonicity and Proposition~\ref{prop:internal-logic-rules}(1). We conclude
that $s_{1}(l)\in\App(w_{2}(l))(w_{2})$.

For $l'\neq l$ we have $s_{1}(l')=s(l')$. Since $s\in\statesOp(w_{1})$
we know that $s(l')\in\App(w_{1}(l'))(w_{1}).$ But
\begin{align*}
\App(w_{1}(l'))(w_{1}) &= \App(w_{2}(l'))(w_{1})\\ &\subseteq \App(w_{2}(l'))(w_{2})
\end{align*}
by monotonicity. Therefore $s_{1}(l')\in\App(w_{2}(l'))(w_{2})$,
which completes the proof of \eqref{eq:case-alloc-two}.

\paragraph{Case {}``lookup'': }

If $\semrel{\Xi}{\Gamma}t{\reftp{\tp}}$ then $\semrel{\Xi}{\Gamma}{\lookup t}{\tp}$.

Let $w\in\W$ and $\varphi\in\STalt^{\Xi}$ and $\rho\in\mngTp{\Gamma}{\Xi}{\varphi}$
be given; we must show that $\rho(\lookup t)\in\mngTpC{\tp}{\Xi}{\varphi}(w)$.
Since $\semrel{\Xi}{\Gamma}t{\reftp{\tp}}$ holds we know that $\rho(t)\in\mngTpC{\reftp{\tp}}{\Xi}{\varphi}(w)$.
Therefore, by Lemma~\ref{lem:comp-monadic}, it suffices to
show that $\lookup -\in\funrel{\mngTp{\tp}{\Xi}{\varphi}}{\mngTpC{\reftp{\tp}}{\Xi}{\varphi}}w$.
This is essentially what was done in the proof sketch in the main
text, but for completeness we repeat the argument here. 

Let $w_1 \geq w$ and $v \in \mngTp{\reftp \tp}{\Xi}{\varphi}(w_1)$ be given.  We
must show that $\lookup v \in
\compOp(\mngTp{\tp}{\Xi}{\varphi})(w_1)$ We unfold the
definition of $\compOp$.  Let $s \in \statesOp(w_1)$ be given; we must
show
\begin{equation}
\label{eq:fundamental-theorem-lookup}
\mathit{eval}(\lookup v,s,\,\lambda (v_2,s_2).\, 
\exists w_2 \geq w_1.\,
v_2 \in \mngTp{\tp}{\Xi}{\varphi}(w_2) ~\land~ s_2 \in \statesOp(w_2))\,.
\end{equation}
By the assumption that $v \in \mngTp{\reftp
  \tp}{\Xi}{\varphi}(w_1)$, we know that $v = l$ for some
location $l$ such that $l \in \dom{w_1}$ and $\App (w_1(l)) (w_2) =
\laterPred(\mngTp{\tp}{\Xi}{\varphi})(w_2)$ for all $w_2 \geq w_1$.
Since $s \in \statesOp(w_1)$, we know that $l \in \dom{s} = \dom{w_1}$ and
$s(l) \in \App(w_1(l))(w_1)$.  We therefore have $\step((\lookup v, s),
(s(l), s))$.  Hence, by unfolding the definition of $\eval$ in
\eqref{eq:fundamental-theorem-lookup} and using the rules from
Proposition~\ref{prop:internal-logic-rules}, it
remains to show that
\[
\exists w_2 \geq w_1.\,
\laterPred(s(l) \in \mngTp{\tp}{\Xi}{\varphi}(w_2)) ~\land~
\laterPred(s \in \statesOp(w_2)).
\]
To that end, choose $w_2 = w_1$.  First, $s \in \statesOp(w_1)$ and hence
$\laterPred(s \in \statesOp(w_1))$.  Second, 
\[
s(l) \in \App(w_1(l))(w_1) =
\laterPred(\mngTp{\tp}{\Xi}{\varphi})(w_1),
\]
which means exactly that
$\laterPred(s(l) \in \mngTp{\tp}{\Xi}{\varphi}(w_1))$.

\paragraph{Case {}``assignment'':}

If $\semrel{\Xi}{\Gamma}{t_{1}}{\reftp{\tp}}$ and $\semrel{\Xi}{\Gamma}{t_{2}}{\tp}$,
then $\semrel{\Xi}{\Gamma}{\assign{t_{1}}{t_{2}}}1$.

Here we must use Lemma~\ref{lem:comp-monadic} twice. Let $w\in\W$
and $\varphi\in\STalt^{\Xi}$ and $\rho\in\mngTp{\Gamma}{\Xi}{\varphi}$
be given; we must show that \[\rho(\assign{t_{1}}{t_{2}})\in\mngTpC 1{\Xi}{\varphi}(w).\]
Since $\semrel{\Xi}{\Gamma}{t_{1}}{\reftp{\tp}}$ holds we know that
$\rho(t_{1})\in\mngTpC{\reftp{\tp}}{\Xi}{\varphi}(w)$. Therefore,
by Lemma~\ref{lem:comp-monadic}, it suffices to show that \[(\assign -{\rho(t_{2})})\in\funrel{\mngTp{\reftp{\tp}}{\Xi}{\varphi}}{\mngTpC 1{\Xi}{\varphi}}w.\]
So let $w_{1}\geq w$ and $v_{1}\in\mngTp{\reftp{\tp}}{\Xi}{\varphi}(w_{1})$
be given; we must show that $\assign{(v_{1}}{\rho(t_{2}}))\in\mngTpC 1{\Xi}{\varphi}(w_{1})$.
By assumption we have $\rho(t_{2})\in\mngTpC{\tp}{\Xi}{\varphi}(w_{1})$,
so by Lemma~\ref{lem:comp-monadic} again, it suffices to show
that \[(\assign{v_{1}}-)\in\funrel{\mngTp{\tp}{\Xi}{\varphi}}{\mngTpC 1{\Xi}{\varphi}}{w_{1}}.\]
Therefore, let $w_{2}\geq w_{1}$ and $v_{2}\in\mngTp{\tp}{\Xi}{\varphi}(w_{2})$
be given. The final proof obligation is to show that\[
(\assign{v_{1}}{v_{2}})\in\mngTpC 1{\Xi}{\varphi}(w_{2}).\]

We unfold the definition of $\compOp$. Assume that $s\in\statesOp(w_{2})$
is given; we must show
\begin{displaymath}
\eval((\assign{v_{1}}{v_{2}}),s,\:\lambda(v_{3},s_{3}).\,\exists
w_{3}\geq w_{2}.\,
 v_{3}\in\mngTp 1{\Xi}{\varphi}(w_{3}) ~\land~
s_{3}\in\statesOp(w_{3})).
\end{displaymath}

By monotonicity we have $v_{1}\in\mngTp{\reftp{\tp}}{\Xi}{\varphi}(w_{2})$,
and therefore $v_{1}=l$ for some $l\in\dom{w_{2}}$ such that \begin{equation}
\App(w_{2}(l))(w_{3})=\laterPred(\mngTp{\tp}{\Xi}{\varphi})(w_{3})
\quad \text{for all $w_{3}\geq w_{2}$.} \label{eq:ft-case-assignment}\end{equation}
Furthermore, since $s\in\statesOp(w_{2})$
we know that $\dom s=\dom{w_{2}}$ and hence that $l\in\dom s$. Therefore
$\step((\assign{v_{1}}{v_{2}},\, s),\:(\unitcst,\, s[l\mapsto v_{2}]))$
holds. By definition of $\eval$ and Proposition~\ref{prop:internal-logic-rules},
it then suffices to show\[
\exists w_{3}\geq w_{2}.\,\unitcst\in\mngTp 1{\Xi}{\varphi}(w_{3}) ~\land~ s[l\mapsto v_{2}]\in\statesOp(w_{3}).\]

We choose $w_{3}=w_{2}$. Now $\unitcst\in\mngTp 1{\Xi}{\varphi}(w_{2})$
holds trivially, and it remains to show that $s[l\mapsto v_{2}]\in\statesOp(w_{2})$.
For $l'\neq l$ we have \[(s[l\mapsto v_{2}])(l')=s(l')\in\App(w_{2}(l'))(w_{2})\]
since $s\in\statesOp(w_{2})$. Furthermore, \[(s[l\mapsto v_{2}])(l)=v_{2}\in\mngTp{\tp}{\Xi}{\varphi}(w_{2}),\]
and therefore $\laterPred(v_{2}\in\mngTp{\tp}{\Xi}{\varphi}(w_{2}))$
by Proposition~\ref{prop:internal-logic-rules}(1). But this means exactly
that $v_{2}\in\laterPred(\mngTp{\tp}{\Xi}{\varphi})(w_{2})$. We conclude
from~\eqref{eq:ft-case-assignment} that $v_{2}\in\App(w_{2}(l))(w_{2})$
as required.

\paragraph{Case {}``unfold'':}

If $\semrel{\Xi}{\Gamma}t{\mu\alpha.\tp}$, then $\semrel{\Xi}{\Gamma}{\unfold t}{\tp[(\mu\alpha.\tp)/a]}.$

Abbreviate $\tp_{1}=\tp[(\mu\alpha.\tp)/a]$. Let $w\in\W$ and $\varphi\in\STalt^{\Xi}$
and $\rho\in\mngTp{\Gamma}{\Xi}{\varphi}$ be given; we must show
that $\rho(\unfold t)\in\mngTpC{\tp_{1}}{\Xi}{\varphi}(w)$. Since
$\semrel{\Xi}{\Gamma}t{\mu\alpha.\tp}$ holds we know that $\rho(t)\in\mngTpC{\mu\alpha.\tp}{\Xi}{\varphi}(w)$.
Therefore, by Lemma~\ref{lem:comp-monadic}, it suffices to
show that $\unfold -\in\funrel{\mngTp{\mu\alpha.\tp}{\Xi}{\varphi}}{\mngTpC{\tp_{1}}{\Xi}{\varphi}}w$.
To that end, let $w_{1}\geq w$ and $v\in\mngTp{\mu\alpha.\tp}{\Xi}{\varphi}(w_{1})$
be given. We must show that $\unfold v\in\mngTpC{\tp_{1}}{\Xi}{\varphi}(w_{1})$.

Let $s\in\statesOp(w_{1})$ be given. By definition of $\compOp$
we must show
\begin{equation}
\eval(\unfold v,s,\:\lambda(v_{1},s_{1}).\,\exists w_{2}\geq w_{1}.\,
v_{1}\in\mngTp{\tp_{1}}{\Xi}{\varphi}(w_{2}) ~\land~
s_{1}\in\statesOp(w_{2})).\label{eq:fl-case-unfold}
\end{equation}

By definition of $\den{\mu\alpha.\tp}$ we know that $v=\fold{v_{0}}$
for some $v_{0}$ such that $\laterPred(v_{0}\in\mngTp{\tp}{\Xi,\alpha}{\varphi[\alpha\mapsto\mngTp{\mu\alpha.\tp}{\Xi}{\varphi}]}(w_{1}))$
holds. By Proposition~\ref{prop:internal-logic-rules} and a substitution lemma
(shown by an easy induction on types), this means that $\laterPred(v_{0}\in\mngTp{\tp_{1}}{\Xi}{\varphi}(w_{1}))$
holds.

Since $v=\fold{v_{0}}$ we have $\step((\unfold v,s),(v_{0},s))$.
Therefore, by unfolding the definition of $\eval$ in \eqref{eq:fl-case-unfold}
and using Proposition~\ref{prop:internal-logic-rules}, it suffices to show\[
\exists w_{2}\geq w_{1}.\,\laterPred(v_{0}\in\mngTp{\tp_{1}}{\Xi}{\varphi}(w_{2}))\land\laterPred(s\in\statesOp(w_{2})).\]
We choose $w_{2}=w_{1}$. We have already shown that 
$\laterPred(v_{0}\in\mngTp{\tp_{1}}{\Xi}{\varphi}(w_{1}))$ holds, and Proposition~\ref{prop:internal-logic-rules}(1) gives that $s\in\statesOp(w_{1})$
implies $\laterPred(s\in\statesOp(w_{1}))$, as required.
\end{proof}

As an immediate corollary of the fundamental theorem we get a
type-safety result for the ``temporal'' semantics given by the $\eval$
predicate.  This is formulated by means of a trivial post-condition.

\begin{cor}[Type safety]
\label{cor:type_safety}
Assume that\/ $\vdash t : \tp$ holds.  Then
$\eval(t,s_{\mathrm{init}},\top)$ holds where $s_{\mathrm{init}}$ is
the empty store.  
\end{cor}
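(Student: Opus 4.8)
The plan is to obtain the corollary as a direct consequence of the fundamental theorem (Proposition~\ref{prop:fundamental}), by instantiating it at the world with empty domain and then weakening the post-condition to the trivially true one. As with the fundamental theorem, all reasoning takes place in the internal logic of $\stopos$.

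First I would let $w_{\mathrm{init}} \in \W$ denote the world with empty domain, i.e.\ the element of $N \tofin \ST = \dsum{A}{\finpowset{N}}{\ST^{A}}$ with underlying finite set $A = \emptyset$; such an element exists and is unique since $\emptyset \in \finpowset{N}$ and $\ST^{\emptyset}$ is terminal. Applying Proposition~\ref{prop:fundamental} to the hypothesis $\vdash t : \tp$ at $w = w_{\mathrm{init}}$ yields $t \in \compOp(\mngTp{\tp}{\emptyset}{\emptyset})(w_{\mathrm{init}})$. Unfolding the definition of $\compOp$, this says that for every $s \in \statesOp(w_{\mathrm{init}})$ we have $\eval(t,s,Q)$, where $Q(v_1,s_1) = \exists w_1 \geq w_{\mathrm{init}}.\, v_1 \in \mngTp{\tp}{\emptyset}{\emptyset}(w_1) \land s_1 \in \statesOp(w_1)$.

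Next I would compute $\statesOp(w_{\mathrm{init}})$: by the definition of $\statesOp$, a store $s$ lies in it precisely when $\dom{s} = \dom{w_{\mathrm{init}}} = \emptyset$ together with a condition ranging over $\dom{w_{\mathrm{init}}}$ that is vacuously satisfied; hence $\statesOp(w_{\mathrm{init}}) = \{ s_{\mathrm{init}} \}$, the singleton consisting of the empty store. So the previous step specialises to $\eval(t, s_{\mathrm{init}}, Q)$. Finally, since $Q \subseteq \top$ (every pair $(v_1,s_1)$ satisfies the maximal post-condition $\top \in \Powset{\Value \times \Store}$), the rule of consequence, Proposition~\ref{prop:rule-of-consequence}, gives $\eval(t, s_{\mathrm{init}}, \top)$, which is the claim.

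There is no genuine obstacle here: the only points needing (minor) care are observing that $\W$ contains a world with empty domain and that $\statesOp$ of that world is exactly $\{s_{\mathrm{init}}\}$. Everything else is a direct unfolding of definitions together with the two propositions already established.
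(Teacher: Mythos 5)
Your proof is correct and follows the paper's own argument exactly: instantiate the fundamental theorem at the empty world, note that the only state of that world is the empty store, and weaken the post-condition to $\top$ via Proposition~\ref{prop:rule-of-consequence}. The paper states this as a one-line proof; your version simply spells out the same steps in more detail.
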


\begin{proof}
  Follows directly from the fundamental theorem (using the empty world
  $\emptyset \in \W$) and Proposition~\ref{prop:rule-of-consequence}.
\end{proof}

\else
\fi

\end{document}